\documentclass[12pt,a4paper]{article}

\usepackage{amsmath,amscd,amssymb,amsthm,mathrsfs,amsfonts,layout,indentfirst,graphicx,caption,mathabx,  tikz, stmaryrd,appendix,calc,imakeidx,upgreek}
\usepackage{palatino}  
\usepackage{slashed} 
\usepackage{mathrsfs} 
\usepackage{extarrows} 
\makeindex

\usepackage{lipsum}
\let\OLDthebibliography\thebibliography
\renewcommand\thebibliography[1]{
	\OLDthebibliography{#1}
	\setlength{\parskip}{0pt}
	\setlength{\itemsep}{2pt} 
}

\allowdisplaybreaks  
\usepackage{latexsym}
\usepackage{chngcntr}
\usepackage[colorlinks,linkcolor=blue,anchorcolor=blue,linktocpage]{hyperref}
\hypersetup{citecolor=[rgb]{0,0.5,0}}

\usepackage{fullpage}
\counterwithin{figure}{section}

\pagestyle{plain}

\captionsetup[figure]
{
	labelsep=none	
}

\theoremstyle{definition}
\newtheorem{df}{Definition}[section]

\newtheorem{rem}[df]{Remark}

\newtheorem{cv}[df]{Convention}

\theoremstyle{plain}
\newtheorem{thm}[df]{Theorem}

\newtheorem{pp}[df]{Proposition}
\newtheorem{co}[df]{Corollary}
\newtheorem{lm}[df]{Lemma}




\newcommand{\fk}{\mathfrak}
\newcommand{\mc}{\mathcal}
\newcommand{\wtd}{\widetilde}
\newcommand{\wht}{\widehat}

\newcommand{\ovl}{\overline}

\newcommand{\End}{\mathrm{End}} 
\newcommand{\id}{\mathbf{1}}
\newcommand{\Hom}{\mathrm{Hom}}
\newcommand{\Conf}{\mathrm{Conf}}

\newcommand{\ev}{\mathrm{ev}}
\newcommand{\coev}{\mathrm{coev}}

\newcommand{\diag}{\mathrm{diag}}
\newcommand{\Dom}{\scr D}

\newcommand{\Diffp}{\mathrm{Diff}^+}
\newcommand{\Diff}{\mathrm{Diff}}
\newcommand{\PSU}{\mathrm{PSU}(1,1)}
\newcommand{\UPSU}{\widetilde{\mathrm{PSU}}(1,1)}

\newcommand{\bk}[1]{\langle {#1}\rangle}
\newcommand{\GA}{\mathscr G_{\mathcal A}}

\newcommand{\scr}{\mathscr}

\newcommand{\Jtd}{\widetilde{\mathcal J}}

\newcommand{\Ad}{\mathrm{Ad}}

\newcommand{\im}{\mathbf{i}}
\newcommand{\Co}{\complement}
\newcommand{\RepA}{\mathrm{Rep}(\mc A)}
\newcommand{\RepdA}{\mathrm{Rep}^{\mathrm d}(\mc A)}
\newcommand{\RepfA}{\mathrm{Rep}^{\mathrm f}(\mc A)}

\newcommand{\mbb}{\mathbb}

\newcommand{\Obj}{\mathrm{Obj}}
\newcommand{\Ef}{\mathscr E^{\mathrm f}}
\newcommand{\Ed}{\mathscr E^{\mathrm d}}

\newcommand{\bpr}{{}^\backprime}
\newcommand{\pr}{\mathrm{pr}}

\numberwithin{equation}{section}

\title{Bisognano-Wichmann property for rigid categorical extensions and non-local extensions of conformal nets}
\author{{\sc Bin Gui}
}
\date{}
\begin{document}\sloppy 
	\pagenumbering{arabic}
	\setcounter{section}{-1}

	\maketitle

\begin{abstract}
Given an (irreducible) M\"obius covariant net $\mc A$, we prove a Bisognano-Wichmann  theorem  for its categorical extension $\Ed$ associated to the braided $C^*$-tensor category $\RepdA$ of dualizable (more precisely, ``dualized") M\"obius covariant $\mc A$-modules. As a closely related result, we prove a (modified) Bisognano-Wichmann theorem  for any (possibly) non-local extension of $\mc A$ obtained by a $C^*$-Frobenius algebra $Q$ in $\RepdA$.  As an application, we discuss the relation between the domains of modular operators and the preclosedness of certain unbounded operators in $\Ed$.
\end{abstract}

\tableofcontents
	
\newpage

\section{Introduction}

The notion of categorical extensions of conformal nets was introduced in \cite{Gui21a} to understand the relations between the tensor categories of conformal nets and vertex operator algebras (VOA's). A categorical extension $\scr E$ of an irreducible conformal (or M\"obius covariant) net $\mc A$ is the Haag-Kastler net of bounded charged fields (intertwiners) associated to $\mc A$. $\scr E$ satisfies a list of axioms similar to those of $\mc A$, including, most importantly, the \emph{locality} axiom, which says that bounded charged fields supported in disjoint open intervals commute adjointly. One main observation in \cite{Gui21a} is that, in order to relate the tensor category of a unitary VOA $V$ with the one of the corresponding conformal net $\mc A_V$ (assuming $\mc A_V$ exists), it suffices to show that the (usually) \emph{unbounded} smeared intertwining operators of $V$ give rise to \emph{bounded} intertwiners satisfying the axioms of a categorical extension, especially the locality. Similar to the construction in \cite{CKLW18} of $\mc A_V$ from $V$, proving the locality axiom is the most difficult step, which amounts to proving the strong commutativity of certain adjointly commuting unbounded closed operators.\footnote{Two closed operators $A$ and $B$ on a Hilbert space $\mc H$ are said to commute adjointly if $[A,B]=[A^*,B]=0$ when acting on suitable vectors; they are said to commute strongly if the  von Neumann algebras generated by $A$ and by $B$ commute. Strong commutativity implies adjoint commutativity; the converse may not hold by the famous counterexample of Nelson \cite{Nel59}.}

The Bisognano-Wichmann (B-W) theorem \cite{BW75} is a powerful tool for proving the locality of the conformal net $\mc A_V$ associated to a unitary VOA $V$. In \cite{CKLW18}, Carpi-Kawahigashi-Longo-Weiner used this theorem to show that very often, one only needs the strong commutativity of a small amount of  smeared vertex operators (which ``generate $V$") to prove the strong commutativity of \emph{all} smeared vertex operators supported in disjoint intervals. \footnote{For general quantum field theories, a similar result was proved in \cite{DSW86}.} The main motivation of our present article is to generalize this result to intertwining operators (charged fields) of VOA's. 

Let us first recall the B-W theorem in (algebraic) chiral conformal field theory \cite{BGL93,GF93,FJ96}. Let $\mc A$ be an (irreducible) M\"obius covariant net with vacuum representation $\mc H_0$ and vacuum vector $\Omega$. The representation of $\PSU$ on $\mc H_0$ is denoted by $U$. By Reeh-Schlieder property, $\Omega$ is a cyclic and separating vector of $\mc A(I)$ where $I$ is any open (non-dense non-empty) interval on the unit circle $\mbb S^1$. Thus, one can associate to  the pair $(\mc A(I),\Omega)$ the modular operator $\Delta_I$ and modular conjugation $\fk J_I$ satisfying the Tomita-Takesaki theorem. Now, the B-W theorem for $\mc A$ says that:
\begin{itemize}
\item (Geometric modular theorem) $\Delta_I^{\im t}=\delta_I(-2\pi t)$, where $\delta_I$ is the dilation subgroup of the M\"obius group $\PSU$ associated to the interval $I$ (see section \ref{lb22} for more details).
\item (PCT theorem) The antiunitary map $\Theta:=\fk J_{\mbb S^1_+}$ (which is an involution by Tomita-Takesaki theory) is a PCT operator for $\mc A$, where $\mbb S^1_+$ is the upper semi-circle. More precisely, if we set $\fk r:\mbb S^1\rightarrow\mbb S^1,z\mapsto\ovl z$, then we have $\Theta\mc A(I)\Theta=\mc A(\fk rI)$ and $\Theta U(g)\Theta=U(\fk rg\fk r)$ for any $g\in\PSU$.
\end{itemize}
More generally, one has the B-W theorem for Fermi conformal nets \cite{ALR01,CKL08} and irreducible finite-index non-local extensions of conformal nets \cite{LR04}.

To derive a B-W theorem for categorical extensions of $\mc A$, we first need to define the modular $S$ and $F$ operators for them. Before explaining the definition, we first recall what are categorical extensions.

Let $\mbb S^1_-$ be the lower semi-circle. If $\mc H_i,\mc H_j$ are $\mc A$-modules, then $\mc H_j$ is a left $\mc A(\mbb S^1_+)$ module, and $\mc H_i$ is a right $\mc A(\mbb S^1_+)$ module defined by the action $x\in\mc A(\mbb S^1_+)\mapsto \Theta x^*\Theta$. Then the fusion product $\mc H_i\boxtimes\mc H_j$ is the Connes-Sauvageot relative tensor product of $\mc H_i$ and $\mc H_j$ over $\mc A(\mbb S^1_+)$. $\mc A(\mbb S^1_+)$ and $\mc A(\mbb S^1_-)$ act naturally on $\mc H_i\boxtimes\mc H_j$ by acting respectively on the left and the right components, and can be extended  to a representation of $\mc A$ on $\mc H_i\boxtimes\mc H_j$ using ``path continuations". (See section \ref{lb23} or \cite{Gui21a} chapter 2 for details.) Now, for each  interval $I$, one can define a dense vector space $\mc H_i(I)=\Hom_{\mc A(I')}(\mc H_0,\mc H_i)\Omega$, where $I'$ is the interior of the complement of $I$. $\mc H_j(I)$ is defined similarly. Then we know that $\mc H_i\boxtimes\mc H_j$ has a dense subspace spanned by vectors of the form $\xi\otimes\eta$ where $\xi\in\mc H_i(\mbb S^1_+)$ and $\eta\in\mc H_j(\mbb S^1_-)$. We then have bounded operators
\begin{align*}
L(\xi)\in\Hom_{\mc A(\mbb S^1_-)}(\mc H_j,\mc H_i\boxtimes\mc H_j),\qquad R(\eta)\in\Hom_{\mc A(\mbb S^1_+)}(\mc H_i,\mc H_i\boxtimes\mc H_j)
\end{align*}
defined by $L(\xi)\phi=\xi\otimes\phi$ and $R(\eta)\psi=\psi\otimes\eta$ for any $\phi\in\mc H_j(\mbb S^1_-),\psi\in\mc H_i(\mbb S^1_+)$. We understand $L(\xi),R(\eta) $ as operators acting on any possible $\mc A$-modules. This means that when $\chi\in\mc H_k$, we have $L(\xi)\chi\in\mc H_i\boxtimes\mc H_k,R(\eta)\chi\in\mc H_k\boxtimes\mc H_j$.

The $L$ and $R$ operators defined above should be understood as supported in $\mbb S^1_+$ and $\mbb S^1_-$ respectively. We would like to have them supported in any interval $I$, so that we have  nets of sets of $L$ operators and $R$ operators. It turns out that in general,  such nets can be defined not on $\mbb S^1$ but on its universal cover. So one should consider the $L$ and $R$ operators localized not in intervals, but in arg-valued intervals. If $I$ is an interval of $\mbb S^1$, then one can choose a continuous argument function $\arg_I$. Then the pair $\wtd I=(I,\arg_I)$ is called an arg-valued interval. We choose $\wtd{\mbb S^1_+}$ and $\wtd{\mbb S^1_-}$ such that $\arg_{\mbb S^1_+}(e^{\im t})=t$ ($0<t<\pi$), and that $\arg_{\mbb S^1_-}(e^{\im t})=t$ ($-\pi<t<0$). Then one can define consistently the $L$ and $R$ operators  localized in any given arg-valued interval $\wtd I$. To be more precise, for any $\mc A$-modules $\mc H_i,\mc H_k$ and any $\xi\in\mc H_i(I)$, one can define
\begin{align*}
L(\xi,\wtd I)\in\Hom_{\mc A(I')}(\mc H_k,\mc H_i\boxtimes\mc H_k),\qquad R(\xi,\wtd I)\in\Hom_{\mc A(I')}(\mc H_k,\mc H_k\boxtimes\mc H_i).
\end{align*}
Moreover, when $\wtd I=\wtd{\mbb S^1_+}$ we have $L(\xi,\wtd I)=L(\xi)$; when $\wtd I=\wtd{\mbb S^1_-}$ we have $R(\xi,\wtd I)=R(\xi)$. These $L$ and $R$ operators form a categorical extension of $\mc A$.

We now focus on dualizable $\mc A$-modules $\mc H_i,\mc H_j,\mc H_k$, etc. Since $\mc H_i$ is dualizable, we have an $\mc A$-module $\mc H_{\ovl i}$ (the dual object) and evaluations $\ev_{i,\ovl i}\in\Hom_{\mc A}(\mc H_i\boxtimes\mc H_{\ovl i},\mc H_0)$ and $\ev_{\ovl i,i}\in\Hom_{\mc A}(\mc H_{\ovl i}\boxtimes\mc H_i,\mc H_0)$  satisfying the conjugate equations
\begin{gather*}
(\ev_{i,\ovl i}\otimes\id_i)(\id_i\otimes\coev_{\ovl i,i})=\id_i=(\id_i\otimes\ev_{\ovl i,i})(\coev_{i,\ovl i}\otimes\id_i),\\
(\ev_{\ovl i,i}\otimes\id_{\ovl i})(\id_{\ovl i}\otimes\coev_{i,\ovl i})=\id_{\ovl i}=(\id_{\ovl i}\otimes\ev_{i,\ovl i})(\coev_{\ovl i,i}\otimes\id_{\ovl i}),
\end{gather*}
where we set $\coev_{i,\ovl i}=\ev_{i,\ovl i}^*,\coev_{\ovl i,i}=\ev_{\ovl i,i}^*$. $\ev_{i,\ovl i}$ determines the other $\ev$ and $\coev$. Moreover, we may and do assume that the $\ev$ and $\coev$ are standard, which means $\ev_{i,\ovl i}(F\otimes\id_{\ovl i})\coev_{i,\ovl i}=\ev_{\ovl i,i}(\id_{\ovl i}\otimes F)\coev_{\ovl i,i}$ for each $F\in\End_{\mc A}(\mc H_i)$. (Cf. \cite{LR97,Yam04,BDH14}.) We say that $(\mc H_{\ovl i},\ev_{i,\ovl i})$ is a dualizing data, and that a dualizable object $\mc H_i$ equipped with a dualizing data is called \textbf{dualized}. Dualized representations with M\"obious covariance form a rigid braided $C^*$-tensor category $\RepdA$, and for each dualized $\mc H_i$ we can choose a canonical dual object $\mc H_{\ovl i}$, such that $\mc H_i$ is also the dual object of $\mc H_{\ovl i}$. (See Section \ref{lb30} for details.) 

Now, for any $\wtd I$, we define the categorical $S$ and $F$ operators $S_{\wtd I},F_{\wtd I}$. For any dualizable $\mc H_i$, we have 
\begin{align*}
S_{\wtd I},F_{\wtd I}:\mc H_i\rightarrow\mc H_{\ovl i}
\end{align*} 
with common domain $\mc H_i(I)$ defined by
\begin{equation*}
S_{\wtd I}\xi=L(\xi,\wtd I)^*\coev_{i,\ovl i}\Omega,\qquad F_{\wtd I}\xi=R(\xi,\wtd I)^*\coev_{\ovl i,i}\Omega.
\end{equation*}

These two operators are indeed preclosed. Moreover, they are related by the (unitary) twist operator $\vartheta$ (proposition \ref{lb28}):
\begin{align*}
F_{\wtd I}=\vartheta S_{\wtd I}.
\end{align*}
We can thus define the modular operator $\Delta_{\wtd I}$ and modular conjugation $\fk J_{\wtd I}$ by the polar decompositions:
\begin{align*}
S_{\wtd I}=\fk J_{\wtd I}\cdot \Delta_{\wtd I}^{\frac 12},\qquad F_{\wtd I}=\vartheta\fk J_{\wtd I}\cdot\Delta_{\wtd I}^{\frac 12},
\end{align*}
where, for each $\mc H_i$, $\Delta_{\wtd I}$ is a positive closed operator on $\mc H_i$, and $\fk J_{\wtd I}:\mc H_i\rightarrow\mc H_{\ovl i}$ is antiunitary. Indeed, $\fk J_{\wtd I}:\mc H_i\rightarrow\mc H_{\ovl i}$ is an involution, i.e., $\fk J_{\wtd I}^2=\id$. It turns out that  \emph{$\fk J_{\wtd I}$ implements the conjugations of morphisms}: for any morphism $G\in\Hom_{\mc A}(\mc H_i,\mc H_j)$ one can define its conjugate $\ovl G\in\Hom_{\mc A}(\mc H_{\ovl i},\mc H_{\ovl i})$ to be the adjoint of the transpose $G^\vee$, where $G^\vee$ is the unique morphism in $\Hom_{\mc A}(\mc H_{\ovl j},\mc H_{\ovl i})$ satisfying
\begin{align*}
\ev_{j,\ovl j}(G\otimes\id_{\ovl j})=\ev_{i,\ovl i}(\id_i\otimes G^\vee).
\end{align*}
We will prove that
\begin{align*}
\ovl G=\fk J_{\wtd I}\cdot G\cdot \fk J_{\wtd I}
\end{align*}
(see Proposition \ref{lb14}), which suggests that $\fk J_{\wtd I}$ is the correct modular conjugation for categorical extensions.

The action of $\Delta_{\wtd I}$ on any $\mc H_i$ can be interpreted as a Connes spatial derivative (see remark \ref{lb24}). Indeed, our definition and treatment of $S_{\wtd I}$ and $F_{\wtd I}$ are deeply motivated by the matrix algebra approach to Connes fusion products and Connes spatial derivatives in \cite{Fal00} and \cite[Sec. IX.3]{Tak02}. Those matrices of von Neumann algebras are described in our article by the $C^*$-Frobenius algebra $Q=(\mc H_k\boxtimes\mc H_{\ovl k},\mu,\iota)$ in the representation category $\RepdA$ of dualized M\"obius covariant $\mc A$-modules, where $\iota\in\Hom_{\mc A}(\mc H_0,\mc H_k\boxtimes\mc H_{\ovl k})$ is $\coev_{k,\ovl k}$, and $\mu\in\Hom_{\mc A}(\mc H_k\boxtimes\mc H_{\ovl k}\boxtimes \mc H_k\boxtimes\mc H_{\ovl k},\mc H_k\boxtimes\mc H_{\ovl k})$ is $\ev_{k,\ovl k}(\id_k\otimes\ev_{\ovl k,k}\otimes\id_{\ovl k})$. As we will show, $S_{\wtd I}$ and $F_{\wtd I}$ are closely related to the $S$ and $F$ operators of non-local extensions of $\mc A$ constructed from $C^*$-Frobenius algebras. Thus, using the  Tomita-Takesaki theory for those non-local extensions, we are able to show that $S_{\wtd I}$ and $F_{\wtd I}$ are always preclosed, and that $\Delta_{\wtd I}$ and $\fk J_{\wtd I}$ satisfy similar algebraic relations as those in Tomita-Takesaki theory (see for example Proposition \ref{lb25}). The idea here is the same as in \cite{Fal00} and \cite{Tak02}.

We emphasize that the categorical extensions and the non-local finite index extensions (by $C^*$-Frobenius algebras) of a conformal net $\mc A$ are closely related. So are their modular theories. As we will see, the proof of the B-W theorem for categorical extensions relies on that  for non-local extensions, and vice versa. Let us now point out some key ideas in the proof. 

Just like the proof of many other versions of B-W theorem, the starting point of our proof is the following well known  consequence (cf. \cite{ALR01,LR04}; see also Lemma \ref{lb35}) of Borchers' theorem \cite{Bor92}: for a non-local extension $\mc B$ constructed from the $C^*$-Frobenius algebra $Q$, $z(t)=\Delta_{\wtd I}^{\im t}\delta_{\wtd I}(2\pi t)$ is a one-parameter group independent of the arg-valued interval $\wtd I$. To show that $z(t)=1$ \emph{when $Q$ is standard}, we need to first show that the modular operator for $\mc B(\wtd I)$ is the inverse of that of $\mc B(\wtd I')$; here, $\wtd I'$ is the ``clockwise complement of $\wtd I$, i.e., it is the interior of the complement of $I$ in $\mbb S^1$, equipped with the arg-function $\arg_{I'}$ satisfying $\arg_{I'}<\arg_I<\arg_{I'}+2\pi$. This fact is obvious when $Q$ is commutative (i.e., when $\mc B$ is local extension, which means $\mc B(\wtd I')$ is the commutant of $\mc B(I)$). Especially, this is true when $Q$ is the identity object, namely $\mc B=\mc A$.\footnote{Note that one does not need the B-W theorem for the conformal net $\mc A$ to prove that $z(t)$ is a one-parameter group independent of $\wtd I$. It is rather the opposite: that this fact proves the B-W theorem for $\mc A$. Thus, it should now be clear to the readers that our proofs of B-W theorem for non-local extensions and categorical extensions do not rely on the B-W theorem of $\mc A$, but rather, it follows the same pattern as the proof of the latter.} But this is not an obvious fact when the standard $Q$ is in general non-commutative. It turns out that our Categorical Extensions provide a natural framework for proving this fact: we prove it using the previously mentioned crucial relation $F_{\wtd I}=\vartheta S_{\wtd I}$ for categorical $S$ and $F$ operators (Proposition \ref{lb28}). (In particular, the property that the categorical $S$ and $F$ operators have the same absolute value $\Delta_{\wtd I}$ is remarkable.) We also use the fact that the categorical $S_{\wtd I}$ and $F_{\wtd I}$ can be related directly to the $S$ operators for $\mc B(\wtd I)$ and its commutant $\mc B(\wtd I)'$ (cf. Proposition \ref{lb9}; see also the non-local Haag duality in Theorem \ref{lb10}). 

Once we have proved the B-W theorem for standard non-local extensions, it is not difficult to prove categorical B-W theorem by restricting to $\mc A$-subrepresentations of $Q$. Then, using this categorical B-W theorem, one can also prove easily   the (modified) non-local B-W theorem for non-local extensions of $\mc A$ associated to a non-necessarily standard $C^*$-Frobenius algebra $Q$. These two B-W theorems are the main results of our paper, which are  stated in details in theorems \ref{lb26} and \ref{lb27}. Roughly speaking, the categorical B-W theorem says:
\begin{thm}[Categorical B-W theorem]\label{lb02}
We have
\begin{align}
\Delta_{\wtd I}^{\im t}=\delta_I(-2\pi t)\label{eq01}
\end{align}
when acting on any dualizable $\mc H_i$. Moreover, $\Theta:=\fk J_{\wtd{\mbb S^1_+}}$ is a PCT-operator for the (rigid) categorical extension.
\end{thm}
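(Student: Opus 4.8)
The plan is to derive Theorem~\ref{lb02} from the Bisognano--Wichmann theorem for the \emph{standard} non-local extension $\mc B$ attached to a $C^*$-Frobenius algebra $Q=\mc H_k\boxtimes\mc H_{\ovl k}$, and then to restrict to $\mc A$-submodules. Given a dualizable $\mc H_i$, I would first realize it as an $\mc A$-submodule of some standard $Q$: taking $\mc H_k=\mc H_0\oplus\mc H_i$ gives $Q=\mc H_0\oplus\mc H_i\oplus\mc H_{\ovl i}\oplus(\mc H_i\boxtimes\mc H_{\ovl i})$, which contains $\mc H_i$ as a direct summand. By Proposition~\ref{lb9} the categorical $S_{\wtd I},F_{\wtd I}$, hence $\Delta_{\wtd I}$ and $\fk J_{\wtd I}$, on the summands of $Q$ are exactly the restrictions of the Tomita operators (and so of the modular operator and modular conjugation) of $\mc B(\wtd I)$ and its commutant $\mc B(\wtd I)'$. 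It therefore suffices to prove both assertions at the level of $\mc B$ and read them off on the subspace $\mc H_i$.

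For the geometric modular theorem \eqref{eq01}, the starting point is the consequence of Borchers' theorem in Lemma~\ref{lb35}: the family $z(t)=\Delta_{\wtd I}^{\im t}\delta_I(2\pi t)$ is a one-parameter group independent of $\wtd I$, so that \eqref{eq01} is equivalent to $z(t)=\id$. Applying the $\wtd I$-independence to $\wtd I$ and to its clockwise complement $\wtd I'$, and using the geometric identity $\delta_{I'}(s)=\delta_I(-s)$, I obtain
\begin{equation*}
\Delta_{\wtd I}^{\im t}\delta_I(2\pi t)=\Delta_{\wtd I'}^{\im t}\delta_I(-2\pi t).
\end{equation*}
Once the \emph{inverse relation} $\Delta_{\wtd I'}=\Delta_{\wtd I}^{-1}$ is available, the right-hand side equals $\Delta_{\wtd I}^{-\im t}\delta_I(-2\pi t)$; multiplying on the right by $\delta_I(2\pi t)$ and using the group law $\delta_I(2\pi t)^2=\delta_I(4\pi t)$ yields $\delta_I(4\pi t)=\Delta_{\wtd I}^{-2\im t}$, which after rescaling $t$ is precisely $\Delta_{\wtd I}^{\im t}=\delta_I(-2\pi t)$.

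I expect the inverse relation $\Delta_{\wtd I'}=\Delta_{\wtd I}^{-1}$ to be the main obstacle. For commutative $Q$ (local $\mc B$) it is immediate, since then $\mc B(\wtd I')=\mc B(\wtd I)'$ and one is in the standard Tomita--Takesaki situation $\Delta_{M'}=\Delta_M^{-1}$; the whole difficulty lies in the non-commutative standard case, where $\mc B(\wtd I')$ need not \emph{a priori} coincide with the commutant. Here the rigid categorical structure does the work: I would use the crucial identity $F_{\wtd I}=\vartheta S_{\wtd I}$ of Proposition~\ref{lb28}, which (since $\vartheta$ is unitary) forces $S_{\wtd I}$ and $F_{\wtd I}$ to share the modulus $\Delta_{\wtd I}^{1/2}$, together with the identification of these two operators with the Tomita operators of $\mc B(\wtd I)$ and $\mc B(\wtd I)'$ (Proposition~\ref{lb9}), to identify $\mc B(\wtd I')$ with the commutant $\mc B(\wtd I)'$, i.e.\ to prove the non-local Haag duality of Theorem~\ref{lb10}; the inverse relation then follows from commutant modular theory. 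This is the step that is unavailable at the level of the bare non-local net and genuinely requires the categorical extension.

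Finally, for the PCT statement, once \eqref{eq01} holds for every $\wtd I$ I would argue as in the classical case. The geometric modular theorem together with M\"obius covariance pins down $\fk J_{\wtd{\mbb S^1_+}}$ as implementing the reflection $\fk r$, so that at the level of $\mc B$ the modular conjugation of $\mc B(\mbb S^1_+)$ with respect to $\Omega$ is a PCT operator, satisfying $\Theta U(g)\Theta=U(\fk r g\fk r)$ and reflecting the localization to $\fk r\wtd I$. Restricting to $\mc A$-modules, $\Theta=\fk J_{\wtd{\mbb S^1_+}}$ sends the localized $L$ and $R$ operators to their counterparts at the reflected arg-valued interval; its compatibility with the $\mc A$-action and with conjugation of morphisms is already guaranteed by Proposition~\ref{lb14} ($\ovl G=\fk J_{\wtd I}G\fk J_{\wtd I}$), which confirms that $\Theta$ is the correct PCT operator for the rigid categorical extension.
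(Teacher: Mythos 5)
Your proposal follows essentially the same route as the paper: Lemma \ref{lb35} produces the $\wtd I$-independent one-parameter group $z(t)$; the inverse relation $\Delta_{\wtd I'}=\Delta_{\wtd I}^{-1}$ is the crux and is obtained from $F_{\wtd I}=\vartheta S_{\wtd I}$ together with Proposition \ref{lb9}; this forces $z(t)=\id$; and the general dualizable $\mc H_i$ is handled by embedding it as a summand of $\mc H_k\boxtimes\mc H_{\ovl k}$ with $\mc H_k=\mc H_0\oplus\mc H_i$ and invoking Proposition \ref{lb14}. This is exactly the paper's strategy (Proposition \ref{lb25}, Theorems \ref{lb13} and \ref{lb26}).

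Two caveats. First, you have the logical dependency around Haag duality backwards: in the paper, $\mc B(\wtd I)'=\mc B'(\wtd I')$ is \emph{not} derived from $F_{\wtd I}=\vartheta S_{\wtd I}$; it is proved independently in Section \ref{lb34} (Proposition \ref{lb39}) directly from the locality axiom of $\Ed$ and the cyclicity of $\iota\Omega$. The role of $F_{\wtd I}=\vartheta S_{\wtd I}$ is different: given that duality, the $S$ operator of the commutant $\mc B(\wtd I)'=\mc B'(\wtd I')$ is $F_{\wtd I'}|_{\mc H_a}=\vartheta S_{\wtd I'}|_{\mc H_a}$, which therefore has the same modulus as the $S$ operator $S_{\wtd I'}|_{\mc H_a}$ of $\mc B(\wtd I')$; comparing with the Tomita--Takesaki modulus $\Delta_{\wtd I}^{-1/2}$ of the commutant gives $\Delta_{\wtd I'}=\Delta_{\wtd I}^{-1}$. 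The conclusion you need is the same, but a reader following your proposal literally would look for a proof of Haag duality that is not there. Second, the PCT clause is underdeveloped: the statement that $\Theta$ is a PCT operator for the categorical extension includes $\Theta\cdot\mc H_i(I)=\mc H_{\ovl i}(\fk rI)$ and $\Theta L(\xi,\wtd I)\Theta=R(\Theta\xi,\fk r\wtd I)$, and deducing these from the non-local statement $\Theta^Q\mc B(\wtd I)\Theta^Q=\mc B'(\fk r\wtd I)$ requires the explicit bookkeeping with the isometries $\psi_{i,\ovl j}$ and the choice $\mc H_k=\mc H_0\oplus\mc H_i\oplus\mc H_j$, as in the paper's proof of Theorem \ref{lb26}; ``restricting to $\mc A$-modules'' alone does not produce these identities.
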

Let $Q=(\mc H_a,\mu,\iota)$ be a $C^*$-Frobenius algebra in $\RepdA$, where $\iota\in\Hom_{\mc A}(\mc H_0,\mc H_a)$ and $\mu\in\Hom_{\mc A}(\mc H_a\boxtimes\mc H_a,\mc H_a)$. Let $\mc H_{\ovl a}$ and $\ev_{a,\ovl a}$ be the dual object and the standard evaluation. Let $\epsilon$ be the unique invertible morphism in $\Hom_{\mc A}(\mc H_a,\mc H_{\ovl a})$ satisfying
\begin{align*}
\ev_{\ovl a,a}(\epsilon\otimes\id_a)=\iota^*\mu.
\end{align*}
($\epsilon$ will be called \textbf{reflection operator} in our paper.) We remark that $\epsilon^*\epsilon$ is independent of the choice of $\mc H_{\ovl a}$ and $\ev_{a,\ovl a}$ for $\mc H_a$. Then we have:
\begin{thm}[Modified non-local B-W theorem]\label{lb01}
Let $\mc B$ be the non-local extension of $\mc A$	obtained through $Q$. For any $\wtd I\in\Jtd$, Let $D_{\wtd I}$ and $\fk J^Q_{\wtd I}$ be the modular operator and conjugation associated to $(\mc B(\wtd I),\iota\Omega)$. Then
\begin{align}
D_{\wtd I}^{\im t}=(\epsilon^*\epsilon)^{\im t}\delta_I(-2\pi t),\label{eq02}
\end{align}
and $\Theta^Q:=\fk J^Q_{\wtd{\mbb S^1_+}}$ is a PCT operator for $\mc B$ and its ``clockwise dual net" $\mc B'$.
\end{thm}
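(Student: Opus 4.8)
The plan is to reduce Theorem \ref{lb01} to the Categorical B-W theorem (Theorem \ref{lb02}) by writing the Tomita operator $S^Q_{\wtd I}$ of $(\mc B(\wtd I),\iota\Omega)$---so that $S^Q_{\wtd I}=\fk J^Q_{\wtd I}D_{\wtd I}^{\frac12}$---directly in terms of the categorical operator $S_{\wtd I}$ and the reflection $\epsilon$. The generators of $\mc B(\wtd I)$ are the smeared fields $\Psi(\xi,\wtd I)=\mu\, L(\xi,\wtd I)$ for $\xi\in\mc H_a(I)$. Using the state-field relation $L(\xi,\wtd I)\Omega=\xi$, the naturality of $L$ in the acting module, and the unit axiom $\mu(\id_a\otimes\iota)=\id_a$, one first records the Reeh-Schlieder identity $\Psi(\xi,\wtd I)\iota\Omega=\xi$. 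Granting (as a Reeh-Schlieder/cyclicity statement for the possibly non-local $\mc B(\wtd I)$, established earlier) that $\iota\Omega$ is cyclic and separating and that $\mc H_a(I)$ is a core, this yields on the dense domain $\mc H_a(I)$
\begin{align*}
S^Q_{\wtd I}\xi=\Psi(\xi,\wtd I)^*\iota\Omega=L(\xi,\wtd I)^*\mu^*\iota\,\Omega.
\end{align*}

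Next I would eliminate the algebra data in favour of $\epsilon$. Taking adjoints in the defining relation $\iota^*\mu=\ev_{\ovl a,a}(\epsilon\otimes\id_a)$ gives $\mu^*\iota=(\epsilon^*\otimes\id_a)\coev_{\ovl a,a}$. Substituting this, and using the charge-naturality of the $L$-operators in the form $L(\xi,\wtd I)^*(\epsilon^*\otimes\id_a)=L(\epsilon\xi,\wtd I)^*$ together with $\epsilon\,\mc H_a(I)=\mc H_{\ovl a}(I)$, one obtains
\begin{align*}
S^Q_{\wtd I}\xi=L(\epsilon\xi,\wtd I)^*\coev_{\ovl a,a}\,\Omega=S_{\wtd I}(\epsilon\xi),
\end{align*}
since $\coev_{\ovl a,a}$ is precisely the coevaluation defining $S_{\wtd I}$ on $\mc H_{\ovl a}$ (recall $\ovl{\ovl a}=a$). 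Thus $S^Q_{\wtd I}=\ovl{S_{\wtd I}\,\epsilon}$ as closed operators.

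The modular data then drop out of the polar decomposition. As $\epsilon\in\Hom_{\mc A}(\mc H_a,\mc H_{\ovl a})$ is M\"obius-covariant, it intertwines the dilation subgroups $\delta_I$ on $\mc H_a$ and $\mc H_{\ovl a}$; by Theorem \ref{lb02}, $\Delta_{\wtd I}^{\im t}=\delta_I(-2\pi t)$ on both modules, so $\epsilon$ and $\epsilon^*\epsilon$ commute with $\Delta_{\wtd I}^{s}$. Writing $S_{\wtd I}=\fk J_{\wtd I}\Delta_{\wtd I}^{\frac12}$ and using the antiunitarity of $\fk J_{\wtd I}$, a short computation of $\langle S^Q_{\wtd I}\xi',S^Q_{\wtd I}\xi\rangle$ gives
\begin{align*}
D_{\wtd I}=\Delta_{\wtd I}^{\frac12}\,\epsilon^*\epsilon\,\Delta_{\wtd I}^{\frac12}=\epsilon^*\epsilon\,\Delta_{\wtd I},
\end{align*}
whence $D_{\wtd I}^{\im t}=(\epsilon^*\epsilon)^{\im t}\delta_I(-2\pi t)$, which is \eqref{eq02}; this is consistent with Borchers' theorem (Lemma \ref{lb35}), the one-parameter group $z(t)=D_{\wtd I}^{\im t}\delta_{\wtd I}(2\pi t)=(\epsilon^*\epsilon)^{\im t}$ being manifestly independent of $\wtd I$. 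The same decomposition identifies $\fk J^Q_{\wtd I}=\fk J_{\wtd I}\,v$, where $v=\epsilon(\epsilon^*\epsilon)^{-\frac12}$ is the (covariant) unitary part of the invertible $\epsilon$.

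For the PCT statement I would argue as in the classical non-local case. By Tomita-Takesaki, $\Theta^Q=\fk J^Q_{\wtd{\mbb S^1_+}}$ conjugates $\mc B(\wtd{\mbb S^1_+})$ onto its commutant, which by the non-local Haag duality (Theorem \ref{lb10}) is the clockwise-dual algebra $\mc B'(\fk r\,\wtd{\mbb S^1_+})$; combined with the geometric modular flow just obtained and the covariance of $\Theta=\fk J_{\wtd{\mbb S^1_+}}$ (the categorical PCT of Theorem \ref{lb02}) and of $v$, this yields $\Theta^Q\mc B(\wtd I)\Theta^Q=\mc B'(\fk r\wtd I)$ and $\Theta^Q U(g)\Theta^Q=U(\fk rg\fk r)$ for all $g\in\PSU$, i.e.\ $\Theta^Q$ is a PCT operator for $\mc B$ and $\mc B'$. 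I expect the genuine difficulty to lie not in the identity $S^Q_{\wtd I}=\ovl{S_{\wtd I}\epsilon}$ itself but in the surrounding bookkeeping: verifying the cyclic-separating/core property for the non-local $\mc B(\wtd I)$, and tracking the orientation conventions ($L$ versus $R$ fields, the twist $\vartheta$ of Proposition \ref{lb28}, and the braiding) that decide whether $S_{\wtd I}$ or $F_{\wtd I}$---and whether $\mc B$ or its clockwise dual $\mc B'$---enters. The PCT half, in particular the precise identification of the commutant of $\mc B(\wtd{\mbb S^1_+})$ with a clockwise-dual algebra on the reflected interval, is where the non-local Haag duality does the real work.
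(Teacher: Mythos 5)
Your proposal is correct and follows essentially the same route as the paper: you identify the Tomita operator of $(\mc B(\wtd I),\iota\Omega)$ with $S_{\wtd I}\epsilon$ (equivalently, via Proposition \ref{lb32} and $\ovl\epsilon=\epsilon^{-1}$, with the paper's $\epsilon^{-1}S_{\wtd I}|_{\mc H_a}$), extract $D_{\wtd I}=\epsilon^*\epsilon\,\Delta_{\wtd I}$ from the polar decomposition using that $\Delta_{\wtd I}$ commutes with morphisms, and then invoke the categorical geometric modular theorem. The only cosmetic difference is in the PCT half, where the paper quotes Borchers' theorem (Lemma \ref{lb35}) directly while you route through the categorical PCT theorem and the factorization $\Theta^Q=\Theta v$; both are valid at this stage of the argument.
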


Some remarks on these two theorems:
\begin{itemize}
\item Equivalent forms of equation \eqref{eq01} already appeared in \cite{FRS92,Joer96} and in \cite{Lon97}. In \cite{FRS92,Joer96}, the $S$ operators are defined for reduced field bundles, which are an alternative model for charged fields (intertwining operators) of conformal nets. For our purpose (see the beginning of the introduction), categorical extensions might be more convenient than reduce field bundles. In \cite{Lon97}, Longo showed that the dilation group $\delta_I$ is related to Connes Radon-Nikodym derivatives, which are in turn related to Connes spatial derivatives and hence related to our $\Delta_{\wtd I}$ (see remark \ref{lb24}).
\item Similar to \cite{Joer96,GL96}, the conformal spin-statistics theorem $\vartheta=e^{2\im\pi L_0}$  is a consequence of the PCT theorem for (rigid) categorical extensions (see theorem \ref{lb20}).
\item The $C^*$-Frobienius algebra $Q$ is standard if and only if $\epsilon$ is unitary. Thus, by \eqref{eq02}, for the non-local extension $\mc B$ of $\mc A$ obtained by $Q$, the standard geometric modular theorem $D_{\wtd I}^{\im t}=\delta_I(-2\pi t)$ holds if and only if $Q$ is standard.
\item When $Q$ is irreducible (as a left $Q$-module), theorem \ref{lb01} was proved by \cite{LR04} proposition 3.5-(ii).
\end{itemize}

This article is organized as follows. The first three sections provide background materials of this article. In section 1 we review the definitions of M\"obius covariant nets and conformal nets and their representations. In section 2 we review the definition and basic properties of categorical extensions of conformal nets. In section 3, we review some important facts about dualizable objects, their duals and standard evaluations. Later on, we will focus on the rigid braided $C^*$-tensor category $\RepdA$ of dualized objects, i.e., dualizable objects with chosen duals and standard evaluations.  

Construction of non-local extensions of conformal nets via $C^*$-Frobenius algebras (or Q-systems) was first studied in \cite{LR95} using endomorphisms of von Neumann algebras. A parallel construction using bimodules and Connes fusions was given in \cite{Mas97}. In section 4, we use categorical extensions as a new method to realize such construction of non-local extensions. Our method emphasizes the close relation between the charged field operators of a conformal net and the field operators of its non-local extensions, and explains the slogan ``non-local extensions are subquotients of categorical extensions" proposed in the Introduction of \cite{Gui21a}. In section 5 we define the $S$ and $F$ operators for rigid categorical extensions, and prove many elementary but important properties for these operators. Especially, we relate them with the $S$ and $F$ operators of non-local extensions. In section 6 we prove the main results of this article, namely theorems \ref{lb02} and \ref{lb01}. In section 7 we use the modular theory of categorical extensions to study the preclosedness of certain unbounded charged fields of conformal nets. Although our main motivation of this article is to study the functional analytic properties of these  charged field operators, here we do not  give a systematic study of this topic but leave it to future works.

Categorical extensions of conformal nets are closely related to Connes fusion. In section A we briefly explain this relation for the convenience of the readers who are not familiar with this topic. We hope that this appendix section would help them understand the axioms in the definition of categorical extensions. In section B we prove that the rigid categorical extensions of M\"obius covariant nets are M\"obius covariant. This result parallels the conformal covariance of the categorical extensions of conformal (covariant) nets proved in \cite{Gui21a} section 2.4 and theorem 3.5. Indeed, our proof of the M\"obius covariance in this article can be adapted to give a simpler proof of the conformal covariance in \cite{Gui21a}; see the end of section B.

\subsection*{Acknowledgment}

We would like to thank Sebastiano Carpi for many enlightening and helpful discussions during the preparation of this article. We also thank the referees for  many helpful comments on the first version of this article.

\section{Backgrounds}\label{lb29}

Let $\mc J$ be the set of all non-empty non-dense open intervals in the unit circle $\mbb S^1$. If $I\in\mc J$, then $I'$ denotes the interior of the complement of $I$, which is also an element in $\mc J$. The group $\Diffp(\mbb S^1)$ of orientation-preserving diffeomorphisms of $\mbb S^1$ contains the subgroup $\PSU$ of M\"obius transforms of $\mbb S^1$. If $I\in\mc J$, we let $\Diff(I)$ be the subgroup of all $g\in\Diffp(\mbb S^1)$ acting as identity on $I'$.

In this article, we always let $\mc A$ be an (irreducible) \textbf{M\"obius covariant net}, which means that for each $I\in\mathcal J$ there is a von Neumann algebra $\mathcal A(I)$ acting on a fixed separable Hilbert space $\mathcal H_0$, such that the following conditions hold:\\
(a) (Isotony) If $I_1\subset I_2\in\mathcal J$, then $\mathcal A(I_1)$ is a von Neumann subalgebra of $\mathcal A(I_2)$.\\
(b) (Locality) If $I_1,I_2\in\mathcal J$ are disjoint, then $\mathcal A(I_1)$ and $\mathcal A(I_2)$ commute.\\
(c) (M\"obius covariance) We have a strongly continuous  unitary representation $U$ of $\PSU$ on $\mc H_0$ such that for any $g\in\PSU, I\in\mc J,$, 
\begin{align*}
U(g)\mc A(I)U(g)^*=\mc A(gI).
\end{align*}
(d) (Positivity of energy) The generator $L_0$ of the rotation subgroup $\varrho$  is positive.\\
(e) There exists a  unique (up to scalar) $\PSU$-invariant unit vector $\Omega\in\mathcal H_0$. Moreover, $\Omega$ is  cyclic under the action of $\bigvee_{I\in\mathcal J}\mathcal M(I)$ (the von Neumann algebra generated by all $\mathcal M(I)$).

$\mc A$ satisfies \textbf{Haag duality} \cite[Thm. 2.19]{GF93}, i.e., that $\mc A(I')=\mc A(I)'$. This result plays a crucial role in the construction of categorical extensions, and is thus an input in the proof of Haag duality for non-local extensions (cf. Proposition \ref{lb39} or Theorem \ref{lb10}). 

We say that $\mc A$ is a \textbf{conformal (covariant) net} if the representation $U$ of $\PSU$ on $\mc H_0$ can be extended to a strongly continuous projective unitary representation $U$ of $\Diffp(\mbb S^1)$ on $\mathcal H_0$, such that for any $g\in \Diffp(\mbb S^1),I\in\mathcal J$, and any representing element $V\in\mathcal U(\mathcal H_0)$ of $U(g)$,
\begin{align*}
V\mathcal A(I)V^*=\mathcal A(gI).
\end{align*}
Moreover, if $g\in\Diff(I)$ and $x\in\mathcal A(I')$, then
\begin{align*}
VxV^*=x.
\end{align*}

Let $\mathcal H_i$ be a separable Hilbert space. Recall that a (normal) representation $(\mathcal H_i,\pi_i)$ of $\mathcal A$ (also called an $\mc A$-module) associates to each $I\in\mathcal J$ a unital *-representation $\pi_{i,I}:\mathcal A(I)\rightarrow B(\mathcal H_i)$, such that for any $I_1,I_2\in\mathcal J$ satisfying $I_1\subset I_2$, and any $x\in\mathcal A(I_1)$, we have $\pi_{i,I_1}(x)=\pi_{i,I_2}(x)$. We write $\pi_{i,I}(x)$  as $\pi_i(x)$ or just $x$ when no confusion arises.

Let $\scr G$ be the universal covering of $\Diffp(\mbb S^1)$. The corresponding projective representation of $\scr G$ on $\mc H_0$ is also denoted by $U$. Then $\scr G$ has a central extension
\begin{align*}
1\rightarrow U(1)\rightarrow\GA\rightarrow \scr G\rightarrow 1
\end{align*}
associated to the projective representation of $\Diffp(\mbb S^1)$ on $\mc H_0$. In other words, we set
\begin{align*}
\GA=\{(g,V)\in\mathscr G\times \mathcal U(\mathcal H_0)| V \textrm{ is a representing element of } U(g) \}.
\end{align*}
$\scr G_{\mc A}$ actually only depends on the central charge of $\scr A$; see \cite{Gui21a}. This fact will not be used in the present article.

Then the projective representation $\Diffp(\mbb S^1)\curvearrowright\mc H_0$ gives rise to an actual unitary (continuous) representation of $\GA$ of $\mc H_0$, also denoted by $U$. For each $I$, we let $\scr G(I)$ be the connected component containing $1$ of the preimage of $\Diff(I)$ under the covering map $\scr G\rightarrow \Diffp(\mbb S^1)$. Similarly, let $\GA(I)$ be the preimage of $\scr G(I)$ under $\GA\rightarrow\scr G$. If $\mc A$ is conformal covariant, then any $\mc A$-module $\mc H_i$ is \textbf{conformal covariant}, which means that there is a unique representation $U_i$ of $\GA$ on $\mc H_i$ such that for any $I\in\mc J$ and $g\in\GA(I)$,
\begin{align}
U_i(g)=\pi_i(U(g)).\label{eq10}
\end{align}
Moreover, this representation of $\scr G_{\mc A}$ continuous. These facts follow from \cite{AFK04} (only for irreducible representations) and \cite{Hen19} theorem 11. (See also \cite{Gui21a}[Thm. 2.2] for a detailed explanation.) Moreover, the generator of the rotation subgroup acting on $\mc H_i$ is positive by \cite{Wei06} theorem 3.8. From \eqref{eq10} and the fact that $\GA$ is algebraically generated by  $\{\GA(I):I\in\mc J \}$ proved in \cite{Hen19} Lemma 17-(ii) (see also \cite{Gui21a} proposition 2.2), it is clear that any homomorphism of conformal net modules is also a homomorphism of representations of $\GA$. Moreover, for any $g\in\GA$ and $x\in\mc A(I)$ one has
\begin{align}
U_i(g)\pi_{i,I}(x)U_i(g)^*=\pi_{i,gI}(U(g)xU(g)^*).\label{eq20}
\end{align}
Very often, we will write $U(g)$ and $U_i(g)$ as $g$ for short.

Let $\UPSU$ be the universal cover of $\PSU$, regarded as a subgroup of $\scr G$. By \cite{Bar54}, the restriction of any strongly continuous projective representation of $\scr G$ to $\UPSU$ can be lifted to a unique strongly continuous unitary representation of $\UPSU$. Thus $\UPSU$ is also a subgroup of $\GA$. Note that  the action of $\UPSU$ on $\mc H_0$ also preserves $\Omega$. We say that an $\mc A$-module $\mc H_i$ is \textbf{M\"obius covariant} if there is a strongly continuous unitary representation $U_i$ of $\UPSU$ on $\mc H_i$ such that  \eqref{eq20} holds for any $g\in\UPSU$ and $I\in\mc J$.

In the literature, a M\"obius covariant representation is often also assumed to have positive generator of rotation group (or equivalently, positive generator of translation group \cite[Lemma 3.1]{Wei06}). Since in our article we are mainly interested in dualizable representations (equivalently, representations with finite indexes), the positive energy condition is automatic by \cite[Cor. 4.4]{BCL98}.

\section{Categorical extensions}

Let $\RepA$ be the $C^*$-category of $\mc A$-modules whose objects are denoted by $\mc H_i,\mc H_j,\mc H_k,\dots$. Then one can equip $\RepA$ with a structure of braided $C^*$-tensor category either via Doplicher-Haag-Roberts (DHR) superselection theory \cite{FRS89,FRS92}, or via Connes fusion \cite{BDH15,BDH17,Gui21a}. These two constructions are equivalent by \cite{Gui21a} chapter 6. The unit of $\RepA$ is $\mc H_0$. We write the tensor (fusion) product of two $\mc A$-modules $\mc H_i,\mc H_j$ as $\mc H_i\boxtimes\mc H_j$. We assume without loss of generality that $\RepA$ is strict, which means that we will not distinguish between $\mc H_0,\mc H_0\boxtimes\mc H_i,\mc H_i\boxtimes\mc H_0$, or $(\mc H_i\boxtimes \mc H_j)\boxtimes\mc H_k$ and $\mc H_i\boxtimes(\mc H_j\boxtimes\mc H_k)$ (abbreviated to $\mc H_i\boxtimes\mc H_j\boxtimes H_k$). In the following, we review the definition and the basic properties of closed vector-labeled categorical extensions of $\mc A$ (abbreviated to ``categorical extensions" for short) introduced in \cite{Gui21a}.

To begin with, if $\mc H_i,\mc H_j$ are $\mc A$-modules and $I\in\mc J$, then $\Hom_{\mc A(I')}(\mc H_i,\mc H_j)$ denotes the vector space of bounded linear operators $T:\mc H_i\rightarrow\mc H_j$ such that $T\pi_{i,I'}(x)=\pi_{j,I'}(x)T$ for any $x\in\mc A(I')$. We then define $\mc H_i(I)=\Hom_{\mc A(I')}(\mc H_0,\mc H_i)\Omega$, which is a dense subspace of $\mc H_i$. Note that $I\subset J$ implies $\mc H_i(I)\subset\mc H_i(J)$. Moreover, if $G\in\Hom_{\mc A}(\mc H_i,\mc H_j)$, then
\begin{align}
G\mc H_i(I)\subset\mc H_j(I).\label{eq62}
\end{align}

If $I\in\mc J$, an arg-function of $I$ is, by definition, a continuous function $\arg_I:I\rightarrow\mbb R$ such that for any $e^{it}\in I$, $\arg_I(e^{it})-t\in 2\pi\mbb Z$. $\wtd I=(I,\arg_I)$ is called an \textbf{arg-valued interval}. Equivalently, $\wtd I$ is a branch of $I$ in the universal cover of $\mbb S^1$. We let $\wtd{\mc J}$ be the set of arg-valued intervals. If $\wtd I=(I,\arg_I)$ and $\wtd J=(J,\arg_J)$ are in $\Jtd$, we say that $\wtd I$ and $\wtd J$ are disjoint if $I$ and $J$ are so. Suppose moreover that for any $z\in I,\zeta\in J$ we have $\arg_J(\zeta)<\arg_I(z)<\arg_J(\zeta)+2\pi$, then we say that $\wtd I$ is \textbf{anticlockwise} to $\wtd J$ (equivalently, $\wtd J$ is \textbf{clockwise} to $\wtd I$). We write $\wtd I\subset\wtd J$ if $I\subset J$ and $\arg_J|_I=\arg_I$. Given $\wtd I\in\Jtd$, we also define $\wtd I'=(I',\arg_{I'})\in\wtd J$ such that $\wtd I$ is anticlockwise to $\wtd I'$. We say that $\wtd I'$ is the \textbf{clockwise complement} of $\wtd I$.

\begin{df}\label{lb1}
A (closed and  vector-labeled) \textbf{categorical extension} $\scr E=(\mc A,\RepA,\boxtimes,\mc H)$ of $\mc A$ associates, to any  $\mc H_i,\mc H_k\in\Obj(\RepA)$ and any $\wtd I\in\Jtd,\fk \xi\in\mc H_i(I)$, bounded linear operators
	\begin{gather*}
	L(\xi,\wtd I)\in\Hom_{\mc A(I')}(\mc H_k,\mc H_i\boxtimes\mc H_k),\\
	R(\xi,\wtd I)\in\Hom_{\mc A(I')}(\mc H_k,\mc H_k\boxtimes\mc H_i),
	\end{gather*}
	such that the following conditions are satisfied:\\
	(a) (Isotony) If $\wtd I_1\subset\wtd I_2\in\Jtd$, and $\xi\in\mc H_i(I_1)$, then $L(\xi,\wtd I_1)=L(\xi,\wtd I_2)$, $R(\xi,\wtd I_1)=R(\xi,\wtd I_2)$ when acting on any  $\mc H_k\in\Obj(\RepA)$.\\
	(b) (Functoriality) If $\mc H_{i},\mc H_k,\mc H_{k'}\in\Obj(\RepA)$, $F\in\Hom_{\mc A}(\mc H_k,\mc H_{k'})$,  the following diagrams commute for any $\wtd I\in\Jtd,\xi\in\mc H_i(I)$.
	\begin{gather}
	\begin{CD}
	\mc H_k @>F>> \mc H_{k'}\\
	@V L(\xi,\wtd I)  VV @V L(\xi,\wtd I)  VV\\
	\mc H_i\boxtimes\mc H_k @> \id_i\otimes F>> \mc H_i\boxtimes\mc H_{k'}
	\end{CD}\qquad\qquad
	\begin{CD}
	\mc H_k @> R(\xi,\wtd I)  >> \mc H_k\boxtimes\mc H_i\\
	@V F VV @V F\otimes\id_i  VV\\
	\mc H_{k'} @>R(\xi,\wtd I) >> \mc H_{k'}\boxtimes\mc H_i
	\end{CD}.
	\end{gather}
	(c) (State-field correspondence\footnote{For general (i.e., non-necessarily closed or vector-labeled) categorical extensions, this axiom is replaced by the neutrality and the Reeh-Schlieder property; see \cite{Gui21a} section 3.1.}) For any $\mc H_i\in\Obj(\RepA)$, under the identifications $\mc H_i=\mc H_i\boxtimes\mc H_0=\mc H_0\boxtimes\mc H_i$, the relation
	\begin{align}
	L(\xi,\wtd I)\Omega=R(\xi,\wtd I)\Omega=\xi
	\end{align}
	holds for any $\wtd I\in\Jtd,\xi\in\mc H_i(I)$. It follows immediately that when acting on $\mc H_0$, $L(\xi,\wtd I)$ equals $R(\xi,\wtd I)$ and is independent of $\arg_I$.\\
	(d) (Density of fusion products) If $\mc H_i,\mc H_k\in\Obj(\RepA),\wtd I\in\Jtd$, then the set $L(\mc H_i(I),\wtd I)\mc H_k$ spans a dense subspace of $\mc H_i\boxtimes\mc H_k$, and $R(\mc H_i(I),\wtd I)\mc H_k$ spans a dense subspace of $\mc H_k\boxtimes\mc H_i$. (Indeed, they span the full space $\mc H_i\boxtimes\mc H_k$ and $\mc H_k\boxtimes\mc H_i$ respectively.)\\
	(e) (Locality) For any $\mc H_k\in\Obj(\RepA)$, disjoint $\wtd I,\wtd J\in\Jtd$ with $\wtd I$ anticlockwise to $\wtd J$, and any $\xi\in\mc H_i(I),\eta\in\mc H_j(J)$, the following diagram \eqref{eq5}  commutes adjointly.
	\begin{align}
	\begin{CD}
	\mc H_k @> \quad R(\eta,\wtd J)\quad   >> \mc H_k\boxtimes\mc H_j\\
	@V L(\xi,\wtd I)   V  V @V L(\xi,\wtd I) VV\\
	\mc H_i\boxtimes\mc H_k @> \quad R(\eta,\wtd J) \quad  >> \mc H_i\boxtimes\mc H_k\boxtimes\mc H_j
	\end{CD}\label{eq5}
	\end{align}
Here, the \textbf{adjoint commutativity} of diagram \eqref{eq5} means that $R(\eta,\wtd J)L(\xi,\wtd I)=L(\xi,\wtd I)R(\eta,\wtd J)$ when acting on $\mc H_k$, and $R(\eta,\wtd J)L(\xi,\wtd I)^*=L(\xi,\wtd I)^*R(\eta,\wtd J)$ when acting on $\mc H_i\boxtimes\mc H_k$.\\
	(f) (Braiding) There is a unitary linear map $\ss_{i,j}:\mc H_i\boxtimes\mc H_j\rightarrow\mc H_j\boxtimes \mc H_i$  for any $\mc H_i,\mc H_j\in\Obj(\RepA)$, such that  
	\begin{align}
	\ss_{i,j} L(\xi,\wtd I)\eta=R(\xi,\wtd I)\eta
	\end{align}
	whenever $\wtd I\in\Jtd,\xi\in\mc H_i(I)$, $\eta\in\mc H_j$.
\end{df}

Note that $\ss_{i,j}$ is unique by the density of fusion products. Moreover, $\ss_{i,j}$ commutes with the actions of $\mc A$, and  is the same as the braid operator of $\RepA$; see \cite{Gui21a} sections 3.2, 3.3. The existence of $\scr E$ is also proved in \cite{Gui21a} sections 3.2.\footnote{In \cite{Gui21a} we assume $\mc A$ to be conformal covariant  for the simplicity of discussions. Most results in that article (for example, the construction of Connes categorical extensions, the uniqueness of braided $C^*$-tensor categories, the uniqueness of vector-labeled closed categorical extensions, etc.) do not rely on the conformal covariance and are also true for M\"obius covariant nets. The only exception is the conformal covariance of categorical extensions, which should be replaced by M\"obius covariance when the $\mc A$ is only M\"obius covariant; see theorem \ref{lb19} and section B.}

\begin{rem}
We see that $L(\xi,\wtd I)$ and $R(\xi,\wtd I)$  can act on any object in $\RepA$. If we want to emphasize that they are acting on a specific object $\mc H_k$, we write $L(\xi,\wtd I)|_{\mc H_k}$ and $R(\xi,\wtd I)|_{\mc H_k}$. It is noteworthy that for any $x\in\mc A(I)$,
\begin{align}
L(x\Omega,\wtd I)|_{\mc H_k}= R(x\Omega,\wtd I)|_{\mc H_k}=\pi_{k,I}(x).\label{eq9}
\end{align}
See the end of \cite{Gui21a} section 3.1. By the locality and the state-field correspondence, it is also easy to see that
\begin{align}
L(\xi,\wtd I)\eta=R(\eta,\wtd J)\xi\label{eq7}
\end{align}
whenever $\xi\in\mc H_i(I)$, $\eta\in\mc H_j(J)$, and $\wtd I$ is anticlockwise to $\wtd J$.
\end{rem}

Another useful fact is that if $F\in\Hom_{\mc A}(\mc H_i,\mc H_{i'})$, $G\in\Hom_{\mc A}(\mc H_j,\mc H_{j'})$,  $\xi\in\mc H_i(I)$, and $\eta\in\mc H_j$, then
\begin{align}
(F\otimes G)L(\xi,\wtd I)\eta=L(F\xi,\wtd I)G\eta,\qquad (G\otimes F)R(\xi,\wtd I)\eta=R(F\xi,\wtd I)G\eta.\label{eq8}
\end{align}
This was proved in \cite{Gui21a} section 3.3 using Connes fusion, but it also follows directly from the axioms of categorical extensions. To prove the first equation, it suffices to assume that $\eta\in\mc H_j(J)$ where $\wtd J$ is clockwise to $\wtd I$. Then, by the functoriality and relation \eqref{eq7},
\begin{align*}
&(F\otimes G)L(\xi,\wtd I)\eta=(\id\otimes G)(F\otimes \id)L(\xi,\wtd I)\eta=(\id\otimes G)(F\otimes \id)R(\eta,\wtd J)\xi\\
=&(\id\otimes G)R(\eta,\wtd J)F\xi=(\id\otimes G)L(F\xi,\wtd I)\eta=L(F\xi,\wtd I)G\eta.
\end{align*}
The second relation follows from the first one and \eqref{eq7}.

We now prove some fusion relations for the $L$ and $R$ operators of $\scr E$.
\begin{pp}\label{lb5}
Let $\mc H_i,\mc H_j,\mc H_k\in\Obj(\RepA)$, $\wtd I\in\Jtd$, and $\xi\in\mc H_i(I)$.\\
(a) If $\eta\in\mc H_j(I)$, then $L(\xi,\wtd I)\eta\in(\mc H_i\boxtimes\mc H_j)(I)$, $R(\xi,\wtd I)\eta\in(\mc H_j\boxtimes\mc H_i)(I)$, and
\begin{gather}
L(\xi,\wtd I)L(\eta,\wtd I)|_{\mc H_k}=L(L(\xi,\wtd I)\eta,\wtd I)|_{\mc H_k},\\
R(\xi,\wtd I)R(\eta,\wtd I)|_{\mc H_k}=R(R(\xi,\wtd I)\eta,\wtd I)|_{\mc H_k}.
\end{gather}
(b) If $\psi\in(\mc H_i\boxtimes H_j)(I)$ and $\phi\in (\mc H_j\boxtimes H_i)(I)$, then $L(\xi,\wtd I)^*\psi\in\mc H_j(I)$, $R(\xi,\wtd I)^*\phi\in\mc H_j(I)$, and
\begin{gather}
L(\xi,\wtd I)^*L(\psi,\wtd I)|_{\mc H_k}=L(L(\xi,\wtd I)^*\psi,\wtd I)|_{\mc H_k},\\
 R(\xi,\wtd I)^*R(\phi,\wtd I)|_{\mc H_k}=R(R(\xi,\wtd I)^*\phi,\wtd I)|_{\mc H_k}.
\end{gather}
\end{pp}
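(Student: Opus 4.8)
The plan is to derive everything directly from the axioms of Definition \ref{lb1}, in the same spirit as the proof of \eqref{eq8}, rather than appealing to the Connes-fusion model. The two membership assertions come essentially for free and should be dealt with first, since they are what make the right-hand sides of the fusion identities meaningful. For (a), write $\eta=a\Omega$ with $a\in\Hom_{\mc A(I')}(\mc H_0,\mc H_j)$; since $L(\xi,\wtd I)|_{\mc H_j}\in\Hom_{\mc A(I')}(\mc H_j,\mc H_i\boxtimes\mc H_j)$, the composite $L(\xi,\wtd I)a$ again lies in $\Hom_{\mc A(I')}(\mc H_0,\mc H_i\boxtimes\mc H_j)$, whence $L(\xi,\wtd I)\eta=(L(\xi,\wtd I)a)\Omega\in(\mc H_i\boxtimes\mc H_j)(I)$; the statement for $R$ is identical. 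For (b) one uses in addition that the adjoint of an $\mc A(I')$-module map is again one (as $\mc A(I')$ is a $*$-algebra), so that $L(\xi,\wtd I)^*\in\Hom_{\mc A(I')}(\mc H_i\boxtimes\mc H_j,\mc H_j)$, and then writes $\psi=c\Omega$ to conclude $L(\xi,\wtd I)^*\psi\in\mc H_j(I)$.

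For the fusion identities the mechanism is always the same three-move sandwich. Take the first identity of (a): both $L(\xi,\wtd I)L(\eta,\wtd I)$ and $L(L(\xi,\wtd I)\eta,\wtd I)$ are bounded operators $\mc H_k\to\mc H_i\boxtimes\mc H_j\boxtimes\mc H_k$, so it suffices to prove equality on the dense subspace $\mc H_k(J)$ for one fixed $\wtd J$ clockwise to $\wtd I$ (e.g. $\wtd J=\wtd I'$). For $\chi\in\mc H_k(J)$ I would apply \eqref{eq7} to turn each $L(\cdot,\wtd I)\chi$ into $R(\chi,\wtd J)(\cdot)$: writing $\zeta=L(\xi,\wtd I)\eta$, the left-hand side becomes $R(\chi,\wtd J)\zeta=R(\chi,\wtd J)L(\xi,\wtd I)\eta$, while the right-hand side becomes $L(\xi,\wtd I)L(\eta,\wtd I)\chi=L(\xi,\wtd I)R(\chi,\wtd J)\eta$. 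The locality axiom (e), applied with $\xi\in\mc H_i(I)$ and $\chi\in\mc H_k(J)$ acting on $\mc H_j$, says exactly that $R(\chi,\wtd J)$ and $L(\xi,\wtd I)$ commute on $\mc H_j$; density of $\mc H_k(J)$ then upgrades the pointwise equality to an operator identity.

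The remaining three identities run along the same lines with the obvious substitutions. The second identity of (a) uses a $\wtd J$ anticlockwise to $\wtd I$ together with the mirror form $R(\xi,\wtd I)\chi=L(\chi,\wtd J)\xi$ of \eqref{eq7}, followed by the commutation of $L(\chi,\wtd J)$ with $R(\xi,\wtd I)$ from (e). For (b) one invokes the adjoint half of the locality axiom: the first identity rests on $R(\chi,\wtd J)L(\xi,\wtd I)^*=L(\xi,\wtd I)^*R(\chi,\wtd J)$ on $\mc H_i\boxtimes\mc H_j$ (directly from (e)), and the $R$-identity on $L(\chi,\wtd J)R(\xi,\wtd I)^*=R(\xi,\wtd I)^*L(\chi,\wtd J)$, which is the adjoint of the corresponding instance of (e). In each case \eqref{eq7} is used once to open the sandwich and once more to close it, converting $R(\chi,\wtd J)\beta$ back into $L(\beta,\wtd I)\chi$ with $\beta$ the relevant vector $L(\xi,\wtd I)^*\psi$ (resp. $R(\xi,\wtd I)^*\phi$), whose membership in $\mc H_j(I)$ was already established.

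I expect the only real friction to be bookkeeping: correctly matching the anticlockwise/clockwise roles of $\wtd I$ and $\wtd J$ to the orientation convention built into axiom (e), and choosing, for each of the four identities, the side ($L$ or $R$) on which locality is applied and whether one needs (e) itself or its adjoint. None of the steps is deep; the entire content lies in threading \eqref{eq7} and locality together in the right order, after which boundedness together with the density of $\mc H_k(J)$ promotes the pointwise identities to operator identities.
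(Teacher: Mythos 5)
Your proposal is correct and follows essentially the same route as the paper: the paper likewise proves the identities by evaluating on $\chi=R(\chi,\wtd I')\Omega$ for $\chi\in\mc H_k(I')$, commuting $R(\chi,\wtd I')$ through via the adjoint commutativity of locality, and concluding by density and boundedness (it writes out only the first identity of (b) and declares the rest analogous, exactly as your sandwich handles them). The only blemish is cosmetic: in your discussion of the first identity of (a) the labels ``left-hand side'' and ``right-hand side'' are interchanged, though the two expressions you compute and equate are the correct ones.
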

As a special case, we see that if $\xi\in\mc H_i(I)$ and $x\in\mc A(I)$, then $x\xi\in\mc H_i(I)$, and
\begin{gather}
L(x\xi,\wtd I)=xL(\xi,\wtd I),\qquad R(x\xi,\wtd I)=xR(\xi,\wtd I).\label{eq47}
\end{gather}
\begin{proof}
We only prove the first equation of part (b); the second one follows similarly. Part (a) follows either from a similar argument or from \cite{Gui21a} proposition 3.6. Since $L(\xi,\wtd I)^*\psi=L(\xi,\wtd I)^*L(\psi,\wtd I)\Omega$, we clearly have $L(\xi,\wtd I)^*\psi\in\mc H_j(I)$. Choose any $\chi\in\mc H_k(\wtd I')$. Then, by the adjoint commutativity of left and right operators,
\begin{align*}
&L(\xi,\wtd I)^*L(\psi,\wtd I)\chi=L(\xi,\wtd I)^*L(\psi,\wtd I)R(\chi,\wtd I')\Omega=R(\chi,\wtd I')L(\xi,\wtd I)^*L(\psi,\wtd I)\Omega\\
=&R(\chi,\wtd I')L(\xi,\wtd I)^*\psi=R(\chi,\wtd I')L(L(\xi,\wtd I)^*\psi)\Omega=L(L(\xi,\wtd I)^*\psi)R(\chi,\wtd I')\Omega\\
=&L(L(\xi,\wtd I)^*\psi)\chi.
\end{align*}
\end{proof}

Next, we discuss the conformal covariance of $\scr E$. For any $\wtd I=(I,\arg_I)\in\Jtd$ and $g\in\GA$, we have $gI$ defined by the action of $\Diffp(\mbb S^1)$ on $\mbb S^1$. We now set $g\wtd I=(gI,\arg_{gI})$, where $\arg_{gI}$ is defined as follows. Choose any map $\gamma:[0,1]\rightarrow\GA$ satisfying $\gamma(0)=1,\gamma(1)=g$ such that $\gamma$ descends to a (continuous) path in $\scr G$. Then for any $z\in I$ there is a path $\gamma_z:[0,1]\rightarrow \mbb S^1$ defined by $\gamma_z(t)=\gamma(t)z$. The argument $\arg_I(z)$ of $z$ changes continuously along the path $\gamma_z$ to an argument of $gz$, whose value is denoted by $\arg_{gI}(gz)$. 

\begin{thm}[\cite{Gui21a} theorem 3.13]\label{lb2}
If $\mc A$ is conformal covariant, then $\scr E=(\mc A,\RepA,\boxtimes,\mc H)$ is \textbf{conformal covariant}, which means that for any $g\in\GA,\wtd I\in\wtd{\mc J},\mc H_i\in\Obj(\RepA),\xi\in\mc H_i(I)$, there exists an element $g\xi g^{-1}\in\mc H_i(g I)$ such that
\begin{align}
L(g\xi g^{-1},g\wtd I)=gL(\xi,\wtd I)g^{-1},\qquad R(g\xi g^{-1},g\wtd I)=gR(\xi,\wtd I)g^{-1}\label{eq1}
\end{align}
when acting on any $\mc H_j\in\Obj(\RepA)$.
\end{thm}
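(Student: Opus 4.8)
The plan is to give $g\xi g^{-1}$ an explicit definition and then establish the intertwining relations \eqref{eq1} by reducing to the case in which $g$ is supported in an interval, where the fusion relations of Proposition \ref{lb5} apply. First I would define the conjugated vector. By the state-field correspondence and Reeh-Schlieder, every $\xi\in\mc H_i(I)$ is implemented by a unique $T\in\Hom_{\mc A(I')}(\mc H_0,\mc H_i)$ with $T\Omega=\xi$, namely $T=L(\xi,\wtd I)|_{\mc H_0}$ (which is independent of $\arg_I$). I would then set
\[
g\xi g^{-1}:=\big(U_i(g)\,T\,U_0(g)^{-1}\big)\Omega=U_i(g)\,L(\xi,\wtd I)|_{\mc H_0}\,U_0(g)^{-1}\Omega .
\]
The net covariance \eqref{eq20} together with $g(I')=(gI)'$ shows $U_i(g)TU_0(g)^{-1}\in\Hom_{\mc A((gI)')}(\mc H_0,\mc H_i)$, so $g\xi g^{-1}\in\mc H_i(gI)$, and well-definedness follows from the injectivity of $S\mapsto S\Omega$ on $\Hom_{\mc A(I')}(\mc H_0,\mc H_i)$.

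Because $U_i,U_0$ are genuine representations, this assignment is a true conjugation: $(gh)\xi(gh)^{-1}=g(h\xi h^{-1})g^{-1}$. Combined with the geometric cocycle $(gh)\wtd I=g(h\wtd I)$ (from the path-continuation definition of $\arg_{gI}$), one checks that \eqref{eq1} is stable under composition, that is, if it holds for $g$ and for $h$ then it holds for $gh$. Since $\GA$ is algebraically generated by $\{\GA(I):I\in\mc J\}$, it therefore suffices to prove \eqref{eq1} for $g\in\GA(I_0)$.

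For such $g$ the decisive simplification is \eqref{eq10}: the representing unitary $a:=U(g)$ lies in $\mc A(I_0)$, and $U_m(g)=\pi_m(a)$ for \emph{every} $\mc A$-module $\mc H_m$, in particular for $\mc H_k$ and for the fusion product $\mc H_i\boxtimes\mc H_k$. Enlarging $I_0$ when possible (using $\GA(I_0)\subset\GA(\wht I_0)$) and otherwise fragmenting $g$ into diffeomorphisms supported in smaller subintervals, I may assume $I\subset I_0$, hence $g\wtd I\subset\wtd{I_0}$. Then $g\xi g^{-1}=\pi_i(a)\beta$ where $\beta:=L(\xi,\wtd{I_0})(a^*\Omega)\in\mc H_i(I_0)$ by Proposition \ref{lb5}(a). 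Pulling $a$ out of $L(\pi_i(a)\beta,\wtd{I_0})$ via \eqref{eq47}, fusing $L(\xi,\wtd{I_0})L(a^*\Omega,\wtd{I_0})=L(\beta,\wtd{I_0})$ via Proposition \ref{lb5}(a), identifying $L(a^*\Omega,\wtd{I_0})|_{\mc H_k}=\pi_k(a^*)$ via \eqref{eq9}, and finally using the unitarity $a^*a=1$, I obtain on every $\mc H_k$
\[
L(g\xi g^{-1},g\wtd I)=\pi_{i\boxtimes k}(a)\,L(\xi,\wtd I)\,\pi_k(a)^{-1}=U_{i\boxtimes k}(g)\,L(\xi,\wtd I)\,U_k(g)^{-1},
\]
which is exactly \eqref{eq1} for $L$; the relation for $R$ follows identically from the right-handed halves of \eqref{eq47}, \eqref{eq9} and Proposition \ref{lb5}. (An equivalent route transports the whole extension by $g$ and invokes the uniqueness of vector-labeled closed categorical extensions, but the direct computation above seems cleaner.)

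The main obstacle is this local case: controlling the $\GA$-action $U_{i\boxtimes k}(g)$ on the fusion product and relating it to the separate actions on $\mc H_i$ and $\mc H_k$ through the operators $L(\xi,\wtd I)$. This is precisely where the fusion relations of Proposition \ref{lb5} and \eqref{eq47}, together with the unitarity of $a$ (producing the cancellation $a^*a=1$), carry the argument; away from the localized situation these tools are unavailable, since for a general $g\in\GA$ the vector $U_0(g)^{-1}\Omega$ is not supported in any proper interval, $\Omega$ being only $\UPSU$-invariant and not $\GA$-invariant. A secondary, purely geometric, difficulty is the reduction itself: justifying that $\GA$ is generated by its localized subgroups, that $g$ may be fragmented to arrange $I\subset I_0$, and the bookkeeping of arg-functions ensuring $g\wtd I\subset\wtd{I_0}$.
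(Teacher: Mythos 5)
Your route is genuinely different from the paper's. The paper (Appendix \ref{lb18}, adapted to the conformal case at its end) never reduces to localized $g$: it \emph{defines} a map $V_{\wtd I}(g)L(\xi,\wtd I)\eta:=L(g\xi g^{-1},g\wtd I)g\eta$ on $\mc H_i\boxtimes\mc H_j$, proves well-definedness and unitarity for \emph{arbitrary} $g\in\GA$ by the inner-product computation \eqref{eq56} (which uses only locality and needs no compatibility between $I$ and the support of $g$), checks that $V$ is a continuous representation satisfying \eqref{eq10}, and then gets \eqref{eq1} by the uniqueness of the covariant structure on $\mc H_i\boxtimes\mc H_j$. Your localized computation with $a=U(g)\in\mc A(I_0)$, Proposition \ref{lb5}, \eqref{eq47} and \eqref{eq9} is correct and is essentially the ingredient the paper needs anyway to verify \eqref{eq10} for $V$; the difference is that in the paper's scheme this verification may be performed on the single convenient dense subspace $L(\mc H_i(I),\wtd I)\mc H_j(I')$ with $I\subset I_0$, because well-definedness of $V$ independently of $\wtd I$ has already been established globally.

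The gap is in your reduction ``I may assume $I\subset I_0$.'' Since \eqref{eq1} only composes when it is known for each generator at \emph{all} arg-valued intervals (the second factor must be handled at the moved interval $h\wtd I$), you must prove \eqref{eq1} for $g\in\GA(I_0)$ and \emph{every} $\wtd I\in\Jtd$. But if $I'\sjs I_0$ (so $I\cup I_0$ is dense, or at least not contained in any element of $\mc J$), there is no $\wht I_0\in\mc J$ with $I\cup I_0\subset\wht I_0$, and fragmenting $g$ into pieces supported in subintervals $I_m\subset I_0$ does not help: the obstruction is the size of $I$, not of $I_0$, and moreover the intermediate intervals $g_{m+1}\cdots g_nI$ arising in the composition have complements that can shrink, so the required fineness of the fragmentation depends circularly on the fragmentation itself. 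In this configuration your chain $L(g\xi g^{-1},g\wtd I)=L(a\beta,\wtd{I_0})$ with $\beta=L(\xi,\wtd{I_0})a^*\Omega$ is not available, since $\xi\notin\mc H_i(I_0)$ and isotony cannot move $\wtd I$ into $\wtd I_0$. The case is fixable — e.g.\ by testing against $\eta\in\mc H_k(J)$ with $J\subset I'\subset I_0$ and running the dual computation on the $R(\eta,\wtd J)$ side, where everything \emph{is} localized in $I_0$, or by an approximation of $\xi$ by vectors in $\mc H_i(I_1)$ for small $I_1$ together with the strong convergence $L(\xi_n,\wtd I)\to L(\xi,\wtd I)$ on $\mc H_k(J)$ and the uniform bound you already derived — but as written this case is not covered, and it is exactly the case the generation argument forces you to treat.
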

It's clear that we have
\begin{align}
g\xi g^{-1}=gL(\xi,\wtd I)g^{-1}\Omega=gR(\xi,\wtd I)g^{-1}\Omega.
\end{align}
In particular, when $g\in\UPSU$ one has $g\xi g^{-1}=g\xi$ by the state-field correspondence and that $g\Omega=\Omega$. Therefore
\begin{align}
L(g\xi,g\wtd I)=gL(\xi,\wtd I)g^{-1},\qquad R(g\xi,g\wtd I)=gR(\xi,\wtd I)g^{-1}\qquad(\forall g\in\UPSU).\label{eq2}
\end{align}
The above property is called the \textbf{M\"obius covariance} of $\scr E$.

\section{Rigid categorical extensions and dualized objects}\label{lb30}

Recall that a representation $\mc H_i$ of $\mc A$ is called dualizable if there exists an object $\mc H_{\ovl i}\in\Obj(\RepA)$ (called \textbf{dual object}) and \textbf{evaluations} $\ev_{i,\ovl i}\in\Hom_{\mc A}(\mc H_i\boxtimes\mc H_{\ovl i},\mc H_0)$ and $\ev_{\ovl i,i}\in\Hom_{\mc A}(\mc H_{\ovl i}\boxtimes\mc H_i,\mc H_0)$  satisfying the conjugate equations
\begin{gather}
	(\ev_{i,\ovl i}\otimes\id_i)(\id_i\otimes\coev_{\ovl i,i})=\id_i=(\id_i\otimes\ev_{\ovl i,i})(\coev_{i,\ovl i}\otimes\id_i),\label{eq3}\\
	(\ev_{\ovl i,i}\otimes\id_{\ovl i})(\id_{\ovl i}\otimes\coev_{i,\ovl i})=\id_{\ovl i}=(\id_{\ovl i}\otimes\ev_{i,\ovl i})(\coev_{\ovl i,i}\otimes\id_{\ovl i}),\label{eq4}
\end{gather}
where we set $\coev_{i,\ovl i}=\ev_{i,\ovl i}^*,\coev_{\ovl i,i}=\ev_{\ovl i,i}^*$. It is clear that $\mc H_i$ is also a dual object of $\mc H_{\ovl i}$. 

Note that in each of \eqref{eq3} and \eqref{eq4}, the first equation is equivalent to the second one by taking adjoint. Note also that $\ev_{\ovl i,i}$ is uniquely determined by $\ev_{i,\ovl i}$ since  $\coev_{\ovl i,i}$ is so. Moreover, one can choose the evaluations to be \textbf{standard}, which means that besides the conjugate equations, we also have
\begin{align}
	\ev_{i,\ovl i}(F\otimes\id_{\ovl i})\coev_{i,\ovl i}=\ev_{\ovl i,i}(\id_{\ovl i}\otimes F)\coev_{\ovl i,i}\label{eq57}
\end{align}
for any $F\in\End_{\mc A}(\mc H_i)$. Then there exist positive numbers $d_i=d_{\ovl i}$ satisfying $\ev_{i,\ovl i}\coev_{i,\ovl i}=\ev_{\ovl i,i}\coev_{\ovl i,i}=d_i\id_0=d_{\ovl i}\id_0$, called the \textbf{quantum dimensions} of $\mc H_i$ and $\mc H_{\ovl i}$.  

Standard evaluations exist and are unique up to unitaries. The uniqueness means that if $\mc H_{\wht i}$ is also a dual object of $\mc H_i$, and $\wtd\ev_{i,\wht i}\in\Hom_{\mc A}(\mc H_i\boxtimes\mc H_{\wht i},\mc H_0)$ is a standard evaluation, then $\wtd\ev_{i,\wht i}:=\ev_{i,\ovl i}(\id_i\otimes u)$ for a (necessarily unique) unitary $u\in\Hom_{\mc A}(\mc H_{\wht i},\mc H_{\ovl i})$. Moreover, we also have $\wtd\ev_{\wht i,i}:=\ev_{\ovl i,i}(u\otimes\id_i)$. We refer the reader to \cite{LR97} or \cite{Yam04} or \cite{BDH14} for more details. 

\begin{df}
We let $\RepfA$ be the $C^*$-tensor category of dualizable M\"obius covariant representations of $\mc A$.
\end{df}

Recall from Section \ref{lb29} that when $\mc A$ is conformal covariant, the conformal covariance and hence the M\"obius covariance of dualizable representations are automatic. Then $\RepfA$ is a rigid braided $C^*$-tensor category.\footnote{That $\RepfA$ is closed under fusion product $\boxtimes$ is known to experts. In  section \ref{lb18} we give a proof of this fact.} We remark that $\mc H_{\ovl i}$ is also M\"obius covariant by \cite{GL96} theorem 2.11. Therefore $\mc H_{\ovl i}\in\Obj(\RepfA)$ if $\mc H_i$ is so. That any $\mc A$-subrepresentation of $\mc H_i$ is M\"obius covariant follows from \cite[Prop. 2.2]{GL96}.

We can now restrict $\scr E$ to $\RepfA$ to obtain a (closed, vector-labeled) \textbf{rigid categorical extension} $\Ef=(\mc A,\RepfA,\boxtimes,\mc H)$, which is also conformal covariant when $\mc A$ is so. This means that when $\mc A$ is conformal covariant, definition \ref{lb1} and theorem \ref{lb2} hold verbatim for $\Ef$, except that $\RepA$ should be replaced by $\RepfA$. When $\mc A$ is only M\"obius covariant, these are also true except that theorem \ref{lb2} should be replaced by M\"obius covariance. Note first of all that for any $\mc H_i\in\Obj(\RepfA)$, the strongly continuous unitary representations of $\UPSU$ making $\mc H_i$ M\"obius covariant are unique by \cite{GL96} proposition 2.2. As an easy consequence of this fact, any morphism in $\RepfA$ intertwines the actions of $\UPSU$ (cf. Lemma \ref{lb21}). The following is proved in section \ref{lb18}.
\begin{thm}\label{lb19}
	$\RepfA$ is closed under $\boxtimes$. Moreover, $\Ef=(\mc A,\RepfA,\boxtimes,\mc H)$ is \textbf{M\"obius covariant}, which means that for any $g\in\UPSU,\wtd I\in\wtd{\mc J},\mc H_i\in\Obj(\RepfA),\xi\in\mc H_i(I)$, we have
	\begin{align}
		L(g\xi,g\wtd I)=gL(\xi,\wtd I)g^{-1},\qquad R(g\xi,g\wtd I)=gR(\xi,\wtd I)g^{-1}\label{eq52}
	\end{align}
	when acting on any $\mc H_j\in\Obj(\RepfA)$.
\end{thm}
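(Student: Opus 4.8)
The plan is to prove the two assertions in turn: first that $\mc H_i\boxtimes\mc H_j$ is again dualizable, and then — the real content when $\mc A$ is only M\"obius covariant — that it carries a strongly continuous positive-energy representation $U_{i\boxtimes j}$ of $\UPSU$ for which \eqref{eq52} holds. Dualizability is the soft part: one takes $\mc H_{\ovl j}\boxtimes\mc H_{\ovl i}$ as dual object and builds the evaluations by composing those of the factors, e.g. $\ev_{i,\ovl i}(\id_i\otimes\ev_{j,\ovl j}\otimes\id_{\ovl i})$, verifying the conjugate equations \eqref{eq3}--\eqref{eq4} and standardness \eqref{eq57} by the usual rigid-category manipulations. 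Since $\mc H_{\ovl i},\mc H_{\ovl j}$ are M\"obius covariant by \cite{GL96}, the dual lies in $\RepfA$ once the covariance part is established, and positivity of energy for the dualizable object $\mc H_i\boxtimes\mc H_j$ is then automatic by \cite[Cor. 4.4]{BCL98}.

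For the covariance I would construct $U_{i\boxtimes j}(g)$ directly and read off \eqref{eq52} from its definition. The conceptual engine is a computation of $L(\xi_1,\wtd I)^*L(\xi_2,\wtd I)$ on $\mc H_j$, for $\xi_1,\xi_2\in\mc H_i(I)$: choosing $\eta\in\mc H_j(K)$ with $\wtd K$ clockwise to $\wtd I$, relation \eqref{eq7} gives $L(\xi_2,\wtd I)\eta=R(\eta,\wtd K)\xi_2$, locality (adjoint commutativity) moves $L(\xi_1,\wtd I)^*$ past $R(\eta,\wtd K)$, and Proposition \ref{lb5}(b) gives $L(\xi_1,\wtd I)^*\xi_2\in\mc H_0(I)$, which by Haag duality equals $y\Omega$ for a unique $y\in\mc A(I)$. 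One obtains
\begin{align*}
L(\xi_1,\wtd I)^*L(\xi_2,\wtd I)=\pi_{j,I}(y),\qquad y\Omega=L(\xi_1,\wtd I)^*\xi_2\in\mc H_0(I),
\end{align*}
so this inner-product kernel is governed entirely by a vacuum-level quantity. The second ingredient is the ``base case'' of \eqref{eq52} on $\mc H_0$: since $L(\chi,\wtd K)|_{\mc H_0}$ is the unique element of $\Hom_{\mc A(K')}(\mc H_0,\mc H_i)$ sending $\Omega$ to $\chi$ (axiom (c) plus Reeh-Schlieder), and conjugation by $U_i(g),U_0(g)$ carries this space to $\Hom_{\mc A((gK)')}(\mc H_0,\mc H_i)$ and the intertwiner $\Omega\mapsto\chi$ to $\Omega\mapsto g\chi$ (by \eqref{eq20}), one gets $L(g\chi,g\wtd K)|_{\mc H_0}=U_i(g)L(\chi,\wtd K)|_{\mc H_0}U_0(g)^*$, hence $L(g\xi_1,g\wtd I)^*(g\xi_2)=U_0(g)L(\xi_1,\wtd I)^*\xi_2$.

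With these in hand I would fix a reference $\wtd I_0$, note that $\{L(\xi,\wtd I_0)\eta:\xi\in\mc H_i(I_0),\eta\in\mc H_j\}$ is total in $\mc H_i\boxtimes\mc H_j$ (axiom (d)), and define
\begin{align*}
U_{i\boxtimes j}(g)\,L(\xi,\wtd I_0)\eta:=L(g\xi,g\wtd I_0)\,U_j(g)\eta.
\end{align*}
Well-definedness and isometry follow at once: the pairwise inner products $\langle L(g\xi_k,g\wtd I_0)U_j(g)\eta_k,L(g\xi_l,g\wtd I_0)U_j(g)\eta_l\rangle$ reduce, by the kernel identity applied on $\mc H_i(gI_0)$, the base case, and \eqref{eq20} for $\mc H_j$, to $\langle L(\xi_k,\wtd I_0)\eta_k,L(\xi_l,\wtd I_0)\eta_l\rangle$; since the image vectors are again total (axiom (d) for $g\wtd I_0$), $U_{i\boxtimes j}(g)$ is unitary. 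A reference-independence argument — if $\wtd I\subset\wtd I_0$ then isotony (axiom (a)) forces the same operator, and for any fixed small $\wtd I$ the vectors $L(\mc H_i(I),\wtd I)\mc H_j$ are already total — shows that every $\wtd I$ yields the same unitary, which is precisely \eqref{eq52} for $L$ at all arg-valued intervals. Taking the inserted object to be $\mc H_0$ and using \eqref{eq9} with \eqref{eq20} for $\mc H_m$ shows $U_{0\boxtimes m}=U_m$ for every M\"obius covariant $\mc H_m$; specialising to $m=i\boxtimes j$ and invoking \eqref{eq9} once more yields \eqref{eq20} for $U_{i\boxtimes j}$, so $\mc H_i\boxtimes\mc H_j$ is genuinely M\"obius covariant. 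The $R$-version of \eqref{eq52} then follows from the $L$-version because the braiding $\ss_{i,j}$ is a morphism, hence intertwines $\UPSU$ (Lemma \ref{lb21}).

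The main obstacle is assembling these pointwise-in-$g$ unitaries into an honest strongly continuous representation. The homomorphism property $U_{i\boxtimes j}(gh)=U_{i\boxtimes j}(g)U_{i\boxtimes j}(h)$ requires the cocycle identity $(gh)\wtd I=g(h\wtd I)$ for the $\UPSU$-action on $\Jtd$, and is exactly where reference-independence is essential, since $h\wtd I_0\ne\wtd I_0$ forces one to evaluate $U_{i\boxtimes j}(g)$ on a shifted interval. Strong continuity I would obtain from weak measurability — immediate from the explicit formula and the strong continuity of $U_i,U_j$ — together with separability of $\mc H_i\boxtimes\mc H_j$ and the standard automatic-continuity theorem for unitary representations of the Lie group $\UPSU$. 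These steps are bookkeeping-heavy but not conceptually hard; the genuinely new ingredient is the kernel identity $L(\xi_1,\wtd I)^*L(\xi_2,\wtd I)=\pi_{j,I}(y)$, which converts the well-definedness problem into the already-known M\"obius covariance of the individual modules.
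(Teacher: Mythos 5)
Your strategy is essentially the one the paper uses in Section \ref{lb18}: define the candidate representation by $V(g)L(\xi,\wtd I)\eta=L(g\xi,g\wtd I)g\eta$, reduce well-definedness and unitarity to the vacuum sector (your kernel identity $L(\xi_1,\wtd I)^*L(\xi_2,\wtd I)=\pi_{j,I}(y)$ with $y\Omega=L(\xi_1,\wtd I)^*\xi_2$ is exactly Proposition \ref{lb5}(b) combined with \eqref{eq9}, and it packages the same locality computation as the paper's \eqref{eq56}), check independence of $\wtd I$ and the group law, verify \eqref{eq20}, and deduce the $R$-relation from the braiding and Lemma \ref{lb21}. The treatment of dualizability of $\mc H_i\boxtimes\mc H_j$ and the reduction of positivity of energy to \cite[Cor. 4.4]{BCL98} also match.

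The one step where you are too quick is strong continuity. You assert that weak measurability of $g\mapsto V(g)$ is ``immediate from the explicit formula,'' but the matrix coefficients involve $L(g\xi,g\wtd I_0)$, and the map $y\Omega\mapsto\pi_{j,I}(y)$ is not continuous from the Hilbert-space topology to any operator topology, so even measurability of $g\mapsto\langle L(g\xi,g\wtd I_0)U_j(g)\eta,\Psi\rangle$ is not free. The missing ingredient is a uniform norm bound: the paper shows via Proposition \ref{lb5} that $\lVert L(g\xi,\wtd I)|_{\mc H_j}\rVert^2=\lVert\pi_{j,gI}(gxg^{-1})\rVert\le\lVert x\rVert$ where $x=L(\xi,\wtd I)^*L(\xi,\wtd I)|_{\mc H_0}\in\mc A(I)$, and combines this with strong convergence on the dense subspace $\mc H_j(I')$ (where $L(g_n\xi,\wtd I)\chi=R(\chi,\wtd I')g_n\xi$ converges because $R(\chi,\wtd I')$ is bounded) to get strong continuity directly, with no appeal to von Neumann's automatic-continuity theorem. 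Your route can be repaired with the same bound, but as written the continuity claim has a gap; everything else is sound.
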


Since all the $\ev$ and $\coev$ for $\mc H_i$ are determined by $\ev_{i,\ovl i}$, we define the following:

\begin{df}
For any dualizable $\mc H_i$, a \textbf{dualizing data} is $(\mc H_{\ovl i},\ev_{i,\ovl i})$ where $\mc H_{\ovl i}\in\Obj(\RepA)$ is a dual object of $\mc H_i$,  $\ev_{i,\ovl i}\in\Hom_{\mc A}(\mc H_i\boxtimes\mc H_{\ovl i},\mc H_0)$ satisfies \eqref{eq3}, \eqref{eq4}, and $\coev_{i,\ovl i}=\ev_{i,\ovl i}^*,\coev_{\ovl i,i}=\ev_{\ovl i,i}^*$, and the evaluations are standard.
\end{df}

A dualizable $\mc H_i$ have infinitely many different dualizing data. However, for the purpose of this article, we need to choose a canonical dualizing data satisfying some nice property, such as that the double dual object equals the original one. For this purpose, we introduce the following definition.

\begin{df}
A dualizable M\"obius covariant representation $\mc H_i$ of $\mc A$ is called \textbf{dualized}, if $\mc H_i$ is equipped with a dualizing data $(\mc H_{\ovl i},\ev_{i,\ovl i})$.
\end{df}

Equivalently, a dualized object is a triple $(\mc H_i,\mc H_{\ovl i},\ev_{i,\ovl i})$ where $\mc H_i\in\Obj(\RepfA)$ and $(\mc H_{\ovl i},\ev_{i,\ovl i})$ is a dualizing data of $\mc H_i$.

In this article, our main results are stated for the category $\RepdA$ and the categorical extension $\Ed$ modeled on it. They are defined as follows.

\begin{df}\label{lb31}
$\RepdA$ is the rigid braided $C^*$-tensor category of dualized (M\"obius covariant) representations of $\mc A$. Morphisms of $\RepdA$ are the same as morphisms of $\RepfA$ (i.e., homomorphisms of representations of $\mc A$). If $\mc H_i,\mc H_j$ are dualized with dualizing data $(\mc H_{\ovl i},\ev_{i,\ovl i})$ and $(\mc H_{\ovl j},\ev_{j,\ovl j})$, we define their tensor (fusion) product to be $\mc H_i\boxtimes\mc H_j$ as a (dualizable M\"obius covariant) representation of $\mc A$, equipped with dualizing data
\begin{align*}
	(\mc H_{\ovl j}\boxtimes\mc H_{\ovl i},\ev_{i\boxtimes j,\ovl j\boxtimes\ovl i})
\end{align*}
where
\begin{align*}
\ev_{i\boxtimes j,\ovl j\boxtimes\ovl i}	=\ev_{i,\ovl i}(\id_i\otimes\ev_{j,\ovl j}\otimes\id_{\ovl i}).
\end{align*}
Moreover, for any $\mc H_i\in\Obj(\RepdA)$ equipped with dualizing data $(\mc H_{\ovl i},\ev_{i,\ovl i})$, we define its \textbf{(canonical) dual object} to be $\mc H_{\ovl i}$ equipped with dualizing data $(\mc H_i,\ev_{\ovl i,i})$, where $\ev_{\ovl i,i}$ is related to $\ev_{i,\ovl i}$ by \eqref{eq3}, \eqref{eq4}, and $\coev_{i,\ovl i}=\ev_{i,\ovl i}^*,\coev_{\ovl i,i}=\ev_{\ovl i,i}^*$.

We let $\mc H_0\in\Obj(\RepdA)$ be the vacuum representation of $\mc A$ with dualizing data $(\mc H_0,\lambda_{0,0})$ where, in general, $\lambda_{0,i}:\mc H_0\boxtimes\mc H_i\xrightarrow{\simeq}\mc H_i$ is the left unitor.
\end{df}

The following is easy to see:

\begin{pp}
Let $\mc H_i,\mc H_j$ be dualized, and let $\mc H_{\ovl i},\mc H_{\ovl j}$ be their respecitve dual objects. Then the dual object of $\mc H_0$ is $\mc H_0$, the dual object of $\mc H_{\ovl i}$ is $\mc H_i$, and the dual object of $\mc H_i\boxtimes \mc H_j$ is $\mc H_{\ovl j}\boxtimes \mc H_{\ovl i}$. If we identify $\mc H_i$ with $\mc H_0\boxtimes\mc H_i$ (resp. $\mc H_i\boxtimes\mc H_0$) using the left (resp. right) unitor, then their dualizing data are automatically identified. If we identify $(\mc H_i\boxtimes\mc H_j)\boxtimes\mc H_k$ with $\mc H_i\boxtimes(\mc H_j\boxtimes\mc H_k)$ using the associator, then their dualizing data are also automatically identified.
\end{pp}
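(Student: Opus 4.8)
The plan is to read the proposition as a sequence of bookkeeping checks, each reducing to the \emph{uniqueness} of the conjugate evaluation recalled after \eqref{eq57}: whenever we exhibit, for a dual pair, a morphism satisfying the conjugate equations \eqref{eq3}, \eqref{eq4} and the standardness condition \eqref{eq57}, that morphism must be the prescribed dualizing evaluation. Throughout I work in the strict setting assumed for $\RepA$, so the left/right unitors and the associator are identities and $\otimes$ is strictly associative and functorial on morphisms; this turns every coherence assertion into a literal equality of morphisms.

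For the vacuum, $\mc H_0$ carries the dualizing data $(\mc H_0,\lambda_{0,0})$, and strictness gives $\lambda_{0,0}=\id_0$. The conjugate equations and \eqref{eq57} then hold trivially with $\ev_{\ovl 0,0}=\id_0=\lambda_{0,0}$, so the canonical dual of $\mc H_0$ is again $\mc H_0$ with the same data. For the double dual, the key observation is that the four conjugate equations \eqref{eq3}, \eqref{eq4} and the standardness condition \eqref{eq57} are invariant under the simultaneous exchange $(\mc H_i,\ev_{i,\ovl i},\coev_{i,\ovl i})\leftrightarrow(\mc H_{\ovl i},\ev_{\ovl i,i},\coev_{\ovl i,i})$. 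Hence, if $\ev_{\ovl i,i}$ is the evaluation produced from $\ev_{i,\ovl i}$, then feeding $\ev_{\ovl i,i}$ back into the same construction for the pair $(\mc H_{\ovl i},\mc H_i)$ reproduces exactly $\ev_{i,\ovl i}$. By uniqueness, the canonical dual of $\mc H_{\ovl i}$ (equipped with data $(\mc H_i,\ev_{\ovl i,i})$) is $\mc H_i$ with its original data $(\mc H_{\ovl i},\ev_{i,\ovl i})$; i.e.\ the dual is involutive on dualized objects.

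The only computation with genuine content is the tensor-product case. By Definition \ref{lb31} the canonical dual of $\mc H_i\boxtimes\mc H_j$ is, at the object level, $\mc H_{\ovl j}\boxtimes\mc H_{\ovl i}$, so the real claim is that the evaluation this object inherits as a dual of $\mc H_i\boxtimes\mc H_j$ coincides with the one it carries as the tensor product of the dualized objects $\mc H_{\ovl j}$ and $\mc H_{\ovl i}$, namely $\ev_{\ovl j,j}(\id_{\ovl j}\otimes\ev_{\ovl i,i}\otimes\id_j)$. To prove this I would take $\ev_{i\boxtimes j,\ovl j\boxtimes\ovl i}=\ev_{i,\ovl i}(\id_i\otimes\ev_{j,\ovl j}\otimes\id_{\ovl i})$ together with the candidate $\ev_{\ovl j\boxtimes\ovl i,i\boxtimes j}=\ev_{\ovl j,j}(\id_{\ovl j}\otimes\ev_{\ovl i,i}\otimes\id_j)$, and verify the conjugate equations \eqref{eq3}, \eqref{eq4} for the pair $(\mc H_i\boxtimes\mc H_j,\mc H_{\ovl j}\boxtimes\mc H_{\ovl i})$ directly. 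This is the standard zig-zag computation in a rigid tensor category: each composite collapses after two applications of the conjugate equations for $\mc H_i$ and for $\mc H_j$ separately (most transparently drawn as the cancellation of two nested cups against two nested caps). Since both candidates are standard tensor-product evaluations, the standardness \eqref{eq57} of the composite is inherited from that of $\ev_{i,\ovl i}$ and $\ev_{j,\ovl j}$ (with $d_{i\boxtimes j}=d_id_j$), as already presupposed by the well-definedness of Definition \ref{lb31}. By uniqueness of the conjugate evaluation, the inherited and the tensor-product evaluations agree, so the two dualized objects named $\mc H_{\ovl j}\boxtimes\mc H_{\ovl i}$ are identical. I expect this zig-zag bookkeeping to be the only place requiring care, and it is routine once organized graphically.

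Finally, the unitor and associator statements are immediate in the strict picture. Unfolding the tensor-product formula of Definition \ref{lb31}, the dualizing data of $\mc H_0\boxtimes\mc H_i$ is $\lambda_{0,0}(\id_0\otimes\ev_{i,\ovl i}\otimes\id_0)=\ev_{i,\ovl i}$ on $\mc H_{\ovl i}\boxtimes\mc H_0=\mc H_{\ovl i}$, matching the data of $\mc H_i$; the right unitor is symmetric. For the associator, expanding the evaluations of $(\mc H_i\boxtimes\mc H_j)\boxtimes\mc H_k$ and of $\mc H_i\boxtimes(\mc H_j\boxtimes\mc H_k)$ via the same formula and using the interchange law gives, in both cases, the single nested morphism $\ev_{i,\ovl i}\bigl(\id_i\otimes\ev_{j,\ovl j}(\id_j\otimes\ev_{k,\ovl k}\otimes\id_{\ovl j})\otimes\id_{\ovl i}\bigr)$, so the two dualizing data coincide. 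This completes the verification of all the asserted identifications.
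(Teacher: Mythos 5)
The paper states this proposition without proof (``The following is easy to see''), and your verification supplies exactly the intended argument: every identification reduces, via the uniqueness of the evaluation determined by the conjugate equations, to exhibiting a candidate and checking the zig-zag identities, with the unitor/associator claims collapsing to literal equalities in the strict setting. Your write-up is correct; the only genuinely computational step (that the tensor-product evaluation $\ev_{\ovl j,j}(\id_{\ovl j}\otimes\ev_{\ovl i,i}\otimes\id_j)$ is the conjugate of $\ev_{i,\ovl i}(\id_i\otimes\ev_{j,\ovl j}\otimes\id_{\ovl i})$) is indeed the standard nested cup--cap cancellation you describe.
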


\begin{df}
The forgetful functor $\RepdA\rightarrow\RepfA$ defined by forgetting the dualizing data is clearly an equivalence of $C^*$-tensor categories. We pullback the categorical extension $\scr E$ along the forgetful functor to get a rigid categorical extension
\begin{align*}
	\Ed=(\mc A,\RepdA,\boxtimes,\mc H).
\end{align*}
\end{df}


The following theorem follows obviously from Theorem \ref{lb19}.

\begin{thm}
Theorem \ref{lb19} holds verbatim for $\RepdA$ and $\Ed$.
\end{thm}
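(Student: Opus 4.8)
The plan is to obtain the statement by transport of structure along the forgetful functor $\RepdA\rightarrow\RepfA$, using the fact that this functor leaves untouched everything that enters Theorem \ref{lb19}. Concretely, the forgetful functor is the identity on underlying Hilbert spaces, on the dense subspaces $\mc H_i(I)$, on morphisms, and on the strongly continuous $\UPSU$-representations $U_i$; it forgets only the categorical dualizing data $(\mc H_{\ovl i},\ev_{i,\ovl i})$. Since $\Ed$ is defined as the pullback of $\scr E$ along this functor, the operators $L(\xi,\wtd I)$ and $R(\xi,\wtd I)$ of $\Ed$ attached to objects of $\RepdA$ are, after forgetting the dualizing data, literally the operators $L(\xi,\wtd I)$ and $R(\xi,\wtd I)$ of $\Ef$ attached to the underlying objects of $\RepfA$. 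Thus the two parts of Theorem \ref{lb19} can be pulled back one at a time.

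For the closedness of $\RepdA$ under $\boxtimes$, I would argue as follows. By Theorem \ref{lb19}, $\RepfA$ is closed under $\boxtimes$, so for dualized $\mc H_i,\mc H_j$ the underlying fusion product $\mc H_i\boxtimes\mc H_j$ is again a dualizable M\"obius covariant representation, i.e.\ an object of $\RepfA$. Definition \ref{lb31} equips it with the explicit dualizing data $(\mc H_{\ovl j}\boxtimes\mc H_{\ovl i},\ev_{i\boxtimes j,\ovl j\boxtimes\ovl i})$ with $\ev_{i\boxtimes j,\ovl j\boxtimes\ovl i}=\ev_{i,\ovl i}(\id_i\otimes\ev_{j,\ovl j}\otimes\id_{\ovl i})$; it remains only to recall that these evaluations satisfy the conjugate equations \eqref{eq3}, \eqref{eq4} and are standard, which is exactly the content guaranteeing that $\RepdA$ is a rigid braided $C^*$-tensor category. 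Granting that, the triple $(\mc H_i\boxtimes\mc H_j,\mc H_{\ovl j}\boxtimes\mc H_{\ovl i},\ev_{i\boxtimes j,\ovl j\boxtimes\ovl i})$ is a dualized object, and closedness follows.

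For the M\"obius covariance of $\Ed$, I would observe that the defining equations \eqref{eq52} involve only the operators $L,R$, the subspaces $\mc H_i(I)$, the transported vectors $g\xi$, and conjugation by $g\in\UPSU$ — none of which depends on the dualizing data. Since all of these coincide, under the forgetful identification, with the corresponding data for $\Ef$, and since Theorem \ref{lb19} supplies the identities $L(g\xi,g\wtd I)=gL(\xi,\wtd I)g^{-1}$ and $R(g\xi,g\wtd I)=gR(\xi,\wtd I)g^{-1}$ for $\Ef$, the very same identities hold verbatim for $\Ed$ on any $\mc H_j\in\Obj(\RepdA)$. I do not expect any genuine obstacle here: the only point requiring verification is that the tensor-product dualizing data of Definition \ref{lb31} really is a dualizing data (conjugate equations plus standardness), and this is already part of the established rigid tensor structure on $\RepdA$. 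The covariance assertion is then purely formal, which is why the result follows at once from Theorem \ref{lb19}.
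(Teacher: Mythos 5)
Your proposal is correct and matches the paper's treatment: the paper simply states that this theorem "follows obviously from Theorem \ref{lb19}," precisely because $\Ed$ is defined by pulling back $\scr E$ along the forgetful equivalence $\RepdA\rightarrow\RepfA$, which changes nothing that enters the statement. Your elaboration of the transport-of-structure argument, including the one substantive check that the tensor-product dualizing data of Definition \ref{lb31} is again standard, is exactly the implicit content of the paper's assertion.
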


We close this section by recalling some well-known facts about rigid (braided) $C^*$-tensor categories. We state these facts only for $\RepdA$.

Choose $\mc H_i,\mc H_j\in\Obj(\RepdA)$ with dualizing date $(\mc H_{\ovl i},\ev_{i,\ovl i}),(\mc H_{\ovl j},\ev_{j,\ovl j})$ respectively. Suppose that $G\in\Hom_{\mc A}(\mc H_i,\mc H_j)$. Then there exists a unique $G^\vee\in\Hom(\mc H_{\ovl j},\mc H_{\ovl i})$ satisfying
\begin{align}
	\ev_{j,\ovl j}(G\otimes\id_{\ovl j})=\ev_{i,\ovl i}(\id_i\otimes G^\vee),\label{eq36}
\end{align}
called the transpose of $G$. We have $G^{\vee\vee}=G$. Thus 
\begin{align}
	\ovl G:=(G^\vee)^*=(G^*)^\vee  \label{eq58}
\end{align}
which is in $\Hom_{\mc A}(\mc H_{\ovl i},\mc H_{\ovl j})$ and called the \textbf{conjugate} of $G$. See for example \cite{Yam04}. It is easy to see that $\ovl{\ovl G}=G$, that if $F$ is another morphism and $FG$ is definable then $\ovl{FG}=\ovl F\ovl G$, that $(\ovl G)^*=\ovl{G^*}$, and that $G$ is a projection (resp. unitary, an isometry, a partial isometry) if and only if $\ovl G$ is so. 


Since $\RepdA$ is a rigid braided $C^*$-tensor category, we can define the \textbf{twist operator} $\vartheta_i$ on any $\mc H_i\in\Obj(\RepdA)$ to be the unique operator in $\End_{\mc A}(\mc H_i)$ such that 
\begin{align}
	\ev_{i,\ovl i}=\ev_{\ovl i,i}\ss_{i,\ovl i}(\vartheta_i\otimes\id_{\ovl i}),\label{eq33}
\end{align}
where we recall that $\ss$ is the braid operator of $\RepA$, and the evaluations are assumed to be standard. Then, by \cite{Mueg00},  $\vartheta_i$ is a unitary operator independent of the dualizing data of $\mc H_i$, and the actions of $\vartheta$ on all $\mc H_i\in\Obj(\RepdA)$ (as $\vartheta_i$) define a ribbon structure compatible with the braided $C^*$-tensor structure of $\RepdA$. (Indeed, $\vartheta_i$ is unitary if and only if the evaluations are standard.)  This means, among other things, that $\vartheta$ is functorial (i.e. natural, which means that it  commutes with homomorphisms),  that for any $\mc H_i,\mc H_j\in\Obj(\RepdA)$,
\begin{gather}
	\ss_{i,j}^2=\vartheta_{i\boxtimes j}(\vartheta_i^{-1}\otimes\vartheta_j^{-1}),\label{eq32}\\
	\vartheta_i^\vee=\vartheta_{\ovl i},
\end{gather}
and (hence) that
\begin{align}
	(\id_i\otimes\vartheta_{\ovl i})\coev_{i,\ovl i}=\ss_{i,\ovl i}^{-1}\coev_{\ovl i,i}.\label{eq15}
\end{align}
(See also \cite{Gui21b} the paragraph containing equation (3.29) for some explanations.)

\section{Non-local extensions}\label{lb34}

$Q$-systems were introduced by R.Longo \cite{Lon94} and are powerful tools for studying local and non-local extensions of conformal and M\"obius covariant nets \cite{LR95,KL04,BKLR15}. In this section, we give a construction of non-local extensions by Q-systems under the framework of categorical extensions. We shall work with a general $C^*$-Frobenius algebra $Q$ in $\RepdA$, and construct a non-local extension $\mc B$ of $\mc A$ via $Q$. The results in this section hold verbatim for $\Ef,\RepfA$, and their $C^*$-Frobenius algebras.

Recall that $Q=(\mc H_a,\mu,\iota)$ is called a \textbf{$C^*$-Frobenius algebra} in $\RepdA$ if $\mc H_a\in\Obj(\RepdA),\mu\in\Hom_{\mc A}(\mc H_a\boxtimes \mc H_a,\mc H_a),\iota\in\Hom_{\mc A}(\mc H_0,\mc H_a)$, and the following conditions are satisfied:
\begin{itemize}
\item (Unit) $\mu(\iota\otimes\id_a)=\id_a=\mu(\id_a\otimes\iota)$.
\item (Associativity) $\mu(\mu\otimes\id_a)=\mu(\id_a\otimes\mu)$.
\item (Frobenius relation) $(\id_a\otimes\mu)(\mu^*\otimes\id_a)=\mu^*\mu$.
\end{itemize}
Note that the associativity and the Frobenius relation are equivalent to the \emph{adjoint commutativity} of the following diagram
	\begin{align}
\begin{CD}
\mc H_a\boxtimes \mc H_a\boxtimes \mc H_a @> \quad \id_a\otimes\mu \quad   >> \mc H_a\boxtimes \mc H_a\\
@V \mu\otimes\id_a   V  V @V \mu VV\\
\mc H_a\boxtimes \mc H_a @> \qquad \mu \qquad  >> \mc H_a
\end{CD}.\label{eq6}
\end{align}

Let us fix a $C^*$-Frobenius algebra $Q$. For any $\xi\in\mc H_a(I)$, we define bounded linear operators on $\mc H_a$:
\begin{align*}
A(\xi,\wtd I)=\mu\cdot L(\xi,\wtd I)|_{\mc H_a},\qquad B(\xi,\wtd I)=\mu\cdot R(\xi,\wtd I)|_{\mc H_a}.
\end{align*}

\begin{df}
For any $\wtd I\in\Jtd$,  $\mc B(\wtd I)$ (resp. $\mc B'(\wtd I)$) is defined to be the set of all $A(\xi,\wtd I)$ (resp. $B(\xi,\wtd I)$) where $\xi\in\mc H_a(I)$.
\end{df}
We shall show that $\mc B:\wtd I\in\Jtd\mapsto \mc B(\wtd I)$ and $\mc B':\wtd I\in\Jtd\mapsto \mc B'(\wtd I)$ are two nets of von Neumann algebras extending $\mc A$, and that the Haag duality $\mc B(\wtd I)'=\mc B'(\wtd I')$ is satisfied. First, notice that $\mc A(I)$ is also acting on $\mc H_a$. We also denote by $\mc A(I)$ the image of $\mc A(I)$ under $\pi_{a,I}$. The following lemma shows that $\mc B$ and $\mc B'$ are  extensions of $\mc A$.

\begin{pp}\label{lb4}
We have $\mc A(I)\subset\mc B(\wtd I)$ and $\mc A(I)\subset\mc B'(\wtd I)$.
\end{pp}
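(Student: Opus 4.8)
The plan is to show $\mc A(I)\subset\mc B(\wtd I)$ by producing, for each $x\in\mc A(I)$, a vector $\xi\in\mc H_a(I)$ with $A(\xi,\wtd I)=\pi_{a,I}(x)$. The natural candidate is $\xi=\iota\Omega$ when $x=1$, and more generally $\xi=\iota x\Omega$ (equivalently $\xi=x\iota\Omega$ by $\mc A$-intertwining of $\iota$). First I would check that this $\xi$ indeed lies in $\mc H_a(I)$: since $\iota\in\Hom_{\mc A}(\mc H_0,\mc H_a)\subset\Hom_{\mc A(I')}(\mc H_0,\mc H_a)$ we have $\iota\Omega\in\mc H_a(I)$ by the definition $\mc H_a(I)=\Hom_{\mc A(I')}(\mc H_0,\mc H_a)\Omega$; then by Proposition \ref{lb5} (specifically \eqref{eq47}) we get $x\iota\Omega\in\mc H_a(I)$ for $x\in\mc A(I)$.

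The heart of the computation is to identify $A(\iota\Omega,\wtd I)$ with the identity operator on $\mc H_a$. I would compute, for any $\eta\in\mc H_a$,
\begin{align*}
A(\iota\Omega,\wtd I)\eta=\mu\cdot L(\iota\Omega,\wtd I)\eta.
\end{align*}
Using the relation \eqref{eq8} with $F=\iota\in\Hom_{\mc A}(\mc H_0,\mc H_a)$ and $G=\id_a$, together with state-field correspondence $L(\Omega,\wtd I)\eta=\eta$ (axiom (c) of Definition \ref{lb1}, noting $\Omega\in\mc H_0(I)$ is the vacuum labeling the identity field), we obtain
\begin{align*}
L(\iota\Omega,\wtd I)\eta=(\iota\otimes\id_a)L(\Omega,\wtd I)\eta=(\iota\otimes\id_a)\eta,
\end{align*}
so that $A(\iota\Omega,\wtd I)\eta=\mu(\iota\otimes\id_a)\eta=\eta$ by the unit axiom of the $C^*$-Frobenius algebra. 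Hence $A(\iota\Omega,\wtd I)=\id_a$.

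To finish, I would combine this with the module relation \eqref{eq47}: for $x\in\mc A(I)$,
\begin{align*}
A(x\iota\Omega,\wtd I)=\mu\cdot L(x\iota\Omega,\wtd I)=\mu\cdot x L(\iota\Omega,\wtd I)=\pi_{a,I}(x)\cdot A(\iota\Omega,\wtd I)=\pi_{a,I}(x),
\end{align*}
where the third equality uses that $x=\pi_{a,I}(x)$ commutes with $\mu$ (as $\mu\in\Hom_{\mc A}(\mc H_a\boxtimes\mc H_a,\mc H_a)$ intertwines the $\mc A$-action, and the action of $x$ on $\mc H_a\boxtimes\mc H_a$ localized in $\wtd I$ matches via $L$). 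Since $x\iota\Omega\in\mc H_a(I)$, this exhibits $\pi_{a,I}(x)=A(x\iota\Omega,\wtd I)\in\mc B(\wtd I)$, giving $\mc A(I)\subset\mc B(\wtd I)$. The argument for $\mc B'(\wtd I)$ is identical with $L$ replaced by $R$, using $B(\iota\Omega,\wtd I)\eta=\mu(\id_a\otimes\iota)\eta=\eta$ from the right-unit axiom and the fact that $R(\Omega,\wtd I)=L(\Omega,\wtd I)$ on $\mc H_0$. The main (though mild) obstacle will be keeping the bookkeeping of the identifications $\mc H_0\boxtimes\mc H_a=\mc H_a=\mc H_a\boxtimes\mc H_0$ straight when applying \eqref{eq8} and the unit axiom, so that the left versus right unit of $Q$ is invoked correctly in the two cases.
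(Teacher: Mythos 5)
Your proof is correct and follows essentially the same route as the paper's: both exhibit $\pi_{a,I}(x)$ as $A(\iota x\Omega,\wtd I)$ (equivalently $A(x\iota\Omega,\wtd I)$) using \eqref{eq8} to pull $\iota$ through $L$, the unit axiom of $Q$, and \eqref{eq9}; the only cosmetic difference is that you factor the computation through the intermediate identity $A(\iota\Omega,\wtd I)=\id_a$ and then invoke \eqref{eq47}, whereas the paper does it in a single chain. No gaps.
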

\begin{proof}
Choose any $x\in\mc A(I)$. Then one has
\begin{align}
A(\iota x\Omega,\wtd I)=\pi_{a,I}(x)=B(\iota x\Omega,\wtd I).
\end{align}
Indeed, for any $\eta\in\mc H_a$,
\begin{align*}
A(\iota x\Omega,\wtd I)\eta=\mu\cdot L(\iota x\Omega,\wtd I)\eta=\mu(\iota\otimes\id_a)L(x\Omega,\wtd I)\eta=L(x\Omega,\wtd I)\eta=x\eta,
\end{align*}
where we have used \eqref{eq8}, the unit property, and \eqref{eq9}. The other relation is proved in a similar manner.
\end{proof}

\begin{pp}
If $\wtd I$ is anticlockwise to $\wtd J$, then for any $\xi\in\mc H_a(I)$ and $\eta\in\mc H_a(J)$, $A(\xi,\wtd I)$ \textbf{commutes adjointly} with $B(\eta,\wtd J)$, which means that $A(\xi,\wtd I)B(\eta,\wtd J)=B(\eta,\wtd J)A(\xi,\wtd I)$ and $A(\xi,\wtd I)^*B(\eta,\wtd J)=B(\eta,\wtd J)A(\xi,\wtd I)^*$.
\end{pp}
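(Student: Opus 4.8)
The plan is to reduce the two asserted identities to the defining axioms of $\scr E$ together with the associativity and Frobenius relations of $Q$. Write $L=L(\xi,\wtd I)$ and $R=R(\eta,\wtd J)$, so that $A(\xi,\wtd I)=\mu L$ and $B(\eta,\wtd J)=\mu R$ as bounded operators on $\mc H_a$, and $A(\xi,\wtd I)^*=L^*\mu^*$. Throughout I would use the functoriality axiom (Definition \ref{lb1}(b)) in the form $LF=(\id_a\otimes F)L$ and $RF=(F\otimes\id_a)R$ for any morphism $F$, together with the adjoint relation $L^*(\id_a\otimes F)=FL^*$ obtained by taking adjoints; applied with $F=\mu$ and $F=\mu^*$ these let me move $\mu$ and $\mu^*$ across the charged-field operators. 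On each line the two copies of $L$ (resp.\ $R$) are understood as acting on whatever fusion product is appropriate.

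For the commutation $A(\xi,\wtd I)B(\eta,\wtd J)=B(\eta,\wtd J)A(\xi,\wtd I)$, I would first push the inner $\mu$ through the outer operator: functoriality gives $AB=\mu L\mu R=\mu(\id_a\otimes\mu)LR$ and $BA=\mu R\mu L=\mu(\mu\otimes\id_a)RL$. Since $\wtd I$ is anticlockwise to $\wtd J$, the commuting square of the locality axiom (Definition \ref{lb1}(e)) gives $LR=RL$, so both expressions contain the same operator $RL\colon\mc H_a\to\mc H_a\boxtimes\mc H_a\boxtimes\mc H_a$. The two sides then differ only by the prefactors $\mu(\id_a\otimes\mu)$ versus $\mu(\mu\otimes\id_a)$, which coincide by the associativity of $Q$; hence $AB=BA$.

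The adjoint commutation $A(\xi,\wtd I)^*B(\eta,\wtd J)=B(\eta,\wtd J)A(\xi,\wtd I)^*$ is the more delicate point, and is where the Frobenius relation enters. Since $A^*=L^*\mu^*$, composing with $B=\mu R$ produces the central term $\mu^*\mu$: namely $A^*B=L^*\mu^*\mu R$. I would rewrite this using the Frobenius relation in the form $\mu^*\mu=(\id_a\otimes\mu)(\mu^*\otimes\id_a)$ (equivalently, the adjoint commutativity of diagram \eqref{eq6}), obtaining $A^*B=L^*(\id_a\otimes\mu)(\mu^*\otimes\id_a)R$. The adjoint functoriality relation $L^*(\id_a\otimes\mu)=\mu L^*$ and the functoriality relation $(\mu^*\otimes\id_a)R=R\mu^*$ then peel the $\mu$'s away from a central $L^*R$, giving $A^*B=\mu L^*R\mu^*$. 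At this stage $\wtd I$ is anticlockwise to $\wtd J$, so the \emph{adjoint commutativity} half of the locality axiom applies and yields $L^*R=RL^*$. Substituting and regrouping, $A^*B=\mu RL^*\mu^*=(\mu R)(L^*\mu^*)=BA^*$, as desired.

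The main obstacle is precisely this adjoint identity: unlike the plain commutation, it cannot be obtained from associativity and the commuting square of locality alone, because forming $A^*B$ unavoidably produces $\mu^*\mu$. The Frobenius relation is exactly what allows $\mu^*\mu$ to be traded for the separated form $(\id_a\otimes\mu)(\mu^*\otimes\id_a)$, after which the left-supported operator $L^*$ and the right-supported operator $R$ can be commuted past one another via the adjoint part of locality. The only remaining care is in tracking on which fusion product each of $L$, $R$, $\mu$ acts at each step; these identifications are forced and purely bookkeeping, so I would state them once and suppress them in the displayed computation.
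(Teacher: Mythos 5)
Your argument is correct and is essentially the paper's own proof written out in equational form: the paper composes the $2\times 2$ grid of adjointly commuting squares whose pieces are exactly your ingredients — locality of $\Ed$ for the $L$–$R$ square, functoriality for the two mixed squares, and the adjoint commutativity of diagram \eqref{eq6} (associativity for the plain identity, the Frobenius relation for the adjoint one). Your observation that $\mu^*\mu$ is unavoidable in $A^*B$ and is resolved precisely by the Frobenius relation matches the paper's remark that associativity plus Frobenius is equivalent to the adjoint commutativity of that square.
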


\begin{proof}
Consider the following matrix of diagrams.
\begin{align}
\begin{CD}
\mc H_a @>\quad R(\eta,\wtd J) \quad >> \mc H_a\boxtimes\mc H_a @> ~~~\quad  \mu \quad~~~  >> \mc H_a\\
@V L(\xi,\wtd I)  VV @V L(\xi,\wtd I) VV @V L(\xi,\wtd I) VV\\
\mc H_a\boxtimes\mc H_a @> \quad  R(\eta,\wtd J) \quad  >> \mc H_a\boxtimes \mc H_a\boxtimes\mc H_a   @> \quad \id_a\otimes \mu\quad  >>\mc H_a\boxtimes\mc H_a\\
@V \mu  VV @V  \mu\otimes\id_a VV @V  \mu  VV\\
\mc H_a @> \quad  R(\eta,\wtd J) \quad  >> \mc H_a\boxtimes\mc H_a   @>~~~ \quad \mu \quad~~~  >>\mc H_a
\end{CD}
\end{align}
The $(1,1)$-diagram commutes adjointly by the locality of $\Ed$. The $(2,1)$- and $(1,2)$-diagrams commute adjointly by the functoriality of $\Ed$. The $(2,2)$-diagram is just  \eqref{eq6}, which we know is commuting adjointly by the associativity and the Frobenius property of $Q$. Thus the largest diagram commutes adjointly, which is exactly the adjoint commutativity of $A(\xi,\wtd I)$ and $B(\eta,\wtd J)$.
\end{proof}

\begin{df}
If $\fk S$ is a set of bounded linear operators on a Hilbert space $\mc H$, its \textbf{commutant} $\fk S'$ is defined to be the set of bounded linear operators on $\mc H$ which commute adjointly with the  operators in $\fk S$. Then $\fk S'$ is a von Neumann algebra. The double commutant $\fk S''$ is called the von Neumann algebra generated by $\fk S$.
\end{df}

\begin{pp}\label{lb39}
For any $\wtd I\in\Jtd$ we have $\mc B(\wtd I)'=\mc B'(\wtd I')$ and $\mc B'(\wtd I')'=\mc B(\wtd I)$. As a consequence, $\mc B(\wtd I)$ and $\mc B'(\wtd I')$ are von Neumann algebras acting on $\mc H_a$.
\end{pp}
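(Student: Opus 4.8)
The plan is to prove the two assertions $\mc B(\wtd I)'=\mc B'(\wtd I')$ and $\mc B'(\wtd I')'=\mc B(\wtd I)$ in a symmetric way, deducing that both $\mc B(\wtd I)$ and $\mc B'(\wtd I')$ are von Neumann algebras as a byproduct of being double commutants. First I would record the easy inclusion. The preceding proposition shows that $A(\xi,\wtd I)$ commutes adjointly with $B(\eta,\wtd J)$ whenever $\wtd I$ is anticlockwise to $\wtd J$; applying this with $\wtd J=\wtd I'$ gives $\mc B'(\wtd I')\subset\mc B(\wtd I)'$ and, symmetrically, $\mc B(\wtd I)\subset\mc B'(\wtd I')'$. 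Since the commutant of any set is automatically a von Neumann algebra, this already realizes $\mc B(\wtd I)'$ and $\mc B'(\wtd I')'$ as von Neumann algebras containing $\mc B'(\wtd I')$ and $\mc B(\wtd I)$ respectively. The entire content is therefore the reverse inclusions, i.e. the \emph{non-local Haag duality}: every $T\in\mc B(\wtd I)'$ must lie in $\mc B'(\wtd I')$.

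The main obstacle is this Haag duality direction. Here the strategy is to exploit the Haag duality of the underlying net $\mc A$ (stated in Section \ref{lb29} as $\mc A(I')=\mc A(I)'$) together with the state-field correspondence and the cyclicity of $\iota\Omega$. The key move is to reconstruct a field from a commutant operator by applying it to the distinguished vector $\iota\Omega\in\mc H_a$. Concretely, suppose $T\in\mc B(\wtd I)'$. Since $\mc A(I)\subset\mc B(\wtd I)$ by Proposition \ref{lb4}, $T$ in particular commutes adjointly with $\pi_{a,I}(\mc A(I))$, so $T\in\Hom_{\mc A(I)}(\mc H_a,\mc H_a)$; combined with commuting with the $A(\xi,\wtd I)$ this pins down enough intertwining to identify $T$. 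I would set $\eta:=T\iota\Omega$ and aim to show $\eta\in\mc H_a(I')$ and $T=B(\eta,\wtd I')$. To see $\eta\in\mc H_a(I')$ one checks that $T\iota\Omega$ is fixed (in the appropriate sense) by $\mc A(I)=\mc A(I')'$, which via Haag duality of $\mc A$ places it in $\Hom_{\mc A(I)}(\mc H_0,\mc H_a)\Omega=\mc H_a(I')$. Then one verifies $T=B(\eta,\wtd I')$ by comparing both operators on the dense subspace $\mc B(\wtd I)\iota\Omega$ (or on $L(\mc H_a(I),\wtd I)\iota\Omega$), using the state-field correspondence $A(\xi,\wtd I)\iota\Omega=\xi$-type relations and the adjoint commutativity established above to transport $T$ across the left fields.

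The heart of the argument is thus the density/cyclicity input: one needs $\iota\Omega$ to be cyclic for $\mc B(\wtd I)$ (equivalently that $\mc B(\wtd I)\iota\Omega$ is dense in $\mc H_a$), which should follow from the density of fusion products (axiom (d) of Definition \ref{lb1}) together with the unit property of $Q$, since $A(\xi,\wtd I)\iota\Omega=\mu L(\xi,\wtd I)\iota\Omega=\mu(\xi\otimes\iota\Omega)$ and $\mu(\id_a\otimes\iota)=\id_a$ let one recover a dense set of vectors $\xi\in\mc H_a(I)$. I expect the delicate point to be the precise bookkeeping of arg-valued intervals: one must use the clockwise complement $\wtd I'$ and the anticlockwise relation correctly so that the locality of $\Ed$ applies, and confirm that the operator recovered as $B(\eta,\wtd I')$ genuinely lies in $\mc B'(\wtd I')$ rather than in some $\mc B'$ for a different arg-function. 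Once the inclusion $\mc B(\wtd I)'\subset\mc B'(\wtd I')$ is secured, combining it with $\mc B'(\wtd I')\subset\mc B(\wtd I)'$ gives $\mc B(\wtd I)'=\mc B'(\wtd I')$; taking commutants once more and using that $\mc B'(\wtd I')$ is a von Neumann algebra (hence equal to its double commutant) yields $\mc B'(\wtd I')'=\mc B(\wtd I)$, completing the proof.
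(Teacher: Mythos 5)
Your proposal is correct and follows essentially the same route as the paper: the easy inclusion from the adjoint commutativity of $A(\xi,\wtd I)$ and $B(\eta,\wtd I')$, then for $T\in\mc B(\wtd I)'$ setting $\eta=T\iota\Omega$, observing $T\iota\in\Hom_{\mc A(I)}(\mc H_0,\mc H_a)$ so that $\eta\in\mc H_a(I')$, and identifying $T=B(\eta,\wtd I')$ on the dense subspace $\mc H_a(I)=\mc B(\wtd I)\iota\Omega$ via the state-field correspondence. The only cosmetic difference is that the membership $\eta\in\mc H_a(I')$ requires no appeal to Haag duality of $\mc A$ and no ``fixed-point'' argument --- it is immediate from the definition $\mc H_a(I')=\Hom_{\mc A(I)}(\mc H_0,\mc H_a)\Omega$ together with $\iota\in\Hom_{\mc A}(\mc H_0,\mc H_a)$ and $T\in\Hom_{\mc A(I)}(\mc H_a,\mc H_a)$.
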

We remind the reader that $\wtd I'$ is the clockwise complement of $\wtd I$.
\begin{proof}
We only prove $\mc B(\wtd I)'=\mc B'(\wtd I')$ as the other relation can be proved in a similar way. Note that by the previous proposition, we have $\mc B(\wtd I)'\supset\mc B'(\wtd I')$. To prove $\mc B(\wtd I)'\subset\mc B'(\wtd I')$, we choose any $Y\in\mc B(\wtd I)'$ and show that $Y\in\mc B'(\wtd I')$.

Set $\eta=Y\iota\Omega$. Since $\iota\in\Hom_{\mc A}(\mc H_0,\mc H_a)$ and  $Y\in\Hom_{\mc A(I)}(\mc H_a,\mc H_a)$ (by proposition \ref{lb4}), one has $Y\iota\in\Hom_{\mc A(I)}(\mc H_0,\mc H_a)$. Therefore $\eta\in\mc H_a(I')$.  Choose any $\xi\in\mc H_a(I)$. Then, by proposition \ref{lb3}, we have
\begin{align*}
&Y\xi=YA(\xi,\wtd I)\iota\Omega=A(\xi,\wtd I)Y\iota\Omega=A(\xi,\wtd I)\eta=A(\xi,\wtd I)B(\eta,\wtd I')\iota\Omega\\
=&B(\eta,\wtd I')A(\xi,\wtd I)\iota\Omega=B(\eta,\wtd I')\xi.
\end{align*}
This shows $Y=B(\eta,\wtd I')$ and hence that $Y\in\mc B'(\wtd I')$.
\end{proof}

\begin{pp}\label{lb3}
For any $\wtd I\in\Jtd$ and $\xi\in\mc H_i(I)$,
\begin{align}
A(\xi,\wtd I)\iota\Omega=\xi=B(\xi,\wtd I)\iota\Omega.\label{eq38}
\end{align}
\end{pp}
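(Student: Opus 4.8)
The plan is to reduce both identities directly to the unit axiom of $Q$, using relation \eqref{eq8} together with the state-field correspondence. A preliminary remark: since $A(\xi,\wtd I)$ and $B(\xi,\wtd I)$ are only defined for $\xi\in\mc H_a(I)$, I read the $\mc H_i(I)$ in the statement as $\mc H_a(I)$.

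For the first identity I would compute $L(\xi,\wtd I)\iota\Omega$ by pulling the morphism $\iota$ out through the $L$-operator. Apply \eqref{eq8} with $F=\id_a\in\Hom_{\mc A}(\mc H_a,\mc H_a)$, $G=\iota\in\Hom_{\mc A}(\mc H_0,\mc H_a)$, and $\eta=\Omega\in\mc H_0$, which gives
\begin{align*}
L(\xi,\wtd I)\iota\Omega=(\id_a\otimes\iota)L(\xi,\wtd I)\Omega.
\end{align*}
By the state-field correspondence (axiom (c)), $L(\xi,\wtd I)\Omega=\xi$ under the identification $\mc H_a=\mc H_a\boxtimes\mc H_0$. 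Applying $\mu$ and invoking the unit axiom $\mu(\id_a\otimes\iota)=\id_a$ of $Q$ then yields
\begin{align*}
A(\xi,\wtd I)\iota\Omega=\mu(\id_a\otimes\iota)\xi=\xi.
\end{align*}

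The second identity is proved symmetrically using the $R$-version of \eqref{eq8}. Taking $F=\id_a$, $G=\iota$, $\eta=\Omega$ in the second relation there gives $R(\xi,\wtd I)\iota\Omega=(\iota\otimes\id_a)R(\xi,\wtd I)\Omega=(\iota\otimes\id_a)\xi$, and applying $\mu$ together with the other half of the unit axiom, $\mu(\iota\otimes\id_a)=\id_a$, produces $B(\xi,\wtd I)\iota\Omega=\xi$.

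There is no genuine obstacle here; the computation is routine once \eqref{eq8} is available. The only point requiring care is the bookkeeping of which copy of $\mc H_0$ is being absorbed (that is, left versus right unitor) when applying \eqref{eq8}, so that the correct half of the unit axiom, $\mu(\id_a\otimes\iota)$ for $A$ versus $\mu(\iota\otimes\id_a)$ for $B$, is invoked in each case.
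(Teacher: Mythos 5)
Your proof is correct and follows essentially the same route as the paper's: both reduce the claim to the unit axiom of $Q$ by commuting $\iota$ past the $L$- (resp.\ $R$-) operator — you via \eqref{eq8}, the paper via the functoriality axiom directly — and then invoke the state-field correspondence, and your reading of $\mc H_i(I)$ as $\mc H_a(I)$ is indeed the intended one. The only cosmetic difference is that the paper first establishes the slightly stronger operator identities $A(\xi,\wtd I)\iota=L(\xi,\wtd I)|_{\mc H_0}$ and $B(\xi,\wtd I)\iota=R(\xi,\wtd I)|_{\mc H_0}$ before evaluating at $\Omega$.
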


As a consequence, we see that $\iota\Omega$ is a cyclic separating vector for $\mc B(\wtd I)$ and $\mc B'(\wtd I)$.
\begin{proof}
We shall prove the following more general relations:
\begin{align}
A(\xi,\wtd I)\iota=L(\xi,\wtd I)|_{\mc H_0},\qquad B(\xi,\wtd I)\iota=R(\xi,\wtd I)|_{\mc H_0}.
\end{align}
Again, we only prove the first one as the second one can be argued similarly. We compute that
\begin{align*}
A(\xi,\wtd I)\iota=\mu\cdot L(\xi,\wtd I)\cdot\iota=\mu(\id_a\otimes\iota)L(\xi,\wtd I)|_{\mc H_0}=L(\xi,\wtd I)|_{\mc H_0}
\end{align*}
where we have used the functoriality of $\Ed$ and the unit property of $Q$.
\end{proof}

Finally, if $\mc A$ is conformal covariant, we notice that for any $g\in\GA$, we have $g\mc B(\wtd I)g^{-1}=\mc B(g\wtd I)$ and $g\mc B'(\wtd I)g^{-1}=\mc B'(g\wtd I)$. Indeed, we notice that the actions of $g$ commute with $\mu$ (see the discussion after \eqref{eq10}). Therefore the conformal covariance of $\Ed$ implies the two equations. If $\mc A$ is only M\"obius covariant, we also have similar relations for $g\in\UPSU$. We summarize the above results as follows. (Note that \eqref{eq11} follows from lemma \ref{lb3}.)

\begin{thm}\label{lb10}
$\mc B:\wtd I\in\Jtd\mapsto \mc B(\wtd I)$ and $\mc B':\wtd I\in\Jtd\mapsto \mc B'(\wtd I)$ are families of von Neumann algebras satisfying the following properties for any $\wtd I,\wtd J\in\Jtd$.\\
(a) (Extension property)  $\mc A(I)\subset\mc B(\wtd I)\cap\mc B'(\wtd I)$.\\
(b) (Isotony) If $\wtd I\subset\wtd J$, then $\mc B(\wtd I)\subset\mc B(\wtd J)$ and $\mc B'(\wtd I)\subset\mc B'(\wtd J)$.\\
(c) (Reeh-Schlieder property) $\mc B(\wtd I)\iota\Omega$ and $\mc B'(\wtd I)\iota\Omega$ are dense subspaces of $\mc H_a$. Indeed, we have
\begin{align}
\mc B(\wtd I)\iota\Omega=\mc B'(\wtd I)\iota\Omega=\mc H_a(I).\label{eq11}
\end{align}
(d) (Haag duality) $\mc B(\wtd I)'=\mc B'(\wtd I')$.\\
(e) (M\"obius/conformal covariance) For any $g\in\UPSU$ one has
\begin{align}
g\mc B(\wtd I)g^{-1}=\mc B(g\wtd I),\qquad g\mc B'(\wtd I)g^{-1}=\mc B'(g\wtd I).
\end{align}
When $\mc A$ is conformal covariance, the above relations are also true when $g\in\GA$.
\end{thm}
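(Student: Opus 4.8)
The plan is to assemble the statement from the results already established in this section, supplying the two items—isotony and the precise form of the Reeh--Schlieder equality—that have not yet been spelled out in full, and then recording the covariance. The claim that each $\mc B(\wtd I)$ and $\mc B'(\wtd I)$ is a von Neumann algebra, together with Haag duality (d), is exactly Proposition \ref{lb39}: applying it gives $\mc B(\wtd I)'=\mc B'(\wtd I')$ and $\mc B'(\wtd I')'=\mc B(\wtd I)$, and composing these two relations yields $\mc B(\wtd I)''=\mc B(\wtd I)$, so $\mc B(\wtd I)$ coincides with its double commutant and is therefore a von Neumann algebra (and likewise for $\mc B'$). The extension property (a) is Proposition \ref{lb4}.

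For isotony (b), suppose $\wtd I\subset\wtd J$. Then $\mc H_a(I)\subset\mc H_a(J)$, and by the isotony axiom (a) of the categorical extension $\Ed$ one has $L(\xi,\wtd I)=L(\xi,\wtd J)$ for every $\xi\in\mc H_a(I)$, whence $A(\xi,\wtd I)=A(\xi,\wtd J)$ and $\mc B(\wtd I)\subset\mc B(\wtd J)$; the same argument with $R$ gives $\mc B'(\wtd I)\subset\mc B'(\wtd J)$.

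For the Reeh--Schlieder property (c), Proposition \ref{lb3} gives $A(\xi,\wtd I)\iota\Omega=\xi$ for every $\xi\in\mc H_a(I)$, so $\{A(\xi,\wtd I)\iota\Omega:\xi\in\mc H_a(I)\}=\mc H_a(I)$, i.e.\ $\mc B(\wtd I)\iota\Omega=\mc H_a(I)$, and similarly $\mc B'(\wtd I)\iota\Omega=\mc H_a(I)$, which is \eqref{eq11}. Since $\mc H_a(I)$ is dense in $\mc H_a$, the vector $\iota\Omega$ is cyclic for $\mc B(\wtd I)$; it is separating because it is cyclic for the commutant $\mc B(\wtd I)'=\mc B'(\wtd I')$, whose orbit $\mc B'(\wtd I')\iota\Omega=\mc H_a(I')$ is again dense.

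Finally, for the covariance (e), fix $g\in\UPSU$. Since $\mu\in\Hom_{\mc A}(\mc H_a\boxtimes\mc H_a,\mc H_a)$ is a morphism of $\RepdA$, it intertwines the $\UPSU$-actions (Lemma \ref{lb21}), so $U_a(g)\mu=\mu\,U_{a\boxtimes a}(g)$. Combining this with the M\"obius covariance \eqref{eq52} of $\Ed$, namely $U_{a\boxtimes a}(g)L(\xi,\wtd I)U_a(g)^{-1}=L(g\xi,g\wtd I)$, we obtain $gA(\xi,\wtd I)g^{-1}=\mu L(g\xi,g\wtd I)=A(g\xi,g\wtd I)$; as $\xi\mapsto g\xi$ carries $\mc H_a(I)$ onto $\mc H_a(gI)$, this gives $g\mc B(\wtd I)g^{-1}=\mc B(g\wtd I)$, and the argument for $\mc B'$ is identical. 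When $\mc A$ is conformal the same computation applies with $g\in\GA$, using Theorem \ref{lb2} in place of \eqref{eq52} and again that the action of $g$ commutes with $\mu$. The only genuinely substantive input is the von Neumann algebra/Haag-duality step, which rests on the locality and functoriality of $\Ed$ through Proposition \ref{lb39}; everything else is read off directly from the cited propositions and the axioms of $\Ed$.
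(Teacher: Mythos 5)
Your proposal is correct and follows essentially the same route as the paper, which presents Theorem \ref{lb10} as a summary of Propositions \ref{lb4}, \ref{lb39}, and \ref{lb3} together with the observation that $g$ commutes with $\mu$ and the (M\"obius/conformal) covariance of $\Ed$. The extra detail you supply for isotony and for the equality $\mc B(\wtd I)\iota\Omega=\mc H_a(I)$ is exactly what the paper leaves implicit, and it is argued correctly.
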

We say that $\mc B$ and $\mc B'$ are the \textbf{non-local extensions of $\mc A$} associated to the $C^*$-Frobenius algebra $Q$, and that $\mc B'$ is the \textbf{clockwise dual net} of $\mc B$. The latter name is justified by the Haag duality $\mc B(\wtd I)'=\mc B'(\wtd I')$.

Given $Q=(\mc H_a,\mu,\iota)$ and the associated non-local extensions $\mc B,\mc B'$, we define $Q'=(\mc H_a,\mu',\iota)$, where $\mu'=\mu\ss_{a,a}$.
\begin{pp}
The non-local extensions of $\mc A$ associated to the $C^*$-Frobenius algebra $Q'$ are $\mc B'$ and $\mc B''$, where $\mc B''$ is the clockwise dual net of $\mc B'$.
\end{pp}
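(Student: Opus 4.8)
The plan is to compare the field operators of $Q'$ directly with those of $Q$, using the single axiom that distinguishes the $L$-operators from the $R$-operators, namely the braiding relation (f) of Definition \ref{lb1}. Denote by $A'(\xi,\wtd I)=\mu'L(\xi,\wtd I)|_{\mc H_a}$ and $B'(\xi,\wtd I)=\mu'R(\xi,\wtd I)|_{\mc H_a}$ the field operators attached to $Q'=(\mc H_a,\mu\ss_{a,a},\iota)$, so that the two non-local extensions of $Q'$ are by definition the nets $\wtd I\mapsto\{A'(\xi,\wtd I):\xi\in\mc H_a(I)\}$ and $\wtd I\mapsto\{B'(\xi,\wtd I):\xi\in\mc H_a(I)\}$.

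First I would confirm that $Q'$ is again a $C^*$-Frobenius algebra, so that Theorem \ref{lb10} applies to it. The unit axiom is immediate: by naturality of the braiding and its triviality against the unit object, $\ss_{a,a}(\iota\otimes\id_a)=\id_a\otimes\iota$ and $\ss_{a,a}(\id_a\otimes\iota)=\iota\otimes\id_a$, whence $\mu'(\iota\otimes\id_a)=\mu(\id_a\otimes\iota)=\id_a$ and likewise $\mu'(\id_a\otimes\iota)=\mu(\iota\otimes\id_a)=\id_a$. Associativity and the Frobenius relation for $\mu'=\mu\ss_{a,a}$ follow from those for $\mu$ through the hexagon axioms and the unitarity relation $(\mu')^*=\ss_{a,a}^{-1}\mu^*$; this is the standard fact that the braided opposite of a $C^*$-Frobenius algebra is one.

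The crux of the proof is the identification of the first net of $Q'$ with $\mc B'$. By axiom (f) of Definition \ref{lb1}, $\ss_{a,a}L(\xi,\wtd I)\eta=R(\xi,\wtd I)\eta$ for every $\eta\in\mc H_a$, that is, $\ss_{a,a}L(\xi,\wtd I)|_{\mc H_a}=R(\xi,\wtd I)|_{\mc H_a}$ as operators $\mc H_a\rightarrow\mc H_a\boxtimes\mc H_a$. Composing with $\mu$ gives
\begin{align*}
A'(\xi,\wtd I)=\mu\ss_{a,a}L(\xi,\wtd I)|_{\mc H_a}=\mu R(\xi,\wtd I)|_{\mc H_a}=B(\xi,\wtd I)
\end{align*}
for every $\xi\in\mc H_a(I)$. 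Hence the first non-local extension of $Q'$ coincides with $\wtd I\mapsto\{B(\xi,\wtd I):\xi\in\mc H_a(I)\}=\mc B'$.

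Finally, since $Q'$ is a $C^*$-Frobenius algebra, the Haag duality of Proposition \ref{lb39} (equivalently Theorem \ref{lb10}(d)) identifies its second net as the clockwise dual net of its first net. As the first net is $\mc B'$, its clockwise dual net is by definition $\mc B''$, so the second non-local extension of $Q'$ is $\mc B''$. This establishes the proposition. There is no serious obstacle here: the identification of both nets is forced by the single braiding identity above, and the only mildly technical point is the routine verification (via the hexagon axioms) that $Q'$ is a $C^*$-Frobenius algebra.
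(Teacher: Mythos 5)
Your proof is correct and follows essentially the same route as the paper: the key step in both is the braiding axiom, which gives $\mu'L(\xi,\wtd I)|_{\mc H_a}=\mu\ss_{a,a}L(\xi,\wtd I)|_{\mc H_a}=\mu R(\xi,\wtd I)|_{\mc H_a}=B(\xi,\wtd I)$, identifying the first net of $Q'$ with $\mc B'$, after which the second net is pinned down by Haag duality. The extra verifications you include (that $Q'$ is again a $C^*$-Frobenius algebra, and the explicit duality step) are correct and fill in details the paper leaves implicit.
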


\begin{proof}
We let $\mc B_1,\mc B_1'$ be the non-local extensions associated to $Q'$, and show that $\mc B_1=\mc B'$. The operators in $\mc B_1(\wtd I)$ are written as $\mu'L(\xi,\wtd I)|_{\mc H_a}$ where $\xi\in\mc H_a(I)$. By the braiding axiom of $\Ed$, $\mu'L(\xi,\wtd I)|_{\mc H_a}=\mu\ss L(\xi,\wtd I)|_{\mc H_a}=\mu R(\xi,\wtd I)|_{\mc H_a}$ which is inside $\mc B'(\wtd I)$. Moreover, any operator in $\mc B'(\wtd I)$ is of this form. This proves $\mc B_1(\wtd I)=\mc B'(\wtd I)$.
\end{proof}

We describe the relation between $\mc B$ and its  clockwise double dual net $\mc B''$. Let $\wtd I''$ be the clockwise complement of $\wtd I'$.  Then we have $I''=I$ and $\arg_{I''}=\arg_I-2\pi$.

\begin{pp}
$\mc B''(\wtd I'')=\mc B(\wtd I)$ for any $\wtd I\in\Jtd$.
\end{pp}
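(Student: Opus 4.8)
The plan is to deduce the statement from an iterated application of Haag duality, rather than from a direct computation with the generating operators $\mu\ss_{a,a}R(\xi,\wtd I)$. The key observation is that, by the preceding proposition, $\mc B''$ is the \emph{clockwise dual net} of the non-local extension $\mc B'$ associated to the $C^*$-Frobenius algebra $Q'=(\mc H_a,\mu',\iota)$ with $\mu'=\mu\ss_{a,a}$. Hence the Haag duality established in Proposition \ref{lb39} (equivalently Theorem \ref{lb10}(d)) applies to the pair $(\mc B',\mc B'')$ coming from $Q'$: for every $\wtd J\in\Jtd$ one has $\mc B'(\wtd J)'=\mc B''(\wtd J')$.

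First I would record the elementary identification of arg-valued intervals $(\wtd I')'=\wtd I''$. Indeed, $\wtd I''$ is by definition the clockwise complement of $\wtd I'$, while $(\wtd I')'$ denotes precisely that clockwise complement; the asserted data $I''=I$ and $\arg_{I''}=\arg_I-2\pi$ are then read off directly from the anticlockwise condition $\arg_{I'}(\zeta)<\arg_I(z)<\arg_{I'}(\zeta)+2\pi$. Substituting $\wtd J=\wtd I'$ into the Haag duality for $(\mc B',\mc B'')$ then gives
\begin{align*}
\mc B''(\wtd I'')=\mc B''\bigl((\wtd I')'\bigr)=\mc B'(\wtd I')'.
\end{align*}
Next I would invoke Haag duality for the original pair $(\mc B,\mc B')$ coming from $Q$, namely $\mc B(\wtd I)'=\mc B'(\wtd I')$ from Proposition \ref{lb39}. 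Taking commutants of both sides and using that $\mc B(\wtd I)$ is a von Neumann algebra (Proposition \ref{lb39}), so that $\mc B(\wtd I)''=\mc B(\wtd I)$, yields $\mc B'(\wtd I')'=\mc B(\wtd I)$. Combining with the previous display gives $\mc B''(\wtd I'')=\mc B(\wtd I)$, as desired.

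I do not expect a serious obstacle: the whole argument is a two-fold use of the already proved Haag duality together with the double commutant theorem. The only points demanding care are verifying that $Q'$ is itself a $C^*$-Frobenius algebra, so that Theorem \ref{lb10} genuinely applies to the pair $(\mc B',\mc B'')$ (this is implicit in the statement of the preceding proposition), and correctly matching the arg-valued intervals so that the clockwise complement is taken on the correct side — a sign error in $\arg_{I''}=\arg_I-2\pi$ would break the identification $(\wtd I')'=\wtd I''$ and hence the whole chain of equalities.
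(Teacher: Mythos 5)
Your proposal is correct and is essentially the paper's own argument: the paper likewise proves $\mc B(\wtd I)=\mc B'(\wtd I')'$ and $\mc B'(\wtd I')=\mc B''(\wtd I'')'$ from Proposition \ref{lb39} applied to $Q$ and to $Q'$, and combines them via the double commutant. Your extra care about $(\wtd I')'=\wtd I''$ and about $Q'$ being a $C^*$-Frobenius algebra is sound but adds nothing beyond what the paper takes for granted.
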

\begin{proof}
We have $\mc B(\wtd I)=\mc B'(\wtd I')'$  and, similarly, $\mc B'(\wtd I')=\mc B''(\wtd I'')'$.
\end{proof}

\section{Categorical modular operators and conjugations}

We first recall the Tomita-Takesaki theory for von Neumann algebras associated with cyclic seperating vectors; details can be found in \cite{Tak02} or \cite{Tak70}. Let $\mc M$ be a von Neumann algebra acting on a Hilbert space $\mc H$, and assume that $\Omega\in\mc H$ is a cyclic and separating vector of $\mc M$. (We do not require $\lVert\Omega\lVert=1$.) One defines unbounded antilinear operators $S:\mc M\Omega\rightarrow\mc M\Omega$ and $F:\mc M'\Omega\rightarrow\mc M'\Omega$ such that for any $x\in\mc M,y\in\mc M'$,
\begin{align*}
Sx\Omega=x^*\Omega,\qquad Fy\Omega=y^*\Omega.
\end{align*}
$S$ and $F$ are indeed preclosed operators, whose closures are also denoted by the same symbols $S$ and $F$ respectively. Moreover, $S^*=F$. Let $S=\fk J\Delta^{\frac 12}$ be the polar decomposition of $S$, where the positive operator $\Delta=S^*S$ is called the modular operator, and the antiunitary map $\fk J$ is called the modular conjugation. We have $\Delta^{\im t}\Omega=\fk J\Omega=\Omega,S=S^{-1}$, $\fk J^2=\id$,  $S=\fk J\Delta^{\frac 12}=\Delta^{-\frac 12}\fk J$. Let $\im=\sqrt{-1}$. For any $t\in\mbb R$, we have
\begin{gather*}
\Delta^{\im t}\mc M\Delta^{-\im t}=\mc M, \qquad \fk J\mc M\fk J=\mc M'.
\end{gather*}

Tomita-Takesaki theory can be applied to non-local extensions without difficulty. This will be used to derive a categorical Tomita-takesaki theory in this section. 

\begin{df}
For each $\mc H_i\in\Obj(\RepdA)$ with dualizing data $(\mc H_{\ovl i},\ev_{i,\ovl i})$ (see Section \ref{lb30}), and for any $\wtd I\in\Jtd$, we define unbounded antilinear operators $S_{\wtd I},F_{\wtd I}:\mc H_i\rightarrow\mc H_{\ovl i}$ with domains $\mc H_i(I)$ such that for any $\xi\in\mc H_i(I)$,
\begin{equation}
	S_{\wtd I}\xi=L(\xi,\wtd I)^*\coev_{i,\ovl i}\Omega,\qquad F_{\wtd I}\xi=R(\xi,\wtd I)^*\coev_{\ovl i,i}\Omega\label{eq12}.
\end{equation}
\end{df}
Recall (cf. Definition \ref{lb31}) that $\mc H_{\ovl i}$ is assumed to have dualizing data $(\mc H_i,\ev_{\ovl i,i})$.

\begin{rem}
We also understand $S_{\wtd I}$ and $F_{\wtd I}$ as categorical operators, which means that they can act on any object of $\RepdA$. We write $S_{\wtd I}$ as $S_{\wtd I}|_{\mc H_i}$ and similarly $F_{\wtd I}$ as $F_{\wtd I}|_{\mc H_i}$ if we want to emphasize that $S_{\wtd I}$ and $F_{\wtd I}$ are acting on the object $\mc H_i\in\RepdA$. Then their domains are $\mc H_i(I)$. Later we will show that $S_{\wtd I}|_{\mc H_i},F_{\wtd I}|_{\mc H_i}$ are preclosed (with the same absolute value), and will use the same symbols to denote the closures. Then the domains of $S_{\wtd I}|_{\mc H_i},F_{\wtd I}|_{\mc H_i}$ will be a dense subspace of $\mc H_i$ containing $\mc H_i(I)$.
\end{rem}

The following proposition describes how $S_{\wtd I},F_{\wtd I}$ depends on the dualizing data.

\begin{pp}\label{lb8}
Suppose that $\mc H_i,\mc H_j\in\Obj(\RepdA)$ are equal as representations of $\mc A$, with possibly different dualizing data  given by $(\mc H_{\ovl i},\ev_{i,\ovl i})$ and $(\mc H_{\ovl j},\ev_{j,\ovl j})$ respectively. By the uniqueness of standard evaluations (cf. Section \ref{lb30}), there is a unique unitary $u\in\Hom_{\mc A}(\mc H_{\ovl i},\mc H_{\ovl j})$ such that
\begin{align*}
	\ev_{j,\ovl j}=\ev_{i,\ovl i}(\id_i\otimes u^{-1}).
\end{align*}
Then for each $\wtd I\in\Jtd$,
\begin{align*}
S_{\wtd I}|_{\mc H_j}=uS_{\wtd I}|_{\mc H_i},\qquad F_{\wtd I}|_{\mc H_j}=uF_{\wtd I}|_{\mc H_i}.
\end{align*}
\end{pp}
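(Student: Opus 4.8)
The plan is to reduce everything to the functoriality axiom (b) of the categorical extension together with the transformation rule for standard evaluations recalled in Section \ref{lb30}. Since $\mc H_i$ and $\mc H_j$ coincide as $\mc A$-modules, the domains $\mc H_i(I)$ and $\mc H_j(I)$ are literally the same dense subspace, and for a fixed $\xi\in\mc H_i(I)$ the bounded operators $L(\xi,\wtd I)$ and $R(\xi,\wtd I)$ depend only on the underlying $\mc A$-module data, not on the dualizing data. Thus $S_{\wtd I}|_{\mc H_i}$ and $S_{\wtd I}|_{\mc H_j}$ can differ only through the coevaluation vectors $\coev_{i,\ovl i}\Omega$ and $\coev_{j,\ovl j}\Omega$, and likewise for $F$.

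First I would translate the defining relation $\ev_{j,\ovl j}=\ev_{i,\ovl i}(\id_i\otimes u^{-1})$ into a relation between coevaluations by taking adjoints. Using $\coev=\ev^*$ and unitarity of $u$ (so $(u^{-1})^*=u$) gives $\coev_{j,\ovl j}=(\id_i\otimes u)\coev_{i,\ovl i}$, hence $\coev_{j,\ovl j}\Omega=(\id_i\otimes u)\coev_{i,\ovl i}\Omega$. Next I would apply functoriality (b) to the morphism $u\in\Hom_{\mc A}(\mc H_{\ovl i},\mc H_{\ovl j})$: the left square gives $L(\xi,\wtd I)|_{\mc H_{\ovl j}}\,u=(\id_i\otimes u)\,L(\xi,\wtd I)|_{\mc H_{\ovl i}}$. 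Taking adjoints and using $u^*u=\id_{\ovl i}$ then yields the intertwining identity $L(\xi,\wtd I)|_{\mc H_{\ovl j}}^*\,(\id_i\otimes u)=u\,L(\xi,\wtd I)|_{\mc H_{\ovl i}}^*$. Combining the two displays gives $S_{\wtd I}|_{\mc H_j}\xi=L(\xi,\wtd I)^*(\id_i\otimes u)\coev_{i,\ovl i}\Omega=u\,S_{\wtd I}|_{\mc H_i}\xi$, as desired.

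The $F$-operator is handled by the mirror-image argument, but it requires one extra input that is the only genuinely delicate point. The operator $F_{\wtd I}$ is built from $\coev_{\ovl i,i}=\ev_{\ovl i,i}^*$ rather than from $\coev_{i,\ovl i}$, so I first need the transformation rule for the \emph{other} standard evaluation. Here I would invoke the second half of the uniqueness statement recalled in Section \ref{lb30}: with $u^{-1}:\mc H_{\ovl j}\rightarrow\mc H_{\ovl i}$ playing the role of the comparison unitary, the rule $\wtd\ev_{\wht i,i}=\ev_{\ovl i,i}(u\otimes\id_i)$ specializes to $\ev_{\ovl j,j}=\ev_{\ovl i,i}(u^{-1}\otimes\id_i)$, and hence (by adjoints) $\coev_{\ovl j,j}\Omega=(u\otimes\id_i)\coev_{\ovl i,i}\Omega$. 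Note that $u$ now acts on the \emph{first} tensor factor, matching the fact that $R(\xi,\wtd I)$ appends $\mc H_i$ on the right. I would then apply functoriality (b) to the $R$-square, obtaining $(u\otimes\id_i)R(\xi,\wtd I)|_{\mc H_{\ovl i}}=R(\xi,\wtd I)|_{\mc H_{\ovl j}}\,u$, take adjoints, and conclude $R(\xi,\wtd I)|_{\mc H_{\ovl j}}^*(u\otimes\id_i)=u\,R(\xi,\wtd I)|_{\mc H_{\ovl i}}^*$. Feeding this into the definition of $F_{\wtd I}|_{\mc H_j}$ yields $F_{\wtd I}|_{\mc H_j}=u\,F_{\wtd I}|_{\mc H_i}$.

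The main obstacle is keeping track of which tensor slot the unitary $u$ occupies in each coevaluation, since the $S$-computation places it on the right factor while the $F$-computation places it on the left; getting this right is exactly what forces the use of both halves of the uniqueness-of-standard-evaluations statement rather than just the relation defining $u$. Everything else is a bookkeeping exercise in adjoints and the functoriality axiom, with no analytic subtlety, since the computations take place on the common domain $\mc H_i(I)$ before any passage to closures.
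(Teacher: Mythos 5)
Your proof is correct and follows essentially the same route as the paper: take adjoints of the defining relation to convert evaluations into coevaluations, then use the functoriality axiom for $u$ (in adjoint form) to pull $u$ through $L(\xi,\wtd I)^*$ and $R(\xi,\wtd I)^*$. The only elaboration beyond the paper's proof is your explicit treatment of the $F$-case via the second half of the uniqueness-of-standard-evaluations statement ($\ev_{\ovl j,j}=\ev_{\ovl i,i}(u^{-1}\otimes\id_i)$), which the paper leaves as ``proved in the same way''; your bookkeeping of which tensor slot $u$ occupies there is accurate.
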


Note that we do not assume $\mc H_{\ovl i}$ and $\mc H_{\ovl j}$ are the same objects of $\RepA$, even though $\mc H_i,\mc H_j$ are so.

\begin{proof}
Recall the functoriality: for any $\xi\in\mc H_i(I)=\mc H_j(I)$ and $\eta\in\mc H_{\ovl j}$, we have $(\id_i\otimes u^{-1})L(\xi,\wtd I)=L(\xi,\wtd I)u^{-1}$ and $(u^{-1}\otimes\id_i)R(\xi,\wtd I)=R(\xi,\wtd I)u^{-1}$ when acting on $\mc H_{\ovl j}$. Therefore
\begin{align*}
&S_{\wtd I}|_{\mc H_j}\xi=L(\xi,\wtd I)^*\coev_{j,\ovl j}\Omega=	L(\xi,\wtd I)^*(\id_i\otimes u)\coev_{i,\ovl i}\Omega\\
=&uL(\xi,\wtd I)^*\coev_{i,\ovl i}\Omega=uS_{\wtd I}|_{\mc H_i}\xi.
\end{align*}
The relation for $F$-operators is proved in the same way.
\end{proof}

We shall show that  $S_{\wtd I}$ and $F_{\wtd I}$ are involutions. First of all, we need:

\begin{pp}\label{lb7}
Choose $\mc H_i\in\Obj(\RepdA)$. If $\xi\in\mc H_i(I)$, then $S_{\wtd I}\xi\in\mc H_{\ovl i}(I),F_{\wtd I}\xi\in\mc H_{\ovl i}(I)$. Moreover, for any $\mc H_k\in\Obj(\RepA)$ we have
\begin{gather}
L(S_{\wtd I}\xi,\wtd I)|_{\mc H_k}=L(\xi,\wtd I)^*(\coev_{i,\ovl i}\otimes\id_k),\label{eq13}\\
R(F_{\wtd I}\xi,\wtd I)|_{\mc H_k}=R(\xi,\wtd I)^*(\id_k\otimes\coev_{\ovl i,i}).\label{eq14}
\end{gather}
\end{pp}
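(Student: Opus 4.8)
The plan is to derive both assertions purely formally from the fusion relations for adjoints of $L$ and $R$ operators in Proposition~\ref{lb5}(b), combined with the observation that the (co)evaluation vectors are localizable in $I$ and can be written as $L$/$R$ operators applied to the vacuum. I would prove the statements for $S_{\wtd I}$ and $L$ in detail; the $F_{\wtd I}$ and $R$ statements then follow from the verbatim dual argument using the second fusion relation of Proposition~\ref{lb5}(b) and the $R$-halves of \eqref{eq8} and \eqref{eq9}.

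First I would note that $\coev_{i,\ovl i}\Omega\in(\mc H_i\boxtimes\mc H_{\ovl i})(I)$. Indeed $\Omega\in\mc H_0(I)$ (as $\id_{\mc H_0}\in\Hom_{\mc A(I')}(\mc H_0,\mc H_0)$), and $\coev_{i,\ovl i}\in\Hom_{\mc A}(\mc H_0,\mc H_i\boxtimes\mc H_{\ovl i})$, so \eqref{eq62} gives $\coev_{i,\ovl i}\Omega\in(\mc H_i\boxtimes\mc H_{\ovl i})(I)$. Applying Proposition~\ref{lb5}(b) with $\mc H_j=\mc H_{\ovl i}$ and $\psi=\coev_{i,\ovl i}\Omega$ then immediately yields $S_{\wtd I}\xi=L(\xi,\wtd I)^*\coev_{i,\ovl i}\Omega\in\mc H_{\ovl i}(I)$, which is the first assertion, and moreover supplies the fusion identity
\[
L(S_{\wtd I}\xi,\wtd I)|_{\mc H_k}=L\bigl(L(\xi,\wtd I)^*\coev_{i,\ovl i}\Omega,\wtd I\bigr)|_{\mc H_k}=L(\xi,\wtd I)^*\,L(\coev_{i,\ovl i}\Omega,\wtd I)|_{\mc H_k}.
\]

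It then remains to identify $L(\coev_{i,\ovl i}\Omega,\wtd I)|_{\mc H_k}$ with $\coev_{i,\ovl i}\otimes\id_k$. For this I would apply \eqref{eq8} with $F=\coev_{i,\ovl i}$, $G=\id_k$, and label $\Omega\in\mc H_0(I)$, giving $(\coev_{i,\ovl i}\otimes\id_k)L(\Omega,\wtd I)\eta=L(\coev_{i,\ovl i}\Omega,\wtd I)\eta$ for every $\eta\in\mc H_k$. Since \eqref{eq9} with $x=1$ says $L(\Omega,\wtd I)|_{\mc H_k}=\id_k$, the left-hand side is simply $(\coev_{i,\ovl i}\otimes\id_k)\eta$, whence $L(\coev_{i,\ovl i}\Omega,\wtd I)|_{\mc H_k}=\coev_{i,\ovl i}\otimes\id_k$. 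Substituting into the displayed fusion identity proves \eqref{eq13}.

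I do not expect a genuine obstacle: the argument is entirely formal once Proposition~\ref{lb5}(b) is in hand, and the only thing one has to \emph{see} is that $\coev_{i,\ovl i}\Omega$ is localizable in $I$ and realized as $L(\coev_{i,\ovl i}\Omega,\wtd I)$ acting from the vacuum. The one point requiring attention is the bookkeeping of tensor factors — so that $L(\xi,\wtd I)^*$ in \eqref{eq13} acts on $\mc H_i\boxtimes\mc H_{\ovl i}\boxtimes\mc H_k$ by contracting the first factor against $\xi$ — but this is forced by the domains and codomains and causes no trouble. Finally, \eqref{eq14} follows by replacing $\coev_{i,\ovl i}$ by $\coev_{\ovl i,i}\in\Hom_{\mc A}(\mc H_0,\mc H_{\ovl i}\boxtimes\mc H_i)$ and $L$ by $R$ throughout, invoking the second fusion relation of Proposition~\ref{lb5}(b) and the $R$-versions of \eqref{eq8} and \eqref{eq9}.
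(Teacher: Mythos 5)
Your proposal is correct and follows essentially the same route as the paper: localization of $\coev_{i,\ovl i}\Omega$ in $I$, the fusion relation of Proposition~\ref{lb5}(b), and then \eqref{eq8} together with \eqref{eq9} to identify $L(\coev_{i,\ovl i}\Omega,\wtd I)|_{\mc H_k}$ with $\coev_{i,\ovl i}\otimes\id_k$. The only cosmetic difference is that the paper gets $S_{\wtd I}\xi\in\mc H_{\ovl i}(I)$ directly from $L(\xi,\wtd I)^*\coev_{i,\ovl i}\in\Hom_{\mc A(I')}(\mc H_0,\mc H_{\ovl i})$ rather than via Proposition~\ref{lb5}(b), which changes nothing of substance.
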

Note that in the above two equations, $L(\xi,\wtd I)^*$ is a bounded linear operator from $\mc H_i\boxtimes\mc H_{\ovl i}\boxtimes\mc H_k$ to $\mc H_{\ovl i}\boxtimes\mc H_k$, and $R(\xi,\wtd I)^*$ from $\mc H_k\boxtimes\mc H_{\ovl i}\boxtimes\mc H_i$ to $\mc H_k\boxtimes\mc H_{\ovl i}$. Also, $\mc H_k$ is not assumed to be dualizable or dualized. Then $\mc H_i\boxtimes\mc H_k$ and $\mc H_k\boxtimes\mc H_i$ have no dualizing data if $\mc H_k$ is not dualized.

\begin{proof}
That $S_{\wtd I}\xi$ and $F_{\wtd I}\xi$ are inside $\mc H_{\ovl i}(I)$ follows from the obvious fact that $L(\xi,\wtd I)^*\coev_{i,\ovl i}$ and $R(\xi,\wtd I)^*\coev_{\ovl i,i}$ are in $\Hom_{\mc A(I')}(\mc H_0,\mc H_{\ovl i})$. For any $\chi\in\mc H_k$, we use proposition \ref{lb5} to compute that
\begin{align*}
&L(S_{\wtd I}\xi,\wtd I)\chi=L\big(L(\xi,\wtd I)^*\coev_{i,\ovl i}\Omega,\wtd I\big)\chi=L(\xi,\wtd I)^*L\big(\coev_{i,\ovl i}\Omega,\wtd I\big)\chi\\
\xlongequal{\eqref{eq8}}&L(\xi,\wtd I)^*(\coev_{i,\ovl i}\otimes\id_k)L\big(\Omega,\wtd I\big)\chi\xlongequal{\eqref{eq9}}L(\xi,\wtd I)^*(\coev_{i,\ovl i}\otimes\id_k)\chi.
\end{align*}
The other equation is proved similarly.	
\end{proof}

\begin{pp}\label{lb6}
For any $\mc H_i\in\Obj(\RepdA)$ and $\xi\in\mc H_i(I)$, we have $S_{\wtd I}^2\xi=F_{\wtd I}^2\xi=\xi$.
\end{pp}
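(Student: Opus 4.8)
The plan is to compute $S_{\wtd I}^2\xi$ and $F_{\wtd I}^2\xi$ directly, exploiting the fact (Proposition \ref{lb7}) that $S_{\wtd I}\xi$ and $F_{\wtd I}\xi$ again lie in $\mc H_{\ovl i}(I)$, so that the operators may legitimately be applied a second time. The crucial bookkeeping point is that under the canonical dual assignment of Definition \ref{lb31}, the dual object of $\mc H_{\ovl i}$ is $\mc H_i$ itself (with dualizing data $(\mc H_{\ovl i},\ev_{i,\ovl i})$). Hence a second application of $S_{\wtd I}$ lands back in $\mc H_i$, and the coevaluation entering its definition on $\mc H_{\ovl i}$ is $\coev_{\ovl i,i}$.

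First I would treat $S_{\wtd I}$. By definition $S_{\wtd I}^2\xi=S_{\wtd I}(S_{\wtd I}\xi)=L(S_{\wtd I}\xi,\wtd I)^*\coev_{\ovl i,i}\Omega$, where $\coev_{\ovl i,i}\Omega\in\mc H_{\ovl i}\boxtimes\mc H_i$, so the relevant operator acts on $\mc H_k=\mc H_i$. Taking the adjoint of \eqref{eq13} read on $\mc H_k=\mc H_i$ turns $L(S_{\wtd I}\xi,\wtd I)^*$ into $(\ev_{i,\ovl i}\otimes\id_i)L(\xi,\wtd I)$, the latter acting on $\mc H_{\ovl i}\boxtimes\mc H_i$. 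The left functoriality axiom (Definition \ref{lb1}(b)), applied with the morphism $\coev_{\ovl i,i}\in\Hom_{\mc A}(\mc H_0,\mc H_{\ovl i}\boxtimes\mc H_i)$, lets me pull the coevaluation through, and then the state-field correspondence (Definition \ref{lb1}(c)) gives
\begin{align*}
L(\xi,\wtd I)\coev_{\ovl i,i}\Omega=(\id_i\otimes\coev_{\ovl i,i})L(\xi,\wtd I)\Omega=(\id_i\otimes\coev_{\ovl i,i})\xi.
\end{align*}
Assembling the pieces yields $S_{\wtd I}^2\xi=(\ev_{i,\ovl i}\otimes\id_i)(\id_i\otimes\coev_{\ovl i,i})\xi$, which equals $\xi$ by the first conjugate equation in \eqref{eq3}.

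The computation for $F_{\wtd I}$ is entirely parallel, using \eqref{eq14} in place of \eqref{eq13}, the right functoriality square, and the coevaluation $\coev_{i,\ovl i}$ dictated by the dual structure of $\mc H_{\ovl i}$. The same manipulations reduce $F_{\wtd I}^2\xi$ to $(\id_i\otimes\ev_{\ovl i,i})(\coev_{i,\ovl i}\otimes\id_i)\xi$, which is $\xi$ by the other half of \eqref{eq3}.

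I do not expect a serious obstacle here: once Proposition \ref{lb7} is in hand, the argument is a short bookkeeping exercise. The one place demanding genuine care is the consistent tracking of dualizing data through the double-dual identification — ensuring that the second application of $S_{\wtd I}$ (resp. $F_{\wtd I}$) uses the coevaluation $\coev_{\ovl i,i}$ (resp. $\coev_{i,\ovl i}$) prescribed by the canonical dual on $\mc H_{\ovl i}$, and that each $L$ and $R$ operator is understood to act on the correct object so that the adjoints taken from Proposition \ref{lb7} have matching source and target. Getting these identifications right is exactly what makes the conjugate equations \eqref{eq3} applicable verbatim at the final step.
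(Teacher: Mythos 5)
Your proposal is correct and follows essentially the same route as the paper: apply Proposition \ref{lb7} (equations \eqref{eq13}/\eqref{eq14}) in adjoint form, pull the coevaluation through $L(\xi,\wtd I)$ (resp. $R(\xi,\wtd I)$) by functoriality, invoke the state-field correspondence, and finish with the conjugate equations. The bookkeeping of the canonical double-dual identification that you flag as the delicate point is exactly what the paper relies on as well.
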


\begin{proof}
We compute
\begin{align*}
&S_{\wtd I}^2\xi\xlongequal{\eqref{eq12}}L(S_{\wtd I}\xi,\wtd I)^*\coev_{\ovl i,i}\Omega\xlongequal{\eqref{eq13}} (\ev_{i,\ovl i}\otimes\id_i)L(\xi,\wtd I)\coev_{\ovl i,i}\Omega\\
=&(\ev_{i,\ovl i}\otimes\id_i)(\id_i\otimes\coev_{\ovl i,i})L(\xi,\wtd I)\Omega\xlongequal{\eqref{eq3}} L(\xi,\wtd I)\Omega=\xi.
\end{align*}
Similarly, we may use \eqref{eq4} to show $F_{\wtd I}^2\xi=\xi$.
\end{proof}

The above two propositions imply immediately the following result.

\begin{co}
For any $\mc H_i\in\Obj(\RepdA),\mc H_k\in\Obj(\RepA),\xi\in\mc H_i(I),\phi\in\mc H_i\boxtimes\mc H_k,\psi\in\mc H_k\boxtimes\mc H_i$,
\begin{gather}
L(\xi,\wtd I)^*\phi=(\ev_{\ovl i,i}\otimes\id_k)L(S_{\wtd I}\xi,\wtd I)\phi,\label{eq16}\\
R(\xi,\wtd I)^*\psi=(\id_k\otimes\ev_{i,\ovl i})R(F_{\wtd I}\xi,\wtd I)\psi.\label{eq17}
\end{gather}
\end{co}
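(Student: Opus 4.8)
The plan is to obtain both identities by substituting the modular-conjugated vector into Proposition \ref{lb7}, invoking the involutivity $S_{\wtd I}^2=F_{\wtd I}^2=\id$ from Proposition \ref{lb6}, and then taking adjoints. The crucial observation that makes the corollary ``immediate'' is that $L(\xi,\wtd I)$, $R(\xi,\wtd I)$ and their modular counterparts $L(S_{\wtd I}\xi,\wtd I)$, $R(F_{\wtd I}\xi,\wtd I)$ are all \emph{bounded} operators (by Definition \ref{lb1}, and since $S_{\wtd I}\xi,F_{\wtd I}\xi\in\mc H_{\ovl i}(I)$ by Proposition \ref{lb7}). Hence every identity below is an identity of bounded operators, valid on the whole space, so there are no domain issues and applying it to an arbitrary $\phi\in\mc H_i\boxtimes\mc H_k$ or $\psi\in\mc H_k\boxtimes\mc H_i$ is legitimate.

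Concretely, for \eqref{eq16} I would apply \eqref{eq13} not to $\mc H_i$ but to its canonical dual $\mc H_{\ovl i}\in\Obj(\RepdA)$, whose dual object is $\mc H_i$ with coevaluation $\coev_{\ovl i,i}$, taking the vector $\eta:=S_{\wtd I}\xi\in\mc H_{\ovl i}(I)$ and acting on $\mc H_k$. This yields $L(S_{\wtd I}\eta,\wtd I)|_{\mc H_k}=L(\eta,\wtd I)^*(\coev_{\ovl i,i}\otimes\id_k)$. Since $S_{\wtd I}\eta=S_{\wtd I}^2\xi=\xi$ by Proposition \ref{lb6}, this reads $L(\xi,\wtd I)|_{\mc H_k}=L(S_{\wtd I}\xi,\wtd I)^*(\coev_{\ovl i,i}\otimes\id_k)$ as bounded operators $\mc H_k\rightarrow\mc H_i\boxtimes\mc H_k$. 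Taking adjoints and using $\coev_{\ovl i,i}^*=\ev_{\ovl i,i}$ gives $L(\xi,\wtd I)^*=(\ev_{\ovl i,i}\otimes\id_k)L(S_{\wtd I}\xi,\wtd I)$, which is exactly \eqref{eq16} once applied to $\phi$.

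Equation \eqref{eq17} follows symmetrically from \eqref{eq14}: applying it to $\mc H_{\ovl i}$ and the vector $F_{\wtd I}\xi\in\mc H_{\ovl i}(I)$, and using $F_{\wtd I}^2\xi=\xi$, rewrites the identity as $R(\xi,\wtd I)|_{\mc H_k}=R(F_{\wtd I}\xi,\wtd I)^*(\id_k\otimes\coev_{i,\ovl i})$; taking adjoints with $\coev_{i,\ovl i}^*=\ev_{i,\ovl i}$ yields $R(\xi,\wtd I)^*=(\id_k\otimes\ev_{i,\ovl i})R(F_{\wtd I}\xi,\wtd I)$, as desired.

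I expect no analytic obstacle whatsoever. The only point requiring care is the bookkeeping of which (co)evaluation appears when Proposition \ref{lb7} is applied to the dual object $\mc H_{\ovl i}$ rather than to $\mc H_i$: one must consistently use that the canonical dual of $\mc H_{\ovl i}$ is $\mc H_i$, so that the roles of $\coev_{i,\ovl i}$ and $\coev_{\ovl i,i}$ (and correspondingly $\ev_{i,\ovl i}$ and $\ev_{\ovl i,i}$) are interchanged relative to the original statement. Everything else is a formal manipulation of bounded operators combined with the two cited propositions.
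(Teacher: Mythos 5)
Your proof is correct and is essentially the argument the paper intends: the paper states that Propositions \ref{lb7} and \ref{lb6} "imply immediately" the corollary, and your route — apply \eqref{eq13} (resp. \eqref{eq14}) to the dual object $\mc H_{\ovl i}$ with the vector $S_{\wtd I}\xi$ (resp. $F_{\wtd I}\xi$), use the involutivity $S_{\wtd I}^2\xi=F_{\wtd I}^2\xi=\xi$, and take adjoints of the resulting identities of bounded operators — is exactly the intended reading, with the (co)evaluation bookkeeping handled correctly.
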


Next, we relate $S_{\wtd I}$ and $F_{\wtd I}$. 

\begin{pp}\label{lb28}
We have
\begin{align}
F_{\wtd I}=\vartheta S_{\wtd I}.
\end{align}
More precisely, for any $\mc H_i\in\Obj(\RepdA)$ and $\xi\in\mc H_i(I)$ we have $F_{\wtd I}\xi=\vartheta_{\ovl i}S_{\wtd I}\xi$.
\end{pp}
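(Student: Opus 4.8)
The plan is to turn $F_{\wtd I}$ into $S_{\wtd I}$ in three moves: convert the right operator into a left operator via the braiding, absorb the resulting braid into the twist using the ribbon identity \eqref{eq15}, and finally pull the twist through $L(\xi,\wtd I)^*$ by naturality. Throughout I work with a fixed $\xi\in\mc H_i(I)$ and only need the algebraic identity on this domain, so preclosedness plays no role here.

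First I would rewrite $R(\xi,\wtd I)^*$ in terms of $L(\xi,\wtd I)^*$. Applying the braiding axiom (Definition \ref{lb1}(f)) with the module $\mc H_j=\mc H_{\ovl i}$ gives $\ss_{i,\ovl i}L(\xi,\wtd I)\eta=R(\xi,\wtd I)\eta$ for every $\eta\in\mc H_{\ovl i}$, i.e. $R(\xi,\wtd I)=\ss_{i,\ovl i}L(\xi,\wtd I)$ as operators $\mc H_{\ovl i}\to\mc H_{\ovl i}\boxtimes\mc H_i$. Taking adjoints and using that $\ss_{i,\ovl i}$ is unitary, I obtain $R(\xi,\wtd I)^*=L(\xi,\wtd I)^*\ss_{i,\ovl i}^{-1}$. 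Substituting this into the definition \eqref{eq12} of $F_{\wtd I}$ yields
\begin{align*}
F_{\wtd I}\xi=L(\xi,\wtd I)^*\,\ss_{i,\ovl i}^{-1}\coev_{\ovl i,i}\Omega.
\end{align*}

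Next I would invoke the ribbon identity \eqref{eq15}, which reads $(\id_i\otimes\vartheta_{\ovl i})\coev_{i,\ovl i}=\ss_{i,\ovl i}^{-1}\coev_{\ovl i,i}$, so the previous display becomes $F_{\wtd I}\xi=L(\xi,\wtd I)^*(\id_i\otimes\vartheta_{\ovl i})\coev_{i,\ovl i}\Omega$. Finally, since $\vartheta_{\ovl i}\in\End_{\mc A}(\mc H_{\ovl i})$ is a morphism, the functoriality axiom (Definition \ref{lb1}(b)) applied to $\vartheta_{\ovl i}^{-1}$ gives $(\id_i\otimes\vartheta_{\ovl i}^{-1})L(\xi,\wtd I)=L(\xi,\wtd I)\vartheta_{\ovl i}^{-1}$ on $\mc H_{\ovl i}$; taking adjoints and using unitarity of $\vartheta$ yields $L(\xi,\wtd I)^*(\id_i\otimes\vartheta_{\ovl i})=\vartheta_{\ovl i}L(\xi,\wtd I)^*$. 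Combining everything,
\begin{align*}
F_{\wtd I}\xi=\vartheta_{\ovl i}\,L(\xi,\wtd I)^*\coev_{i,\ovl i}\Omega=\vartheta_{\ovl i}S_{\wtd I}\xi,
\end{align*}
which is the assertion.

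I do not expect a genuine obstacle; the computation is a short chain of identities. The only points demanding care are bookkeeping ones: the twist that appears is $\vartheta_{\ovl i}$ rather than $\vartheta_i$, because all the operators in play land in $\mc H_{\ovl i}$, and one must use the correct orientation of \eqref{eq15} together with the correct braiding $\ss_{i,\ovl i}$ (not $\ss_{\ovl i,i}$). Verifying that the braiding axiom produces exactly $\ss_{i,\ovl i}L(\xi,\wtd I)=R(\xi,\wtd I)$ on $\mc H_{\ovl i}$, so that the object labels match those in \eqref{eq15}, is the step I would check most carefully.
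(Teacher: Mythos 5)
Your proof is correct and follows essentially the same route as the paper's: convert $R(\xi,\wtd I)^*$ to $L(\xi,\wtd I)^*\ss_{i,\ovl i}^{-1}$ via the braiding axiom, apply \eqref{eq15} to trade $\ss_{i,\ovl i}^{-1}\coev_{\ovl i,i}$ for $(\id_i\otimes\vartheta_{\ovl i})\coev_{i,\ovl i}$, and pull $\vartheta_{\ovl i}$ through $L(\xi,\wtd I)^*$ by functoriality. The bookkeeping of object labels you flag at the end is handled correctly.
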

Recall that $\vartheta_{\ovl i}$ is the (unitary) twist operator of $\mc H_{\ovl i}$.
\begin{proof}
By the braiding axiom of $\Ed$ we have $R(\xi,\wtd I)|_{\mc H_{\ovl i}}=\ss_{i,\ovl i}L(\xi,\wtd I)|_{\mc H_{\ovl i}}$. Therefore
\begin{align*}
&F_{\wtd I}\xi=R(\xi,\wtd I)^*\coev_{\ovl i,i}\Omega=L(\xi,\wtd I)^*\ss_{i,\ovl i}^{-1}\coev_{\ovl i,i}\Omega \xlongequal{\eqref{eq15}} L(\xi,\wtd I)^*(\id_i\otimes\vartheta_{\ovl i})\coev_{i,\ovl i}\Omega\\
=&\vartheta_{\ovl i}L(\xi,\wtd I)^*\coev_{i,\ovl i}\Omega=\vartheta_{\ovl i}S_{\wtd I}\xi.
\end{align*}
\end{proof}

We will see later that $S_{\wtd I}$ and $F_{\wtd I}$ are preclosed. Therefore $S_{\wtd I}^*S_{\wtd I}=F_{\wtd I}^*F_{\wtd I}$ will be denoted by $\Delta_{\wtd I}$. This fact is  crucial for proving the geometric modular theorems. 

We now show the M\"obius covariance of $S_{\wtd I}$ and $F_{\wtd I}$.

\begin{pp}\label{lb12}
For any $g\in\UPSU$,
\begin{align}
gS_{\wtd I}g^{-1}=S_{g\wtd I},\qquad gF_{\wtd I}g^{-1}=F_{g\wtd I}.
\end{align}	
\end{pp}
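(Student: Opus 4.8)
The plan is to verify both identities directly on their natural domains by conjugating the defining formulas \eqref{eq12} and invoking the M\"obius covariance of $\Ed$ (Theorem \ref{lb19}, which holds verbatim for $\RepdA$ and $\Ed$). Fix $\mc H_i\in\Obj(\RepdA)$ and $g\in\UPSU$. Since $g$ maps $\mc H_i(I)$ bijectively onto $\mc H_i(gI)$ (with inverse $g^{-1}$), the operator $S_{g\wtd I}|_{\mc H_i}$ has domain $\mc H_i(gI)=g\mc H_i(I)$, so it suffices to check that $S_{g\wtd I}(g\xi)=gS_{\wtd I}\xi$ for every $\xi\in\mc H_i(I)$; this is exactly $S_{g\wtd I}=gS_{\wtd I}g^{-1}$ on $\mc H_i(gI)$, and the $F$-identity is handled the same way.

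For the computation, I would start from $S_{g\wtd I}(g\xi)=L(g\xi,g\wtd I)^*\coev_{i,\ovl i}\Omega$ and apply \eqref{eq52}, namely $L(g\xi,g\wtd I)=gL(\xi,\wtd I)g^{-1}$ when acting on $\mc H_{\ovl i}$. Because $g$ acts unitarily on every module, taking adjoints gives
\begin{align*}
L(g\xi,g\wtd I)^*=gL(\xi,\wtd I)^*g^{-1}
\end{align*}
as an operator $\mc H_i\boxtimes\mc H_{\ovl i}\to\mc H_{\ovl i}$, where $g^{-1}$ now acts on $\mc H_i\boxtimes\mc H_{\ovl i}$. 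Applying this to $\coev_{i,\ovl i}\Omega$, and using that $\coev_{i,\ovl i}\in\Hom_{\mc A}(\mc H_0,\mc H_i\boxtimes\mc H_{\ovl i})$ intertwines the $\UPSU$-actions (Lemma \ref{lb21}) together with $g\Omega=\Omega$, I obtain $g^{-1}\coev_{i,\ovl i}\Omega=\coev_{i,\ovl i}g^{-1}\Omega=\coev_{i,\ovl i}\Omega$, whence
\begin{align*}
S_{g\wtd I}(g\xi)=gL(\xi,\wtd I)^*\coev_{i,\ovl i}\Omega=gS_{\wtd I}\xi.
\end{align*}
This establishes $S_{g\wtd I}=gS_{\wtd I}g^{-1}$. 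The identity for $F_{\wtd I}$ follows by the identical argument, replacing $L$ by $R$, using the covariance of $R$ in \eqref{eq52} and the fact that $\coev_{\ovl i,i}\in\Hom_{\mc A}(\mc H_0,\mc H_{\ovl i}\boxtimes\mc H_i)$ is likewise a morphism.

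The argument is essentially bookkeeping and presents no serious obstacle; the only points requiring care are tracking which representation of $\UPSU$ acts on which module when taking the adjoint of \eqref{eq52}, and confirming that the coevaluations, being morphisms in $\RepdA$, commute with the $\UPSU$-action and fix $\coev_{i,\ovl i}\Omega$ and $\coev_{\ovl i,i}\Omega$ --- both guaranteed by Lemma \ref{lb21} and $g\Omega=\Omega$. I note in passing that one should \emph{not} try to deduce the $F$-identity from $F_{\wtd I}=\vartheta S_{\wtd I}$ (Proposition \ref{lb28}), since the $\UPSU$-action on a module is not by $\mc A$-module morphisms and hence need not commute with $\vartheta$ a priori; the direct computation avoids this issue entirely.
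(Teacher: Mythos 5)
Your proof is correct and follows essentially the same route as the paper's: check that the domains $g\mc H_i(I)$ and $\mc H_i(gI)$ agree, conjugate the defining formula using the M\"obius covariance \eqref{eq52} of the $L$ and $R$ operators, and absorb $g^{-1}$ through $\coev$ and $\Omega$ via Lemma \ref{lb21} and $g\Omega=\Omega$. Your closing caution about $\vartheta$ is unnecessary, though: $\vartheta_i\in\End_{\mc A}(\mc H_i)$ is itself a morphism, so by Lemma \ref{lb21} it does commute with the $\UPSU$-action, and the alternative derivation of the $F$-identity from Proposition \ref{lb28} would in fact go through.
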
	
\begin{proof}
The domain of $S_{g\wtd I}$ is $\mc H_i(gI)$, whereas the domain of $gS_{\wtd I}g^{-1}$ is $g\mc H_i(I)$. From \eqref{eq20} one clearly has $g\mc H_i(I)=\mc H_i(gI)$. Now choose any $\xi\in\mc H_i(gI)$. Then $g^{-1}\xi\in\mc H_i(I)$, and $L(\xi,g\wtd I)=gL(g^{-1}\xi,\wtd I)g^{-1}$ by the M\"obius covariance of $\Ed$. Notice that $\Omega$ is $\UPSU$-invariant, and that the morphisms intertwine the actions of $\UPSU$. Therefore
\begin{align*}
&S_{g\wtd I}\xi=L(\xi,g\wtd I)^*\coev_{i,\ovl i}\Omega=gL(g^{-1}\xi,\wtd I)^*g^{-1}\coev_{i,\ovl i}\Omega\\
=&gL(g^{-1}\xi,\wtd I)^*\coev_{i,\ovl i}g^{-1}\Omega=gL(g^{-1}\xi,\wtd I)^*\coev_{i,\ovl i}\Omega=gS_{\wtd I}g^{-1}\xi.
\end{align*}
\end{proof}

The following proposition says that we can use morphisms to relate the $S_{\wtd I}$ and $F_{\wtd I}$ on different dualized objects. Recall the conjugates of morphisms defined by \eqref{eq58}.

\begin{pp}\label{lb32}
For any $\mc H_i,\mc H_j\in\Obj(\RepdA)$ with (canonical) dual objects $\mc H_{\ovl i},\mc H_{\ovl j}$ respectively,  and for any $G\in\Hom_{\mc A}(\mc H_i,\mc H_j)$, we have
\begin{align*}
\ovl G\cdot S_{\wtd I}|_{\mc H_i}\subset S_{\wtd I}|_{\mc H_j}\cdot G,\qquad \ovl G\cdot F_{\wtd I}|_{\mc H_i}\subset F_{\wtd I}|_{\mc H_j}\cdot G,
\end{align*}
where the conjugate  of $G\in\Hom_{\mc A}(\mc H_i,\mc H_j)$ is defined using the dualizing data of $\mc H_i,\mc H_j$.
\end{pp}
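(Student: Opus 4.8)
The plan is to unwind the operator inclusion into a pointwise statement on the common domain and then verify it by a direct computation with the categorical $L$-operators. Since the domain of $S_{\wtd I}|_{\mc H_i}$ is $\mc H_i(I)$, the asserted inclusion $\ovl G\cdot S_{\wtd I}|_{\mc H_i}\subset S_{\wtd I}|_{\mc H_j}\cdot G$ amounts to two things: first, that $G\xi$ lies in the domain $\mc H_j(I)$ of $S_{\wtd I}|_{\mc H_j}$ for every $\xi\in\mc H_i(I)$; and second, that $\ovl G\,S_{\wtd I}\xi=S_{\wtd I}(G\xi)$ there. The first is immediate from \eqref{eq62}, which gives $G\mc H_i(I)\subset\mc H_j(I)$. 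So everything reduces to the identity $S_{\wtd I}(G\xi)=\ovl G\,S_{\wtd I}\xi$ for $\xi\in\mc H_i(I)$.

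To prove it I would start from the definition $S_{\wtd I}|_{\mc H_j}(G\xi)=L(G\xi,\wtd I)^*\coev_{j,\ovl j}\Omega$, where $L(G\xi,\wtd I)$ acts on $\mc H_{\ovl j}$. Applying \eqref{eq8} with the morphism $G$ on the first tensor slot and $\id_{\ovl j}$ on the second gives $L(G\xi,\wtd I)|_{\mc H_{\ovl j}}=(G\otimes\id_{\ovl j})L(\xi,\wtd I)|_{\mc H_{\ovl j}}$, hence $L(G\xi,\wtd I)^*=L(\xi,\wtd I)^*(G^*\otimes\id_{\ovl j})$. Next I would invoke the defining relation \eqref{eq36} of the transpose; taking its adjoint and using $\coev=\ev^*$ together with the definition $\ovl G=(G^\vee)^*$ recorded in \eqref{eq58} yields $(G^*\otimes\id_{\ovl j})\coev_{j,\ovl j}=(\id_i\otimes\ovl G)\coev_{i,\ovl i}$. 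Substituting produces
\begin{align*}
S_{\wtd I}(G\xi)=L(\xi,\wtd I)^*(\id_i\otimes\ovl G)\coev_{i,\ovl i}\Omega,
\end{align*}
with $L(\xi,\wtd I)$ now acting on $\mc H_{\ovl j}$.

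The remaining point is to pull $\ovl G$ out of the adjoint $L$-operator. Here I would apply the functoriality axiom (Definition \ref{lb1}(b)) to the adjoint morphism $(\ovl G)^*\in\Hom_{\mc A}(\mc H_{\ovl j},\mc H_{\ovl i})$, namely $(\id_i\otimes(\ovl G)^*)L(\xi,\wtd I)|_{\mc H_{\ovl j}}=L(\xi,\wtd I)|_{\mc H_{\ovl i}}(\ovl G)^*$, and adjoin it to obtain $L(\xi,\wtd I)|_{\mc H_{\ovl j}}^*(\id_i\otimes\ovl G)=\ovl G\,L(\xi,\wtd I)|_{\mc H_{\ovl i}}^*$. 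Feeding this into the last display gives $S_{\wtd I}(G\xi)=\ovl G\,L(\xi,\wtd I)^*\coev_{i,\ovl i}\Omega=\ovl G\,S_{\wtd I}\xi$, which is exactly the desired identity; this settles the $S$-inclusion.

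For the $F$-operators I would not repeat the computation but deduce it from the $S$-case via Proposition \ref{lb28}. Writing $F_{\wtd I}|_{\mc H_i}=\vartheta_{\ovl i}S_{\wtd I}|_{\mc H_i}$ and $F_{\wtd I}|_{\mc H_j}=\vartheta_{\ovl j}S_{\wtd I}|_{\mc H_j}$, and using that the twist is natural so that $\vartheta_{\ovl j}\ovl G=\ovl G\,\vartheta_{\ovl i}$ for the morphism $\ovl G\in\Hom_{\mc A}(\mc H_{\ovl i},\mc H_{\ovl j})$, I would compute on $\mc H_i(I)$ that $\ovl G\,F_{\wtd I}\xi=\ovl G\vartheta_{\ovl i}S_{\wtd I}\xi=\vartheta_{\ovl j}\ovl G\,S_{\wtd I}\xi=\vartheta_{\ovl j}S_{\wtd I}(G\xi)=F_{\wtd I}(G\xi)$, giving the $F$-inclusion. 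The only delicate part of the argument is the bookkeeping of which module each $L$-operator acts on (the same symbol $L(\xi,\wtd I)$ denotes operators on $\mc H_{\ovl i}$ and on $\mc H_{\ovl j}$) and, relatedly, the fact that to transport $\ovl G$ through $L(\xi,\wtd I)^*$ one must apply functoriality to the adjoint morphism $(\ovl G)^*$ and then adjoin, rather than to $\ovl G$ directly.
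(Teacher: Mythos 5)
Your proof is correct and follows essentially the same route as the paper: the domain containment via \eqref{eq62}, the identity $L(G\xi,\wtd I)=(G\otimes\id)L(\xi,\wtd I)$ from \eqref{eq8}, the adjoint of \eqref{eq36} to convert $(G^*\otimes\id_{\ovl j})\coev_{j,\ovl j}$ into $(\id_i\otimes\ovl G)\coev_{i,\ovl i}$, and then pulling $\ovl G$ out of $L(\xi,\wtd I)^*$ by (the adjoint of) functoriality. Your deduction of the $F$-case from Proposition \ref{lb28} and naturality of the twist is exactly one of the two routes the paper indicates, and your explicit care about which module each $L(\xi,\wtd I)$ acts on only makes the argument cleaner.
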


\begin{proof}
Notice $G\mc H_i(I)\subset\mc H_j(I)$ and $\ovl G\mc H_{\ovl i}(I)\subset\mc H_{\ovl j}(I)$. For any $\xi\in\mc H_i(I)$, we have $L(G\xi,\wtd I)=(G\otimes\id)L(\xi,\wtd I)$ by \eqref{eq8}. Therefore
\begin{align*}
	&S_{\wtd I}|_{\mc H_j}G\xi=L(G\xi,\wtd I)^*\coev_{j,\ovl j}\Omega=L(\xi,\wtd I)^*(G^*\otimes\id_{\ovl j})\coev_{j,\ovl j}\Omega\\
	\xlongequal{\eqref{eq36}}&L(\xi,\wtd I)^*(\id_i\otimes\ovl G)\coev_{i,\ovl i}\Omega=\ovl GL(\xi,\wtd I)^*\coev_{i,\ovl i}\Omega=\ovl G S_{\wtd I}|_{\mc H_i}\xi.
\end{align*}
We conclude $\ovl G\cdot S_{\wtd I}|_{\mc H_i}\subset S_{\wtd I}|_{\mc H_j}\cdot G$. The second relation follows from a similar argument or from Proposition \ref{lb28} and the fact that the twist intertwines morphisms.
\end{proof}

To prove further properties of $S_{\wtd I}$ and $F_{\wtd I}$, we have to relate them with the $S$ and $F$ operators of non-local extensions of $\mc A$. First of all, if $Q=(\mc H_a,\mu,\iota)$ is a $C^*$-Frobenius algebra in $\RepdA$, then $\mc H_a$ has a dualizing data $(\mc H_{\ovl a},\ev_{a,\ovl a})$. On the other hand, notice that $\mc H_a$ is self dual and $\ev_{a,a}:=\iota^*\mu$ defines an evaluation satisfying the conjugate equation
\begin{align*}
(\ev_{a,a}\otimes\id_a)(\id_a\otimes\coev_{a,a})=\id_a=(\id_a\otimes\ev_{a,a})(\coev_{a,a}\otimes\id_a).
\end{align*}
We say that $Q$ is \textbf{standard} if $\ev_{a,a}$ is a standard evaluation. Therefore, when $Q$ is standard,  $(\mc H_a,\ev_{a,a}=\iota^*\mu)$ is another dualizing data for the representation $\mc H_a$ of $\mc A$. By the uniqueness of standard evaluations, we have a unitary $\epsilon\in\Hom_{\mc A}(\mc H_a,\mc H_{\ovl a})$ satisfying
\begin{align}
	\ev_{\ovl a,a}(\epsilon\otimes\id_a)=\iota^*\mu=\ev_{a,\ovl a}(\id_a\otimes\epsilon),\label{eq18}
\end{align}
called the \textbf{reflection operator} of $Q$.

In the remaining part of this section, we shall always assume that $Q$ is a standard  $C^*$-Frobenius algebra in $\RepdA$, and $\mc H_a$ has dualizing data $(\mc H_{\ovl a},\ev_{a,\ovl a})$ which defines the reflection operator $\epsilon$. Let $\mc B$ and $\mc B'$ be the pair of non-local extensions of $\mc A$ obtained by $Q$. Then, according to Proposition \ref{lb3}, for each $\wtd I\in\Jtd$, $\iota\Omega$ is a cyclic separating vector for $\mc B(\wtd I)$ and $\mc B'(\wtd I)$. 

\begin{pp}\label{lb9}
Assume that $Q$ is standard. Then $\epsilon^{-1}S_{\wtd I}|_{\mc H_a}$ and $\epsilon^{-1}F_{\wtd I}|_{\mc H_a}$ are respectively the (preclosed) $S$ operators of $\mc B(\wtd I)$ and $\mc B'(\wtd I)$ with respect to $\iota\Omega$. More precisely, for any $X\in\mc B(\wtd I)$ and $Y\in\mc B'(\wtd I)$,
\begin{align}
\epsilon^{-1}S_{\wtd I}|_{\mc H_a}\cdot X\iota\Omega=X^*\iota\Omega,\qquad \epsilon^{-1}F_{\wtd I}|_{\mc H_a}\cdot Y\iota\Omega=Y^*\iota\Omega.
\end{align}
\end{pp}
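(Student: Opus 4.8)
The plan is to verify the two operator identities directly on the dense domain $\mc H_a(I)=\mc B(\wtd I)\iota\Omega=\mc B'(\wtd I)\iota\Omega$ (cf.\ \eqref{eq11}), from which preclosedness comes for free: once $\epsilon^{-1}S_{\wtd I}|_{\mc H_a}$ is shown to send $X\iota\Omega\mapsto X^*\iota\Omega$ for all $X\in\mc B(\wtd I)$, it agrees with the Tomita-Takesaki $S$ operator of $(\mc B(\wtd I),\iota\Omega)$ on a core, hence equals it (and is preclosed). Since by Proposition \ref{lb39} every $X\in\mc B(\wtd I)$ is of the form $A(\xi,\wtd I)=\mu\cdot L(\xi,\wtd I)|_{\mc H_a}$ with $\xi=X\iota\Omega\in\mc H_a(I)$ (using $A(\xi,\wtd I)\iota\Omega=\xi$ from Proposition \ref{lb3}), it suffices to check the formula on such generators. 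Here $X^*\iota\Omega=L(\xi,\wtd I)^*\mu^*\iota\Omega$, so the whole computation reduces to understanding the vector $\mu^*\iota\Omega\in\mc H_a\boxtimes\mc H_a$.

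The crux is a reflection-operator identity expressing the Frobenius datum $\mu^*\iota$ through the coevaluation. Taking the adjoint of the defining relation $\ev_{a,\ovl a}(\id_a\otimes\epsilon)=\iota^*\mu$ in \eqref{eq18}, and using that $Q$ standard makes $\epsilon$ unitary (so $\epsilon^*=\epsilon^{-1}$), I get
\begin{align*}
\mu^*\iota=(\id_a\otimes\epsilon^{-1})\coev_{a,\ovl a}.
\end{align*}
Symmetrically, the other half of \eqref{eq18}, $\ev_{\ovl a,a}(\epsilon\otimes\id_a)=\iota^*\mu$, yields $\mu^*\iota=(\epsilon^{-1}\otimes\id_a)\coev_{\ovl a,a}$, which is the version needed for the $F$/$\mc B'$ statement. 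This step is where the standardness hypothesis enters and is really the conceptual heart of the proof; everything else is formal manipulation of the categorical extension axioms.

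It then remains to pull $\epsilon^{-1}$ through $L(\xi,\wtd I)^*$. Applying functoriality (axiom (b), equivalently \eqref{eq8}) to $\epsilon\colon\mc H_a\to\mc H_{\ovl a}$ gives $(\id_a\otimes\epsilon)L(\xi,\wtd I)|_{\mc H_a}=L(\xi,\wtd I)|_{\mc H_{\ovl a}}\epsilon$, whose adjoint (again using $\epsilon^*=\epsilon^{-1}$) reads $L(\xi,\wtd I)|_{\mc H_a}^*(\id_a\otimes\epsilon^{-1})=\epsilon^{-1}L(\xi,\wtd I)|_{\mc H_{\ovl a}}^*$. Combining the three displayed facts,
\begin{align*}
X^*\iota\Omega=L(\xi,\wtd I)|_{\mc H_a}^*(\id_a\otimes\epsilon^{-1})\coev_{a,\ovl a}\Omega=\epsilon^{-1}L(\xi,\wtd I)|_{\mc H_{\ovl a}}^*\coev_{a,\ovl a}\Omega=\epsilon^{-1}S_{\wtd I}\xi,
\end{align*}
which is exactly $\epsilon^{-1}S_{\wtd I}|_{\mc H_a}\cdot X\iota\Omega=X^*\iota\Omega$. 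The claim for $F_{\wtd I}$ and $\mc B'(\wtd I)$ follows by the identical argument with $R$ in place of $L$, $B(\eta,\wtd I)$ in place of $A(\xi,\wtd I)$, and the second reflection identity $\mu^*\iota=(\epsilon^{-1}\otimes\id_a)\coev_{\ovl a,a}$. The only genuine care required is bookkeeping: tracking on which module $L(\xi,\wtd I)^*$ (respectively $R(\eta,\wtd I)^*$) acts and keeping the distinction between $\epsilon$, $\epsilon^*$, and $\epsilon^{-1}$ straight, so that the twist $\epsilon^{-1}$ lands on the outside exactly matching the definition of $S_{\wtd I}$ and $F_{\wtd I}$ in \eqref{eq12}.
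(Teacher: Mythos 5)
Your proof is correct, and it follows the same overall strategy as the paper's: reduce to a generator $X=A(\xi,\wtd I)=\mu L(\xi,\wtd I)|_{\mc H_a}$ with $\xi=X\iota\Omega$, and then turn $X^*\iota\Omega=L(\xi,\wtd I)^*\mu^*\iota\Omega$ into $\epsilon^{-1}S_{\wtd I}\xi$ using the reflection identity \eqref{eq18} for the Frobenius datum $\mu^*\iota$. The one genuine (if small) difference is in the middle step: the paper routes through \eqref{eq17}, i.e.\ the formula $R(\xi,\wtd I)^*=(\id_a\otimes\ev_{a,\ovl a})R(F_{\wtd I}\xi,\wtd I)$, which rests on the fusion relations of Proposition \ref{lb5} and the involutivity $F_{\wtd I}^2=\id$ of Proposition \ref{lb6}, and then verifies that the resulting morphism $\epsilon(\id_a\otimes\ev_{a,\ovl a})(\mu^*\iota\otimes\id_{\ovl a})$ collapses to $\id_{\ovl a}$ via the conjugate equations. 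You instead adjoint the functoriality relation $(\id_a\otimes\epsilon)L(\xi,\wtd I)|_{\mc H_a}=L(\xi,\wtd I)|_{\mc H_{\ovl a}}\epsilon$ to pull $\epsilon^{-1}$ through $L(\xi,\wtd I)^*$, which lands directly on the definition \eqref{eq12} of $S_{\wtd I}$ without invoking Propositions \ref{lb5}--\ref{lb7} at all; this is marginally more self-contained, while the paper's detour reuses machinery it needs elsewhere anyway. Your handling of preclosedness (identifying the domain $\mc H_a(I)=\mc B(\wtd I)\iota\Omega$ with the defining domain of the Tomita--Takesaki $S$ operator and importing preclosedness from there) matches what the paper does immediately after the proposition. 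One trivial citation slip: that every element of $\mc B(\wtd I)$ has the form $A(\xi,\wtd I)$ is the definition of $\mc B(\wtd I)$, not the content of Proposition \ref{lb39}.
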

\begin{proof}
We prove the second equation. To simplify discussions, we suppress the $|_{\mc H_a}$ after $F_{\wtd I}$. Choose any $\xi\in\mc H_a(I)$. We want to show that $F_{\wtd I}B(\xi,\wtd I)\iota\Omega=\epsilon B(\xi,\wtd I)^*\iota\Omega$. By proposition \ref{lb3} we have $F_{\wtd I}B(\xi,\wtd I)\iota\Omega=F_{\wtd I}\xi$. On the other hand,
\begin{align*}
&\epsilon B(\xi,\wtd I)^*\iota\Omega=\epsilon R(\xi,\wtd I)^*\mu^*\iota\Omega \xlongequal{\eqref{eq17}} \epsilon(\id_a\otimes\ev_{a,\ovl a})R(F_{\wtd I}\xi,\wtd I)\mu^*\iota\Omega\\
=&\epsilon(\id_a\otimes\ev_{a,\ovl a})(\mu^*\iota\otimes\id_{\ovl a})R(F_{\wtd I}\xi,\wtd I)\Omega=\epsilon(\id_a\otimes\ev_{a,\ovl a})(\mu^*\iota\otimes\id_{\ovl a})\cdot F_{\wtd I}\xi.
\end{align*}

By \eqref{eq18} we have $(\epsilon\otimes\id_a)\mu^*\iota=\coev_{\ovl a,a}$. Therefore
\begin{align*}
&\epsilon(\id_a\otimes\ev_{a,\ovl a})(\mu^*\iota\otimes\id_{\ovl a})=(\epsilon\otimes\ev_{a,\ovl a})(\mu^*\iota\otimes\id_{\ovl a})\\
=&(\id_{\ovl a}\otimes\ev_{a,\ovl a})(\epsilon\otimes\id_a\otimes\id_{\ovl a})(\mu^*\iota\otimes\id_{\ovl a})=(\id_{\ovl a}\otimes\ev_{a,\ovl a})((\epsilon\otimes\id_a)\mu^*\iota\otimes\id_{\ovl a})\\
=&(\id_{\ovl a}\otimes\ev_{a,\ovl a})(\coev_{\ovl a,a}\otimes\id_{\ovl a})=\id_{\ovl a}.
\end{align*}
This proves the second equation. A similar argument proves the first one.
\end{proof}

\begin{rem}\label{lb33}
Since $\vartheta$ commutes with homomorphisms, it commutes in particular with $\epsilon$. Therefore the $S$ operators of $\mc B(\wtd I)$ and $\mc B'(\wtd I)$ differ by a twist: $\epsilon^{-1}F_{\wtd I}|_{\mc H_a}=\vartheta\cdot\epsilon^{-1}S_{\wtd I}|_{\mc H_a}$.
\end{rem}

By Tomita-Takesaki theory \cite[Chapter VI]{Tak02}, we know that $S_{\wtd I}|_{\mc H_a}$ and $F_{\wtd I}|_{\mc H_a}$ are preclosed since $\epsilon^{-1} S_{\wtd I}|_{\mc H_a}$ and $\epsilon ^{-1}F_{\wtd I}|_{\mc H_a}$ are so. Using this fact, we can show
\begin{lm}
Let $Q=(\mc H_a,\mu,\iota)$ be a standard $C^*$-Frobenius algebra in $\RepdA$, and let $\mc H_i\in\Obj(\RepdA)$ be equivalent to a subrepresentation of $\mc H_a$.  Then $S_{\wtd I}|_{\mc H_i}, F_{\wtd I}|_{\mc H_i}$ are preclosed. 
\end{lm}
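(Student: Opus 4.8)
The plan is to transport the preclosedness of $S_{\wtd I}|_{\mc H_a}$ and $F_{\wtd I}|_{\mc H_a}$ down to $\mc H_i$ along an isometric embedding. Since $\mc H_i$ is equivalent to a subrepresentation of $\mc H_a$, there is an isometry $v\in\Hom_{\mc A}(\mc H_i,\mc H_a)$, i.e.\ $v^*v=\id_i$. Two elementary facts about $v$ will be used: by \eqref{eq62} we have $v\mc H_i(I)\subset\mc H_a(I)$, so that $v$ maps the domain $\mc H_i(I)$ of $S_{\wtd I}|_{\mc H_i}$ into the domain $\mc H_a(I)$ of $S_{\wtd I}|_{\mc H_a}$; and by the stability of isometries under conjugation recalled after \eqref{eq58}, the conjugate $\ovl v\in\Hom_{\mc A}(\mc H_{\ovl i},\mc H_{\ovl a})$ is again an isometry, hence a bounded and injective (linear) map.

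The key input is Proposition \ref{lb32} applied to $G=v$, which gives the operator inclusion
\begin{align*}
\ovl v\cdot S_{\wtd I}|_{\mc H_i}\subset S_{\wtd I}|_{\mc H_a}\cdot v;
\end{align*}
explicitly, $\ovl v\,S_{\wtd I}|_{\mc H_i}\xi=S_{\wtd I}|_{\mc H_a}\,v\xi$ for every $\xi\in\mc H_i(I)$. I would then verify the sequential closability criterion. Suppose $\xi_n\in\mc H_i(I)$ satisfy $\xi_n\to 0$ and $S_{\wtd I}|_{\mc H_i}\xi_n\to\zeta$. Then $v\xi_n\to 0$ by boundedness of $v$, while $S_{\wtd I}|_{\mc H_a}(v\xi_n)=\ovl v\,S_{\wtd I}|_{\mc H_i}\xi_n\to\ovl v\zeta$ by boundedness of $\ovl v$. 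Since $S_{\wtd I}|_{\mc H_a}$ is preclosed (Proposition \ref{lb9} together with Tomita--Takesaki theory, as noted just above), we get $\ovl v\zeta=0$, and injectivity of $\ovl v$ forces $\zeta=0$. Hence $S_{\wtd I}|_{\mc H_i}$ is preclosed.

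For $F_{\wtd I}|_{\mc H_i}$ I would not repeat the argument but instead invoke Proposition \ref{lb28}: on the domain $\mc H_i(I)$ one has $F_{\wtd I}|_{\mc H_i}=\vartheta_{\ovl i}\,S_{\wtd I}|_{\mc H_i}$, and since $\vartheta_{\ovl i}$ is a unitary (in particular bounded with bounded inverse), left multiplication by it preserves preclosedness. I do not anticipate a real obstacle; the only points demanding care are the two preparatory facts about $v$ (that it carries $\mc H_i(I)$ into $\mc H_a(I)$ and that $\ovl v$ is an injective bounded map), which turn the abstract inclusion of Proposition \ref{lb32} into a usable pointwise identity on $\mc H_i(I)$. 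The whole mechanism is just that pre-composition with a bounded map and post-composition with a bounded injective map preserve closability.
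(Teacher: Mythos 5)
Your proof is correct and follows essentially the same route as the paper: embed $\mc H_i$ into $\mc H_a$ via an isometry, use Proposition \ref{lb32} to intertwine the $S$ operators, and transport preclosedness down from $S_{\wtd I}|_{\mc H_a}$ (known from Proposition \ref{lb9} and Tomita--Takesaki theory). The only cosmetic differences are that the paper verifies the closability criterion via the isometric identity $\langle S_{\wtd I}|_{\mc H_i}\xi\,|\,S_{\wtd I}|_{\mc H_i}\eta\rangle=\langle S_{\wtd I}|_{\mc H_a}\varphi\xi\,|\,S_{\wtd I}|_{\mc H_a}\varphi\eta\rangle$ rather than by injectivity of $\ovl v$, and it repeats the argument for $F_{\wtd I}$ instead of invoking $F_{\wtd I}=\vartheta S_{\wtd I}$; both of your variants are valid.
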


\begin{proof}
Since the proofs for the two operators are the same, we only prove the preclosedness of $S_{\wtd I}|_{\mc H_i}$. By Proposition \ref{lb9} and Tomita-Takesaki theory for the von Neumann algebras $\mc B(\wtd I)$, we know that $S_{\wtd I}|_{\mc H_a}$ is preclosed. Let $\varphi\in\End_{\mc A}(\mc H_i,\mc H_a)$ be an isometry embedding $\mc H_i$ into $\mc \mc H_a$. Let $\mc H_{\ovl i}$ be the (canonical) dual object of $\mc H_i$.  Since $\varphi^*\varphi=\id_i$, we have $\ovl\varphi^*\cdot \ovl\varphi=\id_{\ovl i}$.

By Proposition \ref{lb32}, we have $\ovl \varphi\cdot S_{\wtd I}|_{\mc H_i}\subset S_{\wtd I}|_{\mc H_a}\cdot \varphi$. Let $\bk{\cdot|\cdot}$ denote the inner product of $\mc H_{\ovl i}$. Then for each $\xi,\eta\in\mc H_i(I)$,
\begin{align*}
\Big\langle S_{\wtd I}|_{\mc H_i}\xi\Big|S_{\wtd I}|_{\mc H_i}\eta\Big\rangle\xlongequal{\ovl\varphi^*\cdot \ovl\varphi=\id_{\ovl i}}\Big\langle \ovl\varphi S_{\wtd I}|_{\mc H_i}\xi\Big|\ovl\varphi S_{\wtd I}|_{\mc H_i}\eta\Big\rangle=\Big\langle S_{\wtd I}|_{\mc H_a}\varphi\xi\Big|S_{\wtd I}|_{\mc H_a}\varphi\eta\Big\rangle.
\end{align*}
Choose any sequence $\xi_n$ in $\mc H_i(I)$ such that $\xi_n\rightarrow 0$ and $S_{\wtd I}|_{\mc H_i}\xi_n$ converges. Then $\varphi\xi_n\rightarrow 0$.  By the above relation, $S_{\wtd I}|_{\mc H_a}\varphi\xi$ converges. So, by the preclosedness of $S_{\wtd I}|_{\mc H_a}$, $S_{\wtd I}|_{\mc H_a}\varphi\xi$ converges to $0$. By the above relation again, we conclude $S_{\wtd I}|_{\mc H_i}\xi\rightarrow 0$. So $S_{\wtd I}|_{\mc H_i}$ must be preclosed.
\end{proof}

Thus, to prove that $S_{\wtd I}$ and $F_{\wtd I}$ are preclosed on any dualized object $\mc H_i$, it suffices to show that any $\mc H_i$ can be embedded into $\mc H_a$  for some standard $C^*$-Frobenius algebra $Q=(\mc H_a,\mu,\iota)$. This is well-known. We review such construction below, and recall some important properties that will be used in later sections.

First, assume $\mc H_k\in\Obj(\RepdA)$ with dualizing data $(\mc H_{\ovl k},\ev_{k,\ovl k})$.  Then $Q=(\mc H_k\boxtimes\mc H_{\ovl k},\mu,\iota)$ is a standard  $C^*$-Frobenius algebra, where
\begin{align}
	\iota=\coev_{k,\ovl k},\qquad \mu=\id_k\otimes\ev_{\ovl k,k}\otimes\id_{\ovl k}.\label{eq49}
\end{align}
Moreover, the dualizing data of $\mc H_k\boxtimes\mc H_{\ovl k}$ defined as in  Definition \ref{lb31} is clearly $(\mc H_k\boxtimes\mc H_{\ovl k},\iota^*\mu)$. Therefore, the reflection operator is just
\begin{align*}
	\epsilon=\id_{k\otimes\ovl k}.
\end{align*}

Now assume that $\{\mc H_i:i\in\mc E\}$ is a finite set of distinct objects in $\RepdA$, indexed by $\mc E$. Let $(\mc H_{\ovl i},\ev_{i,\ovl i})$ be the dualizing data of $\mc H_i$. Assume $\mc H_k=\bigoplus_{i\in\mc E}\mc H_i$, and let the morphism $\varphi_i:\mc H_i\rightarrow\mc H_k$ be the inclusion. Fix a dualizing data $(\mc H_{\ovl k},\ev_{k,\ovl k})$ for $\mc H_k$. Since $\ovl{\varphi_j}=(\varphi_j^*)^\vee$ and hence $\ev_{k,\ovl k}(\varphi_i\otimes\ovl{\varphi_j})=\ev_{k,\ovl k}(\varphi_j^*\varphi_i\otimes\id_{\ovl j})=\delta_{i,j}\ev_{i,\ovl i}$, we clearly have
\begin{gather}
	\qquad \mc H_{\ovl k}\simeq\bigoplus_{i\in\mc E}\mc H_{\ovl i},\qquad \ev_{k,\ovl k}=\sum_{i\in\mc E}\ev_{i,\ovl i}(\varphi_i^*\otimes\ovl{\varphi_i}^*),\label{eq51}
\end{gather}
because the right hand side of the second equation of \eqref{eq51} times $(\varphi_i\otimes\ovl{\varphi_j})$ also equals $\delta_{i,j}\ev_{i,\ovl i}$. Let $Q$ be the corresponding standard $C^*$-Frobenius algebra for $\mc H_k\boxtimes\mc H_{\ovl k}$ defined by \eqref{eq49}. Then we have a unitary equivalence of $\mc A$-representations
\begin{align}
\bigoplus_{i,j\in\mc E}\psi_{i,\ovl j}:\bigoplus_{i,j\in\mc E}\mc H_i\boxtimes\mc H_{\ovl j}\xrightarrow{\simeq}	\mc H_k\boxtimes\mc H_{\ovl k}\label{eq19}
\end{align}
where
\begin{align}
\psi_{i,\ovl j}=\varphi_i\otimes\ovl{\varphi_j}\label{eq64}
\end{align}
is an isometry. We have
\begin{align}
	\ovl{\psi_{i,\ovl j}}=\psi_{j,\ovl i}.\label{eq65}
\end{align}
Moreover, using \eqref{eq49} and \eqref{eq51}, we see that for each $i,j,l,m\in\mc E$, 
\begin{align}
	\mu(\psi_{i,\ovl j}\otimes\psi_{l,\ovl m})=\delta_{j,l}\cdot \psi_{i,\ovl m}(\id_i\otimes\ev_{\ovl j,j}\otimes\id_{\ovl m}),\label{eq48}
\end{align}
and hence
\begin{align}
\iota^*\mu(\psi_{i,\ovl j}\otimes\psi_{j,\ovl l})=\delta_{i,l}\cdot \ev_{i\boxtimes\ovl j, j\boxtimes\ovl l}.\label{eq70}
\end{align}

In the case that $0\in\mc E$, i.e., $\{\mc H_i:i\in\mc E\}$ contains $\mc H_0$. Notice that $\mc H_{\ovl 0}=\mc H_0$. Then for each $i,j\in\mc E$, by identifying $\mc H_i$ with $\mc H_i\boxtimes\mc H_0$ and $\mc H_{\ovl j}$ with $\mc H_0\boxtimes\mc H_{\ovl j}$ using the right and left unitors respectively and noticing \eqref{eq48}, we obtain an isometry
\begin{gather}
\mu(\psi_{i,0}\otimes\psi_{0,\ovl j})=\psi_{i,\ovl j}:\mc H_i\boxtimes\mc H_{\ovl j}\rightarrow \mc H_k\boxtimes\mc H_{\ovl k}.	\label{eq66}
\end{gather}


Now, for any $\mc H_i\in\Obj(\RepdA)$, we set $\mc H_k=\mc H_0\bigoplus\mc H_i$. Recall $\mc H_{\ovl 0}=\mc H_0$. Then $\mc H_k\boxtimes\mc H_{\ovl k}$ defines a standard $C^*$-Frobenius algebra. By \eqref{eq19}, $\mc H_i\simeq\mc H_i\boxtimes\mc H_0$ is a sub-representation of $\mc H_k\boxtimes\mc H_{\ovl k}$. Thus we conclude:

\begin{pp}
$S_{\wtd I}$ and $F_{\wtd I}$ are preclosed operators on any $\mc H_i\in\Obj(\RepdA)$.
\end{pp}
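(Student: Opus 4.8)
The plan is to realize every dualized object as a direct summand of the underlying object of a standard $C^*$-Frobenius algebra, and then to quote the preceding Lemma, which already reduces preclosedness on a subobject to preclosedness on the whole algebra. Concretely, that Lemma asserts that if $\mc H_i\in\Obj(\RepdA)$ is (equivalent to) a subrepresentation of $\mc H_a$ for some standard $C^*$-Frobenius algebra $Q=(\mc H_a,\mu,\iota)$, then $S_{\wtd I}|_{\mc H_i}$ and $F_{\wtd I}|_{\mc H_i}$ are preclosed. Thus the entire content of the proposition reduces to the universal statement that an arbitrary $\mc H_i$ admits such an embedding into some standard $Q$.

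To produce this embedding, I would take $\mc H_k=\mc H_0\oplus\mc H_i$, fix any dualizing data $(\mc H_{\ovl k},\ev_{k,\ovl k})$ for it, and form $\mc H_a:=\mc H_k\boxtimes\mc H_{\ovl k}$ with the canonical Frobenius structure $\iota=\coev_{k,\ovl k}$, $\mu=\id_k\otimes\ev_{\ovl k,k}\otimes\id_{\ovl k}$ of \eqref{eq49}. This $Q$ is standard, its reflection operator being the identity. Applying the orthogonal decomposition \eqref{eq19} with index set $\mc E=\{0,i\}$ and reading off the summand labelled $(i,0)$ --- using $0\in\mc E$ together with the identification \eqref{eq66} --- exhibits $\mc H_i\simeq\mc H_i\boxtimes\mc H_0$ as an isometrically embedded subrepresentation of $\mc H_a$, via $\psi_{i,0}$.

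With the embedding in hand, the preceding Lemma immediately yields preclosedness of $S_{\wtd I}|_{\mc H_i}$ and $F_{\wtd I}|_{\mc H_i}$. One subtlety is that this embedding naturally endows $\mc H_i$ with the dualizing data it inherits as a summand of $\mc H_a$, which need not coincide with the prescribed canonical dualizing data of $\mc H_i$ as an object of $\RepdA$; however, the Lemma is stated relative to the canonical dual of $\mc H_i$, and Proposition \ref{lb8} shows that changing the dualizing data alters $S_{\wtd I}$ and $F_{\wtd I}$ only by post-composition with a fixed unitary, which has no bearing on preclosedness. Hence the choice is immaterial.

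I do not expect any genuine obstacle inside this proposition itself: the real work sits upstream, in the preceding Lemma (whose preclosedness claim ultimately rests on the Tomita-Takesaki theory of the non-local extension $\mc B(\wtd I)$ through Proposition \ref{lb9}) and in the verification of \eqref{eq19} and \eqref{eq49} for the doubling construction $\mc H_k\boxtimes\mc H_{\ovl k}$. Once those are granted, the present statement is a formal corollary, and the only point requiring a moment's care is the bookkeeping of dualizing data addressed above.
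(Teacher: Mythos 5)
Your proposal is correct and follows essentially the same route as the paper: set $\mc H_k=\mc H_0\oplus\mc H_i$, form the standard $C^*$-Frobenius algebra on $\mc H_a=\mc H_k\boxtimes\mc H_{\ovl k}$ via \eqref{eq49}, embed $\mc H_i\simeq\mc H_i\boxtimes\mc H_0$ as a subrepresentation using \eqref{eq19}, and invoke the preceding Lemma. Your extra remark on the independence of preclosedness from the choice of dualizing data (via Proposition \ref{lb8}) is a correct and welcome clarification of a point the paper leaves implicit.
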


\begin{rem}
In the following, we will always let $S_{\wtd I},F_{\wtd I}$  denote the closures of the preclosed operators in \eqref{eq12}. Then it is clear that Propositions \ref{lb8}, \ref{lb28}, \ref{lb12}, \ref{lb32} and Remark \ref{lb33} still hold for $S_{\wtd I},F_{\wtd I}$. The original operators (without taking closures) will be denoted by $S_{\wtd I}|_{\mc H_i(I)},F_{\wtd I}|_{\mc H_i(I)}$ (whose domains are precisely $\mc H_i(I)$).
\end{rem}

We have a positive closed operator $\Delta_{\wtd I}:=S_{\wtd I}^*S_{\wtd I}=F_{\wtd I}^*F_{\wtd I}$ definable on any object $\mc H_i\in\Obj(\RepdA)$. (We will write $\Delta_{\wtd I}$ as $\Delta_I$ after showing that $\Delta_{\wtd I}$ is independent of the choice of $\arg_I$.) So $\Delta_{\wtd I}|_{\mc H_i}$ is a positive closed operator on $\mc H_i$. We call $\Delta_{\wtd I}$ the  \textbf{modular operator} of $\Ed$. Note that by proposition \ref{lb8}, $\Delta_{\wtd I}|_{\mc H_i}$ does not rely on the dualizing data of $\mc H_i$. We define the categorical domain $\Dom(\Delta_{\wtd I}^{\frac 12})$ which associates to each $\mc H_i\in\Obj(\RepdA)$ the dense subspace $\Dom(\Delta_{\wtd I}^{\frac 12}|_{\mc H_i})$ of $\mc H_i$. Then $\Dom(\Delta_{\wtd I}^{\frac 12})=\Dom(S_{\wtd I})=\Dom(F_{\wtd I})$. We also have (categorical) polar decompositions
\begin{align}
	S_{\wtd I}=\fk J_{\wtd I}\cdot\Delta_{\wtd I}^{\frac 12},\qquad F_{\wtd I}=\vartheta\fk J_{\wtd I}\cdot\Delta_{\wtd I}^{\frac 12}\label{eq41}
\end{align}
where $\fk J_{\wtd I}$, when restricted to any $\mc H_i\in\Obj(\RepdA)$, is an anti-unitary operator
\begin{align*}
	\fk J_{\wtd I}|_{\mc H_i}:\mc H_i\rightarrow\mc H_{\ovl i}.
\end{align*}
We call $\fk J_{\wtd I}$ the  \textbf{(left) modular conjugation} of $\Ed$. (The right modular conjugation is $\vartheta\fk J_{\wtd I}$ for the obvious reason.) 

Our next goal is to use Proposition \ref{lb32} to study the relations of modular operators and conjugations on different dualized objects. To prepare for the proof, we recall that two closed operators $A$ and $B$ on a Hilbert space $\mc H$ are said to \textbf{commute strongly} if the von Neumann algebras generated by $A$ and by $B$ commute.\footnote{The von Neumann algebra generated by $A$ is the one generated by $U$ and all $e^{\im tH}$ where $A=UH$ is the polar decomposition of $A$.} If $A$ is bounded, then $A$ and $B$ commute strongly if and only if
\begin{align}
AB\subset BA,\qquad A^*B\subset BA^*.	
\end{align}
(Cf. for instance \cite[Sec. B.1]{Gui19}.) We also recall the definition of strongly commuting diagrams of closed operators introduced in \cite{Gui21a}.
\begin{df}
	Let $\mc P,\mc Q, \mc R,\mc S$ be Hilbert spaces, and  $A:\mc P\rightarrow\mc R,B:\mc Q\rightarrow\mc S,C:\mc P\rightarrow\mc Q,D:\mc R\rightarrow\mc S$ be unbounded closed operators. By saying that the diagram of closed operators
	\begin{align}
		\begin{CD}
			\mc P @>C>> \mc Q\\
			@V A VV @V B VV\\
			\mc R @>D>> \mc S
		\end{CD}\label{eq37}
	\end{align}
	\textbf{commutes strongly}, we mean the following: Let $\mc H=\mc P\oplus\mc Q\oplus\mc R\oplus\mc S$. Define closed operators $R,S$ on $\mc H$ with domains $\Dom(R)=\Dom(A)\oplus\Dom(B)\oplus\mc R\oplus \mc S$, $\Dom(S)=\Dom(C)\oplus\mc Q\oplus\Dom(D)\oplus \mc S$, such that
	\begin{gather*}
		R(\xi\oplus\eta\oplus\chi\oplus\varsigma)=0\oplus 0\oplus A\xi\oplus B\eta\qquad(\forall \xi\in\Dom(A),\eta\in\Dom(B),\chi\in\mc R,\varsigma\in \mc S),\\
		S(\xi\oplus\eta\oplus\chi\oplus\varsigma)=0\oplus C\xi\oplus 0\oplus D\chi   \qquad(\forall \xi\in\Dom(C),\eta\in\mc Q,\chi\in \Dom(D),\varsigma\in\mc S).
	\end{gather*}
	(Such construction is called the \textbf{extension} from $A,B$ to $R$, and from $C,D$ to $S$.) Then  $R$ and $S$ commute strongly. In the case that $A$ and $B$ are preclosed antilinear operators, we choose anti-unitary operators $U_1$ on $\mc R$ and $U_2$ on $\mc S$. We say that \eqref{eq37} commutes strongly if the following diagram of closed linear operators commutes strongly:
	\begin{align}
		\begin{CD}
			\mc P @>C>> \mc Q\\
			@V U_1A VV @V U_2B VV\\
			\mc R @>U_2DU_1^{-1}>> \mc S
		\end{CD}.
	\end{align}
	This definition is independent of the choice of $U_1,U_2$.
\end{df}

\begin{pp}\label{lb14}
For any $\alpha\in\mbb C$, $\mc H_i,\mc H_j\in\Obj(\RepdA)$,	and  $G\in\Hom_{\mc A}(\mc H_i,\mc H_j)$, the following equation holds when acting on $\mc H_i$.
\begin{gather}
G\cdot \Delta_{\wtd I}^\alpha|_{\mc H_i}\subset \Delta_{\wtd I}^\alpha|_{\mc H_j}\cdot G,\qquad  \ovl G\cdot\fk J_{\wtd I}|_{\mc H_i}=\fk J_{\wtd I}|_{\mc H_j}\cdot G.\label{eq60}
\end{gather}
Moreover, if $G$ is an isometry (i.e. $G^*G=\id_i$), then for each $I\in\mc J$,
\begin{align}
G\cdot \Delta_{\wtd I}^{\frac 12}|_{\mc H_i(I)}=\Delta_{\wtd I}^{\frac 12}|_{\mc H_j(I)}\cdot G,\qquad G\cdot \Delta_{\wtd I}^\alpha|_{\mc H_i}=\Delta_{\wtd I}^\alpha|_{\mc H_j}\cdot G.\label{eq61}
\end{align}
\end{pp}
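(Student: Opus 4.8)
The plan is to reduce everything to Proposition~\ref{lb32}, applied both to $G\in\Hom_{\mc A}(\mc H_i,\mc H_j)$ and to its adjoint $G^*\in\Hom_{\mc A}(\mc H_j,\mc H_i)$, together with the polar decomposition \eqref{eq41}. Proposition~\ref{lb32} gives the two inclusions $\ovl G\cdot S_{\wtd I}|_{\mc H_i}\subset S_{\wtd I}|_{\mc H_j}\cdot G$ and $\ovl{G^*}\cdot S_{\wtd I}|_{\mc H_j}\subset S_{\wtd I}|_{\mc H_i}\cdot G^*$. Taking Hilbert-space adjoints of the second one, using that $G,\ovl G$ are bounded, that $(\ovl H)^*=\ovl{H^*}$ by \eqref{eq58}, and that adjunction reverses inclusions, I obtain $G\cdot S_{\wtd I}^*|_{\mc H_i}\subset S_{\wtd I}^*|_{\mc H_j}\cdot\ovl G$. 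These, together with $\Delta_{\wtd I}=S_{\wtd I}^*S_{\wtd I}$, suffice for the case $\alpha=1$: for $\xi\in\Dom(\Delta_{\wtd I}|_{\mc H_i})$ I would push $G$ through $\Delta_{\wtd I}|_{\mc H_i}\xi=S_{\wtd I}^*S_{\wtd I}\xi$ in two moves, first $G\,S_{\wtd I}^*\subset S_{\wtd I}^*\ovl G$ and then $\ovl G\,S_{\wtd I}\subset S_{\wtd I}G$, the net effect turning $G$ into $\ovl G$ and back into $G$. Careful bookkeeping of the domains shows along the way that $G\xi\in\Dom(S_{\wtd I}^*S_{\wtd I}|_{\mc H_j})=\Dom(\Delta_{\wtd I}|_{\mc H_j})$ and that $G\Delta_{\wtd I}|_{\mc H_i}\xi=\Delta_{\wtd I}|_{\mc H_j}G\xi$, i.e.\ $G\cdot\Delta_{\wtd I}|_{\mc H_i}\subset\Delta_{\wtd I}|_{\mc H_j}\cdot G$.

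To pass from $\alpha=1$ to arbitrary $\alpha\in\mbb C$, I would invoke the standard resolvent argument for a bounded operator intertwining two positive self-adjoint operators: $G\cdot\Delta_{\wtd I}|_{\mc H_i}\subset\Delta_{\wtd I}|_{\mc H_j}\cdot G$ forces $G(\Delta_{\wtd I}|_{\mc H_i}+1)^{-1}=(\Delta_{\wtd I}|_{\mc H_j}+1)^{-1}G$, hence $G$ intertwines the entire bounded Borel functional calculus of the two modular operators. Since the modular operators are injective, their spectra lie in $(0,\infty)$, so $z\mapsto z^\alpha$ is a legitimate Borel function and one concludes $G\cdot\Delta_{\wtd I}^\alpha|_{\mc H_i}\subset\Delta_{\wtd I}^\alpha|_{\mc H_j}\cdot G$ for every $\alpha$. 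For the modular conjugation I would substitute the polar decomposition $S_{\wtd I}=\fk J_{\wtd I}\Delta_{\wtd I}^{\frac12}$ into $\ovl G\,S_{\wtd I}|_{\mc H_i}\subset S_{\wtd I}|_{\mc H_j}\,G$; on $\Dom(\Delta_{\wtd I}^{\frac12}|_{\mc H_i})$ the identity becomes $\ovl G\fk J_{\wtd I}\Delta_{\wtd I}^{\frac12}\xi=\fk J_{\wtd I}\Delta_{\wtd I}^{\frac12}G\xi=\fk J_{\wtd I}G\Delta_{\wtd I}^{\frac12}\xi$, using the $\alpha=\frac12$ case. As $\Delta_{\wtd I}^{\frac12}|_{\mc H_i}$ has dense range, the bounded operators $\ovl G\fk J_{\wtd I}|_{\mc H_i}$ and $\fk J_{\wtd I}|_{\mc H_j}G$ agree on a dense set, hence everywhere, giving $\ovl G\cdot\fk J_{\wtd I}|_{\mc H_i}=\fk J_{\wtd I}|_{\mc H_j}\cdot G$.

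For the \emph{moreover} part, suppose $G^*G=\id_i$. Applying the already established inclusion to $G^*$ gives $G^*\cdot\Delta_{\wtd I}^\alpha|_{\mc H_j}\subset\Delta_{\wtd I}^\alpha|_{\mc H_i}\cdot G^*$; if $G\xi\in\Dom(\Delta_{\wtd I}^\alpha|_{\mc H_j})$, this relation yields $\xi=G^*G\xi\in\Dom(\Delta_{\wtd I}^\alpha|_{\mc H_i})$, which upgrades the inclusion to the equality $G\cdot\Delta_{\wtd I}^\alpha|_{\mc H_i}=\Delta_{\wtd I}^\alpha|_{\mc H_j}\cdot G$. For the first displayed equality of the moreover part, which is stated on the cores $\mc H_i(I)$, I combine the $\alpha=\frac12$ equality with the containments $G\mc H_i(I)\subset\mc H_j(I)$ and $G^*\mc H_j(I)\subset\mc H_i(I)$ from \eqref{eq62}; together these identify $\{\xi:G\xi\in\mc H_j(I)\}$ with $\mc H_i(I)$, so the two sides have the same domain and agree on it.

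The main obstacle is the careful tracking of unbounded-operator domains in the first step, in particular justifying the adjoint identity $(S_{\wtd I}|_{\mc H_j}\,G)^*=G^*\,S_{\wtd I}^*|_{\mc H_j}$ for the antilinear operators (which relies on boundedness of $G$) and checking that the two successive inclusions genuinely land $G\xi$ in the domain of $\Delta_{\wtd I}|_{\mc H_j}$ rather than merely in the domain of $S_{\wtd I}|_{\mc H_j}$. After that, the passage from $\alpha=1$ to complex $\alpha$ hinges only on the injectivity of $\Delta_{\wtd I}$, which makes $z^\alpha$ a bona fide Borel function of the modular operator.
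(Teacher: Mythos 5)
Your proof is correct, but it takes a genuinely different technical route from the paper's. The paper derives both inclusions $\ovl G\, S_{\wtd I}|_{\mc H_i}\subset S_{\wtd I}|_{\mc H_j}\,G$ and $G^*S_{\wtd I}|_{\mc H_j}\subset S_{\wtd I}|_{\mc H_i}\,{\ovl G}^*$ from Proposition \ref{lb32} exactly as you do, but then it packages them into the \emph{strongly commuting diagram of closed operators} defined just before the statement: the bounded extension built from $G,\ovl G$ commutes strongly with the closed extension built from the two $S_{\wtd I}$'s, and since strong commutation with a closed operator entails commutation with the phase and with every power of the absolute value in its polar decomposition, all of \eqref{eq60} (every $\alpha\in\mbb C$ plus the $\fk J$ identity, with equality rather than inclusion because $\ovl G\fk J_{\wtd I}$ and $\fk J_{\wtd I}G$ are bounded) drops out in one stroke. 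You instead work by hand: you take adjoints to get $G\,S_{\wtd I}^*|_{\mc H_i}\subset S_{\wtd I}^*|_{\mc H_j}\,\ovl G$ (note the adjoint identity you worry about only needs the inclusion $T^*B^*\subset(BT)^*$ in the direction you actually use, so it is unproblematic), chase domains to settle $\alpha=1$, pass to general $\alpha$ via resolvents and the Borel functional calculus (legitimate since $\Delta_{\wtd I}$ is injective), and get $\fk J$ by density of the range of $\Delta_{\wtd I}^{1/2}$. Your treatment of the \emph{moreover} part is essentially the paper's: the paper conjugates $G^*\Delta_{\wtd I}^\alpha|_{\mc H_j}\subset\Delta_{\wtd I}^\alpha|_{\mc H_i}G^*$ by $G(\cdot)G$ and uses $G^*G=\id_i$ to compare domains, which is the same domain identification you perform, and your argument for the core version on $\mc H_i(I)$ via \eqref{eq62} applied to both $G$ and $G^*$ fills in a step the paper leaves to the reader. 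The trade-off: the paper's argument is shorter given that the strong-commutativity machinery is already set up (and it is reused elsewhere in the article), while yours is more elementary and self-contained at the cost of more bookkeeping.
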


Thus, in the case that $G$ is an isometry, we conclude
\begin{align}
\ovl G\cdot S_{\wtd I}|_{\mc H_i}=S_{\wtd I}|_{\mc H_j}\cdot G,\qquad 	\ovl G\cdot F_{\wtd I}|_{\mc H_i}=F_{\wtd I}|_{\mc H_j}\cdot G.
\end{align}

\begin{proof}
Recall Proposition \ref{lb32}. Since $\ovl{G^*}=G^\vee={\ovl G}^*$, we also have $G^*S_{\wtd I}|_{\mc H_j}\subset S_{\wtd I}|_{\mc H_i}{\ovl G}^*$. Therefore the following diagram of closed operators commute strongly.
	\begin{align}
		\begin{CD}
			\mc H_i @>G>> \mc H_j\\
			@V S_{\wtd I} VV @V S_{\wtd I} VV\\
			\mc H_{\ovl i} @>\ovl G>> \mc H_{\ovl j}
		\end{CD}.
	\end{align}
	If we take the polar decomposition of the two vertical $S_{\wtd I}$, then its phase and any power of its absolute value commute strongly with the horizontal $G$ and $\ovl G$, i.e., the following diagrams commute strongly
	\begin{align}
\begin{CD}
	\mc H_i @>G>> \mc H_j\\
	@V \Delta_{\wtd I}^\alpha VV @V \Delta_{\wtd I}^\alpha VV\\
	\mc H_i @>G>> \mc H_j
\end{CD}\qquad
		\begin{CD}
			\mc H_i @>G>> \mc H_j\\
			@V \fk J_{\wtd I} VV @V \fk J_{\wtd I} VV\\
			\mc H_{\ovl i} @>\ovl G>> \mc H_{\ovl j}
		\end{CD}~~.
	\end{align}
This proves \eqref{eq60}.

We already know $G\cdot \Delta_{\wtd I}^\alpha|_{\mc H_i}\subset\Delta_{\wtd I}^\alpha|_{\mc H_j}\cdot G$. Similarly, we have $G^*\cdot \Delta_{\wtd I}^\alpha|_{\mc H_j}\subset\Delta_{\wtd I}^\alpha|_{\mc H_i}\cdot G^*$. Apply $G(\cdot)G$ to both sides and set $p=G^*G$, we obtain $p\cdot \Delta_{\wtd I}^\alpha|_{\mc H_j} G\subset G\Delta_{\wtd I}^\alpha|_{\mc H_i}$. So the domain of $\Delta_{\wtd I}^\alpha|_{\mc H_j} G$ (which equals the domain of $p\Delta_{\wtd I}^\alpha|_{\mc H_j} G$) is inside the domain of $G\Delta_{\wtd I}^\alpha|_{\mc H_i}$. This proves the second relation of \eqref{eq61}. A similar argument proves the first of \eqref{eq61}. 
\end{proof}

In the case that $G$ is an isometry, the second relation of \eqref{eq62} has a more geometric interpretation:

\begin{co}
Let $\mc H_j\in\Obj(\RepdA)$, and let $\mc H_i$ be an $\mc A$-invariant subspace of $\mc H_j$, which is therefore an object of $\RepfA$. Then
\begin{align*}
\Dom(\Delta_{\wtd I}^\alpha|_{\mc H_i})=\Dom(\Delta_{\wtd I}^\alpha|_{\mc H_j})\cap\mc H_i,
\end{align*}
and for any $\xi$ inside this vector space, we have
\begin{align*}
\Delta_{\wtd I}^\alpha|_{\mc H_i}\cdot \xi=	\Delta_{\wtd I}^\alpha|_{\mc H_j}\cdot \xi.
\end{align*}
\end{co}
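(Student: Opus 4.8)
The plan is to deduce this corollary directly from the isometry case of Proposition \ref{lb14}, reading the operator identity there as a statement about domains. First I would make $\mc H_i$ into an object of $\RepdA$ by equipping it with an arbitrary dualizing data; this is legitimate because $\mc H_i\in\Obj(\RepfA)$ is dualizable, and by Proposition \ref{lb8} the closed operator $\Delta_{\wtd I}^\alpha|_{\mc H_i}$ does not depend on which dualizing data we choose. The inclusion map $G\colon\mc H_i\hookrightarrow\mc H_j$ is then a morphism in $\Hom_{\mc A}(\mc H_i,\mc H_j)$ which is an isometry, since $\mc H_i$ is an $\mc A$-invariant (closed) subspace of $\mc H_j$, so that $G^*G=\id_i$.

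With these identifications in place, I would invoke the second relation of \eqref{eq61}, namely
\begin{align*}
G\cdot\Delta_{\wtd I}^\alpha|_{\mc H_i}=\Delta_{\wtd I}^\alpha|_{\mc H_j}\cdot G,
\end{align*}
which holds as an equality (not merely an inclusion) of closed operators $\mc H_i\rightarrow\mc H_j$ precisely because $G$ is an isometry. The remaining work is to unwind what this equality says. Since $G$ is bounded and everywhere defined, the domain of the left-hand side is exactly $\Dom(\Delta_{\wtd I}^\alpha|_{\mc H_i})$, whereas the domain of the right-hand side is $\{\xi\in\mc H_i:G\xi\in\Dom(\Delta_{\wtd I}^\alpha|_{\mc H_j})\}$. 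Identifying $\mc H_i$ with the subspace $G\mc H_i\subset\mc H_j$, this latter domain is precisely $\Dom(\Delta_{\wtd I}^\alpha|_{\mc H_j})\cap\mc H_i$. Equality of the two operators forces equality of their domains, which yields the first displayed assertion of the corollary; applying both sides to a vector $\xi$ in this common domain and using $G\xi=\xi$ under the identification gives $\Delta_{\wtd I}^\alpha|_{\mc H_i}\xi=\Delta_{\wtd I}^\alpha|_{\mc H_j}\xi$, the second assertion.

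There is no serious obstacle here: the entire content has already been absorbed into the isometry case of Proposition \ref{lb14}, whose proof used $p=G^*G=\id_i$ to upgrade the generic inclusion \eqref{eq60} to the equality \eqref{eq61}. The only point requiring care is the bookkeeping of domains for a composition of a closed unbounded operator with a bounded everywhere-defined isometry, together with the harmless identification of $\mc H_i$ with its image $G\mc H_i$ inside $\mc H_j$; once these are made explicit, the corollary is immediate.
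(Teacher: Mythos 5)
Your proposal is correct and follows exactly the paper's route: the paper's proof is the one-line "let $G\in\Hom_{\mc A}(\mc H_i,\mc H_j)$ be the inclusion and apply the second relation of \eqref{eq61}," and your unwinding of the domains of the two compositions, together with the remark (matching the paper's) that the dualizing data on $\mc H_i$ is irrelevant by Proposition \ref{lb8}, is precisely the intended argument. No gaps.
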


The dualizing data of $\mc H_i$ is irrelevant since the modular operator does not rely on it.

\begin{proof}
Let $G\in\Hom_{\mc A}(\mc H_i,\mc H_j)$ be the inclusion, and apply the second relation of \eqref{eq61}.
\end{proof}

The following are some easy consequences of the Tomita-Takesaki theory \cite[Chapter VI]{Tak02}.

\begin{pp}\label{lb25}
The following are true when acting on any $\mc H_i\in\Obj(\RepdA)$.
\begin{gather}
\fk J_{\wtd I}^2=\id.\label{eq22}\\
\Delta_{\wtd I'}=\Delta_{\wtd I}^{-1},\qquad\vartheta\fk J_{\wtd I'}=\fk J_{\wtd I}.\label{eq23}\\
\fk J_{\wtd I}\Delta_{\wtd I}^{\frac 12}=\Delta_{\wtd I}^{-\frac 12}\fk J_{\wtd I}.\label{eq24}
\end{gather}
\end{pp}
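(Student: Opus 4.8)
The plan is to derive all three identities from ordinary Tomita--Takesaki theory applied to the non-local extensions $\mc B(\wtd I)$ and $\mc B'(\wtd I)$, after reducing the general object $\mc H_i$ to the underlying object of a conveniently chosen standard $C^*$-Frobenius algebra. Given $\mc H_i\in\Obj(\RepdA)$, I would set $\mc H_k=\mc H_0\oplus\mc H_i$ and take $\mc H_a=\mc H_k\boxtimes\mc H_{\ovl k}$ with the canonical standard structure $\iota=\coev_{k,\ovl k}$, $\mu=\id_k\otimes\ev_{\ovl k,k}\otimes\id_{\ovl k}$ of \eqref{eq49}. For this $Q$ the dual of $\mc H_a$ is $\mc H_a$ itself and the reflection operator is $\epsilon=\id$, so $S_{\wtd I}|_{\mc H_a}$ and $F_{\wtd I}|_{\mc H_a}$ are genuine antilinear operators on the single Hilbert space $\mc H_a$, and by Proposition \ref{lb9} they are precisely the Tomita--Takesaki $S$-operators of $(\mc B(\wtd I),\iota\Omega)$ and of $(\mc B'(\wtd I),\iota\Omega)$. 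Comparing the polar decomposition \eqref{eq41} of $S_{\wtd I}|_{\mc H_a}$ with its Tomita--Takesaki polar decomposition identifies $\fk J_{\wtd I}|_{\mc H_a}$ and $\Delta_{\wtd I}|_{\mc H_a}$ with the modular conjugation and modular operator of $(\mc B(\wtd I),\iota\Omega)$. Hence \eqref{eq22} and \eqref{eq24} hold on $\mc H_a$ at once, being the standard single-algebra relations $\fk J^2=\id$ and $\fk J\Delta^{\frac12}=\Delta^{-\frac12}\fk J$.

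For \eqref{eq23} on $\mc H_a$ I would bring in Haag duality. By Tomita--Takesaki theory the adjoint $S_{\wtd I}^*|_{\mc H_a}$ is the $S$-operator of the commutant $\mc B(\wtd I)'$ with respect to $\iota\Omega$, whose modular operator is $\Delta_{\wtd I}^{-1}|_{\mc H_a}$ and whose modular conjugation is again $\fk J_{\wtd I}|_{\mc H_a}$. On the other hand, Proposition \ref{lb39} gives $\mc B(\wtd I)'=\mc B'(\wtd I')$, and Proposition \ref{lb9} applied to $\mc B'$ identifies the $S$-operator of $(\mc B'(\wtd I'),\iota\Omega)$ with $F_{\wtd I'}|_{\mc H_a}$, whose polar decomposition \eqref{eq41} reads $F_{\wtd I'}=(\vartheta\fk J_{\wtd I'})\Delta_{\wtd I'}^{\frac12}$. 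Thus the modular operator and conjugation of this same algebra $\mc B'(\wtd I')$ are also $\Delta_{\wtd I'}|_{\mc H_a}$ and $\vartheta\fk J_{\wtd I'}|_{\mc H_a}$. Equating the two computations of its modular data yields $\Delta_{\wtd I'}|_{\mc H_a}=\Delta_{\wtd I}^{-1}|_{\mc H_a}$ and $\vartheta\fk J_{\wtd I'}|_{\mc H_a}=\fk J_{\wtd I}|_{\mc H_a}$, which is \eqref{eq23} on $\mc H_a$.

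It remains to transport the three identities from $\mc H_a$ to $\mc H_i$. Let $\varphi:\mc H_i\hookrightarrow\mc H_a$ be the isometric inclusion of $\mc H_i\simeq\mc H_i\boxtimes\mc H_0$ coming from \eqref{eq19}; its conjugate $\ovl\varphi:\mc H_{\ovl i}\hookrightarrow\mc H_a$ is again an isometry with $\ovl{\ovl\varphi}=\varphi$. Proposition \ref{lb14} then supplies $\varphi\,\Delta_{\wtd I}^\alpha|_{\mc H_i}=\Delta_{\wtd I}^\alpha|_{\mc H_a}\varphi$ and $\ovl\varphi\,\fk J_{\wtd I}|_{\mc H_i}=\fk J_{\wtd I}|_{\mc H_a}\varphi$ (and likewise at $\wtd I'$) via \eqref{eq61} and \eqref{eq60}. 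The relation $\Delta_{\wtd I'}=\Delta_{\wtd I}^{-1}$ on $\mc H_i$ is then immediate from the Corollary following Proposition \ref{lb14} by restriction to the $\mc A$-invariant subspace $\varphi(\mc H_i)$. For $\fk J_{\wtd I}^2=\id$ I would precompose the identity on $\mc H_a$ with $\varphi$, using the intertwiners for both $\varphi$ and $\ovl\varphi$ to slide $\fk J_{\wtd I}|_{\mc H_a}^2=\id$ across $\varphi$ and then cancel the injective $\varphi$. The relation $\vartheta\fk J_{\wtd I'}=\fk J_{\wtd I}$ follows the same way once I also invoke the naturality of the twist, $\ovl\varphi\,\vartheta_{\ovl i}=\vartheta_a\ovl\varphi$, and \eqref{eq24} transports identically, combining the $\Delta$- and $\fk J$-intertwiners with \eqref{eq24} on $\mc H_a$ and cancelling $\ovl\varphi$.

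I expect the main obstacle to be \eqref{eq23}: it is the only relation that couples the two arg-valued intervals $\wtd I$ and $\wtd I'$, and it is precisely here that the structural inputs must be combined correctly --- the Haag duality $\mc B(\wtd I)'=\mc B'(\wtd I')$, the identification of $F_{\wtd I'}$ with the commutant's $S$-operator through Proposition \ref{lb9}, and the entrance of the twist $\vartheta$ via the polar decomposition \eqref{eq41} of $F$ (equivalently $F_{\wtd I}=\vartheta S_{\wtd I}$ of Proposition \ref{lb28}). By contrast \eqref{eq22} and \eqref{eq24} are one-interval statements that fall out of Tomita--Takesaki for $\mc B(\wtd I)$ alone. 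A secondary point demanding care is bookkeeping in the transport step: one must keep track of which copy of $\fk J_{\wtd I}$ (on $\mc H_i$, on $\mc H_{\ovl i}$, or on $\mc H_a$) and which twist index ($\vartheta_{\ovl i}$ versus $\vartheta_a$) is in play, since $S_{\wtd I},F_{\wtd I}$ and $\fk J_{\wtd I}$ all change the underlying object.
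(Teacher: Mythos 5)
Your proposal is correct and follows essentially the same route as the paper: embed $\mc H_i$ into the underlying object $\mc H_a$ of a standard $C^*$-Frobenius algebra with $\epsilon=\id$, identify $S_{\wtd I}|_{\mc H_a}$ and $F_{\wtd I'}|_{\mc H_a}$ with the Tomita--Takesaki $S$-operators of $\mc B(\wtd I)$ and of its commutant $\mc B'(\wtd I')$ via Proposition \ref{lb9} and Haag duality, compare polar decompositions to get all three identities on $\mc H_a$, and transport them to $\mc H_i$ through the isometry $\psi_{i,0}$ using Proposition \ref{lb14}. The only difference is presentational (you phrase the key step for \eqref{eq23} as ``$S^*$ is the $S$-operator of the commutant'' while the paper says the $S$-operator of $\mc B'(\wtd I')$ is the $F$-operator of $\mc B(\wtd I)$, which is the same fact).
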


Recall that $\wtd I'$ is the clockwise complement of $\wtd I$.
\begin{proof}
We know that $\mc H_i$ must be equivalent to a subobject of $\mc H_a$ where $Q=(\mc H_a,\mu,\iota)$ is a standard $C^*$-Frobenius algebra in $\RepdA$. We let $(\mc H_a,\iota^*\mu)$ be the dualizing data of $\mc H_a$. Then the reflection operator $\epsilon$ equals $\id_a$. Let $\mc B$ and $\mc B'$ be the non-local extensions of $\mc A$ associated to $Q$ (cf. Section \ref{lb34}).  By Propositions \ref{lb9} and \ref{lb28}, $S_{\wtd I}|_{\mc H_a}$ and $F_{\wtd I'}|_{\mc H_a}=\vartheta_aS_{\wtd I'}|_{\mc H_a}$ are respectively the $S$ operators for $\mc B(\wtd I)$ and $\mc B'(\wtd I')$ associated to the cyclic vector $\iota\Omega$.  So, by the Tomita-Takesaki Theory for von-Neumann algebras, we have $\fk J_{\wtd I}|_{\mc H_a}=\fk J_{\wtd I}^{-1}|_{\mc H_a}$ and polar decomposition $S_{\wtd I}|_{\mc H_a}=\fk J_{\wtd I}|_{\mc H_a}\Delta_{\wtd I}^{\frac 12}|_{\mc H_a}=\Delta_{\wtd I}^{-\frac 12}|_{\mc H_a}\fk J_{\wtd I}|_{\mc H_a}$ (which proves \eqref{eq22} and \eqref{eq24} on $\mc H_a$), and we also have $\vartheta_aS_{\wtd I'}|_{\mc H_a}=\vartheta_a\fk J_{\wtd I'}|_{\mc H_a}\Delta_{\wtd I'}^{\frac 12}|_{\mc H_a}$. Since $\mc B(I)$ and $\mc B'(I')$ are commutants of each other (Theorem \ref{lb10}), the $S$ operator for $\mc B'(I')$ (which is $\vartheta_aS_{\wtd I'}|_{\mc H_a}$) is also the $F$ operator for $\mc B(\wtd I)$, which equals $\fk J_{\wtd I}|_{\mc H_a}\Delta_{\wtd I}^{-\frac 12}|_{\mc H_a}$. So, the uniqueness of polar decomposition implies \eqref{eq23} on $\mc H_a$.

We have proved \eqref{eq22}-\eqref{eq24} on $\mc H_a$. Let $\varphi\in\Hom_{\mc A}(\mc H_i,\mc H_a)$ be an isometry. It is now easy to prove these three on $\mc H_i$ using  Proposition \ref{lb14}. For instance,
\begin{align*}
\ovl\varphi\cdot \fk J_{\wtd I}|_{\mc H_i}\Delta_{\wtd I}^{\frac 12}|_{\mc H_i}=	\fk J_{\wtd I}|_{\mc H_a}\Delta_{\wtd I}^{\frac 12}|_{\mc H_a}\cdot \varphi=\Delta_{\wtd I}^{-\frac 12}|_{\mc H_a}\fk J_{\wtd I}|_{\mc H_a}\cdot \varphi=\ovl\varphi\cdot \Delta_{\wtd I}^{-\frac 12}|_{\mc H_{\ovl i}}\fk J_{\wtd I}|_{\mc H_i}.
\end{align*}
This proves \eqref{eq24} on $\mc H_i$. A similar argument proves \eqref{eq22} and \eqref{eq23} on $\mc H_i$.
\end{proof}

\begin{co}\label{lb36}
For each $\wtd I\in\Jtd$ and $\mc H_i\in\Obj(\RepdA)$, we have $F_{\wtd I'}|_{\mc H_i}=(S_{\wtd I}|_{\mc H_{\ovl i}})^*$.
\end{co}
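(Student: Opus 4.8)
The plan is to write both sides of the claimed identity as honest polar-decomposition products of the categorical conjugation $\fk J_{\wtd I}$ and a power of the modular operator $\Delta_{\wtd I}$, and then match them using the relations of Proposition \ref{lb25}. Throughout one must keep track of which object each categorical operator lives on: $\fk J_{\wtd I}|_{\mc H_i}\colon\mc H_i\to\mc H_{\ovl i}$ is antiunitary, $\Delta_{\wtd I}^{\frac12}|_{\mc H_i}$ is positive on $\mc H_i$, and since the dual object of $\mc H_{\ovl i}$ is $\mc H_i$ we have $\fk J_{\wtd I}|_{\mc H_{\ovl i}}\colon\mc H_{\ovl i}\to\mc H_i$. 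Note first that $S_{\wtd I}|_{\mc H_{\ovl i}}$ is a preclosed, densely defined closed antilinear operator $\mc H_{\ovl i}\to\mc H_i$, so its adjoint $(S_{\wtd I}|_{\mc H_{\ovl i}})^*\colon\mc H_i\to\mc H_{\ovl i}$ is a well-defined closed antilinear operator, matching the source and target of $F_{\wtd I'}|_{\mc H_i}$.

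I would first treat the right-hand side. By the polar decomposition \eqref{eq41}, $S_{\wtd I}|_{\mc H_{\ovl i}}=\fk J_{\wtd I}|_{\mc H_{\ovl i}}\cdot\Delta_{\wtd I}^{\frac12}|_{\mc H_{\ovl i}}$. Taking the antilinear adjoint reverses the order of the two factors and replaces the antiunitary $\fk J_{\wtd I}|_{\mc H_{\ovl i}}$ by its inverse, so
\begin{align*}
(S_{\wtd I}|_{\mc H_{\ovl i}})^*=\Delta_{\wtd I}^{\frac12}|_{\mc H_{\ovl i}}\cdot(\fk J_{\wtd I}|_{\mc H_{\ovl i}})^{-1}=\Delta_{\wtd I}^{\frac12}|_{\mc H_{\ovl i}}\cdot\fk J_{\wtd I}|_{\mc H_i},
\end{align*}
where the second equality uses the involutivity $\fk J_{\wtd I}^2=\id$ of \eqref{eq22} to identify $(\fk J_{\wtd I}|_{\mc H_{\ovl i}})^{-1}$ with $\fk J_{\wtd I}|_{\mc H_i}$. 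Combining \eqref{eq22} with \eqref{eq24} (the latter being $\fk J_{\wtd I}\Delta_{\wtd I}^{\frac12}=\Delta_{\wtd I}^{-\frac12}\fk J_{\wtd I}$, equivalently $\Delta_{\wtd I}^{\frac12}\fk J_{\wtd I}=\fk J_{\wtd I}\Delta_{\wtd I}^{-\frac12}$ after conjugating by $\fk J_{\wtd I}$), this rewrites as $(S_{\wtd I}|_{\mc H_{\ovl i}})^*=\fk J_{\wtd I}|_{\mc H_i}\cdot\Delta_{\wtd I}^{-\frac12}|_{\mc H_i}$.

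I would then compute the left-hand side straight from \eqref{eq41}: $F_{\wtd I'}|_{\mc H_i}=\vartheta\,\fk J_{\wtd I'}|_{\mc H_i}\cdot\Delta_{\wtd I'}^{\frac12}|_{\mc H_i}$. The two identities of \eqref{eq23}, namely $\vartheta\fk J_{\wtd I'}=\fk J_{\wtd I}$ and $\Delta_{\wtd I'}=\Delta_{\wtd I}^{-1}$, turn this into $F_{\wtd I'}|_{\mc H_i}=\fk J_{\wtd I}|_{\mc H_i}\cdot\Delta_{\wtd I}^{-\frac12}|_{\mc H_i}$, which is precisely the expression obtained above for $(S_{\wtd I}|_{\mc H_{\ovl i}})^*$. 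This establishes the corollary.

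I expect the only delicate points to be bookkeeping ones: getting the antilinear adjoint product rule right (order reversal together with $\fk J^*=\fk J^{-1}$), correctly identifying $(\fk J_{\wtd I}|_{\mc H_{\ovl i}})^{-1}$ with $\fk J_{\wtd I}|_{\mc H_i}$ through the categorical involutivity, and checking that every displayed relation is an equality of closed operators (domains included) rather than a mere inclusion. The last holds because each side is genuinely the product of the single antiunitary $\fk J_{\wtd I}$ with the single positive operator $\Delta_{\wtd I}^{\pm\frac12}$, i.e.\ an honest polar decomposition in the sense of Tomita--Takesaki theory, so no domain subtleties arise.
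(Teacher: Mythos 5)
Your proof is correct and follows essentially the same route as the paper: both arguments take the adjoint of the polar decomposition $S_{\wtd I}=\fk J_{\wtd I}\Delta_{\wtd I}^{1/2}$, use the involutivity \eqref{eq22} and the commutation relation \eqref{eq24} to rewrite it as $\fk J_{\wtd I}\Delta_{\wtd I}^{-1/2}$, and then identify this with $F_{\wtd I'}=\vartheta\fk J_{\wtd I'}\Delta_{\wtd I'}^{1/2}$ via \eqref{eq23}. The only difference is presentational: you compute the two sides separately and match them, and you track the underlying objects ($\mc H_i$ versus $\mc H_{\ovl i}$) more explicitly than the paper's one-line chain of equalities does, which is a harmless refinement.
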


\begin{proof}
We compute
\begin{align*}
S_{\wtd I}^*=(\fk J_{\wtd I}\Delta_{\wtd I}^{\frac 12})^*\xlongequal{\eqref{eq22}}\Delta_{\wtd I}^{\frac 12}\fk J_{\wtd I}\xlongequal{\eqref{eq24}} \fk J_{\wtd I}\Delta_{\wtd I}^{-\frac 12}\xlongequal{\eqref{eq23}}\vartheta\fk J_{\wtd I'}\Delta_{\wtd I'}^{\frac 12}=F_{\wtd I'}.
\end{align*}
\end{proof}

\begin{rem}\label{lb24}
We close this section with a brief discussion of the relations between $\Delta_{\wtd I}$ and Connes spatial derivatives \cite{Con80}. Fix $I\in\mc J$. Choose a non-empty $\mc H_i\in\Obj(\RepdA)$ and let $\mc N=\pi_{i,I'}(\mc A(I'))$. The (normalized) state on $\mc A(I')$ defined by $\bk{\cdot\Omega|\Omega}$ is transported through the isomorphism $\pi_{i,I'}$ to a state $\varphi$ on $\mc N$. Let $\mc M=\mc N'$ be the commutant of $\mc N$ (acting on $\mc H_i$). Then $\mc M$ can be described by the left representation of $Q=(\mc H_i\boxtimes\mc H_{\ovl i},\mu,\iota)$ on $\mc H_i$ as follows: Let $\mc B$ be the non-local extension associated to $Q$. Then any $X\in\mc B(\wtd I)$ can be expressed as $A(\chi,\wtd I)=\mu L(\chi,\wtd I)$ for some $\chi\in(\mc H_i\boxtimes\mc H_{\ovl i})(I)$. We then define a representation of $\mc B(\wtd I)$ on $\mc H_i$ by defining the action of $A(\chi,\wtd I)$ on any $\xi\in\mc H_i$ to be $(\id_i\otimes\ev_{\ovl i,i})L(\chi,\wtd I)\xi$. This representation is indeed faithful (since $\mc B(\wtd I)$ is indeed a (type III) factor). Moreover, the image of $\mc B(\wtd I)$ under this representation is exactly $\mc M$. The state of $\mc B(\wtd I)$ defined by $\bk{\cdot\iota\Omega|\iota\Omega}$ is transported through this representation to a state $\psi$ of $\mc M$. Then we actually have
\begin{align}
\Delta_{\wtd I}|_{\mc H_i}=\frac{d\psi}{d\varphi}.
\end{align}

Note that $\psi$ is in general not normalized: By the fact that $\iota=\coev_{i,\ovl i}$ we have $\psi(1)=d_i$ where $d_i$ is the quantum dimension of $\mc H_i$, i.e. $d_i\id_0=\ev_{i,\ovl i}\coev_{i,\ovl i}$ where the $\ev$ and $\coev$ are standard. We give another description of $\psi$: Let $\mc E:\mc M\rightarrow\pi_{i,I}(\mc A(I))$ be the minimal conditional expectation of the subfactor $\pi_{i,I}(\mc A(I))\subset\mc M$. Transport the state $\bk{\cdot\Omega|\Omega}$ of $\mc A(I)$ to $\pi_{i,I}(\mc A(I))$ and denote it by $\psi_0$. Then the normalized state $d_i^{-1}\psi$ equals $\psi_0\circ\mc E$.
\end{rem}

\section{Categorical and non-local Bisognano-Wichmann theorems}\label{lb22}

Let $\fk r:z\in \mbb S^1\mapsto \ovl z=z^{-1}\in\mbb S^1$ be the reflection, represented by $\left( \begin{array}{cc}
0& 1 \\
1 &0
\end{array} \right)$. (I.e., $\fk r(z)=\frac {0z+1}{1z+0}$.) Then $\fk r=\fk r^{-1}$, and $g\in\PSU\mapsto \fk rg\fk r\in\PSU$ is an automorphism of $\PSU$, and acts on $\mbb S^1$ as $g(z)=\frac{\alpha z+\beta}{\ovl\beta z+\ovl\alpha}$. Recall that any element in $\PSU$ takes the form $g=\left( \begin{array}{cc}
\alpha  & \beta \\
\ovl\beta &\ovl\alpha
\end{array} \right)$ where $|\alpha|^2-|\beta|^2=1$. Then we have $\fk rg\fk r=\left( \begin{array}{cc}
\ovl\alpha  &\ovl \beta \\
\beta &\alpha
\end{array} \right)$. We lift this automorphism to $\UPSU$ and also denote it by $\fk r(\cdot)\fk r$. For $\wtd I=(I,\arg I)\in\Jtd$, we define $\fk r\wtd I=(\fk rI,\arg_{\fk rI})$ where $\arg_{\fk rI}(z)=-\arg_I(\ovl z)$ for any $z\in \fk rI$.

Define $\mbb S^1_+=\{a+\im b\in\mbb S^1:b>0\}$ to  be the upper semi-circle. Define $\wtd{\mbb S^1_+}$ such that $\arg_{\mbb S^1_+}$ takes values in $(0,\pi)$. Then $\wtd{\mbb S^1_-}:=\fk r\wtd{\mbb S^1_+}$ is the lower semi-circle with arg values in $(-\pi,0)$. Note that $\wtd{\mbb S^1_-}$ is the clockwise complement of $\wtd{\mbb S^1_+}$. We write $\Delta_{\wtd{\mbb S^1_+}},\Delta_{\wtd{\mbb S^1_-}},$ as $\Delta_+,\Delta_-$ respectively. We also define $\Theta=\fk J_{\wtd{\mbb S^1_+}}$, called the  \textbf{PCT operator} of $\Ed$, which is an anti-unitary map from each $\mc H_i$ to its dual object $\mc H_{\ovl i}$.  

The action of $\Theta$ on $\mc H_i\in\Obj(\RepdA)$ is written as $\Theta|_{\mc H_i}$ as usual, or $\Theta_i$ for short. Note that $\Theta$ is an involution by Proposition \ref{lb25}, i.e.,
\begin{align*}
\Theta_i=\Theta_{\ovl i}^{-1}	
\end{align*}
for each $\mc H_i\in\Obj(\RepdA)$. Also, if $g\in\UPSU$ and $\wtd I=g\wtd{\mbb S^1_+}$, then by proposition \ref{lb12},
\begin{align}
\fk J_{\wtd I}=g\Theta g^{-1}.
\end{align}
The following noteworthy result follows from proposition \ref{lb14}, which says that conjugates of morphisms are implemented by the PCT operator.
\begin{thm}
For any morphism $G$ of objects in $\RepdA$, we have
\begin{align*}
\ovl G=\Theta\cdot G\cdot\Theta.
\end{align*}
\end{thm}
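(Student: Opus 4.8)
The plan is to read off the result directly from Proposition~\ref{lb14}, which already records the interaction between conjugation of morphisms and the modular conjugations $\fk J_{\wtd I}$. Specifically, the second identity in \eqref{eq60} states that for any $G\in\Hom_{\mc A}(\mc H_i,\mc H_j)$ one has the \emph{exact} equality (not merely an inclusion) $\ovl G\cdot\fk J_{\wtd I}|_{\mc H_i}=\fk J_{\wtd I}|_{\mc H_j}\cdot G$, valid for every $\wtd I\in\Jtd$. First I would specialize this to $\wtd I=\wtd{\mbb S^1_+}$, for which $\fk J_{\wtd{\mbb S^1_+}}=\Theta$ by definition of the PCT operator, obtaining
\begin{align*}
\ovl G\cdot\Theta|_{\mc H_i}=\Theta|_{\mc H_j}\cdot G.
\end{align*}
This is essentially the desired statement; only one algebraic manipulation separates it from the displayed form $\ovl G=\Theta\cdot G\cdot\Theta$.

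Next I would invoke the involution property of $\Theta$, which comes from \eqref{eq22} in Proposition~\ref{lb25} (i.e. $\fk J_{\wtd I}^2=\id$), phrased as $\Theta_i=\Theta_{\ovl i}^{-1}$. Recalling that the canonical dual of $\mc H_{\ovl i}$ is $\mc H_i$, the antiunitary $\Theta|_{\mc H_{\ovl i}}$ maps $\mc H_{\ovl i}\to\mc H_i$, and the involution property gives $\Theta|_{\mc H_i}\cdot\Theta|_{\mc H_{\ovl i}}=\id_{\mc H_{\ovl i}}$. Composing the boxed identity above on the right with $\Theta|_{\mc H_{\ovl i}}$ then yields
\begin{align*}
\ovl G=\ovl G\cdot\Theta|_{\mc H_i}\cdot\Theta|_{\mc H_{\ovl i}}=\Theta|_{\mc H_j}\cdot G\cdot\Theta|_{\mc H_{\ovl i}},
\end{align*}
which is precisely $\ovl G=\Theta\cdot G\cdot\Theta$ once the objects are restored: the right-hand side sends $\mc H_{\ovl i}\xrightarrow{\Theta|_{\mc H_{\ovl i}}}\mc H_i\xrightarrow{G}\mc H_j\xrightarrow{\Theta|_{\mc H_j}}\mc H_{\ovl j}$, matching the source and target of $\ovl G\in\Hom_{\mc A}(\mc H_{\ovl i},\mc H_{\ovl j})$.

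The argument is genuinely short, so there is no deep obstacle; the only point requiring care is the bookkeeping of which object each occurrence of $\Theta$ acts on, since the compact notation $\Theta\cdot G\cdot\Theta$ suppresses the fact that the two $\Theta$'s are the antiunitaries $\Theta|_{\mc H_{\ovl i}}$ and $\Theta|_{\mc H_j}$ living on different spaces. Thus the main thing to verify is that the composite is well-defined and that $\Theta|_{\mc H_i}\cdot\Theta|_{\mc H_{\ovl i}}=\id$, which is exactly the content of $\Theta$ being an involution from Proposition~\ref{lb25}; with that in hand the statement follows immediately.
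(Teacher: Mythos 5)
Your proposal is correct and is exactly the argument the paper intends: the theorem is stated as an immediate consequence of Proposition \ref{lb14} (the identity $\ovl G\cdot\fk J_{\wtd I}|_{\mc H_i}=\fk J_{\wtd I}|_{\mc H_j}\cdot G$ specialized to $\wtd I=\wtd{\mbb S^1_+}$), combined with the involution property $\Theta_i=\Theta_{\ovl i}^{-1}$ from Proposition \ref{lb25}, which the paper records just before the theorem. Your bookkeeping of which space each $\Theta$ acts on is also consistent with $\ovl G\in\Hom_{\mc A}(\mc H_{\ovl i},\mc H_{\ovl j})$.
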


Consider the rotation subgroup $
\varrho(t)=\left( \begin{array}{cc}
e^{\frac {\im t}2} & 0 \\
0 &e^{\frac {-\im t}2}
\end{array} \right)$ and dilation subgroup 
$\delta(t)=\left( \begin{array}{cc}
\cosh\frac t2 & -\sinh\frac t2 \\
-\sinh\frac t2 &\cosh\frac t2
\end{array} \right)$ of $\PSU$. For each $I\in\mc J$, define $\delta_I(t)=g\delta(t)g^{-1}$ where $g\in\PSU$ and $g\mbb S^1_+=I$. Then $\delta_I$ is well defined, and $\delta(t)=\delta_{\mbb S^1_+}(t)$. We lift $\varrho$ and $\delta$ to one-parameter subgroups of $\UPSU$ and denote them by the same symbols.

Let $Q=(\mc H_a,\mu,\iota)$ be a $C^*$-Frobenius algebra in $\RepdA$, and let $\mc B,\mc B'$ be the pair of non-local extensions of $\mc A$ associated to $Q$. The following result is well-known (cf. \cite{ALR01} theorem 2.1 or \cite{LR04} proposition 3.2). We present the details of the proof for the reader's convenience.

\begin{lm}\label{lb35}
Let $D_{\wtd I}$ and $\Theta^Q$ be the modular operator and conjugation for $\mc B(\wtd{\mbb S_+^1})$ and the cyclic separating vector $\iota\Omega$. Then $z(t):=\delta_I(2\pi t)D_{\wtd I}^{\im t}$ is a one-parameter group of unitary operators on $\mc H_a$ independent of $\wtd I$, and $\Theta^Qg\Theta^Q$ equals the action of $\fk r g\fk r$ when acting on $\mc H_a$. 
\end{lm}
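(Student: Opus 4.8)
The plan is to treat the case $\wtd I=\wtd{\mbb S^1_+}$ first and reduce the general case to it by covariance, the one nontrivial analytic input being Borchers' theorem \cite{Bor92}. Write $D_+=D_{\wtd{\mbb S^1_+}}$. By Proposition \ref{lb3} and Theorem \ref{lb10}, $\iota\Omega$ is cyclic and separating for $\mc B(\wtd{\mbb S^1_+})$; since every morphism intertwines the $\UPSU$-actions and $\Omega$ is $\UPSU$-invariant, $U_a(g)\iota\Omega=\iota\Omega$ for all $g\in\UPSU$. Let $\tau_+,\tau_-\subset\UPSU$ be the two one-parameter parabolic subgroups fixing the endpoints $1,-1$ of $\mbb S^1_+$. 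Both are conjugate in $\PSU$ to the translation subgroup of the real line, so, because $\mc H_a$ is dualizable and hence has positive energy by \cite[Cor. 4.4]{BCL98} (via \cite[Lemma 3.1]{Wei06}), their generators on $\mc H_a$ are positive. Choosing orientations so that $\tau_\pm(s)\wtd{\mbb S^1_+}\subset\wtd{\mbb S^1_+}$ for $s\ge 0$, the covariance and isotony of Theorem \ref{lb10} give $U_a(\tau_\pm(s))\mc B(\wtd{\mbb S^1_+})U_a(\tau_\pm(s))^*\subset\mc B(\wtd{\mbb S^1_+})$.

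First I would apply Borchers' theorem to the pair $(\mc B(\wtd{\mbb S^1_+}),\iota\Omega)$ together with each of $\tau_+,\tau_-$, obtaining
$$D_+^{\im t}\tau_\pm(s)D_+^{-\im t}=\tau_\pm(e^{\mp2\pi t}s),\qquad \Theta^Q\tau_\pm(s)\Theta^Q=\tau_\pm(-s),$$
where group elements act through $U_a$. The dilation--translation relation in $\PSU$ reads $\delta(u)\tau_\pm(s)\delta(-u)=\tau_\pm(e^{\pm u}s)$, so with $u=2\pi t$ the operator $z(t):=\delta(2\pi t)D_+^{\im t}$ satisfies $z(t)\tau_\pm(s)z(t)^{-1}=\tau_\pm(s)$. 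As $\tau_+$ and $\tau_-$ generate $\UPSU$, this shows $z(t)$ commutes with $U_a(g)$ for every $g\in\UPSU$. From this the two assertions about $z$ follow quickly. For independence: any $\wtd I\in\Jtd$ equals $g\wtd{\mbb S^1_+}$ for some $g\in\UPSU$, and covariance of the modular objects under $U_a(g)$ (which fixes $\iota\Omega$ and satisfies $U_a(g)\mc B(\wtd{\mbb S^1_+})U_a(g)^*=\mc B(\wtd I)$) gives $D_{\wtd I}=gD_+g^{-1}$, while $\delta_I=g\delta g^{-1}$ by definition; hence $z_{\wtd I}(t)=gz(t)g^{-1}=z(t)$ because $z(t)$ commutes with $g$. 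For the group law, since $z(t)$ commutes with $\delta(2\pi s)$ the factor $D_+^{\im s}=\delta(-2\pi s)z(s)$ commutes with $\delta(2\pi t)$, whence $z(s)z(t)=\delta(2\pi s)\delta(2\pi t)D_+^{\im s}D_+^{\im t}=z(s+t)$; and $z$ is manifestly strongly continuous and unitary-valued.

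For the statement about $\Theta^Q$, I would combine the Borchers relation $\Theta^Q\tau_\pm(s)\Theta^Q=\tau_\pm(-s)$ with the elementary $\mathrm{SU}(1,1)$ computations $\fk r\delta\fk r=\delta$ and $\fk r\tau_\pm(s)\fk r=\tau_\pm(-s)$ (both $\tau_\pm$ are fixed, not swapped, since $\fk r$ fixes $\pm1$ but reverses orientation). Since $\Theta^Q$ is an antiunitary involution, $g\mapsto\Theta^Q U_a(g)\Theta^Q$ is a group homomorphism, as is $g\mapsto U_a(\fk r g\fk r)$; the two agree on the generators $\tau_\pm$, hence on all of $\UPSU$, which is the claim $\Theta^Q U_a(g)\Theta^Q=U_a(\fk r g\fk r)$ on $\mc H_a$.

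The conceptual and analytic heart is Borchers' theorem, and the genuine work lies in verifying its hypotheses honestly: the positivity of both parabolic generators (where dualizability of $\mc H_a$, via \cite{BCL98}, is essential) and the half-sided inclusion $\tau_\pm(s)\wtd{\mbb S^1_+}\subset\wtd{\mbb S^1_+}$ at the level of arg-valued intervals. The remainder is $\mathrm{SU}(1,1)$ bookkeeping together with the fact that two opposite parabolic subgroups generate $\UPSU$. I expect the most error-prone step to be matching the sign of Borchers' scaling factor $e^{\mp2\pi t}$ with that of the geometric dilation relation, so that the exponents cancel in $z(t)\tau_\pm(s)z(t)^{-1}$; getting this orientation convention right is what makes $z(t)$ commute with, rather than merely normalize, the two translation subgroups.
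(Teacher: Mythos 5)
Your proposal is correct and follows essentially the same route as the paper: Borchers' theorem applied to the two parabolic (translation) subgroups $\tau_\pm$ fixing the endpoints of $\mbb S^1_+$ (with positivity of their generators supplied by \cite[Cor. 4.4]{BCL98}), the fact that $\tau_+,\tau_-$ generate $\UPSU$, and M\"obius covariance of $\mc B$ to transport everything from $\wtd{\mbb S^1_+}$ to a general $\wtd I$. The only caveat is the one you already flag: with the paper's conventions both generators of $\tau_\pm$ are positive but the half-sided inclusion for $\tau_-$ holds for $s\le 0$ rather than $s\ge 0$ (you cannot reorient $\tau_-$ without negating its generator), so one invokes the opposite-sign variant of Borchers' theorem there — which is exactly what produces the $e^{\mp 2\pi t}$ you wrote down.
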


\begin{proof}
Set $A=\left( \begin{array}{cc}
-1  & \im \\
1 & \im
\end{array} \right)$ corresponding to the Cayley transform sending the (resp. positive) real line to the  unit (resp. upper semi-) circle. Its inverse is   $A^{-1}=\left( \begin{array}{cc}
	-\frac 12  & \frac 12 \\
	-\frac \im 2 & -\frac \im 2
\end{array} \right)$. Let $\tau_\pm(t)\in\PSU$ such that $A^{-1}\tau_+(t)A=\left( \begin{array}{cc}
1  & t \\
0 & 1
\end{array} \right)$ and $A^{-1}\tau_-(t)A=\left( \begin{array}{cc}
1  & 0 \\
-t & 1
\end{array}\right)$. Then $\tau_+$ and $\tau_-$ are the one-parameter translation subgroups of $\PSU$ associated to $\mbb S_+^1$ and $\mbb S_-^1$ respectively. $\tau_\pm$ fixes $\mp1$, and rotates the other points anticlockwisely when $t$ increases. It is well-known that these subgroups $A^{-1}\tau_\pm(t)A$ generate $\mathrm{PSL}_2(\mbb R)=A^{-1}\PSU A$, and hence $\tau_\pm$ generate $\PSU$. Indeed, one computes $A^{-1}\varrho(t)A=\left( \begin{array}{cc}
\cos (t/2)  & \sin(t/2) \\
-\sin(t/2) & \cos (t/2)
\end{array}\right)$ and the dilation group $\delta(t)$ satisfies $A^{-1}\delta(t)A=\left( \begin{array}{cc}
e^{t/2}  & 0 \\
0 & e^{-t/2}
\end{array}\right)$. Using these relations, one computes
\begin{gather*}
\varrho(4t)=\tau_+(\tan t)\tau_-(\sin 2t)\tau_+(\tan t),\\
\delta(2t)=\tau_+(-e^t)\tau_-(1-e^{-t})\tau_+(1)\tau_-(1-e^t).	
\end{gather*}
This proves the claim since $\tau_+,\delta,\varrho$ generate $\PSU$ by KAN decomposition \cite{Lon08}.

We also let $\tau_\pm$ denote their lifts to $\UPSU$. We claim that $\tau_\pm$ generate $\UPSU$. Since the preimage of $1\in\PSU$ under the covering map is $\varrho(2\pi\mbb Z)$, it suffices to show $\varrho(2\pi)$ is inside the subgroup  generated by $\tau_\pm$. This is true because the above relation for $\varrho(4t)$ holds not just in $\PSU$, but also in $\UPSU$ due to the uniqueness of lifting whose initial value is $1\in\UPSU$.

We compute that for each $s,t\in\mbb R$,
\begin{gather*}
\delta(-2\pi s)\tau_\pm(t)\delta(2\pi s)=\tau_\pm(e^{\mp 2\pi s}t).	
\end{gather*}
These relations hold in $\PSU$. Consider these as one-parameter subgroups of the variable $t$ (and fixed $s$) and note that their lifts to $\UPSU$ are unique, we see that these relations also hold in $\UPSU$. Similarly, one checks that
\begin{align*}
\fk r\tau_\pm(t)\fk r=\tau_\pm(-t).
\end{align*}
By \cite[Cor.4.4]{BCL98}, the generators of $\tau_\pm$ are positive. (In the case that $\mc A$ is conformal covariant, this also follows from \cite{Wei06} Lemma 3.1 and Theorem 3.8.) For any $t\geq 0$, since $\tau_+(t)(\wtd{\mbb S_+^1})\subset \wtd{\mbb S_+^1}$, by Theorem \ref{lb10}, we have $\Ad\tau_+(t)(\mc B( \wtd{\mbb S_+^1}))\subset \mc B( \wtd{\mbb S_+^1})$.  Then, by Borchers' theorem \cite[Thm. II.9]{Bor92}, we obtain for any $t\in\mbb R$ (when acting on $\mc H_a$) that
\begin{gather}
D_+^{\im t}gD_+^{-\im t}=\delta(-2\pi s)g\delta(2\pi s),\qquad \fk J^Q g \fk J^Q=\fk r g\fk r\label{eq63}
\end{gather}
when $g$ is in $\tau_+$. Similarly, since $\tau_-(t)(\wtd{\mbb S_+^1})\subset \wtd{\mbb S_+^1}$ when $t\leq 0$, we conclude that \eqref{eq63} holds when $g\in\tau_-$. Thus, it holds for all $g\in\UPSU$.

The first half of \eqref{eq63} shows that $z(t):=\delta_{\wtd {\mbb S_+^1}}(2\pi t)D_{\wtd {\mbb S_+^1}}^{\im t}|_{\mc H_a}$ commutes with any $g\in\UPSU$. For each $\wtd I\in\Jtd$, choose $g$ sending $\wtd {\mbb S_+^1}$ to $\wtd I$. Note that $gD_{\wtd {\mbb S_+^1}}g^{-1}=D_{\wtd I}$ since $g$ preserves $\iota\Omega$ and $g\mc B(\wtd{\mbb S_+^1})g^{-1}=\mc B(\wtd I)$ by Theorem \ref{lb10}. Thus, we conclude that $z(t)=\delta_I(2\pi t)D_{\wtd I}^{\im t}|_{\mc H_a}$. In particular, $z(t)$ is independent of $\wtd I$. Apply the first half of \eqref{eq63} to the case $g=\delta(2\pi s)$, we see that $D_+^{\im t}$ commutes with $\delta(2\pi s)$. This proves that $z(t)$ is a one-parameter group.
\end{proof}

\begin{thm}\label{lb13}
Let $Q=(\mc H_a,\mu,\iota)$ be a standard $C^*$-Frobenius algebra in $\RepdA$ with reflection operator $\epsilon\in\Hom_{\mc A}(\mc H_a,\mc H_{\ovl a})$   (see \eqref{eq18}). $\mc H_{\ovl a}$ is the dual object of $\mc H_a$. Let $\mc B$ (and also $\mc B'$) be the non-local extension of $\mc A$ associated to $Q$. For any $\wtd I\in\Jtd$, $\Delta_{\wtd I}|_{\mc H_a}$ (resp. $\epsilon^{-1}\Theta$) is the modular operator (resp. conjugation) for $\mc B(\wtd I)$ (resp. $\mc B(\wtd{\mbb S_+^1})$) and the cyclic separating vector $\iota\Omega$. Moreover, the following are true when acting on $\mc H_a$.

(a) (Geometric modular theorem) For any $t\in\mathbb R$,
\begin{align}
\Delta_{\wtd I}^{\im t}=\delta_I(-2\pi t).\label{eq26}
\end{align}

(b) (PCT theorem) For any $g\in\UPSU$,
\begin{gather}
\epsilon^{-1}\Theta \cdot g \cdot\Theta\epsilon=\fk r g\fk r\label{eq27}\\
\epsilon^{-1}\Theta\cdot\mc B(\wtd I)\cdot\Theta\epsilon=\mc B'(\fk r\wtd I). \label{eq28}
\end{gather}
\end{thm}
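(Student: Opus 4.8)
The plan is to run everything through the Tomita--Takesaki theory of the pair $(\mc B(\wtd I),\iota\Omega)$, using Proposition \ref{lb9} as the bridge between the von Neumann-algebraic $S$-operators and the categorical operators $S_{\wtd I}|_{\mc H_a}$, and then feed in the categorical relations already established.

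First I would identify the modular data. Since $Q$ is standard, Proposition \ref{lb9} says that $\epsilon^{-1}S_{\wtd I}|_{\mc H_a}$ is exactly the Tomita $S$-operator of $(\mc B(\wtd I),\iota\Omega)$. Inserting the categorical polar decomposition \eqref{eq41}, namely $S_{\wtd I}|_{\mc H_a}=\fk J_{\wtd I}|_{\mc H_a}\Delta_{\wtd I}^{1/2}|_{\mc H_a}$, gives
\begin{align*}
\epsilon^{-1}S_{\wtd I}|_{\mc H_a}=\big(\epsilon^{-1}\fk J_{\wtd I}|_{\mc H_a}\big)\Delta_{\wtd I}^{1/2}|_{\mc H_a}.
\end{align*}
As $\epsilon$ is unitary, $\epsilon^{-1}\fk J_{\wtd I}|_{\mc H_a}\colon\mc H_a\to\mc H_a$ is antiunitary while $\Delta_{\wtd I}^{1/2}|_{\mc H_a}$ is positive, so by uniqueness of the polar decomposition the modular operator of $(\mc B(\wtd I),\iota\Omega)$ is $D_{\wtd I}=\Delta_{\wtd I}|_{\mc H_a}$ and its modular conjugation is $\epsilon^{-1}\fk J_{\wtd I}|_{\mc H_a}$; at $\wtd I=\wtd{\mbb S^1_+}$ this is $\epsilon^{-1}\Theta$. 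This settles the first assertion.

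For the geometric modular theorem (a), I would invoke Lemma \ref{lb35}: $z(t)=\delta_I(2\pi t)D_{\wtd I}^{\im t}$ is a one-parameter group on $\mc H_a$ independent of $\wtd I$. The trick is to evaluate $z(t)$ on the clockwise complement $\wtd I'$ as well. Using $D_{\wtd I'}=\Delta_{\wtd I'}|_{\mc H_a}=\Delta_{\wtd I}^{-1}|_{\mc H_a}=D_{\wtd I}^{-1}$ (from the identification above together with \eqref{eq23}) and the elementary M\"obius identity $\delta_{I'}(s)=\delta_I(-s)$, one obtains $z(t)=\delta_I(-2\pi t)D_{\wtd I}^{-\im t}=z(-t)$. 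Since $z$ is a homomorphism, $z(-t)=z(t)^{-1}$, so $z(t)^2=z(2t)=1$ for all $t$, forcing $z\equiv 1$, i.e.\ $\Delta_{\wtd I}^{\im t}|_{\mc H_a}=\delta_I(-2\pi t)$. I stress that the decisive input $\Delta_{\wtd I'}=\Delta_{\wtd I}^{-1}$ is precisely where the categorical relation $F_{\wtd I}=\vartheta S_{\wtd I}$ of Proposition \ref{lb28} enters, via Proposition \ref{lb25}; this is the conceptual heart, and it is already in hand.

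For the PCT theorem (b), equation \eqref{eq27} is immediate: with $\Theta^Q=\epsilon^{-1}\Theta$ from the first step and, using $\fk J^2=\id$ (Proposition \ref{lb25}), $\Theta\epsilon=(\Theta^Q)^{-1}=\Theta^Q$, the left side is $\Theta^Q g(\Theta^Q)^{-1}=\fk rg\fk r$, which is the second half of Lemma \ref{lb35}. For \eqref{eq28}, Tomita--Takesaki together with Haag duality (Theorem \ref{lb10}(d)) gives $\Theta^Q\mc B(\wtd{\mbb S^1_+})\Theta^Q=\mc B(\wtd{\mbb S^1_+})'=\mc B'(\wtd{\mbb S^1_-})$. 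I would then propagate this to arbitrary $\wtd I$: choosing $g\in\UPSU$ with $g\wtd{\mbb S^1_+}=\wtd I$, covariance (Theorem \ref{lb10}(e)) gives $\mc B(\wtd I)=g\mc B(\wtd{\mbb S^1_+})g^{-1}$, and conjugating by $\Theta^Q$ while commuting it past $g$ via \eqref{eq27} yields
\begin{align*}
\Theta^Q\mc B(\wtd I)\Theta^Q=(\fk rg\fk r)\mc B'(\wtd{\mbb S^1_-})(\fk rg\fk r)^{-1}=\mc B'\big((\fk rg\fk r)\wtd{\mbb S^1_-}\big).
\end{align*}
The argument then closes with the geometric identity $(\fk rg\fk r)\wtd{\mbb S^1_-}=\fk r\wtd I$, which follows from $\wtd{\mbb S^1_-}=\fk r\wtd{\mbb S^1_+}$ and the compatibility $(\fk rg\fk r)(\fk r\wtd J)=\fk r(g\wtd J)$ of the $\fk r(\cdot)\fk r$ automorphism with the reflection on arg-valued intervals. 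I expect this last bookkeeping---verifying that the path-lifting action of $\UPSU$ on $\Jtd$ is intertwined by $\fk r$ exactly as prescribed, with all arg-functions and orientations consistent---to be the main technical obstacle, whereas the operator-algebraic content of the theorem is essentially forced by the preceding propositions.
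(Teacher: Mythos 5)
Your proposal is correct and follows essentially the same route as the paper's proof: identify the modular data of $(\mc B(\wtd I),\iota\Omega)$ via Proposition \ref{lb9} and uniqueness of polar decomposition, kill $z(t)$ from Lemma \ref{lb35} by comparing $\wtd I$ with its clockwise complement using $\Delta_{\wtd I'}=\Delta_{\wtd I}^{-1}$ and $\delta_{I'}(s)=\delta_I(-s)$, and deduce \eqref{eq28} from Tomita--Takesaki plus non-local Haag duality at $\wtd{\mbb S^1_+}$ before propagating by covariance and \eqref{eq27}. The only cosmetic difference is that the paper specializes the $z(t)=z(-t)$ computation to $\mbb S^1_\pm$ and verifies $\delta_-(2\pi t)=\varrho(\pi)\delta(2\pi t)\varrho(-\pi)=\delta(-2\pi t)$ explicitly, which is exactly your general identity in that case.
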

Note that in equation \eqref{eq27}, $\fk r g\fk r$ is in $\UPSU$ and is acting on $\mc H_a$.
\begin{proof}
By Proposition \ref{lb9}, the modular operator and conjugation for $\mc B(\wtd I)$ are $\Delta_{\wtd I}|_{\mc H_a}$ and $\epsilon^{-1}\fk J_{\wtd I}|_{\mc H_a}$ respectively.	Thus,  Lemma \ref{lb35} implies \eqref{eq27} and the fact that	$z(t):=\delta_I(2\pi t)\Delta_{\wtd I}^{\im t}$ (acting on $\mc H_a$) is a one-parameter group independent of $\wtd I$. Thus $\delta(2\pi t)\Delta_{+}^{\im t}=\delta_{-}(2\pi t)\Delta_{-}^{\im t}$ where $\delta_-=\delta_{\mbb S^1_-}$. By \eqref{eq23}, we have $\Delta_{-}^{\im t}=\Delta_{+}^{-\im t}$. We also have $\delta_-(2\pi t)=\varrho(\pi)\delta(2\pi t)\varrho(-\pi)$, which equals $\delta(-2\pi t)$ by an easy calculation. Thus $z(t)=z(-t)$, which forces $z(t)$ to be $\id$. This proves the geometric modular theorem. By the non-local Haag-duality (Theorem \ref{lb10}), $\mc B'(\wtd{\mbb S^1_-})$ is the commutant of $\mc B(\wtd{\mbb S^1_+})$. Thus, by Tomita-Takesaki theorem, \eqref{eq28} holds in the special case that $\wtd I=\wtd{\mbb S^1_+}$. The general case follows from the special case, the M\"obius covariance of $\mc B$, and equation \eqref{eq27}.
\end{proof}

\begin{thm}\label{lb26}
For any $\wtd I\in\Jtd$, the following are true when acting on any $\mc H_j\in\Obj(\RepdA)$.

(a) (Geometric modular theorem) For any $t\in\mathbb R$,
\begin{align}
\Delta_{\wtd I}^{\im t}=\delta_I(-2\pi t).\label{eq29}
\end{align}

(b) (PCT theorem) For any $g\in\UPSU$,
\begin{gather}
\Theta \cdot g \cdot\Theta=\fk r g\fk r.\label{eq30}
\end{gather}
Moreover, for any $\mc H_i\in\Obj(\RepdA)$ and $\xi\in\mc H_i(I)$, we have
\begin{gather}
\Theta\cdot\mc H_i(I)=\mc H_{\ovl i}(\fk r I),\label{eq34}\\
\Theta \cdot L(\xi,\wtd I) \cdot \Theta=R(\Theta\xi,\fk r\wtd I).\label{eq35}
\end{gather}
\end{thm}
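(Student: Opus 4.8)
The plan is to reduce every assertion to the standard $C^*$-Frobenius algebra $Q=(\mc H_a,\mu,\iota)$ with $\mc H_a=\mc H_b\boxtimes\mc H_{\ovl b}$ and reflection operator $\epsilon=\id$ of \eqref{eq49} (taking $\mc H_b=\mc H_0\oplus\mc H_j$, so that $\mc H_j\simeq\mc H_j\boxtimes\mc H_0$ embeds into $\mc H_a$ by an isometry $\varphi\in\Hom_{\mc A}(\mc H_j,\mc H_a)$), where Theorem \ref{lb13} already delivers the conclusions on $\mc H_a$, and then to transport them through $\varphi$ using Proposition \ref{lb14}. For part (a): since $\epsilon=\id$, Theorem \ref{lb13}(a) gives $\Delta_{\wtd I}^{\im t}|_{\mc H_a}=\delta_I(-2\pi t)$, and since $\varphi$ intertwines the $\UPSU$-action (hence $\delta_I$), the isometry relation \eqref{eq61} yields $\varphi\,\Delta_{\wtd I}^{\im t}|_{\mc H_j}=\Delta_{\wtd I}^{\im t}|_{\mc H_a}\varphi=\delta_I(-2\pi t)\varphi=\varphi\,\delta_I(-2\pi t)$; cancelling $\varphi$ by $\varphi^*$ gives \eqref{eq29}.

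For \eqref{eq30}, Theorem \ref{lb13}(b) with $\epsilon=\id$ reads $\Theta g\Theta=\fk r g\fk r$ on $\mc H_a=\mc H_{\ovl a}$. The conjugation relation \eqref{eq60} applied to $\varphi$ and to $\ovl\varphi$ gives $\ovl\varphi\,\Theta|_{\mc H_j}=\Theta|_{\mc H_a}\varphi$ and $\varphi\,\Theta|_{\mc H_{\ovl j}}=\Theta|_{\mc H_a}\ovl\varphi$. Using in addition that $\varphi,\ovl\varphi$ intertwine $\UPSU$, I compute $\varphi\,(\Theta g\Theta)|_{\mc H_j}=\Theta_a\ovl\varphi\, g\,\Theta_j=\Theta_a\, g\,\ovl\varphi\,\Theta_j=\Theta_a\, g\,\Theta_a\varphi=(\fk r g\fk r)|_{\mc H_a}\varphi=\varphi\,(\fk r g\fk r)|_{\mc H_j}$, and cancelling $\varphi$ proves \eqref{eq30} on $\mc H_j$.

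For \eqref{eq34} I first settle $\wtd I=\wtd{\mbb S^1_+}$ on $\mc H_a$. By the Reeh--Schlieder identity \eqref{eq11}, $\mc H_a(\mbb S^1_+)=\mc B(\wtd{\mbb S^1_+})\iota\Omega$ and $\mc H_a(\mbb S^1_-)=\mc B'(\wtd{\mbb S^1_-})\iota\Omega$; since $\Theta|_{\mc H_a}$ is the modular conjugation of $\mc B(\wtd{\mbb S^1_+})$ fixing $\iota\Omega$, Tomita--Takesaki theory together with \eqref{eq28} gives $\Theta\,\mc H_a(\mbb S^1_+)=\bigl(\Theta\mc B(\wtd{\mbb S^1_+})\Theta\bigr)\iota\Omega=\mc B'(\wtd{\mbb S^1_-})\iota\Omega=\mc H_a(\mbb S^1_-)$. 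To transport, for $\xi\in\mc H_j(\mbb S^1_+)$ we have $\varphi\xi\in\mc H_a(\mbb S^1_+)$ by \eqref{eq62}, so $\ovl\varphi\,\Theta\xi=\Theta\varphi\xi\in\mc H_{\ovl a}(\mbb S^1_-)$; since $\ovl\varphi$ is an isometric morphism one checks $\mc H_{\ovl j}(\mbb S^1_-)=\{\zeta\in\mc H_{\ovl j}:\ovl\varphi\zeta\in\mc H_{\ovl a}(\mbb S^1_-)\}$ (the inclusion $\subseteq$ is immediate, and $\supseteq$ follows by applying $\ovl\varphi^*$), whence $\Theta\xi\in\mc H_{\ovl j}(\mbb S^1_-)$. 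Running the same argument for $\mc H_{\ovl j}$ and using $\Theta^2=\id$ upgrades this to the equality $\Theta\,\mc H_j(\mbb S^1_+)=\mc H_{\ovl j}(\mbb S^1_-)$; the general $\wtd I$ then follows from \eqref{eq30} and $\mc H_i(I)=g\,\mc H_i(\mbb S^1_+)$ for $g\wtd{\mbb S^1_+}=\wtd I$, since $(\fk r g\fk r)\mbb S^1_-=\fk r I$.

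Finally, for \eqref{eq35}, by \eqref{eq2} and \eqref{eq30} it suffices to treat $\wtd I=\wtd{\mbb S^1_+}$. Both sides are bounded operators $\mc H_{\ovl k}\to\mc H_{\ovl k}\boxtimes\mc H_{\ovl i}$ agreeing on $\Omega$, so by \eqref{eq7} it is enough to compare them on the dense subspace $\mc H_{\ovl k}(\mbb S^1_+)$, on which $R(\Theta\xi,\wtd{\mbb S^1_-})\chi=L(\chi,\wtd{\mbb S^1_+})\Theta\xi=\chi\otimes\Theta\xi$. Writing $\chi=\Theta\eta$ and unwinding the left side through \eqref{eq34} and \eqref{eq7}, the whole claim collapses to the order-reversal identity $\Theta_{i\boxtimes k}(\xi\otimes\eta)=\Theta_k\eta\otimes\Theta_i\xi$ under the identification $\mc H_{\ovl{i\boxtimes k}}=\mc H_{\ovl k}\boxtimes\mc H_{\ovl i}$. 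This fusion-compatibility of the PCT operator is the step I expect to be hardest, and it cannot simply be imported from Theorem \ref{lb13}, because $\mc H_a=\mc H_b\boxtimes\mc H_{\ovl b}$ is itself a fusion product, so transporting through $\varphi$ merely relocates the same difficulty. I would prove it via the linking/matrix-algebra description of $Q=\mc H_b\boxtimes\mc H_{\ovl b}$ in the spirit of Falcone--Takesaki: the modular conjugation $\Theta|_{\mc H_a}$ acts on the matrix corners $\psi_{l,\ovl m}$ of \eqref{eq64}--\eqref{eq66} by swapping and conjugating them, which, combined with $\ovl{\psi_{l,\ovl m}}=\psi_{m,\ovl l}$ from \eqref{eq65} and the multiplication rule \eqref{eq48} for $\mu$ on these corners, is exactly the order reversal on product vectors and therefore yields \eqref{eq35}.
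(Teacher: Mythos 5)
Your treatment of (a), of \eqref{eq30}, and of \eqref{eq34} is essentially correct and, for (a) and \eqref{eq30}, coincides with the paper's own route: reduce to the standard $C^*$-Frobenius algebra $\mc H_a=\mc H_k\boxtimes\mc H_{\ovl k}$ with $\mc H_k=\mc H_0\oplus\mc H_j$ (so $\epsilon=\id$), invoke Theorem \ref{lb13}, and transport through the isometry $\psi_{j,0}$ with conjugate $\psi_{0,\ovl j}$ via Proposition \ref{lb14}. For \eqref{eq34} you argue via $\Theta\,\mc B(\wtd{\mbb S^1_+})\,\Theta=\mc B'(\wtd{\mbb S^1_-})$, Reeh--Schlieder \eqref{eq11}, and M\"obius covariance, whereas the paper works with a general $\wtd I$ directly and pins down the vector by the state-field correspondence \eqref{eq38}; both are fine and use the same ingredients.

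The genuine gap is in \eqref{eq35}. You correctly reduce it to the order-reversal identity $\Theta_{i\boxtimes k}\big(L(\xi,\wtd{\mbb S^1_+})\eta\big)=L(\Theta_k\eta,\wtd{\mbb S^1_+})\Theta_i\xi$, correctly flag it as the hard step, and then do not prove it. Your proposed closing move --- that $\Theta_a$ ``acts on the matrix corners $\psi_{l,\ovl m}$ by swapping and conjugating them'' --- only gives $\Theta_a\psi_{l,\ovl m}=\psi_{m,\ovl l}\Theta_{l\boxtimes\ovl m}$ (from \eqref{eq65} and Proposition \ref{lb14}), which relocates the question to what $\Theta_{l\boxtimes\ovl m}$ does to product vectors; it is not yet the order reversal. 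The missing input is precisely the one the paper uses to \emph{avoid} ever stating the order-reversal identity: by the non-local PCT theorem \eqref{eq28} (with $\epsilon=\id$), the operator $\Theta_a A(\psi_{i,0}\xi,\wtd I)\Theta_a$ lies in $\mc B'(\fk r\wtd I)$, hence equals $B(\eta,\fk r\wtd I)$ for some $\eta\in\mc H_a(\fk rI)$, and evaluating at $\iota\Omega$ via \eqref{eq38} forces $\eta=\psi_{0,\ovl i}\Theta_i\xi$. Restricting the resulting operator identity $\Theta_a A(\psi_{i,0}\xi,\wtd I)\Theta_a=B(\psi_{0,\ovl i}\Theta_i\xi,\fk r\wtd I)$ to the corner $\psi_{j,0}\mc H_j$ and unwinding with \eqref{eq8}, \eqref{eq66}, \eqref{eq65} yields $\psi_{j,\ovl i}\,\Theta_{i\boxtimes\ovl j}L(\xi,\wtd I)\Theta_j=\psi_{j,\ovl i}\,R(\Theta_i\xi,\fk r\wtd I)$, which is \eqref{eq35}. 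The ``swap'' thus comes for free from the $A$-versus-$B$ (i.e.\ $L$-versus-$R$) structure of $\mc B$ and its clockwise dual net, not from a separate fusion-compatibility lemma for $\Theta$; your concern that transporting through $\varphi$ ``merely relocates the same difficulty'' is what this trick circumvents. As written, your argument for \eqref{eq35} is incomplete at exactly this point.
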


\begin{proof}
Let $Q=(\mc H_a,\mu,\iota)$ where $\mc H_a=\mc H_k\boxtimes\mc H_{\ovl k}$ and $\mc H_k=\mc H_0\oplus\mc H_j$. (See the discussions starting from \eqref{eq49}, especially the choice of dualizing data.) Since $\epsilon=1$, \eqref{eq29} and \eqref{eq30} hold on $\mc H_a$ by Theorem \ref{lb13}. Note that $\Theta_a$ is an involution on $\mc H_a$ (since $\mc H_a$ is the canonical dual object of $\mc H_a$). $\psi_{j,0}:\mc H_j\simeq\mc H_{j,0}\rightarrow\mc H_a$ (defined by \eqref{eq64}) is an isometry, whose conjugate is $\psi_{0,\ovl j}$ by \eqref{eq65}. Since morphisms intertwine the actions of $\UPSU$ (\cite[Prop. 2.2]{GL96} and Lemma \ref{lb21}), by Proposition \ref{lb14},
\begin{align*}
\psi_{j,0}\Theta_{\ovl j}g\Theta_j=\Theta_a\psi_{0,\ovl j}g\Theta_j=\Theta_ag\Theta_a\psi_{j,0}=(\fk r g\fk r)\psi_{j,0}=\psi_{j,0}(\fk r g\fk r),
\end{align*}
which proves \eqref{eq30} on $\mc H_j$. \eqref{eq29} on $\mc H_j$ can be proved in a similar way.

Now we take $\mc H_k=\mc H_0\oplus\mc H_i\oplus\mc H_j$ and define $Q=(\mc H_a,\mu,\iota)$ in the same way. Again, we use freely the notations starting from \eqref{eq49}. Identify $\mc H_i$ with $\mc H_i\boxtimes\mc H_0$ using the right unitor. Choose any $\xi\in\mc H_i(I)$. Recall the definition of $A,B$ in Section \ref{lb34}. Then
\begin{align*}
\Theta_a A(\psi_{i,0}\xi,\wtd I)\Theta_a\iota\Omega=\Theta_a A(\psi_{i,0}\xi,\wtd I)\iota\Omega \xlongequal{\eqref{eq38}} \Theta_a\psi_{i,0}\xi=\psi_{0,\ovl i}\Theta_i\xi.
\end{align*}
By the non-local PCT theorem,  $\Theta_a A(\psi_{i,0}\xi,\wtd I)\Theta$ is inside $\mc B'(\fk r\wtd I)$. So there exists $\eta\in\mc H_a(\fk r\wtd I)$ such that $\Theta_a A(\psi_{i,0}\xi,\wtd I)\Theta_a=B(\eta,\fk r\wtd I)$. Again, by \eqref{eq38}, we must have $\eta=\psi_{0,\ovl i}\Theta_i\xi$. Therefore $\Theta_i\xi=\psi_{0,\ovl i}^*\eta\in\psi_{0,\ovl i}^*\mc H_a(\fk rI)\subset \mc H_{\ovl i}(\fk rI)$. This proves $\Theta\cdot\mc H_i(I)\subset\mc H_{\ovl i}(\fk r I)$. Similarly, $\Theta\cdot\mc H_{\ovl i}(\fk rI)\subset\mc H_i(I)$. Thus we have proved \eqref{eq34}.

We have proved that $\Theta_a A(\psi_{i,0}\xi,\wtd I)\Theta_a=B(\psi_{0,\ovl i}\Theta_i\xi,\fk r\wtd I)$. Let both sides act on $\psi_{j,0}$.  The left hand side is
\begin{align*}
&\Theta_a A(\psi_{i,0}\xi,\wtd I)\Theta_a\psi_{j,0}=\Theta_a A(\psi_{i,0}\xi,\wtd I)\psi_{0,\ovl j}\Theta_j=\Theta_a \mu L(\psi_{i,0}\xi,\wtd I)\psi_{0,\ovl j}\Theta_j\\
\xlongequal{\eqref{eq8}}&\Theta_a \mu(\psi_{i,0}\otimes\psi_{0,\ovl j})L(\xi,\wtd I)\Theta_j\xlongequal{\eqref{eq66}} \Theta_a \psi_{i,\ovl j}L(\xi,\wtd I)\Theta_j\xlongequal{\eqref{eq65}} \psi_{j,\ovl i}\Theta_{i\boxtimes\ovl j} L(\xi,\wtd I)\Theta_j,
\end{align*}
which equals the right hand side (acting on $\mc H_j$)
\begin{align*}
&B(\psi_{0,\ovl i}\Theta_i\xi,\fk r\wtd I)\psi_{j,0}=\mu R(\psi_{0,\ovl i}\Theta_i\xi,\fk r\wtd I)\psi_{j,0} \xlongequal{\eqref{eq8}} \mu(\psi_{j,0}\otimes\psi_{0,\ovl i})R(\Theta_i\xi,\fk r\wtd I)\\
\xlongequal{\eqref{eq66}}& \psi_{j,\ovl i}R(\Theta_i\xi,\fk r\wtd I).
\end{align*}
Equation \eqref{eq35} is now proved.
\end{proof}

\begin{cv}
By \eqref{eq29},  $\Delta_{\wtd I}$ depends only on $I$ but not on $\arg_I$. Thus we will write $\Delta_{\wtd I}$ as $\Delta_I$ in the future.
\end{cv}

\begin{co}[\cite{GL96} Thm.2.11]
For any $\mc H_i\in\Obj(\RepdA)$, $I\in\mc J$, and $x\in\mc A(I)$, we have
\begin{align}
\Theta\pi_{i,I}(x)\Theta=\pi_{\ovl i,\fk rI}(\Theta x\Theta).
\end{align}
\end{co}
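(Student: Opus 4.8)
The plan is to deduce this identity from the PCT theorem for charged fields, equation \eqref{eq35}, by realizing the representation operator $\pi_{i,I}(x)$ as a \emph{left} charged field carrying the vacuum charge. Since $x\in\mc A(I)$, the vector $x\Omega$ lies in $\mc H_0(I)$, and by \eqref{eq9} we have $\pi_{i,I}(x)=L(x\Omega,\wtd I)|_{\mc H_i}$ for any $\wtd I\in\Jtd$ lying over $I$. Hence the left-hand side $\Theta\,\pi_{i,I}(x)\,\Theta$ is exactly $\Theta\, L(x\Omega,\wtd I)\,\Theta$, which is the object governed by \eqref{eq35} with charge label $\mc H_0$.

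Next I would apply \eqref{eq35} with $\xi=x\Omega\in\mc H_0(I)$, evaluated on the module $\mc H_{\ovl i}$. Reading the categorical identity $\Theta L(\xi,\wtd I)\Theta=R(\Theta\xi,\fk r\wtd I)$ with charge object $\mc H_0$ and target $\mc H_{\ovl i}$, and using strictness ($\mc H_0\boxtimes\mc H_k=\mc H_k$) together with the double-dual identification $\mc H_{\ovl{\ovl i}}=\mc H_i$ from Definition \ref{lb31}: the innermost $\Theta_{\ovl i}\colon\mc H_{\ovl i}\to\mc H_i$ is the inverse of $\Theta_i$ (as $\Theta$ is an involution), the middle factor $L(x\Omega,\wtd I)|_{\mc H_i}$ equals $\pi_{i,I}(x)$ again by \eqref{eq9}, and the outer conjugation is $\Theta_{0\boxtimes i}=\Theta_i\colon\mc H_i\to\mc H_{\ovl i}$. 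Thus the left side of \eqref{eq35} is precisely $\Theta\,\pi_{i,I}(x)\,\Theta$ as an operator on $\mc H_{\ovl i}$, while the right side is $R(\Theta x\Omega,\fk r\wtd I)|_{\mc H_{\ovl i}}$.

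It then remains to recognize this right side as $\pi_{\ovl i,\fk r I}(\Theta x\Theta)$. By \eqref{eq34} we have $\Theta x\Omega\in\mc H_{\ovl 0}(\fk r I)=\mc H_0(\fk r I)=\mc A(\fk r I)\Omega$ (Haag duality), so \eqref{eq9} gives $R(\Theta x\Omega,\fk r\wtd I)|_{\mc H_{\ovl i}}=\pi_{\ovl i,\fk r I}(y)$ for the unique $y\in\mc A(\fk r I)$ with $y\Omega=\Theta x\Omega$. The one genuinely non-formal point, which I regard as the main obstacle, is to identify this $y$ with the literal operator $\Theta_0 x\Theta_0$. For this I would invoke the PCT theorem for $\mc A$ itself, obtained as the special case of Theorem \ref{lb13}(b) in which $Q$ is the trivial (unit) $C^*$-Frobenius algebra $\mc H_0$, so that $\mc B=\mc B'=\mc A$, $\epsilon=\id$, and \eqref{eq28} reads $\Theta_0\mc A(I)\Theta_0=\mc A(\fk r I)$. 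Consequently $\Theta_0 x\Theta_0\in\mc A(\fk r I)$, and since $\Theta_0\Omega=\Omega$ (because $\Theta_0$ is the modular conjugation of $(\mc A(\mbb S^1_+),\Omega)$), we obtain $(\Theta_0 x\Theta_0)\Omega=\Theta_0 x\Omega=\Theta x\Omega$. As $\Omega$ is separating for $\mc A(\fk r I)$, this forces $y=\Theta_0 x\Theta_0=\Theta x\Theta$, which finishes the identification and hence the proof.
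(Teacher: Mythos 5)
Your proof is correct and follows essentially the same route as the paper's: rewrite $\pi_{i,I}(x)$ as $L(x\Omega,\wtd I)|_{\mc H_i}$ via \eqref{eq9}, apply the PCT identity \eqref{eq35} with charge label $\mc H_0$, and identify $R(\Theta x\Omega,\fk r\wtd I)|_{\mc H_{\ovl i}}$ with $\pi_{\ovl i,\fk rI}(\Theta x\Theta)$ using \eqref{eq9} together with the PCT theorem for $\mc A$ itself (which the paper invokes in the remark immediately following the statement). You merely spell out in more detail the final identification $y=\Theta x\Theta$ that the paper leaves implicit.
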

Notice that $\Theta x \Theta\in\mc A(\fk rI)$ by the PCT theorem for $\mc A$.
\begin{proof}
Choose an $\arg_I$. Then, by \eqref{eq9}, we have $\pi_{i,I}(x)=L(x\Omega,\wtd I)|_{\mc H_i}$ and $\pi_{\ovl i,\fk rI}(\Theta x\Theta)=R(\Theta x\Theta\Omega,\fk r\wtd I)|_{\mc H_{\ovl i}}=R(\Theta x\Omega,\fk r\wtd I)|_{\mc H_{\ovl i}}$. We may now apply \eqref{eq35} to prove the desired equation.
\end{proof}

\begin{rem}
The above corollary gives an explicit construction of dual representation of any $\mc H_i\in\Obj(\RepdA)$. Namely, we choose any seperable Hilbert space $\mc H_{\ovl i}$, and choose an arbitrary anti-unitary map $\Co:\mc H_i\rightarrow\mc H_{\ovl i}$. Define a representation $\pi_{\ovl i}$ of $\mc A$ on $\mc H_{\ovl i}$ such that for any $I\in\mc J$ and $x\in\mc A(I)$,
\begin{align}
\pi_{\ovl i,I}(x)=\Co\cdot\pi_{i,\fk rI}(\Theta x\Theta)\cdot\Co^{-1}.
\end{align}
Then $(\mc H_{\ovl i},\pi_{\ovl i})$ is equivalent to the a dual object of $(\mc H_i,\pi_i)$.
\end{rem}

The conformal spin-statistics theorem is also an easy consequence of the categorical PCT theorem:

\begin{thm}[\cite{GL96} Thm. 3.13, \cite{Joer96} Sect. 4.1]\label{lb20}
On any $\mc H_i\in\Obj(\RepdA)$ we have
\begin{align}
\vartheta=\varrho(2\pi).
\end{align}
\end{thm}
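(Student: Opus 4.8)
The plan is to deduce the spin-statistics identity from the categorical PCT theorem (Theorem \ref{lb26}) together with the algebraic relations of Proposition \ref{lb25}, following the strategy of Guido--Longo. The starting point is the geometric observation that rotation by $2\pi$ fixes every interval as a subset of $\mbb S^1$ but shifts its arg-function by $2\pi$; consequently
\begin{align*}
\varrho(\pi)\wtd{\mbb S^1_+}=\varrho(2\pi)\wtd{\mbb S^1_-},
\end{align*}
since both sides are the lower semicircle equipped with the arg-function taking values in $(\pi,2\pi)$. The idea is to compute the modular conjugation $\fk J_{\varrho(\pi)\wtd{\mbb S^1_+}}$ in two different ways and compare the results.

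First I would record the ingredients, all read on a fixed $\mc H_i$ with $\Theta,\vartheta,\varrho$ understood as categorical operators ($\Theta:\mc H_i\to\mc H_{\ovl i}$ being the antilinear involution $\Theta^2=\id$). Applying Proposition \ref{lb12} to the polar decomposition $S_{\wtd I}=\fk J_{\wtd I}\Delta_{\wtd I}^{1/2}$ and using uniqueness of polar decomposition (valid since $g$ is unitary) gives the covariance $\fk J_{g\wtd I}=g\fk J_{\wtd I}g^{-1}$ for $g\in\UPSU$. From \eqref{eq23} with $\wtd I=\wtd{\mbb S^1_+}$ and clockwise complement $\wtd I'=\wtd{\mbb S^1_-}$ one gets $\fk J_{\wtd{\mbb S^1_-}}=\vartheta^{-1}\Theta$. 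From the PCT theorem \eqref{eq30} together with the elementary computation $\fk r\varrho(t)\fk r=\varrho(-t)$ in $\UPSU$ (which holds by uniqueness of lifts, as in Lemma \ref{lb35}), one obtains the key commutation relation $\varrho(t)\Theta=\Theta\varrho(-t)$. Finally, since $\varrho(2\pi)$ is central in $\UPSU$ and morphisms intertwine the $\UPSU$-action, $\varrho(2\pi)$ commutes with the natural transformation $\vartheta$.

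Then I would carry out the comparison. On the one hand, covariance gives
\begin{align*}
\fk J_{\varrho(\pi)\wtd{\mbb S^1_+}}=\varrho(\pi)\Theta\varrho(-\pi)=\varrho(2\pi)\Theta,
\end{align*}
using $\Theta\varrho(-\pi)=\varrho(\pi)\Theta$. On the other hand, the geometric identity together with $\fk J_{\wtd{\mbb S^1_-}}=\vartheta^{-1}\Theta$ gives
\begin{align*}
\fk J_{\varrho(\pi)\wtd{\mbb S^1_+}}=\varrho(2\pi)\vartheta^{-1}\Theta\varrho(-2\pi)=\vartheta^{-1}\varrho(4\pi)\Theta,
\end{align*}
where I used $\Theta\varrho(-2\pi)=\varrho(2\pi)\Theta$ and then commuted $\varrho(2\pi)$ past $\vartheta^{-1}$. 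Equating the two expressions, cancelling the invertible $\Theta$ on the right, and rearranging yields $\vartheta^{-1}=\varrho(-2\pi)$, i.e. $\vartheta=\varrho(2\pi)$, on every $\mc H_i\in\Obj(\RepdA)$.

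The main obstacle is bookkeeping rather than any deep difficulty: $\Theta$ is antilinear and interchanges $\mc H_i$ with its dual $\mc H_{\ovl i}$, so every commutation relation above must be read as an identity of categorical operators between the correct spaces, and one must check that the twist appearing on the target $\mc H_{\ovl i}$ is precisely the one dictated by \eqref{eq23}. The other point deserving care is the arg-shift identity $\varrho(\pi)\wtd{\mbb S^1_+}=\varrho(2\pi)\wtd{\mbb S^1_-}$, which rests on the fact that $\varrho(2\pi)$ acts trivially on $\mbb S^1$ while translating arg-functions by $2\pi$; this is exactly where passing to the universal cover $\UPSU$ (rather than $\PSU$) is essential, and it is what makes the nontrivial central element $\varrho(2\pi)$ available to be identified with the twist.
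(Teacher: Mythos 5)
Your proof is correct and uses exactly the same ingredients as the paper's: the relation $\fk J_{\wtd{\mbb S^1_-}}=\vartheta^{-1}\Theta$ from Proposition \ref{lb25}, the rotation covariance of the modular conjugations from Proposition \ref{lb12}, and the PCT commutation $\Theta\varrho(t)=\varrho(-t)\Theta$. The only difference is cosmetic: the paper compares the two expressions for $\fk J_{\wtd{\mbb S^1_-}}$ directly via $\wtd{\mbb S^1_-}=\varrho(-\pi)\wtd{\mbb S^1_+}$, whereas you compare them after a further rotation using $\varrho(\pi)\wtd{\mbb S^1_+}=\varrho(2\pi)\wtd{\mbb S^1_-}$ and the centrality of $\varrho(2\pi)$, arriving at the same cancellation.
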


\begin{proof}
By \eqref{eq23} we have $\Theta=\fk J_{\wtd{\mbb S^1_+}}=\vartheta\fk J_{\wtd{\mbb S^1_-}}$. So $\fk J_{\wtd{\mbb S^1_-}}=\vartheta^{-1}\Theta$. Since $\wtd{\mbb S^1_-}=\varrho(-\pi)\wtd{\mbb S^1_+}$, by proposition \ref{lb12} we also have $S_{\wtd{\mbb S^1_-}}=\varrho(-\pi)S_{\wtd{\mbb S^1_+}}\varrho(\pi)$, and hence $\fk J_{\wtd{\mbb S^1_-}}=\varrho(-\pi)\Theta\varrho(\pi)$. So $\vartheta^{-1}\Theta=\varrho(-\pi)\Theta\varrho(\pi)$. By the categorical PCT theorem, $\Theta\varrho(t)=(\fk r\varrho(t)\fk r)\Theta=\varrho(-t)\Theta$. Therefore $\vartheta^{-1}\Theta=\varrho(-2\pi)\Theta$ and hence $\vartheta=\varrho(2\pi)$.
\end{proof}

We now want to generalize theorem \ref{lb13} to any (non-necessarily standard) $C^*$-Frobenius algebra $Q=(\mc H_a,\mu,\iota)$ in $\RepdA$. Again, we choose a dualizing data $(\mc H_{\ovl a},\ev_{a,\ovl a})$ for $\mc H_a$. (Recall that $\ev_{a,\ovl a},\ev_{\ovl a,a}$ are standard.) There is a unique invertible $\epsilon\in\Hom_{\mc A}(\mc H_a,\mc H_{\ovl a})$, also called \textbf{(left) reflection operator}, such that
\begin{align}
\ev_{\ovl a,a}(\epsilon\otimes\id_a)=\iota^*\mu.\label{eq67}
\end{align}
By the uniqueness of standard evaluations up to unitaries, it is clear that $Q$ is standard if and only if $\epsilon$ is unitary.

\begin{lm}
We have
\begin{align}
	\epsilon^\vee=(\epsilon^{-1})^*\quad\text{and hence}\quad\epsilon^{-1}=\ovl\epsilon.\label{eq44}
\end{align}
\end{lm}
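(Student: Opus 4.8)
The plan is to reduce everything to the first identity $\epsilon^\vee=(\epsilon^{-1})^*$, since the second one then comes for free: by the definition \eqref{eq58} of the conjugate we have $\ovl\epsilon=(\epsilon^\vee)^*$, so once $\epsilon^\vee=(\epsilon^{-1})^*$ is established we immediately get $\ovl\epsilon=\big((\epsilon^{-1})^*\big)^*=\epsilon^{-1}$. To prove $\epsilon^\vee=(\epsilon^{-1})^*$ I would first unwind what $\epsilon^\vee$ is. Applying the defining equation \eqref{eq36} of the transpose to $G=\epsilon\in\Hom_{\mc A}(\mc H_a,\mc H_{\ovl a})$ (and using $\ovl{\ovl a}=a$) shows that $\epsilon^\vee\in\Hom_{\mc A}(\mc H_a,\mc H_{\ovl a})$ is the unique morphism satisfying $\ev_{a,\ovl a}(\id_a\otimes\epsilon^\vee)=\ev_{\ovl a,a}(\epsilon\otimes\id_a)$, which by \eqref{eq67} equals $\iota^*\mu$. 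Since $\ev_{a,\ovl a}$ is nondegenerate, the map $T\mapsto\ev_{a,\ovl a}(\id_a\otimes T)$ is injective, so it suffices to verify the single identity $\ev_{a,\ovl a}(\id_a\otimes(\epsilon^{-1})^*)=\iota^*\mu$.

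The second step is to produce a usable closed formula for $(\epsilon^{-1})^*$. Here I would exploit the \emph{other} duality carried by $\mc H_a$, namely the Frobenius self-duality with evaluation $\ev_{a,a}=\iota^*\mu$ and coevaluation $\coev_{a,a}=\mu^*\iota$, whose conjugate equation $(\ev_{a,a}\otimes\id_a)(\id_a\otimes\coev_{a,a})=\id_a$ holds purely by the unit, associativity and Frobenius axioms. From \eqref{eq67} and invertibility of $\epsilon$ one gets $\ev_{\ovl a,a}=\ev_{a,a}(\epsilon^{-1}\otimes\id_a)$; substituting this and applying the conjugate equation above in a single snake move yields
\begin{align*}
\epsilon^{-1}=(\ev_{\ovl a,a}\otimes\id_a)(\id_{\ovl a}\otimes\coev_{a,a}),
\end{align*}
and taking adjoints gives $(\epsilon^{-1})^*=(\id_{\ovl a}\otimes\ev_{a,a})(\coev_{\ovl a,a}\otimes\id_a)$.

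Finally I would substitute this formula into $\ev_{a,\ovl a}(\id_a\otimes(\epsilon^{-1})^*)$ and collapse the resulting zig-zag using the conjugate equation $(\ev_{a,\ovl a}\otimes\id_a)(\id_a\otimes\coev_{\ovl a,a})=\id_a$ from \eqref{eq3}: the copy of $\ev_{a,\ovl a}$ cancels against $\coev_{\ovl a,a}$, leaving exactly $\ev_{a,a}=\iota^*\mu$, which is the identity demanded in the first step. By the uniqueness noted there we conclude $(\epsilon^{-1})^*=\epsilon^\vee$, and the lemma follows. I expect the only delicate point to be the bookkeeping: two distinct dualities of $\mc H_a$ are in play simultaneously (the standard one pairing $\mc H_a$ with $\mc H_{\ovl a}$ via $\ev_{a,\ovl a},\ev_{\ovl a,a}$, and the Frobenius self-duality of $\mc H_a$ via $\ev_{a,a},\coev_{a,a}$), so one must keep the two families of (co)evaluations apart and make sure each tensor leg lands in the correct factor. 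It is worth verifying along the way that standardness of $\ev_{a,\ovl a}$ is never invoked, which is consistent with the statement being claimed for an arbitrary, not necessarily standard, $Q$.
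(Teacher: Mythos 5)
Your argument is correct and rests on exactly the same ingredients as the paper's proof: the defining equation \eqref{eq67} together with its adjoint (i.e.\ $\coev_{a,a}=\mu^*\iota$), the zig--zag identity $(\iota^*\mu\otimes \id_a)(\id_a\otimes\mu^*\iota)=\id_a$ coming from the unit and Frobenius axioms, and the conjugate equations \eqref{eq3}--\eqref{eq4}. The only difference is organizational — you solve explicitly for $\epsilon^{-1}$ as a snake and verify the defining equation of $\epsilon^\vee$ via Frobenius reciprocity, while the paper shows the same composite equals both $\id_a$ and $(\epsilon^\vee\epsilon^*)^\vee$ — so this is essentially the paper's proof.
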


\begin{proof}
The adjoint of \eqref{eq67}  is $(\epsilon^*\otimes\id_a)\coev_{\ovl a,a}=\mu^*\iota$. \eqref{eq67} also implies $\ev_{a,\ovl a}(\id_a\otimes\epsilon^\vee)=\iota^*\mu$. By the Frobenius relation and the unit property of $Q$, we have
\begin{align*}
(\iota^*\mu\otimes \id_a)(\id_a\otimes\mu^*\iota)=(\iota^*\otimes\id_a)(\id_a\otimes\mu)(\mu^*\otimes\id_a)(\id_a\otimes\iota)=(\iota^*\otimes\id_a)\mu^*\mu(\id_a\otimes\iota)=\id_a.
\end{align*}
Thus $(\ev_{a,\ovl a}(\id_a\otimes\epsilon^\vee)\otimes \id_a)(\id_a\otimes(\epsilon^*\otimes\id_a)\coev_{\ovl a,a})$ must be $\id_a$. But
\begin{align*}
&(\ev_{a,\ovl a}(\id_a\otimes\epsilon^\vee)\otimes \id_a)(\id_a\otimes(\epsilon^*\otimes\id_a)\coev_{\ovl a,a})=(\ev_{a,\ovl a}\otimes \id_a)(\id_a\otimes\epsilon^\vee\epsilon^*\otimes\id_a)(\id_a\otimes\coev_{\ovl a,a})\\
=&(\ev_{a,\ovl a}\otimes \id_a)((\epsilon^\vee\epsilon^*)^\vee\otimes\id_{\ovl a}\otimes\id_a)(\id_a\otimes\coev_{\ovl a,a})=(\epsilon^\vee\epsilon^*)^\vee.
\end{align*}
So we must have $\epsilon^\vee=(\epsilon^{-1})^*$, and hence $\epsilon^{-1}=(\epsilon^\vee)^*=\ovl\epsilon$.
\end{proof}

From this, one easily sees that $\coev_{\ovl a,a}=(\epsilon^\vee\otimes\id_a)\mu^*\iota$ and $\coev_{a,\ovl a}=(\id_a\otimes\epsilon)\mu^*\iota$. Using these two equations, the following can be proved in essentially the same way as proposition \ref{lb9}.

\begin{pp}
$\epsilon^{-1} S_{\wtd I}|_{\mc H_a}$ and $\epsilon^* F_{\wtd I}|_{\mc H_a}$ are respectively the $S$ operators of $\mc B(\wtd I)$ and $\mc B'(\wtd I)$ with respect to $\iota\Omega$.
\end{pp}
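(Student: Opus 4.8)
The plan is to repeat the proof of Proposition \ref{lb9} essentially verbatim, the only change being that $\epsilon$ is no longer unitary, so the powers of $\epsilon$ produced on the two sides must be tracked carefully and can no longer be identified. Since $\iota\Omega$ is cyclic and separating for both $\mc B(\wtd I)$ and $\mc B'(\wtd I)$ (Proposition \ref{lb3}), and since $\mc B(\wtd I)\iota\Omega=\mc B'(\wtd I)\iota\Omega=\mc H_a(I)$ by \eqref{eq11}, it suffices to verify the two identities
\begin{align*}
\epsilon^{-1}S_{\wtd I}|_{\mc H_a}\cdot A(\xi,\wtd I)\iota\Omega=A(\xi,\wtd I)^*\iota\Omega,\qquad \epsilon^* F_{\wtd I}|_{\mc H_a}\cdot B(\xi,\wtd I)\iota\Omega=B(\xi,\wtd I)^*\iota\Omega
\end{align*}
for every $\xi\in\mc H_a(I)$. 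Both asserted operators are antilinear with core $\mc H_a(I)$, which is exactly the core of the Tomita $S$ operators, so these identities pin them down. By Proposition \ref{lb3} the left factors are $\epsilon^{-1}S_{\wtd I}\xi$ and $\epsilon^* F_{\wtd I}\xi$, so what remains is to compute the two right-hand sides.

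For the first identity I would write $A(\xi,\wtd I)^*\iota\Omega=L(\xi,\wtd I)^*\mu^*\iota\Omega$, apply \eqref{eq16} with $\phi=\mu^*\iota\Omega$, and then move $\mu^*\iota$ through $L(S_{\wtd I}\xi,\wtd I)$ using \eqref{eq8} and \eqref{eq9}; this yields $(\ev_{\ovl a,a}\otimes\id_a)(\id_{\ovl a}\otimes\mu^*\iota)\,S_{\wtd I}\xi$. The morphism $(\ev_{\ovl a,a}\otimes\id_a)(\id_{\ovl a}\otimes\mu^*\iota)\colon\mc H_{\ovl a}\to\mc H_a$ is then evaluated by substituting $\mu^*\iota=(\id_a\otimes\epsilon^{-1})\coev_{a,\ovl a}$ (the rearrangement of $\coev_{a,\ovl a}=(\id_a\otimes\epsilon)\mu^*\iota$), pulling $\epsilon^{-1}$ out of the surviving tensor factor, and collapsing the remaining $(\ev_{\ovl a,a}\otimes\id_{\ovl a})(\id_{\ovl a}\otimes\coev_{a,\ovl a})$ to $\id_{\ovl a}$ via the conjugate equation \eqref{eq4}. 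The morphism equals $\epsilon^{-1}$, giving $A(\xi,\wtd I)^*\iota\Omega=\epsilon^{-1}S_{\wtd I}\xi$ as desired.

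The second identity is the mirror image. Writing $B(\xi,\wtd I)^*\iota\Omega=R(\xi,\wtd I)^*\mu^*\iota\Omega$, applying \eqref{eq17} and then \eqref{eq8}, \eqref{eq9}, reduces it to $(\id_a\otimes\ev_{a,\ovl a})(\mu^*\iota\otimes\id_{\ovl a})\,F_{\wtd I}\xi$; here I substitute $\mu^*\iota=(\epsilon^*\otimes\id_a)\coev_{\ovl a,a}$ (the adjoint of \eqref{eq67}, equivalently the rearrangement of $\coev_{\ovl a,a}=(\epsilon^\vee\otimes\id_a)\mu^*\iota$ together with $(\epsilon^\vee)^{-1}=\epsilon^*$ from \eqref{eq44}), pull $\epsilon^*$ out, and collapse $(\id_{\ovl a}\otimes\ev_{a,\ovl a})(\coev_{\ovl a,a}\otimes\id_{\ovl a})=\id_{\ovl a}$ by \eqref{eq4}. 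This produces the morphism $\epsilon^*$ and hence $B(\xi,\wtd I)^*\iota\Omega=\epsilon^* F_{\wtd I}\xi$. The only genuine departure from Proposition \ref{lb9} — and the point where care is needed — is precisely that, $\epsilon$ being non-unitary, the $S$-side produces $\epsilon^{-1}$ while the $F$-side produces $\epsilon^*$; keeping $\epsilon^{-1}$, $\epsilon^*$ and $\epsilon^\vee$ straight and invoking \eqref{eq44} at the right moment is the whole subtlety, whereas in the standard case all of these coincide and the distinction is invisible.
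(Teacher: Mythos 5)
Your proposal is correct and matches the paper's own argument: the paper likewise proves the proposition by rerunning the proof of Proposition \ref{lb9} with $\epsilon$ replaced by $\epsilon^\vee$ where appropriate (using $\coev_{\ovl a,a}=(\epsilon^\vee\otimes\id_a)\mu^*\iota$, $\coev_{a,\ovl a}=(\id_a\otimes\epsilon)\mu^*\iota$ and $(\epsilon^\vee)^{-1}=\epsilon^*$ from \eqref{eq44}), which is exactly the bookkeeping you carry out. The paper additionally offers an alternative derivation of the $S$-part via Corollary \ref{lb36} and non-local Haag duality, but your direct computation is the route it states first.
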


\begin{proof}
In Proposition \ref{lb9}, we assumed $Q$ is standard, and used $\coev_{\ovl a,a}=(\epsilon\otimes\id_a)\mu^*\iota$ (see the second paragraph of the proof) to show $\epsilon^{-1}F_{\wtd I}|_{\mc H_a}$ is the $S$ operator for $\mc B'(I)$. In the current general case, since we have $\coev_{\ovl a,a}=(\epsilon^\vee\otimes\id_a)\mu^*\iota$, we apply the proof of Proposition \ref{lb9} verbatim, except that we replace $\epsilon$ by $\epsilon^\vee$. So $\epsilon^{-1}$ is replaced by $(\epsilon^\vee)^{-1}$, which equals $\epsilon^*$ by \eqref{eq44}. This proves $\epsilon^* F_{\wtd I}|_{\mc H_a}$ is the $S$ operator for $\mc B'(\wtd I)$.

Similarly, one may use $\coev_{a,\ovl a}=(\id_a\otimes\epsilon)\mu^*\iota$ to show $\epsilon^{-1}S_{\wtd I}|_{\mc H_a}$ is the $S$ operator for $\mc B(I)$. Alternatively, we know that $\epsilon^*F_{\wtd I'}|_{\mc H_a}$ is the $S$ operator of $\mc B'(\wtd I')=\mc B(\wtd I)'$ (recall non-local Haag duality in Theorem \ref{lb10}), thus it is the $F$ operator of $\mc B(\wtd I)$. It's adjoint should be the $S$ operator of $\mc B(\wtd I)$. We use Corollary \ref{lb36} and Proposition \ref{lb32} to compute this adjoint:
\begin{align*}
(\epsilon^*F_{\wtd I'}|_{\mc H_a})^*=	S_{\wtd I}|_{\mc H_{\ovl a}}\cdot \epsilon=\ovl\epsilon S_{\wtd I}|_{\mc H_a}=\epsilon^{-1} S_{\wtd I}|_{\mc H_a}.
\end{align*}
\end{proof}

Hence we have:
\begin{pp}\label{lb15}
The modular conjugations and operators of $\mc B(\wtd I)$ and $\mc B'(\wtd I)$ with respect to $\iota\Omega$ are described by the following polar decompositions:
\begin{gather}
\epsilon^{-1} S_{\wtd I}\big|_{\mc H_a}=\big(\fk J_{\wtd I}\epsilon(\epsilon^*\epsilon)^{-\frac 12}\big)\cdot\big(\Delta_I^{\frac 12}(\epsilon^*\epsilon)^{\frac 12}\big)\Big|_{\mc H_a},\label{eq42}\\
\epsilon^* F_{\wtd I}\Big|_{\mc H_a}=\big(\vartheta\fk J_{\wtd I}\epsilon^\vee(\epsilon^*\epsilon)^{\frac 12}\big)\cdot\big(\Delta_I^{\frac 12}(\epsilon^*\epsilon)^{-\frac 12}\big)\Big|_{\mc H_a}.\label{eq43}
\end{gather}
\end{pp}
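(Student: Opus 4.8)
The plan is to verify directly that the two right-hand sides are genuine polar decompositions of the operators identified in the preceding proposition, and then to read off the modular data by uniqueness of the polar decomposition. By that proposition, $\epsilon^{-1}S_{\wtd I}|_{\mc H_a}$ and $\epsilon^*F_{\wtd I}|_{\mc H_a}$ are the (preclosed) $S$ operators of $\mc B(\wtd I)$ and $\mc B'(\wtd I)$ with respect to $\iota\Omega$; so by Tomita--Takesaki theory it suffices to exhibit each as a product (antiunitary)$\,\cdot\,$(positive), since the antiunitary factor is then forced to be the modular conjugation and the square of the positive factor the modular operator.

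The first point I would settle is that $\epsilon^*\epsilon\in\End_{\mc A}(\mc H_a)$ is a positive invertible morphism that \emph{commutes strongly} with $\Delta_I=\Delta_{\wtd I}|_{\mc H_a}$. Indeed, the first relation of \eqref{eq60} applied to the self-adjoint endomorphism $G=\epsilon^*\epsilon$ gives $(\epsilon^*\epsilon)\Delta_I\subset\Delta_I(\epsilon^*\epsilon)$ and the same for its adjoint, so by the bounded-operator criterion for strong commutativity recalled just before Proposition \ref{lb14}, $\epsilon^*\epsilon$ commutes strongly with $\Delta_I$. As $\epsilon$ is bounded with bounded inverse, $(\epsilon^*\epsilon)^{\pm1/2}$ are bounded and commute with $\Delta_I^{1/2}$; hence $\Delta_I^{1/2}(\epsilon^*\epsilon)^{1/2}$ and $\Delta_I^{1/2}(\epsilon^*\epsilon)^{-1/2}$ are positive, injective, self-adjoint operators sharing the domain of $\Delta_I^{1/2}$, and thus are legitimate candidates for the positive parts.

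Next I would check the products collapse as claimed. Using $S_{\wtd I}=\fk J_{\wtd I}\Delta_{\wtd I}^{1/2}$ and the commutativity above,
\begin{align*}
\big(\fk J_{\wtd I}\epsilon(\epsilon^*\epsilon)^{-1/2}\big)\big(\Delta_I^{1/2}(\epsilon^*\epsilon)^{1/2}\big)=\fk J_{\wtd I}\,\epsilon\,\Delta_I^{1/2},
\end{align*}
and the identity $\fk J_{\wtd I}|_{\mc H_{\ovl a}}\,\epsilon=\epsilon^{-1}\fk J_{\wtd I}|_{\mc H_a}$ — which is exactly the second relation of \eqref{eq60} for $G=\epsilon$ together with $\ovl\epsilon=\epsilon^{-1}$ from \eqref{eq44} — converts this to $\epsilon^{-1}\fk J_{\wtd I}\Delta_I^{1/2}=\epsilon^{-1}S_{\wtd I}|_{\mc H_a}$. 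For the $F$ equation I would argue identically with $F_{\wtd I}=\vartheta\fk J_{\wtd I}\Delta_{\wtd I}^{1/2}$: after cancelling the $(\epsilon^*\epsilon)^{\pm1/2}$ one is left with $\vartheta\fk J_{\wtd I}\epsilon^\vee\Delta_I^{1/2}$, and applying \eqref{eq60} to $G=\epsilon^\vee$ (noting $\ovl{\epsilon^\vee}=\epsilon^*$ from $G^{\vee\vee}=G$ and \eqref{eq58}) followed by naturality of $\vartheta$ under the morphism $\epsilon^*$ reproduces $\epsilon^*\vartheta\fk J_{\wtd I}\Delta_I^{1/2}=\epsilon^*F_{\wtd I}|_{\mc H_a}$.

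Finally I would verify that the left-hand factors are antiunitary. Writing the polar decomposition $\epsilon=w(\epsilon^*\epsilon)^{1/2}$ with $w\in\Hom_{\mc A}(\mc H_a,\mc H_{\ovl a})$ unitary (as $\epsilon$ is invertible), one gets $\epsilon(\epsilon^*\epsilon)^{-1/2}=w$, and using $\epsilon^\vee=(\epsilon^{-1})^*=(\epsilon^*)^{-1}$ from \eqref{eq44} also $\epsilon^\vee(\epsilon^*\epsilon)^{1/2}=w$; a one-line computation of $(\cdot)^*(\cdot)$ and $(\cdot)(\cdot)^*$ confirms unitarity. Since $\fk J_{\wtd I}|_{\mc H_{\ovl a}}:\mc H_{\ovl a}\to\mc H_a$ is antiunitary and $\vartheta$ is unitary, $\fk J_{\wtd I}w$ and $\vartheta\fk J_{\wtd I}w$ are antiunitary on $\mc H_a$, so uniqueness of the polar decomposition yields both displayed equations. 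I expect the only genuine care to be needed in justifying the strong commutativity of $\epsilon^*\epsilon$ with $\Delta_I$ (which is what makes the claimed positive parts truly positive and self-adjoint, rather than merely closed), and in the bookkeeping of which object $\fk J_{\wtd I}$ and $\vartheta$ act on (i.e.\ $\mc H_a$ versus $\mc H_{\ovl a}$); everything else is formal manipulation with the covariance \eqref{eq60}, the conjugation identities for $\epsilon$, and naturality of the twist.
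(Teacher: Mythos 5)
Your proposal is correct and follows essentially the same route as the paper: identify $\epsilon^{-1}S_{\wtd I}|_{\mc H_a}$ and $\epsilon^*F_{\wtd I}|_{\mc H_a}$ as the Tomita $S$ operators via the preceding proposition, use Proposition \ref{lb14} together with $\ovl\epsilon=\epsilon^{-1}$ to move $\epsilon$ (resp. $\epsilon^\vee$) past $\fk J_{\wtd I}$ and $\Delta_I$, and read off the polar decomposition. You merely spell out two points the paper leaves implicit — the strong commutativity of $\epsilon^*\epsilon$ with $\Delta_I$ (making the positive parts genuinely positive self-adjoint) and the unitarity of $\epsilon(\epsilon^*\epsilon)^{-\frac 12}=\epsilon^\vee(\epsilon^*\epsilon)^{\frac 12}$ — both of which are handled correctly.
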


\begin{proof}
Recall the polar decompositions \eqref{eq41}.  By proposition \ref{lb14}, $\Delta_I$ commutes with any homomorphism, and we have $\epsilon^{-1}\fk J_{\wtd I}\Delta_I^{\frac 12}=\fk J_{\wtd I}\ovl{\epsilon^{-1}}\Delta_I^{\frac 12}=\fk J_{\wtd I}\epsilon\Delta_I^{\frac 12}$. This proves \eqref{eq42}. Similarly we have $\epsilon^*\vartheta\fk J_{\wtd I}\Delta_I^{\frac 12}=\vartheta\fk J_{\wtd I}\ovl\epsilon^*\Delta_I^{\frac 12}=\vartheta\fk J_{\wtd I}\epsilon^\vee\Delta_I^{\frac 12}$. In addition, using \eqref{eq44} we have
\begin{align}
(\epsilon^\vee)^*\epsilon^\vee=\epsilon^{-1}\epsilon^\vee=\epsilon^{-1}(\epsilon^*)^{-1}=(\epsilon^*\epsilon)^{-1}.
\end{align}
This proves the polar decomposition \eqref{eq43}.
\end{proof}

We now prove the following modified Bisognano-Wichmann theorem.

\begin{thm}\label{lb27}
Let $Q=(\mc H_a,\mu,\iota)$ be a (non-necessarily standard) $C^*$-Frobenius algebra in $\RepdA$, let $\mc B$ and $\mc B'$ be the associated non-local extensions of $\mc A$, and let $\epsilon$ be the reflection operator (see \eqref{eq67}). Choose any $\wtd I\in\Jtd$. The following are true when acting on $\mc H_a$.

(a) (Modified geometric modular theorem) Let $D_{\wtd I}$ and $D'_{\wtd I}$ be respectively the modular operators of $\mc B(\wtd I)$ and $\mc B'(\wtd I)$ with respect to $\iota\Omega$. Then for any $t\in\mathbb R$,
\begin{align}
D_{\wtd I}^{\im t}=(\epsilon^*\epsilon)^{\im t}\delta_I(-2\pi t),\qquad (D'_{\wtd I})^{\im t}=(\epsilon^*\epsilon)^{-\im t}\delta_I(-2\pi t).\label{eq40}
\end{align}

(b) (PCT theorem) Let $\Theta^Q=\big(\fk J_{\wtd {\mbb S^1_+}}\epsilon(\epsilon^*\epsilon)^{-\frac 12}\big)\big|_{\mc H_a}$ be the modular conjugation of $\mc B(\wtd {\mbb S^1_+})$ with respect to $\iota\Omega$. Then for any $g\in\UPSU$,
\begin{gather}
\Theta^Q \cdot g \cdot\Theta^Q=\fk r g\fk r\\
\Theta^Q\cdot\mc B(\wtd I)\cdot\Theta^Q=\mc B'(\fk r\wtd I). 
\end{gather}
\end{thm}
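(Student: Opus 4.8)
The plan is to obtain both parts of Theorem \ref{lb27} as consequences of the polar decompositions in Proposition \ref{lb15}, the categorical Bisognano--Wichmann theorem (Theorem \ref{lb26}), and Lemma \ref{lb35}, so that passing from the standard to the general case amounts only to keeping track of the reflection operator $\epsilon$. The substantive geometric input (namely $\Delta_I^{\im t}=\delta_I(-2\pi t)$) is already available from Theorem \ref{lb26}, so the present theorem is essentially a ``twist'' of it.

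For part (a), I would start from Proposition \ref{lb15}, which presents $\epsilon^{-1}S_{\wtd I}|_{\mc H_a}$ and $\epsilon^* F_{\wtd I}|_{\mc H_a}$ as the $S$ operators of $\mc B(\wtd I)$ and $\mc B'(\wtd I)$, with positive parts $\Delta_I^{\frac 12}(\epsilon^*\epsilon)^{\frac 12}$ and $\Delta_I^{\frac 12}(\epsilon^*\epsilon)^{-\frac 12}$ respectively. The key point is that $\epsilon^*\epsilon\in\End_{\mc A}(\mc H_a)$ commutes strongly with $\Delta_I$ by Proposition \ref{lb14} (a homomorphism commutes with every power of the categorical modular operator), so these products are honestly positive self-adjoint and the modular operators are $D_{\wtd I}=\Delta_I\,\epsilon^*\epsilon$ and $D'_{\wtd I}=\Delta_I(\epsilon^*\epsilon)^{-1}$. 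Taking imaginary powers, using strong commutativity to split $(\Delta_I\,\epsilon^*\epsilon)^{\im t}=\Delta_I^{\im t}(\epsilon^*\epsilon)^{\im t}$, and invoking Theorem \ref{lb26}(a) to replace $\Delta_I^{\im t}$ by $\delta_I(-2\pi t)$ on $\mc H_a$, yields \eqref{eq40}. Since $\epsilon^*\epsilon$ is a morphism it intertwines the $\UPSU$-action (Lemma \ref{lb21}), hence $(\epsilon^*\epsilon)^{\im t}$ commutes with $\delta_I(-2\pi t)$ and the two factors may be written in either order.

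For part (b), the identity $\Theta^Q g\,\Theta^Q=\fk r g\fk r$ is exactly the second assertion of Lemma \ref{lb35}, once one observes (via \eqref{eq42} at $\wtd I=\wtd{\mbb S^1_+}$) that the modular conjugation of $\mc B(\wtd{\mbb S^1_+})$ is precisely $\Theta^Q=\fk J_{\wtd{\mbb S^1_+}}\epsilon(\epsilon^*\epsilon)^{-\frac 12}|_{\mc H_a}$; crucially, Lemma \ref{lb35} was proved for an arbitrary, not-necessarily-standard, $Q$. For the second relation I would first treat $\wtd I=\wtd{\mbb S^1_+}$: by Tomita--Takesaki theory $\Theta^Q\mc B(\wtd{\mbb S^1_+})\Theta^Q=\mc B(\wtd{\mbb S^1_+})'$, and by non-local Haag duality (Theorem \ref{lb10}(d)) together with $(\wtd{\mbb S^1_+})'=\wtd{\mbb S^1_-}=\fk r\wtd{\mbb S^1_+}$ this equals $\mc B'(\fk r\wtd{\mbb S^1_+})$. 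For general $\wtd I$, choose $g\in\UPSU$ with $g\wtd{\mbb S^1_+}=\wtd I$, write $\mc B(\wtd I)=g\mc B(\wtd{\mbb S^1_+})g^{-1}$ (Theorem \ref{lb10}(e)), commute $\Theta^Q$ past $g$ using $\Theta^Q g=(\fk r g\fk r)\Theta^Q$ (from the first relation and $(\Theta^Q)^2=\id$), and apply the Möbius covariance of $\mc B'$ to land on $\mc B'((\fk r g\fk r)\fk r\wtd{\mbb S^1_+})=\mc B'(\fk r\wtd I)$, exactly as in the proof of \eqref{eq28}.

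I expect the main obstacle to be bookkeeping rather than substance. Two points warrant care: first, that $\Delta_I^{\frac 12}$ and $(\epsilon^*\epsilon)^{\pm\frac 12}$ really combine into the genuine positive part of a polar decomposition, which is where the strong commutativity furnished by Proposition \ref{lb14} is indispensable (without it the candidate $\Delta_I\,\epsilon^*\epsilon$ need not even be self-adjoint); and second, that the combined action of the orientation-reversing $\fk r$ and of $\UPSU$ on arg-valued intervals satisfies $(\fk r g\fk r)(\fk r\wtd{\mbb S^1_+})=\fk r(g\wtd{\mbb S^1_+})=\fk r\wtd I$, so that the covariance step in part (b) lands on the correct clockwise complement.
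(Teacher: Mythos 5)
Your proposal is correct and follows essentially the same route as the paper: part (a) is obtained by squaring the polar decompositions of Proposition \ref{lb15} (using the strong commutativity of $\epsilon^*\epsilon$ with $\Delta_I$ from Proposition \ref{lb14}) and then invoking the geometric modular theorem for $\Delta_I$, and part (b) is the proof of Theorem \ref{lb13}(b) via Lemma \ref{lb35} carried over verbatim, since that lemma never assumed standardness of $Q$. You merely spell out details the paper leaves implicit.
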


\begin{proof}
The proof of PCT theorem is exactly the same as in the standard case (Theorem \ref{lb13}), which uses Lemma \ref{lb35}. By proposition \ref{lb15}, we have $D_{\wtd I}^{\frac 12}=\big(\Delta_I^{\frac 12}(\epsilon^*\epsilon)^{\frac 12}\big)\Big|_{\mc H_a}$ and $(D'_{\wtd I})^{\frac 12}=\big(\Delta_I^{\frac 12}(\epsilon^*\epsilon)^{-\frac 12}\big)\Big|_{\mc H_a}$. Thus, equations \eqref{eq40} follow directly from \eqref{eq26}.
\end{proof}	

\begin{co}
For the non-local extension $\mc B$ obtained by a $C^*$-Frobenius algebra $Q$ in $\RepdA$, the standard geometric modular theorem \eqref{eq26} holds if and only if $Q$ is standard.
\end{co}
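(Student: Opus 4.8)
The plan is to read the corollary off directly from the modified geometric modular theorem, Theorem~\ref{lb27}(a), combined with the characterization of standardness in terms of the reflection operator $\epsilon$ recorded just after \eqref{eq67}. No new machinery is needed: everything reduces to comparing the factor $(\epsilon^*\epsilon)^{\im t}$ against the identity.

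First I would recall equation~\eqref{eq40}, which asserts that the modular operator $D_{\wtd I}$ of $\mc B(\wtd I)$ with respect to the cyclic separating vector $\iota\Omega$ satisfies, when acting on $\mc H_a$,
\begin{align*}
D_{\wtd I}^{\im t}=(\epsilon^*\epsilon)^{\im t}\,\delta_I(-2\pi t)\qquad(t\in\mbb R).
\end{align*}
Right-multiplying by $\delta_I(2\pi t)$ gives $(\epsilon^*\epsilon)^{\im t}=D_{\wtd I}^{\im t}\,\delta_I(2\pi t)$. Hence the standard geometric modular theorem \eqref{eq26} for $\mc B$, namely $D_{\wtd I}^{\im t}=\delta_I(-2\pi t)$ for all $t$, holds if and only if $(\epsilon^*\epsilon)^{\im t}=\id_a$ for all $t\in\mbb R$.

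Next I would show that this last condition is equivalent to $\epsilon^*\epsilon=\id_a$. Since $\epsilon\in\Hom_{\mc A}(\mc H_a,\mc H_{\ovl a})$ is invertible, $\epsilon^*\epsilon$ is a strictly positive (self-adjoint) operator on $\mc H_a$; by the Borel functional calculus, the scalar relation $\lambda^{\im t}=1$ for all $t\in\mbb R$ forces $\lambda=1$ for every $\lambda>0$, so $(\epsilon^*\epsilon)^{\im t}=\id_a$ for all $t$ says exactly that the spectrum of $\epsilon^*\epsilon$ is $\{1\}$, i.e.\ $\epsilon^*\epsilon=\id_a$. The converse implication is immediate. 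This functional-calculus step is the only mildly technical point, and there is no real obstacle.

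Finally I would invoke the observation following \eqref{eq67} that $Q$ is standard if and only if $\epsilon$ is unitary. Because $\epsilon$ is invertible, $\epsilon$ is unitary precisely when $\epsilon^*\epsilon=\id_a$ (then $\epsilon^*=\epsilon^{-1}$, whence $\epsilon\epsilon^*=\id_{\ovl a}$ as well). Chaining the equivalences established above yields
\begin{align*}
\eqref{eq26}\text{ holds}\iff(\epsilon^*\epsilon)^{\im t}=\id_a\ \forall t\iff \epsilon^*\epsilon=\id_a\iff \epsilon\text{ unitary}\iff Q\text{ standard},
\end{align*}
which is the assertion of the corollary.
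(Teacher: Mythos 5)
Your proof is correct and follows essentially the same route as the paper: read \eqref{eq26} against the modified geometric modular theorem \eqref{eq40} and reduce to the equivalence $\epsilon^*\epsilon=\id_a\iff\epsilon$ unitary $\iff Q$ standard. The paper's own proof is just the last chain of equivalences stated in one line; your spelled-out functional-calculus justification that $(\epsilon^*\epsilon)^{\im t}=\id_a$ for all $t$ forces $\epsilon^*\epsilon=\id_a$ is the same argument made explicit.
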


\begin{proof}
$Q$ is standard if and only if the invertible homomorphism $\epsilon$ is unitary, if and only if $\epsilon^*\epsilon=\id_a$.
\end{proof}

\section{Unbounded field operators in rigid categorical extensions}

In this section, we discuss the relation between the domain of $\Delta_I^{\frac 12}$ and the preclosedness of certain unbounded operators in $\Ef$. Since the modular conjugations do not rely on dualizing data, we work with dualizable objects instead of dualized ones. Recall that $\Ef$ is the categorical extension associated to the braided $C^*$-tensor category $\RepfA$ of dualizable M\"obius covariant representations of $\mc A$. (Cf. Section \ref{lb30}.)

First, we recall the following well-known fact (cf. \cite{Tak02} section VI.1). A proof is included for the reader's convenience.

\begin{pp}\label{lb17}
Let $\mc M$ be a von Neumann algebra on a Hilbert space $\mc H$ with commutant $\mc M'$, and let $\Omega$ be a cyclic separating vector of $\mc M$. Let $\Delta,\fk J$ be the modular operator and conjugation of $(\mc M,\Omega)$, and set $S=\fk J\Delta^{\frac 12}$. For any $\xi\in\mc H$, define an unbounded operator $\scr L(\xi)$ with domain $\mc M'\Omega$ such that $\scr L(\xi)y\Omega=y\xi$ for any $y\in\mc M'$. Then the following two conditions are equivalent.

(a) $\Omega\in\Dom(\scr L(\xi)^*)$.

(b) $\xi\in\Dom(\Delta^{\frac 12})$. \\
If either  (a) or (b) is true, then $\scr L(\xi)$ is preclosed, and $S\xi=\scr L(\xi)^*\Omega$.
\end{pp}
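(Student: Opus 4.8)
The plan is to reduce everything to the single identity
\[
\langle \scr L(\xi)y\Omega,\Omega\rangle=\langle y\Omega,S\xi\rangle\qquad(y\in\mc M'),
\]
which, once available for $\xi$ in the right domain, simultaneously delivers the equivalence and the formula $S\xi=\scr L(\xi)^*\Omega$. First I would record the standing facts: since $\Omega$ is separating for $\mc M$ it is cyclic for $\mc M'$, so $\mc M'\Omega$ is dense, $\scr L(\xi)$ is densely defined, and $\scr L(\xi)^*$ is well defined. I will also use the recalled Tomita--Takesaki facts that $S^*=F$, that $Fy\Omega=y^*\Omega$, and that $\mc M\Omega$ and $\mc M'\Omega$ are cores for the closures $S$ and $F$ respectively, together with $\Dom(\Delta^{\frac12})=\Dom(S)$.

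For (b) $\Rightarrow$ (a) and the formula, I would assume $\xi\in\Dom(\Delta^{\frac12})=\Dom(S)$ and pick $x_n\in\mc M$ with $x_n\Omega\to\xi$ and $Sx_n\Omega=x_n^*\Omega\to S\xi$ in graph norm (possible because $\mc M\Omega$ is a core for $S$). Then for $y\in\mc M'$, using $x_ny=yx_n$,
\[
\langle \scr L(\xi)y\Omega,\Omega\rangle=\langle y\xi,\Omega\rangle=\lim_n\langle yx_n\Omega,\Omega\rangle=\lim_n\langle y\Omega,x_n^*\Omega\rangle=\langle y\Omega,S\xi\rangle ,
\]
which exhibits $\Omega\in\Dom(\scr L(\xi)^*)$ with $\scr L(\xi)^*\Omega=S\xi$.

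For (a) $\Rightarrow$ (b), I would set $\zeta=\scr L(\xi)^*\Omega$, so that $\langle y\xi,\Omega\rangle=\langle y\Omega,\zeta\rangle$ for all $y\in\mc M'$. For $\eta=y\Omega\in\mc M'\Omega$ one computes
\[
\langle F\eta,\xi\rangle=\langle y^*\Omega,\xi\rangle=\langle\Omega,y\xi\rangle=\overline{\langle y\xi,\Omega\rangle}=\overline{\langle\eta,\zeta\rangle},
\]
and since $\mc M'\Omega$ is a core for $F$ this extends to every $\eta\in\Dom(F)$ by graph-norm approximation. By the definition of the antilinear adjoint, $\langle F\eta,\xi\rangle=\overline{\langle\eta,F^*\xi\rangle}$ for $\eta\in\Dom(F)$, so this says precisely that $\xi\in\Dom(F^*)$ with $F^*\xi=\zeta$; as $F^*=S^{**}=S$, we get $\xi\in\Dom(S)=\Dom(\Delta^{\frac12})$ and $S\xi=\zeta=\scr L(\xi)^*\Omega$.

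Finally, for preclosedness it suffices to enlarge $\Dom(\scr L(\xi)^*)$ to the dense set $\mc M'\Omega$. Granting (a)/(b), so that $\Omega\in\Dom(\scr L(\xi)^*)$ with $\scr L(\xi)^*\Omega=S\xi$, I would compute, for any $y,y'\in\mc M'$,
\[
\langle \scr L(\xi)y\Omega,y'\Omega\rangle=\langle y'^{*}y\xi,\Omega\rangle=\langle \scr L(\xi)(y'^{*}y\Omega),\Omega\rangle=\langle y'^{*}y\Omega,S\xi\rangle=\langle y\Omega,y'S\xi\rangle ,
\]
so that $y'\Omega\in\Dom(\scr L(\xi)^*)$ (with $\scr L(\xi)^*y'\Omega=y'S\xi$). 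Thus $\mc M'\Omega\subset\Dom(\scr L(\xi)^*)$ is dense and $\scr L(\xi)$ is preclosed. The only genuinely delicate points are the two graph-norm passages transferring the identities from the cores $\mc M\Omega$ and $\mc M'\Omega$ to the closed operators $S$ and $F$, and bookkeeping the complex conjugates in the antilinear adjoint relation; the repeated use of $\mc M\leftrightarrow\mc M'$ commutation is routine.
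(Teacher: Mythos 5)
Your proof is correct and follows essentially the same route as the paper: both directions rest on the duality $F=S^{*}$ with $Fy\Omega=y^{*}\Omega$ and $\mc M'\Omega$ a core for $F$, and your final computation showing $\scr L(\xi)^{*}y'\Omega=y'S\xi$ is exactly the paper's inclusion $\scr L(S\xi)\subset\scr L(\xi)^{*}$ in disguise. The only cosmetic difference is that in (b) $\Rightarrow$ (a) you approximate $\xi$ in the graph norm of $S$ via the core $\mc M\Omega$, whereas the paper avoids any approximation by invoking the adjoint relation $\bk{S\xi|\eta}=\bk{F\eta|\xi}$ directly and obtains (a), the formula, and preclosedness all at once from $\scr L(S\xi)\subset\scr L(\xi)^{*}$.
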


\begin{proof}
Let $\fk J$ be the modular conjugation. Recall that $S:=\fk J\Delta^{\frac 12}$ has core $\mc M\Omega$, $F=S^*=\fk J\Delta^{-\frac 12}$ has core $\mc M'\Omega$, and $Sx\Omega=x^*\Omega,Fy\Omega=y^*\Omega$ for any $x\in\mc M,y\in\mc M'$.

First, we assume that (a) is true. Then, for any $y\in\mc M'$, we compute
\begin{align*}
\bk{S^*y\Omega|\xi}=\bk{y^*\Omega|\xi}=\bk{\Omega|y\xi}=\bk{\Omega|\scr L(\xi)y\Omega}=\bk{\scr L(\xi)^*\Omega|y\Omega},
\end{align*}
which shows that $\xi\in\Dom(S)=\Dom(\Delta^{\frac 12})$ and $S\xi=\scr L(\xi)^*\Omega$.

Next, assume that (b) is true. Choose any $y_1,y_2\in\mc M'$. Then
\begin{align*}
&\bk{\scr L(S\xi)y_1\Omega|y_2\Omega}=\bk{y_1S\xi|y_2\Omega}=\bk{S\xi|y_1^*y_2\Omega}=\bk{Fy_1^*y_2\Omega|\xi}=\bk{y_2^*y_1\Omega|\xi}\\
=&\bk{y_1\Omega|y_2\xi}=\bk{y_1\Omega|\scr L(\xi)y_2\Omega},
\end{align*}
which shows $\scr L(S\xi)\subset\scr L(\xi)^*$. Thus  $\Omega\in\Dom(\scr L(S\xi))\subset\Dom(\scr L(\xi)^*)$, and $S\xi=\scr L(S\xi)\Omega=\scr L(\xi)^*\Omega$. Since $\scr L(S\xi)$ has dense domain, so does $\scr L(\xi)^*$. Therefore $\scr L(\xi)$ is preclosed.
\end{proof}

We would like to generalize the above proposition to $\Ef$. For any $\wtd I\in\Jtd$, recall that $\wtd I'$ is the clockwise complement of $\wtd I$. We define ${\bpr\wtd I}\in\Jtd$ such that $(\bpr\wtd I)'=\wtd I$, and call ${\bpr\wtd I}$ the \textbf{anticlockwise complement of $\wtd I$}. Choose $\mc H_i\in\Obj(\RepA)$. (We do not assume $\mc H_i$ to be dualizable.)  For any $\xi\in\mc H_i$, we let $\scr L(\xi,\wtd I)$ (resp. $\scr R(\xi,\wtd I)$) act on any $\mc H_j\in\Obj(\RepA)$ as an unbounded operator $\mc H_j\rightarrow\mc H_i\boxtimes\mc H_j$ (resp. $\mc H_j\rightarrow\mc H_j\boxtimes\mc H_i$) with domain $\mc H_j(I')$ such that for any $\eta\in\mc H_j(I')$,
\begin{align}
\scr L(\xi,\wtd I)\eta=R(\eta,\wtd I')\xi,\quad \text{resp.}  \quad \scr R(\xi,\wtd I)\eta=L(\eta,\bpr\wtd I)\xi.
\end{align}
It is clear that $\Omega$ is inside the domains of $\scr L(\xi,\wtd I)|_{\mc H_0}$ and $\scr R(\xi,\wtd I)|_{\mc H_0}$, and the state-field correspondence
\begin{align}
\scr L(\xi,\wtd I)\Omega=\scr R(\xi,\wtd I)\Omega=\xi
\end{align}
is satisfied. We also have that
\begin{align}
\scr L(\xi,\wtd I)|_{\mc H_0}=\scr R(\xi,\wtd I)|_{\mc H_0},
\end{align}
and that they depend only on $I$ but not on the choice of $\arg_I$. Indeed, both operators send any $y\Omega\in\mc A(I')\Omega$ to $y\xi$.

The following is obvious.

\begin{lm}
Choose $\mc H_i,\mc H_j,\mc H_k,\mc H_l\in\Obj(\RepA)$ and $\xi\in\mc H_i$. If $G\in\Hom_{\mc A}(\mc H_i,\mc H_j)$ and $K\in\Hom_{\mc A}(\mc H_k,\mc H_l)$, then for each $\eta\in\mc H_k(I')$,
\begin{gather}
	\scr L(G\xi,\wtd I)K\eta=(G\otimes K)\scr L(\xi,\wtd I)\eta,\label{eq68}\\
\scr R(G\xi,\wtd I)K\eta=(K\otimes G)\scr R(\xi,\wtd I)\eta
\end{gather}
\end{lm}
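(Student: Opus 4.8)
The plan is to reduce both identities to the already-established relation \eqref{eq8} for the bounded operators $L$ and $R$, together with the inclusion \eqref{eq62}, which guarantees that $K$ preserves the relevant domains. First I would record the domain compatibility: since $K\in\Hom_{\mc A}(\mc H_k,\mc H_l)$, relation \eqref{eq62} gives $K\mc H_k(I')\subset\mc H_l(I')$, so for $\eta\in\mc H_k(I')$ the vector $K\eta$ lies in $\mc H_l(I')$, which is exactly the domain of $\scr L(G\xi,\wtd I)$ and $\scr R(G\xi,\wtd I)$ when viewed as acting on $\mc H_l$. Hence both sides of each asserted identity are defined.

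For the first identity I would unwind the definition of $\scr L$ acting on $\mc H_l$, obtaining $\scr L(G\xi,\wtd I)K\eta=R(K\eta,\wtd I')(G\xi)$. Now apply the second equation of \eqref{eq8} with the morphism acting on the label being $K$ (so that $K\eta$ is the label, with $\eta\in\mc H_k(I')$), the morphism acting on the vector being $G$ (so that $G\xi$ is the vector, with $\xi\in\mc H_i$), and the arg-valued interval being $\wtd I'$. Read from right to left, that equation gives $R(K\eta,\wtd I')(G\xi)=(G\otimes K)R(\eta,\wtd I')\xi$; since $R(\eta,\wtd I')\xi=\scr L(\xi,\wtd I)\eta$ by definition, the first identity follows.

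The second identity is entirely parallel: unwinding $\scr R$ gives $\scr R(G\xi,\wtd I)K\eta=L(K\eta,\bpr\wtd I)(G\xi)$, and the first equation of \eqref{eq8} (again with label morphism $K$, vector morphism $G$, and arg-valued interval $\bpr\wtd I$) turns the right-hand side into $(K\otimes G)L(\eta,\bpr\wtd I)\xi=(K\otimes G)\scr R(\xi,\wtd I)\eta$.

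I do not expect a genuine obstacle here; the content is bookkeeping, which is why the statement is labelled obvious. The only points requiring care are verifying the domain inclusion via \eqref{eq62}, so that the unbounded operators are applied to vectors in their domains, and matching the roles in \eqref{eq8} correctly: the morphism $K$ that multiplies the input vector $\eta$ of $\scr L$ (resp. $\scr R$) becomes the morphism acting on the \emph{label} of the underlying $R$ (resp. $L$) operator, while $G$, which acts on the label $\xi$ of $\scr L$ (resp. $\scr R$), becomes the morphism acting on the \emph{vector}. This role-swap, combined with the fact that $R(\cdot,\wtd I')$ places its label in the second tensor factor while $L(\cdot,\bpr\wtd I)$ places its label in the first, is exactly what produces the factor order $(G\otimes K)$ for $\scr L$ and $(K\otimes G)$ for $\scr R$.
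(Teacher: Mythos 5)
Your proof is correct and follows essentially the same route as the paper: note $K\eta\in\mc H_l(I')$, unwind the definition of $\scr L$ (resp. $\scr R$) into $R(K\eta,\wtd I')G\xi$ (resp. $L(K\eta,\bpr\wtd I)G\xi$), apply \eqref{eq8}, and rewind. The only difference is cosmetic — you spell out the second identity and the domain inclusion via \eqref{eq62}, where the paper states the former is "similar" and the latter without citation.
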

\begin{proof}
Note that $K\eta\in\mc H_l(I')$. We have
\begin{align*}
\scr L(G\xi,\wtd I)K\eta=R(K\eta,\wtd I')G\xi\xlongequal{\eqref{eq8}} (G\otimes K) R(\eta,\wtd I')\xi=(G\otimes K)\scr L(\xi,\wtd I)\eta.
\end{align*}
This proves the first relation. A similar argument proves the second one.
\end{proof}

\begin{df}
For any $\mc H_i\in\Obj(\RepA)$ and $I\in\mc J$,  $\mc H_i^\pr(I)$  is the set of all $\xi\in\mc H_i$ such that $\scr L(\xi,\wtd I)|_{\mc H_0}=\scr R(\xi,\wtd I)|_{\mc H_0}$ is preclosed. It is clear that $\mc H_i(I)\subset\mc H_i^\pr(I)$.
\end{df}

It turns out that for any $\xi\in\mc H_i^\pr(I)$, $\scr L(\xi,\wtd I)$ is preclosed on any $\mc H_j\in\Obj(\RepA)$. To prove this, we first need a lemma.

\begin{lm}
Let $\mc H_i,\mc H_j\in\Obj(\RepA),\wtd I,\wtd J\in\Jtd$, and assume that $\wtd J$ is clockwise  to $\wtd I$. If $\xi\in\mc H_i$, $\xi_0\in\Dom((\scr L(\xi,\wtd I)|_{\mc H_0})^*)$, and $\eta,\eta_0\in\mc H_j(J)$, then
\begin{align}
\bk{\scr L(\xi,\wtd I)\eta|R(\eta_0,\wtd J)\xi_0}=\bk{R(\eta_0,\wtd J)^*\eta|\scr L(\xi,\wtd I)^*\xi_0}.\label{eq45}
\end{align}	
\end{lm}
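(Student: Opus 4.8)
The plan is to reduce the identity \eqref{eq45} to the fusion relation for products of the form $R(\cdot)^*R(\cdot)$ together with the defining property of the Hilbert-space adjoint. First I would record the geometric preliminary: since $\wtd J$ is clockwise to $\wtd I$, the underlying intervals satisfy $J\subset I'$ and the arg-functions agree on $J$, so $\wtd J\subset\wtd I'$. In particular $\eta,\eta_0\in\mc H_j(J)\subset\mc H_j(I')$, so $\eta$ lies in the domain of $\scr L(\xi,\wtd I)$, and by the isotony axiom $R(\eta,\wtd J)=R(\eta,\wtd I')$ and $R(\eta_0,\wtd J)=R(\eta_0,\wtd I')$ on every object. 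Unwinding the definition of $\scr L$ then gives $\scr L(\xi,\wtd I)\eta=R(\eta,\wtd I')\xi$, an element of $\mc H_i\boxtimes\mc H_j$. I would also note the object bookkeeping: $\xi_0\in\mc H_i$, since it lies in the domain of the adjoint of the linear map $\scr L(\xi,\wtd I)|_{\mc H_0}\colon\mc H_0\to\mc H_i\boxtimes\mc H_0=\mc H_i$, while $R(\eta_0,\wtd J)^*\eta\in\mc H_0$.

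Next I would move the bounded operator $R(\eta_0,\wtd J)=R(\eta_0,\wtd I')$ across the inner product on the left-hand side, obtaining
\begin{align*}
\bk{\scr L(\xi,\wtd I)\eta\mid R(\eta_0,\wtd J)\xi_0}=\bk{R(\eta_0,\wtd I')^*R(\eta,\wtd I')\xi\mid\xi_0}.
\end{align*}
The key step is then to apply the fusion relation of Proposition \ref{lb5}(b) to the composite $R(\eta_0,\wtd I')^*R(\eta,\wtd I')$, with label objects $\mc H_0$ and $\mc H_j$ (so that $\eta\in\mc H_j(J)=(\mc H_0\boxtimes\mc H_j)(I')$ plays the role of the ``$\phi$'' there). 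This yields $R(\eta_0,\wtd I')^*R(\eta,\wtd I')|_{\mc H_i}=R(\zeta,\wtd I')|_{\mc H_i}$, where $\zeta:=R(\eta_0,\wtd I')^*\eta=R(\eta_0,\wtd J)^*\eta\in\mc H_0(I')$. Since Proposition \ref{lb5} holds for arbitrary $\mc H_i\in\Obj(\RepA)$, no dualizability of $\mc H_i$ is needed.

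Finally, reading $R(\zeta,\wtd I')\xi$ back through the definition of $\scr L$ gives $R(\zeta,\wtd I')\xi=\scr L(\xi,\wtd I)\zeta$ (acting now on $\mc H_0$, as $\zeta\in\mc H_0(I')$), so the left-hand side becomes $\bk{\scr L(\xi,\wtd I)\zeta\mid\xi_0}$. Because $\zeta$ lies in the domain of $\scr L(\xi,\wtd I)|_{\mc H_0}$ and $\xi_0$ lies in the domain of its adjoint, the defining property of the adjoint turns this into $\bk{\zeta\mid\scr L(\xi,\wtd I)^*\xi_0}=\bk{R(\eta_0,\wtd J)^*\eta\mid\scr L(\xi,\wtd I)^*\xi_0}$, which is exactly the right-hand side of \eqref{eq45}. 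The only mildly delicate points are tracking which object each vector lives on and verifying $\wtd J\subset\wtd I'$; the substantive content is carried entirely by Proposition \ref{lb5}(b) and the definition of the adjoint, so I do not anticipate a genuine obstacle.
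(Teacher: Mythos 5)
Your proposal is correct and follows essentially the same route as the paper: identify $\scr L(\xi,\wtd I)\eta$ with $R(\eta,\wtd I')\xi$, move the bounded operator $R(\eta_0,\wtd J)^*$ across the inner product, apply the fusion relation of Proposition \ref{lb5}(b) to rewrite $R(\eta_0,\wtd J)^*R(\eta,\wtd J)$ as $R(R(\eta_0,\wtd J)^*\eta,\wtd I')$, and finish with the definition of the adjoint. The only cosmetic difference is that the paper works with the label $\wtd J$ and invokes isotony at the end, whereas you pass to $\wtd I'$ at the outset; these are equivalent.
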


\begin{proof}
Choose $\xi,\xi_0,\eta,\eta_0$ as in the lemma.  Recall that by proposition \ref{lb5}, we have $R(\eta_0,\wtd J)^*\eta\in\mc H_0(J)$ and
\begin{gather*}
R(R(\eta_0,\wtd J)^*\eta,\wtd I')|_{\mc H_i}=R(R(\eta_0,\wtd J)^*\eta,\wtd J)|_{\mc H_i}=R(\eta_0,\wtd J)^*R(\eta,\wtd J)|_{\mc H_i}.
\end{gather*}
Thus
\begin{align*}
&\bk{\scr L(\xi,\wtd I)\eta|R(\eta_0,\wtd J)\xi_0}=\bk{R(\eta,\wtd J)\xi|R(\eta_0,\wtd J)\xi_0}=\bk{R(\eta_0,\wtd J)^*R(\eta,\wtd J)\xi|\xi_0}\\
=&\bk{R(R(\eta_0,\wtd J)^*\eta,\wtd I')\xi|\xi_0}=\bk{\scr L(\xi,\wtd I)R(\eta_0,\wtd J)^*\eta|\xi_0}=\bk{R(\eta_0,\wtd J)^*\eta|\scr L(\xi,\wtd I)^*\xi_0}.
\end{align*}
\end{proof}

\begin{thm}\label{lb16}
Choose any $\mc H_i\in\Obj(\RepA)$, $\wtd I\in\Jtd$, and $\xi\in\mc H_i^\pr(I)$. Then $\scr L(\xi,\wtd I)|_{\mc H_j}$ and $\scr R(\xi,\wtd I)|_{\mc H_j}$ are preclosed for any $\mc H_j\in\Obj(\RepA)$.
\end{thm}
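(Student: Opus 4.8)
The plan is to prove preclosedness by exhibiting a dense domain for the Hilbert-space adjoint $(\scr L(\xi,\wtd I)|_{\mc H_j})^*$; since $\scr L(\xi,\wtd I)|_{\mc H_j}$ already has the dense domain $\mc H_j(I')$, this is equivalent to preclosedness. This is the categorical analogue of the implication (b)$\Rightarrow$preclosed in Proposition \ref{lb17}, with the bounded ``right'' operators $R(\eta_0,\wtd I')$ playing the role of the commutant $\mc M'$. Because $\xi\in\mc H_i^\pr(I)$, the operator $\scr L(\xi,\wtd I)|_{\mc H_0}$ is preclosed, so $D:=\Dom\big((\scr L(\xi,\wtd I)|_{\mc H_0})^*\big)$ is dense in $\mc H_i$; I will use $D$ as the source of a total family of vectors lying in $\Dom\big((\scr L(\xi,\wtd I)|_{\mc H_j})^*\big)$.

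First I would check that every vector of the form $R(\eta_0,\wtd I')\xi_0$, with $\eta_0\in\mc H_j(I')$ and $\xi_0\in D$, lies in $\Dom\big((\scr L(\xi,\wtd I)|_{\mc H_j})^*\big)$. Since $\wtd I'$ is clockwise to $\wtd I$, the preceding Lemma applies with $\wtd J=\wtd I'$, giving for every $\eta\in\mc H_j(I')$ (which is precisely the domain)
\[
\bk{\scr L(\xi,\wtd I)\eta\,\big|\,R(\eta_0,\wtd I')\xi_0}=\bk{R(\eta_0,\wtd I')^*\eta\,\big|\,(\scr L(\xi,\wtd I)|_{\mc H_0})^*\xi_0}.
\]
The right-hand side is bounded in modulus by $\|R(\eta_0,\wtd I')\|\cdot\|(\scr L(\xi,\wtd I)|_{\mc H_0})^*\xi_0\|\cdot\|\eta\|$, a constant multiple of $\|\eta\|$ because $R(\eta_0,\wtd I')$ is bounded and $\xi_0\in D$. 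Hence $\eta\mapsto\bk{\scr L(\xi,\wtd I)\eta|R(\eta_0,\wtd I')\xi_0}$ is bounded on the domain, which is exactly the statement that $R(\eta_0,\wtd I')\xi_0$ belongs to the adjoint's domain.

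Next I would verify that these vectors are total in $\mc H_i\boxtimes\mc H_j$. Fixing $\eta_0$, boundedness of $R(\eta_0,\wtd I')$ together with density of $D$ gives $\ovl{R(\eta_0,\wtd I')D}=\ovl{R(\eta_0,\wtd I')\mc H_i}$; consequently any $\zeta\in\mc H_i\boxtimes\mc H_j$ orthogonal to all $R(\eta_0,\wtd I')\xi_0$ is orthogonal to $R(\mc H_j(I'),\wtd I')\mc H_i$, which spans a dense subspace of $\mc H_i\boxtimes\mc H_j$ by the density-of-fusion-products axiom (Definition \ref{lb1}(d)). Thus $\zeta=0$, the family is total, so $(\scr L(\xi,\wtd I)|_{\mc H_j})^*$ is densely defined and $\scr L(\xi,\wtd I)|_{\mc H_j}$ is preclosed. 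The preclosedness of $\scr R(\xi,\wtd I)|_{\mc H_j}$ then follows by the mirror argument: $\xi\in\mc H_i^\pr(I)$ also makes $\scr R(\xi,\wtd I)|_{\mc H_0}=\scr L(\xi,\wtd I)|_{\mc H_0}$ preclosed, and one repeats the above with the roles of $L$ and $R$ (and of clockwise and anticlockwise) interchanged, testing against $L(\eta_0,\bpr\wtd I)\xi_0$ and invoking the density of $L(\mc H_j(I'),\bpr\wtd I)\mc H_i$ in $\mc H_j\boxtimes\mc H_i$ together with the evident mirror of the Lemma.

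I expect the only real content to reside in the boundedness estimate of the second paragraph, which rests entirely on the Lemma: its function is to rewrite the pairing of the unbounded $\scr L(\xi,\wtd I)$ against a right-localized vector as a pairing involving the bounded operator $R(\eta_0,\wtd I')^*$, thereby shifting all the unboundedness onto the fixed vector $(\scr L(\xi,\wtd I)|_{\mc H_0})^*\xi_0$ — exactly the mechanism that forces $R(\eta_0,\wtd I')\xi_0$ into the adjoint's domain. The remaining steps are soft (continuity plus the density axiom), so I do not anticipate a genuine obstruction there; the mirror case for $\scr R$ requires only careful bookkeeping with the anticlockwise complement $\bpr\wtd I$.
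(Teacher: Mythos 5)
Your proof is correct and follows essentially the same route as the paper's: both rest on the identity of the preceding Lemma (applied with $\wtd J=\wtd I'$) to pair $\scr L(\xi,\wtd I)\eta$ against the total family $R(\mc H_j(I'),\wtd I')\,\Dom\big((\scr L(\xi,\wtd I)|_{\mc H_0})^*\big)$, whose density comes from the density-of-fusion-products axiom. The only difference is cosmetic — you phrase preclosedness via the densely defined adjoint, while the paper uses the equivalent sequential criterion ($\eta_n\to 0$, $\scr L(\xi,\wtd I)\eta_n\to\chi$ implies $\chi=0$).
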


\begin{proof}
Let $\wtd J=\wtd I'$. Assume that $\eta_n$ is a sequence of vectors in $\mc H_j(J)$ converging to $0$ such that $\scr L(\xi,\wtd I)\eta_n$ converges to $\chi\in\mc H_i\boxtimes\mc H_j$. We shall show that $\chi=0$. Since $\scr L(\xi,\wtd I)|_{\mc H_0}$ is preclosed, $\mc W:=\Dom((\scr L(\xi,\wtd I)|_{\mc H_0})^*)$ is a dense subspace of $\mc H_i$. Since $R(\eta_0,\wtd J)$ is bounded for any $\eta_0\in\mc H_j(J)$, we conclude that $R(\mc H_j(J),\wtd J)\mc W$ is  dense in $R(\mc H_j(J),\wtd J)\mc H_i$ which spans a dense subspace of $\mc H_i\boxtimes\mc H_j$ by the density axiom of $\scr E$. Therefore, it suffices to show that $\bk{\chi|R(\eta_0,\wtd J)\xi_0}=0$ for any $\xi_0\in\mc W$ and $\eta_0\in\mc H_j(J)$. We notice that $\bk{\chi|R(\eta_0,\wtd J)\xi_0}$ is the limit of
\begin{align*}
\bk{\scr L(\xi,\wtd I)\eta_n|R(\eta_0,\wtd J)\xi_0}\xlongequal{\eqref{eq45}} \bk{R(\eta_0,\wtd J)^*\eta_n|\scr L(\xi,\wtd I)^*\xi_0},
\end{align*}
which converges to $0$ since $R(\eta_0,\wtd J)$ is bounded. This proves that $\scr L(\xi,\wtd I)|_{\mc H_j}$ is precloded. As for $\scr R(\xi,\wtd I)|_{\mc H_j}$, the argument is similar.
\end{proof}

We need two lemmas before we prove the main result of this section.

\begin{lm}\label{lb37}
Let $A,B$ be densely defined unbounded linear operators on a Hilbert space $\mc H$ with common domain $\Dom(A)=\Dom(B)=\Dom$. Assume that  the ranges of $A$ and $B$ are mutually orthogonal, and that there exist mutually orthogonal projections $p,q$ on $\mc H$ satisfying
\begin{align*}
A\subset Ap,\qquad 	B\subset Bq.
\end{align*}
Then $A+B$ is preclosed if and only if both $A$ and $B$ are so.

\begin{proof}
First, assume $A,B$ are preclosed. Let $e,f$ be the projections of $\mc H$ onto the ranges of $A,B$ respectively. Choose $\xi_n\in\Dom$ converging to $0$ such that $(A+B)\xi_n$ converges. Then $A\xi_n=eA\xi_n+eB\xi_n$, which converges. Since $A$ is preclosed, $A\xi_n\rightarrow 0$. Similarly, $B\xi_n\rightarrow 0$. So $(A+B)\xi_n\rightarrow 0$.

Now, assume $A+B$ is preclosed. Choose $\xi_n\in\scr D$ converging to $0$ such that $A\xi_n$ converges. Since $A\subset Ap$, we have $p\xi_n\in p\Dom\subset\Dom$, and $Ap\xi_n=A\xi_n$. Similarly, $qp\xi_n\in\Dom$ and $Bqp\xi_n=Bp\xi_n$. But $pq=0$. So $Bp\xi_n=0$. So $A\xi_n=(A+B)p\xi_n$, which we know converges. Since $A+B$ is preclosed and $p\xi_n\rightarrow 0$, we must have $(A+B)p\xi_n\rightarrow 0$. So $A\xi_n\rightarrow 0$. So $A$ is preclosed. Similarly, $B$ is also preclosed.
\end{proof}
 
\end{lm}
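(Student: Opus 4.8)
The plan is to prove both implications through the standard sequential criterion for preclosedness: a densely defined operator $T$ with domain $\Dom$ is preclosed if and only if, whenever $\xi_n\in\Dom$ satisfies $\xi_n\to 0$ and $T\xi_n$ converges, one has $\lim_n T\xi_n=0$. So in each direction I would begin with such a sequence and show that the limit vanishes.

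For the ``only if'' direction I assume $A$ and $B$ are preclosed and exploit the orthogonality of their ranges. Let $e$ and $f$ be the orthogonal projections onto $\ovl{\mathrm{Ran}(A)}$ and $\ovl{\mathrm{Ran}(B)}$; the range-orthogonality hypothesis says precisely $ef=0$, so $e$ annihilates every vector in $\mathrm{Ran}(B)$ and fixes every vector in $\mathrm{Ran}(A)$. Given $\xi_n\to 0$ with $(A+B)\xi_n\to\chi$, I would apply $e$ to get $A\xi_n=e(A+B)\xi_n\to e\chi$; preclosedness of $A$ then forces $e\chi=0$, i.e. $A\xi_n\to 0$, and symmetrically $B\xi_n\to 0$, whence $\chi=0$. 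This direction is essentially formal once range-orthogonality is rephrased as $ef=0$.

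For the ``if'' direction I assume $A+B$ is preclosed and use the projections $p,q$ to decouple the two summands. I read the graph inclusions $A\subset Ap$ and $B\subset Bq$ as saying $p\Dom\subset\Dom$, $q\Dom\subset\Dom$, and $A\xi=A(p\xi)$, $B\xi=B(q\xi)$ for all $\xi\in\Dom$. Given $\xi_n\to 0$ with $A\xi_n$ convergent, the idea is to feed the localized vectors $p\xi_n$ into $A+B$: boundedness of $p$ gives $p\xi_n\to 0$ with $p\xi_n\in\Dom$, and the crucial cancellation is $B(p\xi_n)=B(qp\xi_n)=0$, since $qp=0$. Hence $(A+B)(p\xi_n)=A(p\xi_n)=A\xi_n$ converges, and preclosedness of $A+B$ yields $A\xi_n\to 0$; interchanging the roles of $p$ and $q$ handles $B$.

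I expect the only real subtlety to lie in this last direction, namely in correctly interpreting the graph-inclusion hypotheses so that $A$ depends only on the $p$-component and $B$ only on the $q$-component, and in checking that the localized sequence $p\xi_n$ remains inside the common domain so that $(A+B)(p\xi_n)$ is defined. Once that bookkeeping is in place, the orthogonality $pq=0$ does all the work by killing the unwanted cross term $B(p\xi_n)$, and the argument closes immediately. The forward direction needs nothing beyond the elementary observation that orthogonal ranges yield orthogonal range-projections.
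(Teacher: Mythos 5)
Your proposal is correct and follows essentially the same route as the paper's proof: the forward direction uses the range projections to recover $A\xi_n=e(A+B)\xi_n$, and the reverse direction feeds the localized vectors $p\xi_n$ into $A+B$, using $qp=0$ to kill $B(p\xi_n)$. Both the decomposition and the bookkeeping with the graph inclusions match the paper's argument step for step.
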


\begin{lm}\label{lb38}
Let $\mc H_i,\mc H_j,\mc H_k\in\Obj(\RepfA)$ and $\xi\in\mc H_i$. Suppose that $\varphi\in\Hom_{\mc A}(\mc H_j,\mc H_k)$ is an isometry. Then $\scr L(\xi,\wtd I)|_{\mc H_j(I')}$ (resp. $\scr R(\xi,\wtd I)|_{\mc H_j(I')}$) is preclosed if and only if $\scr L(\xi,\wtd I)|_{\mc H_j(I')}\cdot \varphi^*|_{\mc H_k(I')}$ (resp.  $\scr R(\xi,\wtd I)|_{\mc H_j(I')}\cdot \varphi^*|_{\mc H_k(I')}$) is so.
\end{lm}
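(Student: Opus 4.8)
The plan is to reduce both directions of the equivalence to a single elementary fact about preclosed operators: if $T$ is a preclosed (generally unbounded) operator between Hilbert spaces and $B$ is a \emph{bounded} operator whose range lies in $\Dom(T)$, then the composite $TB$ (restricted to any subspace on which it is defined) is again preclosed. The verification of this fact is immediate: if a sequence $\chi_n\to 0$ in the domain satisfies that $TB\chi_n$ converges, then $B\chi_n\to 0$ by boundedness of $B$, while $B\chi_n\in\Dom(T)$ and $T(B\chi_n)$ converges, so preclosedness of $T$ forces $T(B\chi_n)=TB\chi_n\to 0$. This observation is the entire engine of the proof.

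Write $A:=\scr L(\xi,\wtd I)|_{\mc H_j(I')}$, an operator with domain $\mc H_j(I')$, and consider the composite $A\cdot\varphi^*|_{\mc H_k(I')}$, with domain $\mc H_k(I')$. First I would record that both composites are well defined: since $\varphi^*\in\Hom_{\mc A}(\mc H_k,\mc H_j)$ and $\varphi\in\Hom_{\mc A}(\mc H_j,\mc H_k)$ are morphisms, relation \eqref{eq62} gives $\varphi^*\mc H_k(I')\subset\mc H_j(I')$ and $\varphi\mc H_j(I')\subset\mc H_k(I')$, and both maps are bounded. For the forward implication, assuming $A$ is preclosed, I would apply the elementary fact with $T=A$ and $B=\varphi^*$ (bounded, with $B\,\mc H_k(I')\subset\mc H_j(I')=\Dom(A)$) to conclude that $A\cdot\varphi^*|_{\mc H_k(I')}$ is preclosed.

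For the converse the crucial input is that $\varphi$ is an isometry, so $\varphi^*\varphi=\id_j$, whence on $\mc H_j(I')$ one has the identity $A=\big(A\cdot\varphi^*|_{\mc H_k(I')}\big)\cdot\varphi|_{\mc H_j(I')}$. Thus, assuming $A\cdot\varphi^*|_{\mc H_k(I')}$ is preclosed, I would apply the elementary fact once more, now with $T=A\cdot\varphi^*|_{\mc H_k(I')}$ and $B=\varphi$ (bounded, with $B\,\mc H_j(I')\subset\mc H_k(I')=\Dom(T)$), to recover that $A$ is preclosed. The argument for $\scr R(\xi,\wtd I)|_{\mc H_j(I')}$ is verbatim the same, using the corresponding domain inclusions.

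There is essentially no serious obstacle here; the content is entirely in the two bookkeeping points: that $\varphi$ and $\varphi^*$ preserve the localized subspaces $\mc H_\bullet(I')$ (which is exactly \eqref{eq62}) and that the isometry identity $\varphi^*\varphi=\id_j$ lets one recover $A$ from the composite. I would also remark that dualizability of $\mc H_i,\mc H_j,\mc H_k$ plays no role in the argument: only the $\mc A$-intertwining property and boundedness of $\varphi$ are used, so the lemma holds in the generality of $\RepA$.
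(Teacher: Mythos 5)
Your proposal is correct and follows essentially the same route as the paper: the paper's proof likewise observes that a preclosed operator composed on the right with a bounded operator is preclosed (giving the ``only if'' direction) and recovers $\scr L(\xi,\wtd I)|_{\mc H_j(I')}$ from the composite via $\varphi^*\varphi=\id_j$ for the ``if'' direction. Your write-up merely makes explicit the domain bookkeeping via \eqref{eq62} that the paper leaves implicit.
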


\begin{proof}
We prove this for $\scr L$. It is an easy exercise that a preclosed operator multiplied by a bounded operator from the right is preclosed. Thus, the ``only if" part follows immediately. The ``if" part follows from that $\scr L(\xi,\wtd I)|_{\mc H_j(I')}$ equals $\scr L(\xi,\wtd I)|_{\mc H_j(I')}\cdot \varphi^*|_{\mc H_k(I')}$ times $\varphi|_{\mc H_j(I')}$.
\end{proof}

We now relate the preclosedness of $\scr L(\xi,\wtd I)$ with the domain of $\Delta_I^{\frac 12}$. 

\begin{thm}
Let $\mc H_i\in\Obj(\RepfA)$ and $\wtd I\in\Jtd$. Choose an arbitrary dual object $\mc H_{\ovl i}$ and standard $\ev_{i,\ovl i},\ev_{\ovl i,i}$  with adjoints $\coev_{i,\ovl i},\coev_{\ovl i,i}$. Then for any $\xi\in\mc H_i$, the following are equivalent:

(a) $\Omega$ is in the domain of $\scr L(\xi,\wtd I)^*\coev_{i,\ovl i}$.

(a') $\Omega$ is in the domain of $\scr R(\xi,\wtd I)^*\coev_{\ovl i,i}$.

(b) $\xi$ is in the domain of $\Delta_I^{\frac 12}|_{\mc H_i}$.\\
If any of them is true, then $\xi\in\mc H_i^\pr(I)$, and 
\begin{align}
S_{\wtd I}\xi=\scr L(\xi,\wtd I)^*\coev_{i,\ovl i}\Omega,\qquad F_{\wtd I}\xi=\scr R(\xi,\wtd I)^*\coev_{\ovl i,i}\Omega.\label{eq53}
\end{align}
\end{thm}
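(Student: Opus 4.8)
The plan is to mirror the von Neumann algebra statement Proposition \ref{lb17}, splitting the work into a \emph{formal} part — the equivalence (a)$\Leftrightarrow$(b) together with the $S$-formula, which I can prove directly inside the rigid categorical extension — and an \emph{analytic} part, the preclosedness assertion $\xi\in\mc H_i^\pr(I)$, which genuinely needs Tomita--Takesaki theory and hence the standard $C^*$-Frobenius algebra machinery. Throughout I use that the closure $S_{\wtd I}|_{\mc H_i}$ has core $\mc H_i(I)$ and domain $\Dom(\Delta_I^{\frac12}|_{\mc H_i})$, that $\mc H_{\ovl i}(I')$ is a core for $F_{\wtd I'}|_{\mc H_{\ovl i}}$, the duality identity $F_{\wtd I'}|_{\mc H_{\ovl i}}=(S_{\wtd I}|_{\mc H_i})^*$ from Corollary \ref{lb36}, and that each $\scr L(\xi,\wtd I)$ is a \emph{linear} operator.

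For (a)$\Rightarrow$(b) and the formula I would argue as in the first half of Proposition \ref{lb17}. Assuming $\coev_{i,\ovl i}\Omega\in\Dom((\scr L(\xi,\wtd I)|_{\mc H_{\ovl i}})^*)$, for any $\zeta\in\mc H_{\ovl i}(I')$, and using $F_{\wtd I'}|_{\mc H_{\ovl i}}\zeta=R(\zeta,\wtd I')^*\coev_{i,\ovl i}\Omega$ and $\scr L(\xi,\wtd I)\zeta=R(\zeta,\wtd I')\xi$, I compute
\[
\bk{F_{\wtd I'}|_{\mc H_{\ovl i}}\zeta\,|\,\xi}=\bk{\coev_{i,\ovl i}\Omega\,|\,\scr L(\xi,\wtd I)\zeta}=\bk{(\scr L(\xi,\wtd I)|_{\mc H_{\ovl i}})^*\coev_{i,\ovl i}\Omega\,|\,\zeta}.
\]
Since $\mc H_{\ovl i}(I')$ is a core for $F_{\wtd I'}|_{\mc H_{\ovl i}}$, this bounded expression extends to all of $\Dom(F_{\wtd I'}|_{\mc H_{\ovl i}})$, whence $\xi\in\Dom((F_{\wtd I'}|_{\mc H_{\ovl i}})^*)=\Dom(S_{\wtd I}|_{\mc H_i})=\Dom(\Delta_I^{\frac12}|_{\mc H_i})$, which is (b), and $S_{\wtd I}\xi=(\scr L(\xi,\wtd I)|_{\mc H_{\ovl i}})^*\coev_{i,\ovl i}\Omega$.

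For (b)$\Rightarrow$(a) and the formula I would use a core approximation. Given $\xi\in\Dom(S_{\wtd I}|_{\mc H_i})$, choose $\xi_n\in\mc H_i(I)$ with $\xi_n\to\xi$ and $S_{\wtd I}\xi_n\to S_{\wtd I}\xi$. For $\xi_n\in\mc H_i(I)$ the operator $\scr L(\xi_n,\wtd I)|_{\mc H_{\ovl i}}$ agrees on $\mc H_{\ovl i}(I')$ with the bounded $L(\xi_n,\wtd I)$, so $(\scr L(\xi_n,\wtd I)|_{\mc H_{\ovl i}})^*\coev_{i,\ovl i}\Omega=S_{\wtd I}\xi_n$; hence for every $\zeta\in\mc H_{\ovl i}(I')$, using boundedness of $R(\zeta,\wtd I')$,
\[
\bk{\coev_{i,\ovl i}\Omega\,|\,\scr L(\xi,\wtd I)\zeta}=\bk{\coev_{i,\ovl i}\Omega\,|\,R(\zeta,\wtd I')\xi}=\lim_n\bk{S_{\wtd I}\xi_n\,|\,\zeta}=\bk{S_{\wtd I}\xi\,|\,\zeta}.
\]
As $\zeta\mapsto\bk{S_{\wtd I}\xi|\zeta}$ is bounded, this gives $\coev_{i,\ovl i}\Omega\in\Dom((\scr L(\xi,\wtd I)|_{\mc H_{\ovl i}})^*)$ with value $S_{\wtd I}\xi$, i.e.\ (a) and the formula.

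For $\xi\in\mc H_i^\pr(I)$ I would embed $\mc H_i$ isometrically via $\varphi$ into $\mc H_a$ for the standard $C^*$-Frobenius algebra $Q=(\mc H_a,\mu,\iota)$ with $\mc H_a=\mc H_k\boxtimes\mc H_{\ovl k}$, $\mc H_k=\mc H_0\oplus\mc H_i$ (so $\epsilon=\id$); by the Corollary to Proposition \ref{lb14}, (b) gives $\varphi\xi\in\Dom(\Delta_I^{\frac12}|_{\mc H_a})$, and by Proposition \ref{lb9} and Theorem \ref{lb13} the operator $S_{\wtd I}|_{\mc H_a}$ is the Tomita $S$-operator of $(\mc B(\wtd I),\iota\Omega)$, so Proposition \ref{lb17} makes the associated von Neumann operator $\scr L_{vN}(\varphi\xi)$ preclosed. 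Using the identity $\scr L_{vN}(\varphi\xi)\,\iota=\varphi\,\scr L(\xi,\wtd I)|_{\mc H_0}$ on $\mc A(I')\Omega$ (a consequence of \eqref{eq68} and $\iota y\Omega=y\iota\Omega$), and that precomposing a preclosed operator with the bounded $\iota$ and left-multiplying by the isometry $\varphi$ preserve preclosedness, I conclude $\scr L(\xi,\wtd I)|_{\mc H_0}$ is preclosed; Lemmas \ref{lb37}, \ref{lb38} and Theorem \ref{lb16} then handle the descent through the direct-sum structure of $\mc H_a$ and propagate preclosedness to every $\mc H_j$. Finally (a$'$)$\Leftrightarrow$(b) and the $F$-formula follow by the verbatim symmetric argument with $(\scr L,S_{\wtd I},\mc B)$ replaced by $(\scr R,F_{\wtd I},\mc B')$, or directly from $F_{\wtd I}=\vartheta S_{\wtd I}$ (Proposition \ref{lb28}) since $\vartheta$ is unitary. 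I expect the preclosedness step to be the main obstacle: the equivalences and the two explicit formulas are essentially formal consequences of $S^*_{\wtd I}=F_{\wtd I'}$ and the definitions, but showing that $\scr L(\xi,\wtd I)|_{\mc H_0}$ is genuinely preclosed requires real modular-theoretic input, available here only through Tomita--Takesaki applied to $\mc B(\wtd I)$ via a standard $Q$, and the delicate point is matching the categorical $\scr L(\xi,\wtd I)$ on $\mc H_0$ with the von Neumann $\scr L_{vN}$ of $\mc B(\wtd I)$ across the change of cyclic vector $\Omega\rightsquigarrow\iota\Omega$ and the embedding $\mc H_i\hookrightarrow\mc H_a$.
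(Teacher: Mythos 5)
Your proposal is correct, and it organizes the argument differently from the paper. The paper proves everything at once by transporting the whole problem into the von Neumann algebra $\mc B(\wtd I)$ of the standard $Q$-system on $\mc H_a=\mc H_k\boxtimes\mc H_{\ovl k}$, $\mc H_k=\mc H_0\oplus\mc H_i$: it establishes the two-sided identification $\scr L^Q(\psi_{i,0}\xi)=\psi_{i,0}\scr L(\xi,\wtd I)\psi_{0,0}^*+\psi_{i,\ovl i}\scr L(\xi,\wtd I)\psi_{0,\ovl i}^*$, matches preclosedness of $\scr L^Q(\psi_{i,0}\xi)$ with $\xi\in\mc H_i^{\pr}(I)$ via Lemmas \ref{lb37}, \ref{lb38} and Theorem \ref{lb16}, matches condition (a) with $\iota\Omega\in\Dom(\scr L^Q(\psi_{i,0}\xi)^*)$ through the computations $\psi_{i,\ovl i}^*\iota\Omega=\coev_{i,\ovl i}\Omega$ and $\psi_{i,0}^*\iota\Omega=0$, and then invokes Proposition \ref{lb17} a single time. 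You instead prove (a)$\Leftrightarrow$(b) and the $S$-formula purely inside the categorical extension, from the duality $(S_{\wtd I}|_{\mc H_i})^*=F_{\wtd I'}|_{\mc H_{\ovl i}}$ of Corollary \ref{lb36} plus the fact that $\mc H_{\ovl i}(I')$ and $\mc H_i(I)$ are cores; the $Q$-system enters only for the preclosedness claim, where you need just the one implication ($\scr L^Q(\psi_{i,0}\xi)$ preclosed $\Rightarrow$ $\scr L(\xi,\wtd I)|_{\mc H_0}$ preclosed), obtained by precomposing with $\iota$ rather than through the orthogonal decomposition \eqref{eq69}. Both routes ultimately draw their analytic content from Tomita--Takesaki theory (yours through Corollary \ref{lb36}, which rests on Proposition \ref{lb25}); yours is more economical and avoids the $\psi_{s,\ovl t}$ bookkeeping, while the paper's yields as a byproduct that $\xi\in\mc H_i^{\pr}(I)$ is actually \emph{equivalent} to (a) and (b), slightly more than the theorem asserts. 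Two small cautions. First, your parenthetical alternative of getting (a$'$) ``directly from $F_{\wtd I}=\vartheta S_{\wtd I}$'' does not immediately yield the stated formula $F_{\wtd I}\xi=\scr R(\xi,\wtd I)^*\coev_{\ovl i,i}\Omega$, because $\scr R(\xi,\wtd I)$ is built from the anticlockwise complement ${\bpr\wtd I}$ and is not simply the braiding composed with $\scr L(\xi,\wtd I)$; the symmetric argument you list as the primary route (i.e.\ Corollary \ref{lb36} applied to ${\bpr\wtd I}$, giving $F_{\wtd I}|_{\mc H_i}=(S_{{\bpr\wtd I}}|_{\mc H_{\ovl i}})^*$ with core $\mc H_{\ovl i}(I')$) is the one that actually closes. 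Second, in the preclosedness step the direction you use is that $\varphi A$ preclosed implies $A$ preclosed; this holds precisely because $\varphi$ is isometric and should be said explicitly, since left multiplication by a general bounded operator does not preserve preclosedness.
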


\begin{proof}
We consider dualized objects. In particular, we let $(\mc H_{\ovl i},\ev_{i,\ovl i})$ be the dualizing data for $\mc H_i$. Choose any $\xi\in\mc H_i$. Let $\mc H_k=\mc H_0\oplus\mc H_i$, let $\mc H_a=\mc H_k\boxtimes\mc H_{\ovl k}$, and consider the standard $C^*$-Frobenius algebra $Q=(\mc H_a,\mu,\iota)$. (See the discussions starting from \eqref{eq49}.) Let $\mc B,\mc B'$ be the associated non-local extensions of $\mc A$.	Identify $\mc H_i$ with $\mc H_i\boxtimes\mc H_0$ using the right unitor. Then $\psi_{i,0}:\mc H_i=\mc H_i\boxtimes\mc H_0\rightarrow\mc H_a$ defined in \eqref{eq64} is an isometry. Let $\mc M=\mc B(\wtd I)$. Since $\psi_{i,0}\xi\in\mc H_a$, we can define an unbounded operator $\scr L^Q(\psi_{i,0}\xi)$ on $\mc H_a$ with domain $\mc M'\iota\Omega$ such that $\scr L^Q(\psi_{i,0}\xi)y\iota\Omega=y\psi_{0,i}\xi$ for any $y\in\mc M'=\mc B'(\wtd I')$. 

Step 1. We first shown that $\xi\in\mc H_i^\pr(I)$ is equivalent to that $\scr L^Q(\psi_{i,0}\xi)$ is preclosed.  Note that $\mc M'\iota\Omega=\mc H_a(I')$ by \eqref{eq11}, and that elements in $\mc M'$ are of the form $B(\eta,\wtd I')=\mu R(\eta,\wtd I')|_{\mc H_a}$ where $\eta\in\mc H_a(I')$.  We compute that for any $\eta\in\mc H_a(I')$,
\begin{align*}
&\scr L^Q(\psi_{i,0}\xi)\eta\xlongequal{\eqref{eq38}}\scr L^Q(\psi_{i,0}\xi)B(\eta,\wtd I')\iota\Omega=B(\eta,\wtd I')\psi_{i,0}\xi=\mu R(\eta,\wtd I')\psi_{i,0}\xi\\
=&\mu\scr L(\psi_{i,0}\xi,\wtd I)\eta\xlongequal{\eqref{eq68}}\mu(\psi_{i,0}\otimes\id_a)\scr L(\xi,\wtd I)\eta,
\end{align*}
which shows
\begin{align}
\scr L^Q(\psi_{i,0}\xi)=\mu(\psi_{i,0}\otimes\id_a)\scr L(\xi,\wtd I)|_{\mc H_a(I')}\label{eq46}
\end{align}
with common domain $\mc H_a(I')$.

Since $\id_a=\sum_{s,t\in\{0,i\}}\psi_{s,\ovl t}\psi_{s,\ovl t}^*$ where each $\psi_{s,\ovl t}\psi_{s,\ovl t}^*$ is the projection of $\mc H_a$ onto a subrepresentation equivalent to $\mc H_s\boxtimes\mc H_{\ovl t}$, we have
\begin{align*}
&\scr L^Q(\psi_{i,0}\xi)=\sum_{s,t\in\{0,i\}}\mu(\psi_{i,0}\otimes\psi_{s,\ovl t})(\id_i\otimes\psi_{s,\ovl t}^*)\scr L(\xi,\wtd I)|_{\mc H_a(I')}\\
\xlongequal{\eqref{eq68}}&	\sum_{s,t\in\{0,i\}}\mu(\psi_{i,0}\otimes\psi_{s,\ovl t})\scr L(\xi,\wtd I)\psi_{s,\ovl t}^*|_{\mc H_a(I')}.
\end{align*}
By \eqref{eq48}, we have 
\begin{align}
\scr L^Q(\psi_{i,0}\xi)=\psi_{i,0}\scr L(\xi,\wtd I)\psi_{0,0}^*|_{\mc H_a(I')}+\psi_{i,\ovl i}\scr L(\xi,\wtd I)\psi_{0,\ovl i}^*|_{\mc H_a(I')}.\label{eq69}
\end{align}
Apply Lemma \ref{lb37} by choosing  $\Dom=\mc H_a(I')$, $p=\psi_{0,0}\psi_{0,0}^*$, and $q=\psi_{0,\ovl i}\psi_{0,\ovl i}^*$. We see that $\scr L^Q(\psi_{i,0}\xi)$ is preclosed if and only if the two terms on the right hand side of \eqref{eq69} are both preclosed. The latter condition is, by Lemma \ref{lb38} and Theorem \ref{lb16}, equivalent to that $\scr L(\xi,\wtd I)|_{\mc H_0(I')}$ is preclosed, i.e., that $\xi\in\mc H_i^\pr(I)$. 

Step 2. We now show that the (a) of this theorem is equivalent to that of Proposition \ref{lb17}, namely, that $\iota\Omega$ is in the domain of $\scr L^Q(\psi_{i,0}\xi)^*$. We have
\begin{align}
\psi_{i,\ovl i}^*\iota\Omega\xlongequal{\eqref{eq48}}(\psi_{i,0}\otimes\psi_{0,\ovl i})^*\mu^*\iota\Omega\xlongequal{\eqref{eq70}} \coev_{i\boxtimes 0,0\boxtimes\ovl i}\Omega=\coev_{i,\ovl i}\Omega,\label{eq71}
\end{align}
and similarly
\begin{align}
\psi_{i,0}^*\iota\Omega\xlongequal{\eqref{eq48}}(\psi_{i,0}\otimes\psi_{0,0})^*\mu^*\iota\Omega\xlongequal{\eqref{eq70}} 0.\label{eq72}
\end{align}
Thus, by \eqref{eq69}, that $\iota\Omega$ is in the domain of $\scr L^Q(\psi_{i,0}\xi)^*$ is equivalent to that $\coev_{i,\ovl i}\Omega$ is in the domain of $(\scr L(\xi,\wtd I)\psi_{0,\ovl i}^*|_{\mc H_a(I')})^*$. This is true if and only if the function
\begin{align*}
	\eta\in\mc H_a(I')\mapsto \bk{\scr L(\xi,\wtd I)\psi_{0,\ovl i}^*\eta|\coev_{i,\ovl i}\Omega}
\end{align*}
is continuous. Using \eqref{eq62}, it is easy to see that the above is equivalent to the continuity of 
\begin{align*}
	\eta\in\mc H_{\ovl i}(I')\mapsto \bk{\scr L(\xi,\wtd I)\eta|\coev_{i,\ovl i}\Omega},
\end{align*}
which is precisely (a) of our theorem. 

Step 3. By Proposition \ref{lb9}, the modular operator for $\mc B(I)$ and $\iota\Omega$ is ${\Delta_I}|_{\mc H_a}$. The vector $\psi_{i,0}\xi$ is inside the domain of ${\Delta_I}^{\frac 12}|_{\mc H_a}$ if and only if $\xi$ is in the domain of ${\Delta_I}^{\frac 12}|_{\mc H_i}$, since we have $\psi_{i,0}{\Delta_I}^{\frac 12}|_{\mc H_i}={\Delta_I}^{\frac 12}|_{\mc H_a}\psi_{i,0}$ due to Proposition \ref{lb14}. Thus,  by Proposition \ref{lb17}, we see that (a), (b), and $\xi\in\mc H_i^\pr(I)$ are equivalent. Moreover, if they are true, then by the last statement of Proposition \ref{lb17}, (recall Proposition \ref{lb9} and note $\epsilon=1$) we have
\begin{align*}
S_{\wtd I}|_{\mc H_a}\psi_{i,0}\xi=\scr L^Q(\psi_{i,0}\xi)^*\iota\Omega,
\end{align*}
which by \eqref{eq69}, \eqref{eq71}, \eqref{eq72} equals
\begin{align*}
	\psi_{0,\ovl i}\scr L(\xi,\wtd I)^*\coev_{i,\ovl i}\Omega.
\end{align*}
Since $\ovl{\psi_{i,0}}=\psi_{0,\ovl i}$ and hence $\psi_{0,\ovl i} S_{\wtd I}|_{\mc H_i}\xi=S_{\wtd I}|_{\mc H_a}\psi_{i,0}\xi$, we obtain $S_{\wtd I}\xi=\scr L(\xi,\wtd I)^*\coev_{i,\ovl i}\Omega$.

To prove the equivalence of (a') and (b) and the second equation of \eqref{eq53}, construct $Q=(\mc H_{\ovl k}\boxtimes\mc H_k ,\mu,\iota)$ where $\mc H_k=\mc H_0\oplus\mc H_i$, and use similar arguments by choosing $\mc M=\mc B'(\wtd I)$ and considering $\scr R^Q(\psi_{0,i}\xi)$ defined by sending each $y\iota\Omega\in\mc M'\iota\Omega$ to $y\psi_{0,i}\xi$.
\end{proof}

\appendix

\section{Connes categorical extensions}\label{lb23}

In this appendix section, we sketch the constuction of categorical extensions via Connes fusion. Details can be found in \cite{Gui21a}. We do not assume the representations to be dualizable in this section.

Recall that we set $\mc H_i(I)=\Hom_{\mc A(I')}(\mc H_0,\mc H_i)\Omega$. For any $\xi\in\mc H_i(I)$, we define $Z(\xi,\wtd I)$ to be the unique element in $\Hom_{\mc A(I')}(\mc H_0,\mc H_i)$ satisfying $\xi=Z(\xi,\wtd I)\Omega$. If $J$ is disjoint from $I$, we define a (degenerate) inner product $\bk{\cdot|\cdot}$ (antilinear on the second variable) on the algebraic tensor product $\mc H_i(I)\otimes\mc H_j(J)$ such that for any $\xi_1,\xi_2\in\mc H_i(I)$ and $\eta_1,\eta_2\in\mc H_j(J)$,
\begin{align}
\bk{\xi_1\otimes\eta_1|\xi_2\otimes\eta_2}=\bk {Z(\eta_2,J)^*Z(\eta_1,J)Z(\xi_2,I)^*Z(\xi_1, I)\Omega|\Omega}.
\end{align}
This is nothing but the formula of Connes relative tensor product. We let $\mc H_i(I)\boxtimes\mc H_j(J)$ be the Hilbert space completion of $\mc H_i(I)\otimes\mc H_j(J)$ under this inner product. Note that $\mc A(I)$ and $\mc A(J)$ can naturally act on $\mc H_i(I)\otimes\mc H_j(J)$ by acting on the first resp. second component of the tensor product.

If $I_0\subset I$ and $J_0\subset J$, then $\mc H_i(I_0)$ and $\mc H_j(J_0)$ are dense in $\mc H_i(I)$ and $\mc H_j(J)$. Note that $\mc H_i(I_0)\otimes\mc H_j(J_0)$ is also dense in $\mc H_i(I)\otimes\mc H_j(J)$ under the above inner product. (Indeed, $\mc H_i(I)\otimes\mc H_j(J_0)$ is  dense in $\mc H_i(I)\otimes\mc H_j(J)$ since the $Z$ operators associated to the vectors of $\mc H_i(I)$ are bounded. Likewise, $\mc H_i(I_0)\otimes\mc H_j(J_0)$ is also dense in $\mc H_i(I)\otimes\mc H_j(J_0)$ since the $Z$ operators associated to the vectors of $\mc H_j(J_0)$ are bounded.) Thus we have  a natural unitary map $\mc H_i(I_0)\boxtimes\mc H_j(J_0)\xrightarrow{\simeq} \mc H_i(I)\boxtimes\mc H_j(J)$ induced by inclusion of intervals. Its adjoint $\mc H_i(I)\boxtimes\mc H_j(J)\xrightarrow{\simeq} \mc H_i(I_0)\boxtimes\mc H_j(J_0)$ is the natural unitary map induced by restriction of intervals. Both maps commute with the actions of $\mc A(I_0)$ and $\mc A(J_0)$.

Let $I_1,I_2$ be disjoint respectively from $J_1,J_2$. First, assume that these two pairs of intervals "overlap properly", which means that $I_1\cap I_2,J_1\cap J_2\in\mc J$. Then
\begin{align}
\mc H_i(I_1)\boxtimes \mc H_j(J_1)\xrightarrow{\simeq}\mc H_i(I_1\cap I_2)\boxtimes \mc H_j(J_1\cap J_2)\xrightarrow{\simeq}\mc H_i(I_2)\boxtimes \mc H_j(J_2)\label{eq50}
\end{align}
defines a natural unitary map $\mc H_i(I_1)\boxtimes \mc H_j(J_1)\xrightarrow{\simeq}\mc H_i(I_2)\boxtimes \mc H_j(J_2)$. In general, we need to choose a path $\gamma:[0,1]\rightarrow \Conf_2(\mbb S^1)$, where $\Conf_2(\mbb S^1)=\{(z,w)\in\mbb S^1:z\neq w \}$. We assume that $\gamma(0)\in I_1\times J_1$ and $\gamma(1)\in I_2\times J_2$. Then we can define a natural unitary map $\gamma^\bullet:\mc H_i(I_1)\boxtimes \mc H_j(J_1)\xrightarrow{\simeq}\mc H_i(I_2)\boxtimes \mc H_j(J_2)$ by covering $\gamma$ by a chain of pairs of open intervals $(K_1,L_1),\dots (K_n,L_n)$, such that $K_1=I_1,L_1=J_1,K_n=I_2,L_n=J_2$, and that for each $l=1,2,\dots,n$, $(K_{l-1},L_{l-1})$ and $(K_l,K_l)$ overlap properly. Then we can use a chain of unitary maps induced by restriction and inclusion of intervals (as in \eqref{eq50}) to define $\gamma^\bullet$. We say that $\gamma^\bullet$ is the path continuation induced by $\gamma$. If we have paths $\gamma_1,\gamma_2$, then $(\gamma_1\ast\gamma_2)^\bullet=\gamma_1^\bullet\gamma_2^\bullet$. Moreover, $\gamma^\bullet$ depends only on the homotopy class of $\gamma$.  Then we can transport the actions of $\mc A(I_1),\mc A(J_1)$ from $\mc H_i(I_1)\boxtimes \mc H_j(I_1)$ to $\mc H_i(I_2)\boxtimes \mc H_j(I_2)$ through the map $\gamma^\bullet$. Indeed, the result of transportation is independent of the choice of $\gamma$. It turns out that we have a well defined action of $\mc A$ on any Connes fusion $\mc H_i(I)\boxtimes\mc H_j(J)$ so that it restricts to the standard actions of $\mc A(I),\mc A(J)$ on $\mc H_i(I)\boxtimes\mc H_j(J)$, and that the actions of $\mc A$ commute with all path continuations. Thus $\mc H_i(I)\boxtimes\mc H_j(J)$ becomes a representation of $\mc A$, and the path continuations are unitary isomorphisms of $\mc A$-modules.

In the construction of the tensor category $\RepA$, we let $\mc H_i\boxtimes \mc H_j$ be $\mc H_i(\mbb S^1_+)\boxtimes\mc H_j(\mbb S^1_-)$, where $\mbb S^1_+,\mbb S^1_-\in\mc J$ are respectively the upper and lower semi-circle. Choose $\arg_{\mbb S^1_+}$ (resp. $\arg_{\mbb S^1_-}$) to be the one whose values are inside $(0,\pi)$ (resp. $(-\pi,0)$). This defines $\wtd{\mbb S^1_+}$ and $\wtd{\mbb S^1_-}$.  Now, for each $\wtd I=(I,\arg_I)\in\Jtd$ and $\xi\in\mc H_i(I)$, we  describe the operator $L(\xi,\wtd I)$. Recall that $\wtd I'$ is the clockwise complement of $\wtd I$, which means that $I'$ is (the interior of) the complement of $I$, and $\wtd I'$ is clockwise to $\wtd I$. We write $\wtd I'=(I',\arg_{I'})$.  The action $Z(\xi,I):\eta\in\mc H_j(I')\mapsto \xi\otimes\eta\in\mc H_i(I)\boxtimes\mc H_j(I')$ is a bounded operator which intertwines the actions of $\mc A(I')$. Thus $Z(\xi,I)\in\Hom_{\mc A(I')}(\mc H_j,\mc H_i(I)\boxtimes\mc H_j(I'))$. We now choose a path $\gamma:[0,1]\rightarrow\Conf_2(\mbb S^1)$ from $I\times I'$ to $\mbb S^1_+\times\mbb S^1_-$ such that the arguments of $\wtd I$ and $\wtd I'$ are changing continuously to those of $\wtd{\mbb S^1_+}$ and $\wtd{\mbb S^1_-}$ respectively along $\gamma$. Then for each $\eta\in\mc H_j$,  $L(\xi,\wtd I)\eta$ is defined to be
\begin{align}
L(\xi,\wtd I)\eta=\gamma^\bullet Z(\xi,I)\eta\in\mc H_i\boxtimes\mc H_j.
\end{align}
Define a path $\rho:[0,1]\rightarrow\Conf_2(\mbb S^1)$ from $\mbb S^1_+\times\mbb S^1_-$ to $\mbb S^1_-\times\mbb S^1_+$ by $\rho(t)=(e^{\im\pi (\frac 12-t)},e^{\im\pi (-\frac 12-t)})$. Then
\begin{align}
\rho^\bullet: \mc H_i\boxtimes\mc H_j=\mc H_i(\mbb S^1_+)\boxtimes\mc H_j(\mbb S^1_-)\rightarrow \mc H_i(\mbb S^1_-)\boxtimes\mc H_j(\mbb S^1_+)\simeq\mc H_j\boxtimes\mc H_i
\end{align}
is the braiding $\ss_{i,j}$. We define $R(\xi,\wtd I)\eta=\ss_{i,j}L(\xi,\wtd I)\eta$. That these operators define a categorical extension $\scr E=(\mc A,\RepA,\boxtimes,\mc H)$ of $\mc A$ (called Connes categorical extension) was proved in \cite{Gui21a}.

\section{M\"obius covariance of categorical extensions}\label{lb18}

Let $\Ef=(\mc A,\RepfA,\boxtimes,\mc H)$ be the rigid (vector-labeled and closed) categorical extension of the M\"obius covariant net $\mc A$ as in Section \ref{lb30}. Recall that we assume objects in $\RepfA$ (which are dualizable) to be M\"obius covariant, which means there is a strongly continuous unitary representation $U_i$ of $\UPSU$ on $\mc H_i$ such that  \eqref{eq20} holds. (Recall that such representation is unique by \cite{GL96}.) In this section we prove theorem \ref{lb19}, namely, that the fusion of two M\"obius covariant representations is also M\"obius covariant, and that $\Ef$ is M\"obius covariant. We remark that the arguments in this section can also be used to show that $\scr E$ is conformal covariant when $\mc A$ is so; see the end of the section.

We first notice the following easy fact:

\begin{lm}\label{lb21}
Any morphism in $\RepfA$ commutes with the actions of $\UPSU$.
\end{lm}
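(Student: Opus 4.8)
The plan is to reduce the statement to the case of \emph{unitary} endomorphisms and then exploit the uniqueness of the Möbius representation attached to a covariant structure. First I would observe that it suffices to treat endomorphisms: given $G\in\Hom_{\mc A}(\mc H_i,\mc H_j)$, form the dualizable Möbius covariant representation $\mc H_k=\mc H_i\oplus\mc H_j$ (carrying the group representation $U_i\oplus U_j$), inside which $G$ is realized as the off-diagonal morphism $\left(\begin{smallmatrix}0&0\\G&0\end{smallmatrix}\right)\in\End_{\mc A}(\mc H_k)$; indeed the $(2,1)$-block of the identity $U_k(g)E=EU_k(g)$ for $E=\left(\begin{smallmatrix}0&0\\G&0\end{smallmatrix}\right)$ reads exactly $U_j(g)G=GU_i(g)$. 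Thus if every element of $\End_{\mc A}(\mc H_k)$ commutes with $U_k(\UPSU)$, so does $G$ with $U_i,U_j$. Next, since $\End_{\mc A}(\mc H_k)$ is a von Neumann algebra, every element is a finite linear combination of unitaries \emph{in that algebra}, and each such unitary is a unitary endomorphism of $\mc H_k$ commuting with all $\pi_{k,I}(\mc A(I))$. Hence it is enough to prove that every unitary $u\in\End_{\mc A}(\mc H_k)$ commutes with $U_k(g)$ for all $g\in\UPSU$.

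For the unitary case I would set $\tilde U_k(g):=u^*U_k(g)u$ and check that $\tilde U_k$ is again a strongly continuous unitary representation of $\UPSU$ on $\mc H_k$ implementing the same covariance. The homomorphism property, unitarity, and strong continuity are immediate from those of $U_k$ together with boundedness of $u$. Using $u\pi_{k,I}(x)u^*=\pi_{k,I}(x)$ (because $u$ is a unitary morphism) and \eqref{eq20} for $U_k$, one computes $\tilde U_k(g)\pi_{k,I}(x)\tilde U_k(g)^*=\pi_{k,gI}(U(g)xU(g)^*)$, so $\tilde U_k$ also makes $\mc H_k$ Möbius covariant. By the uniqueness of the covariant representation of $\UPSU$ on a given Möbius covariant $\mc A$-module (\cite{GL96} Prop. 2.2, quoted in Section~\ref{lb30}), I conclude $\tilde U_k=U_k$, i.e. $U_k(g)u=uU_k(g)$ for all $g$, which is the desired commutation.

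The substantive input here is precisely that uniqueness theorem; everything else is routine $C^*$-algebra (linear span of a von Neumann algebra by its unitaries) and the functorial direct-sum reduction. I expect the only point needing a little care to be the legitimacy of the two reductions inside $\RepfA$: that $\mc H_i\oplus\mc H_j$ is again dualizable and Möbius covariant with group action $U_i\oplus U_j$, and that the unitaries in the decomposition of $T\in\End_{\mc A}(\mc H_k)$ genuinely lie in $\End_{\mc A}(\mc H_k)$ rather than in the ambient $B(\mc H_k)$. Both are standard, so no serious obstacle is anticipated.
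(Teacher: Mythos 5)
Your proof is correct and follows essentially the same route as the paper's: reduce to endomorphisms of $\mc H_k=\mc H_i\oplus\mc H_j$, reduce to unitaries in the von Neumann algebra $\End_{\mc A}(\mc H_k)$ by linearity, and invoke the uniqueness (from \cite{GL96}) of the covariant $\UPSU$-representation applied to $u^*U_k(\cdot)u$. The only difference is that you spell out the verification that $u^*U_k(\cdot)u$ is again a covariant representation, which the paper leaves implicit.
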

\begin{proof}
Let $\mc H_i,\mc H_j\in\RepfA$ and $G\in\Hom_{\mc A}(\mc H_i,\mc H_j)$. Let $\mc H_k=\mc H_i\oplus\mc H_j$. Then the unique representation $U_k$ of $\UPSU$ on $\mc H_k$ is described by $U_k(g)=\diag(U_i(g),U_j(g))$ for any $g\in\UPSU$. We regard $G$ as an endomorphism of $\mc H_k$ by acting trivially on $\mc H_j$. Then it suffices to show that any endomorphism of $\mc H_k$ commutes with the action of $\UPSU$ on $\mc H_k$. By linearity, it suffices to prove this for any unitary $V\in\End_{\mc A}(\mc H_k)$. Then, by the uniqueness of the representation of $\UPSU$ on $\mc H_k$, we have $U_k(g)=VU_k(g)V^*$ for any $g\in\UPSU$. Therefore $V$ commutes with the action of $\UPSU$. 
\end{proof}

Choose any $\mc H_j\in\Obj(\RepfA),\wtd I\in\Jtd$, and define a unitary representation $V_{\wtd I}$ of $\UPSU$ on $\mc H_i\boxtimes\mc H_j$ by setting
\begin{align}
V_{\wtd I}(g)L(\xi,\wtd I)\eta=L(g\xi,g\wtd I)g\eta\label{eq54}
\end{align}  
for any $\xi\in\mc H_i(I),\eta\in\mc H_j(I'),g\in\UPSU$. Note that we have $g\mc H_i(I)=\mc H_i(gI)$ since, by \eqref{eq20}, we have $g\Hom_{\mc A(I')}(\mc H_0,\mc H_i)g^{-1}=\Hom_{\mc A(gI')}(\mc H_0,\mc H_i)$. Thus we have
\begin{align*}
L(g\xi,g\wtd I)=gL(\xi,\wtd I)g^{-1}
\end{align*}
when acting on $\mc H_0$. Now,   we choose any $\xi_1,\xi_2\in\mc H_i(I),\eta_1,\eta_2\in \mc H_j(I')$, and use the locality of $\Ef$ and the fact that $g\Omega=\Omega$ to compute that
\begin{align}
&\bk{L(g\xi_1,g\wtd I)g\eta_1|L(g\xi_2,g\wtd I)g\eta_2}=\bk{L(g\xi_1,g\wtd I)R(g\eta_1,g\wtd I')\Omega|L(g\xi_2,g\wtd I)R(g\eta_2,g\wtd I')\Omega}\nonumber\\
=&\bk{L(g\xi_2,g\wtd I)^*L(g\xi_1,g\wtd I)\Omega|R(g\eta_1,g\wtd I')^*R(g\eta_2,g\wtd I')\Omega}\nonumber\\
=&\bk{gL(\xi_2,\wtd I)^*L(\xi_1,\wtd I)\Omega|gR(\eta_1,\wtd I')^*R(\eta_2,\wtd I')\Omega}=\bk{L(\xi_2,\wtd I)^*L(\xi_1,\wtd I)\Omega|R(\eta_1,\wtd I')^*R(\eta_2,\wtd I')\Omega}\nonumber\\
=&\bk{L(\xi_1,\wtd I)\eta_1|L(\xi_2,\wtd I)\eta_2}.\label{eq56}
\end{align}
This proves the well-definedness and the unitarity of $V_{\wtd I}(g)$.

Notice that $V_{\wtd I}$ is independent of $\wtd I$, namely, $V_{\wtd I}=V_{\wtd I_0}$ when $\wtd I,\wtd I_0\in\Jtd$. Indeed, it suffices to check this when $\wtd I_0\subset \wtd I$. In that case, the actions of $V_{\wtd I}(g)$ and $V_{\wtd I_0}(g)$ on $L(\mc H_i(I_0),\wtd I)\mc H_j(I')$ are clearly the same. So they must be equal. We write $V_{\wtd I}$ as $V$ for short. From our definition \eqref{eq54}, it is clear that $V(gh)=V(g)V(h)$ for any $g,h\in\UPSU$. Thus $V$ is a representation of $\UPSU$.

We choose $\wtd I_0\subset \wtd I$ such that $I_0\subset\joinrel\subset I$.  To check the continuity of the representation $V$, we need to show that for any sequence of elements $g_n$ in $\UPSU$ converging to $1$, $L(g_n\xi,g_n\wtd I_0)g_n\eta$ converges to $L(\xi,\wtd I_0)\eta$ for any $\xi\in\mc H_i(I_0)$ and $\eta\in\mc H_j$. Assume without loss of generality that $g_nI_0\subset I$ for any $n$. Since $g_n\eta$ converges to $\eta$, it suffices to show that $L(g_n\xi,g_n\wtd I_0)|_{\mc H_j}=L(g_n\xi,\wtd I)|_{\mc H_j}$ converge strongly to $L(\xi,\wtd I)$ and are uniformly bounded over $n$. By the locality and the state-field correspondence of $\Ef$,  $L(g_n\xi,\wtd I)\chi=R(\chi,\wtd I')g_n\xi$ converges to $R(\chi,\wtd I')\xi=L(\xi,\wtd I)\chi$ for any $\chi\in\mc H_j(I')$. Set $x=L(\xi,\wtd I)^*L(\xi, \wtd I)|_{\mc H_0}\in\mc A(I)$. Then $x_n:=g_nxg_n^*$ equals $L(g_n\xi,g_n\wtd I)^*L(g_n\xi, g_n\wtd I)|_{\mc H_0}$. So $x_n\Omega=L(g_n\xi,g_n\wtd I)^*g_n\xi$. Hence, by proposition \ref{lb5},
\begin{align*}
&\lVert L(g_n\xi,\wtd I)|_{\mc H_j} \lVert^2=\lVert L(g_n\xi,\wtd I)^*L(g_n\xi,\wtd I)|_{\mc H_j} \lVert=\lVert L(L(g_n\xi,\wtd I)^*g_n\xi,\wtd I)|_{\mc H_j} \lVert\\
=&\lVert L(x_n\Omega,g_n\wtd I)|_{\mc H_j} \lVert=\lVert \pi_{j,g_nI}(x_n) \lVert\leq \lVert x_n \lVert=\lVert x\lVert.
\end{align*}
This shows that $\lVert L(g_n\xi,\wtd I)|_{\mc H_j} \lVert$ is uniformly bounded over all $n$. Thus $L(g_n\xi,\wtd I)$ converges strongly to $L(\xi,\wtd I)$.

To show that $V$ makes $\mc H_i\boxtimes\mc H_j$ M\"obius covariant, we need to check that
\begin{align*}
V(g)\pi_{i\boxtimes j,J}(x)=\pi_{i\boxtimes j,gJ}(gxg^{-1})V(g)
\end{align*}
for any $g\in\UPSU,J\in\mc J,x\in\mc A(J)$. It suffices to verify this equation when both sides act on $L(\mc H_i(I),\wtd I)\mc H_j(J)$ where $I$ is disjoint from $J$. This is easy. Therefore $\mc H_i\boxtimes\mc H_j$ is M\"obius covariant, and the unique representation of $\UPSU$ is described by \eqref{eq54}. From \eqref{eq54} it is clear that $L(g\xi,g\wtd I)=gL(\xi,\wtd I)g^{-1}$ is always true on any $\mc H_j\in\Obj(\RepfA)$. By the braiding axiom of $\Ef$ and lemma \ref{lb21}, we also have $R(g\xi,g\wtd I)=gR(\xi,\wtd I)g^{-1}$. This proves the M\"obius covariance of $\Ef$.

Finally,  we explain how the above arguments can be adapted to show the conformal covariance of $\scr E$ when $\mc A$ is conformal covariant. Let $\mc A$ be conformal covariant.   Recall that for any $\mc H_i\in\Obj(\RepA)$, $\xi\in\mc H_i(I)$, and $g\in\GA$, we have $g\xi g^{-1}\in\mc H_i(gI)$ where $g\xi g^{-1}:=gL(\xi,\wtd I)g^{-1}\Omega=gR(\xi,\wtd I)g^{-1}\Omega$. Thus, for any $\mc H_i,\mc H_j\in\Obj(\RepA)$, one can define an action of $\GA$ on $\mc H_i\boxtimes\mc H_j$ by setting
\begin{align}
V_{\wtd I}(g)L(\xi,\wtd I)\eta=L(g\xi g^{-1},g\wtd I)g\eta\label{eq55}
\end{align}  
for any $\xi\in\mc H_i(I),\eta\in\mc H_j(I'),g\in\GA$. Note that \eqref{eq55} also equals
\begin{align*}
L(g\xi g^{-1},g\wtd I)gR(\eta,\wtd I')g^{-1}g\Omega=L(g\xi g^{-1},g\wtd I)R(g\eta g^{-1},g\wtd I')g\Omega
\end{align*}
since we have $gR(\eta,\wtd I')g^{-1}=R(g\eta g^{-1},g\wtd I')$ when acting on $\mc H_0$. Using this relation and the calculations as in \eqref{eq56}, one checks that $V_{\wtd I}(g)$ is well-defined and unitary.  Similar arguments as in the above paragraphs show that $V_{\wtd I}$ is independent of $\wtd I$, that  $V$ respects the group multiplication of $\GA$ (which follows clearly from the definition of $V_{\wtd I}(g)$, that the representation $V:\GA\curvearrowright\mc H_i\boxtimes\mc H_j$ is continuous, and that \eqref{eq10} holds. Thus $V$ is the unique representation of $\GA$ making $\mc H_i\boxtimes\mc H_j$ conformal covariant. The relations \eqref{eq1} follow easily from the definition \eqref{eq55} of $V$.

\printindex

\newpage

\noindent {\small \sc Department of Mathematics, Rutgers University, USA.}

\noindent {\em E-mail}: bin.gui@rutgers.edu\qquad binguimath@gmail.com

\begin{thebibliography}{99999}
\footnotesize

\bibitem[AFK04]{AFK04}
D'antoni, C., Fredenhagen, K. and K\"oster, S., 2004. Implementation of conformal covariance by diffeomorphism symmetry. Letters in Mathematical Physics, 67(3), pp.239-247.

\bibitem[ALR01]{ALR01}
D'Antoni, C., Longo, R. and Radulescu, F., 2001. Conformal nets, maximal temperature and models from free probability. Journal of Operator Theory, pp.195-208.


\bibitem[BCL98]{BCL98}
Bertozzini, P., Conti, R. and Longo, R., 1998. Covariant Sectors with Infinite Dimension and Positivity of the Energy. Communications in mathematical physics, 193(2), pp.471-492.

\bibitem[BDH14]{BDH14}
Bartels, A., Douglas, C.L. and Henriques, A., 2014. Dualizability and index of subfactors. Quantum topology, 5(3), pp.289-345.

\bibitem[BDH15]{BDH15}
Bartels, A., Douglas, C.L. and Henriques, A., 2015. Conformal nets I: Coordinate-free nets. International Mathematics Research Notices, 2015(13), pp.4975-5052.

\bibitem[BDH17]{BDH17}
Bartels, A., Douglas, C.L. and Henriques, A., 2017. Conformal nets II: Conformal blocks. Communications in Mathematical Physics, 354(1), pp.393-458.

\bibitem[BGL93]{BGL93}
Brunetti, R., Guido, D. and Longo, R., 1993. Modular structure and duality in conformal quantum field theory. Communications in Mathematical Physics, 156(1), pp.201-219.

\bibitem[BKLR15]{BKLR15}
Bischoff, M., Kawahigashi, Y., Longo, R.,  Rehren, K. H. (2015). Tensor categories and endomorphisms of von neumann algebras: with applications to quantum field theory, Springer Briefs in Mathematical Physics, vol. 3.

\bibitem[BW75]{BW75}
Bisognano, J.J. and Wichmann, E.H., 1975. On the duality condition for a Hermitian scalar field. Journal of Mathematical Physics, 16(4), pp.985-1007.


\bibitem[Bar54]{Bar54}
Bargmann, V., 1954. On unitary ray representations of continuous groups. Annals of Mathematics, pp.1-46.

\bibitem[Bor92]{Bor92}
Borchers, H.J., 1992. The CPT-theorem in two-dimensional theories of local observables. Communications in Mathematical Physics, 143(2), pp.315-332.

\bibitem[CKL08]{CKL08}
Carpi, S., Kawahigashi, Y. and Longo, R., 2008, October. Structure and classification of superconformal nets. In Annales Henri Poinca\'e (Vol. 9, No. 6, pp. 1069-1121). SP Birkh\"auser Verlag Basel.

\bibitem[CKLW18]{CKLW18}
Carpi, S., Kawahigashi, Y., Longo, R. and Weiner, M., 2018. From vertex operator algebras to conformal nets and back (Vol. 254, No. 1213). Memoirs of the American Mathematical Society


\bibitem[Con80]{Con80}
Connes, A., 1980. On the spatial theory of von Neumann algebras. Journal of Functional Analysis, 35(2), pp.153-164.




\bibitem[DSW86]{DSW86}
Driessler, W., Summers, S.J. and Wichmann, E.H., 1986. On the connection between quantum fields and von Neumann algebras of local operators. Communications in mathematical physics, 105(1), pp.49-84.



\bibitem[FJ96]{FJ96}
Fredenhagen, K. and J\"or\ss, M., 1996. Conformal Haag-Kastler nets, pointlike localized fields and the existence of operator product expansions. Communications in mathematical physics, 176(3), pp.541-554.

\bibitem[FRS89]{FRS89}
Fredenhagen, K., Rehren, K.H. and Schroer, B., 1989. Superselection sectors with braid group statistics and exchange algebras. Communications in Mathematical Physics, 125(2), pp.201-226.

\bibitem[FRS92]{FRS92}
Fredenhagen, K., Rehren, K.H. and Schroer, B., 1992. Superselection sectors with braid group statistics and exchange algebras II: Geometric aspects and conformal covariance. Reviews in Mathematical Physics, 4(spec01), pp.113-157.

\bibitem[Fal00]{Fal00}
Falcone, T., 2000. $L^2$-von Neumann modules, their relative tensor products and the spatial derivative. Illinois Journal of Mathematics, 44(2), pp.407-437.

\bibitem[GF93]{GF93}
Gabbiani, F. and Fr\"ohlich, J., 1993. Operator algebras and conformal field theory. Communications in mathematical physics, 155(3), pp.569-640.

\bibitem[GL96]{GL96}
Guido, D. and Longo, R., 1996. The conformal spin and statistics theorem. Communications in Mathematical Physics, 181(1), pp.11-35.

\bibitem[Gui19]{Gui19}
Gui, B., 2019. Unitarity of the modular tensor categories associated to unitary vertex operator algebras, I,  Comm. Math. Phys., 366(1), pp.333-396. 

\bibitem[Gui21a]{Gui21a}
Gui, B., 2018. Categorical extensions of conformal nets, Comm. Math. Phys., 383, 763-839 (2021).

\bibitem[Gui21b]{Gui21b}
Gui, B., 2019. Q-systems and extensions of completely unitary vertex operator algebras. arXiv preprint arXiv:1908.03484. To appear in IMRN.

\bibitem[Hen19]{Hen19}
Henriques, A., 2019. H. Loop groups and diffeomorphism groups of the circle as colimits. Communications in Mathematical Physics,  Volume 366, Issue 2, pp 537-565

\bibitem[J\"or\ss96]{Joer96}
J\"or\ss, M., 1996. The construction of pointlike localized charged fields from conformal Haag—Kastler nets. Letters in Mathematical Physics, 38(3), pp.257-274.

\bibitem[KL04]{KL04}
Kawahigashi, Y. and Longo, R., 2004. Classification of local conformal nets. Case $c<1$. Annals of mathematics, pp.493-522.


\bibitem[LR95]{LR95}
Longo, R. and Rehren, K.H., 1995. Nets of subfactors. Reviews in Mathematical Physics, 7(04), pp.567-597.


\bibitem[LR97]{LR97}
Longo, R. and Roberts, J.E., 1997. A theory of dimension. K-theory, 11(2), pp.103-159.

\bibitem[LR04]{LR04}
Longo, R. and Rehren, K.H., 2004. Local fields in boundary conformal QFT. Reviews in Mathematical Physics, 16(07), pp.909-960.


\bibitem[Lon94]{Lon94}
Longo, R., 1994. A duality for Hopf algebras and for subfactors. I. Communications in mathematical physics, 159(1), pp.133-150.

\bibitem[Lon97]{Lon97}
Longo, R., 1997. An Analogue of the Kac-Wakimoto Formula and Black Hole Conditional Entropy. Communications in mathematical physics, 186(2), pp.451-479.

\bibitem[Lon08]{Lon08}
Longo, R.: Lectures on Conformal Nets. Preliminary lecture notes that are available at http://www.mat.uniroma2.it/~longo. The first part is published as follows: Longo, R.: Real Hilbert subspaces, modular theory, $SL(2, R)$ and CFT. In: ``Von Neumann algebras in Sibiu", Theta Series Adv. Math. 10, Bucharest: Theta, 2008, pp. 33–91

\bibitem[Mas97]{Mas97}
Masuda, T., 1997. An analogue of Longo's canonical endomorphism for bimodule theory and its application to asymptotic inclusions. International Journal of Mathematics, 8(02), pp.249-265.


\bibitem[M\"ug00]{Mueg00}
M\"uger, M., 2000. Galois theory for braided tensor categories and the modular closure. Advances in Mathematics, 150(2), pp.151-201.



\bibitem[Nel59]{Nel59}
Nelson, E., 1959. Analytic vectors. Annals of Mathematics, pp.572-615.


\bibitem[Tak70]{Tak70}
Takesaki, M., 1970. Tomita's theory of modular Hilbert algebras and its applications (Vol. 128). Springer.

\bibitem[Tak02]{Tak02}
Takesaki, M., 2002. Theory of operator algebras II (Vol. 125). Springer Science \& Business Media.

\bibitem[Wei06]{Wei06}
Weiner, M., 2006. Conformal covariance and positivity of energy in charged sectors. Communications in mathematical physics, 265(2), pp.493-506.



\bibitem[Yam04]{Yam04}
Yamagami, S., 2004. Frobenius duality in C*-tensor categories. Journal of Operator Theory, pp.3-20.





	

	
\end{thebibliography}
\end{document}